\newcommand{\href}[1]{#1} 
\newcommand{\abs}[1]{\left|#1\right|}
\newcommand{\norm}[1]{\left\|#1\right\|}
\def\Tr{\text{Tr}}
\newcommand{\bra}[1]{\left\langle{#1}\right\vert}
\newcommand{\ket}[1]{\left\vert{#1}\right\rangle}
\newtheorem{theorem}{Theorem}[section]
\newtheorem{lemma}[theorem]{Lemma}
\newtheorem{claim}[theorem]{Claim}
\newtheorem*{reclaim}{Claim}
\newtheorem{proposition}[theorem]{Proposition}
\newtheorem{definition}[theorem]{Definition}
\newcommand{\eqnref}[1]{\hyperref[#1]{{(\ref*{#1})}}}
\newcommand{\thmref}[1]{\hyperref[#1]{{Theorem~\ref*{#1}}}}
\newcommand{\lemref}[1]{\hyperref[#1]{{Lemma~\ref*{#1}}}}
\newcommand{\corref}[1]{\hyperref[#1]{{Corollary~\ref*{#1}}}}
\newcommand{\defref}[1]{\hyperref[#1]{{Definition~\ref*{#1}}}}
\newcommand{\secref}[1]{\hyperref[#1]{{Section~\ref*{#1}}}}
\newcommand{\chapref}[1]{\hyperref[#1]{{Chapter~\ref*{#1}}}}
\newcommand{\figref}[1]{\hyperref[#1]{{Figure~\ref*{#1}}}}
\newcommand{\tabref}[1]{\hyperref[#1]{{Table~\ref*{#1}}}}
\newcommand{\remref}[1]{\hyperref[#1]{{Remark~\ref*{#1}}}}
\newcommand{\appref}[1]{\hyperref[#1]{{Appendix~\ref*{#1}}}}
\newcommand{\claimref}[1]{\hyperref[#1]{{Claim~\ref*{#1}}}}
\newcommand{\propref}[1]{\hyperref[#1]{{Proposition~\ref*{#1}}}}
\newcommand{\exampleref}[1]{\hyperref[#1]{{Example~\ref*{#1}}}}
\newcommand{\conjref}[1]{\hyperref[#1]{{Conjecture~\ref*{#1}}}}
\newcommand{\transpose}[1]{#1^{\intercal}}
\newcommand{\lket}[1]{\ket{\overline{#1}}}
\newcommand{\indicator}{\mathcal{I}}
\def\N {{\bf N}}
\def\malig{\text{mal}}
\def\chiIn{\chi_{\text{in}}}
\def\chiOut{\chi_{\text{out}}}
\def\zetaIn{\zeta_{\text{in}}}
\def\zetaOut{\zeta_{\text{out}}}
\newcommand{\eventFont}[1]{{\text{\sf{#1}}}}
\def\out{{\eventFont{out}}}
\def\incorrect{{\eventFont{incorrect}}}
\def\good{{\eventFont{good}}}
\def\bad{{\eventFont{bad}}}
\def\acceptX{\eventFont{accept}}
\def\goodZ{\good_Z}
\def\badZ{\bad_Z}
\def\goodX{\good_X}
\def\badX{\bad_X}
\def\accept{\eventFont{accept}}
\def\kGood{k_{\text{good}}}
\def\gammaTH{\gamma_{\text{th}}}
\def\Pevent{\mathcal{P}}
\def\kMax{k_\text{max}}
\def\threshOverlap{1.32 \times 10^{-3}}
\def\threshSteane{1.24 \times 10^{-3}}
\newcommand{\logical}[1]{{\overline{#1}}}
\newcommand{\CCZ}{\text{CCZ}}
\newcommand{\CNOT}{\text{CNOT}}
\newcommand{\Clifford}{\text{Clifford}}
\newcommand{\Id}{I}
\newcommand{\KMMring}{\mathbb{Z}[i,\frac{1}{\sqrt 2}]}
\newcommand{\Aring}{\mathbb{Z}[i,\sqrt 2]}
\definecolor{BlueGreen}{RGB}{0,75,75}
\let\origdoublepage\cleardoublepage
\newcommand{\clearemptydoublepage}{%
  \clearpage{\pagestyle{empty}\origdoublepage}}
\let\cleardoublepage\clearemptydoublepage
\begin{document}

\pagestyle{empty}
\pagenumbering{roman}

\begin{titlepage}
        \begin{center}
        \vspace*{1.0cm}

        \Huge
        {\bf Resource optimization for fault-tolerant quantum computing }

        \vspace*{1.0cm}

        \normalsize
        by \\

        \vspace*{1.0cm}

        \Large
        Adam Paetznick \\

        \vspace*{3.0cm}

        \normalsize
        A thesis \\
        presented to the University of Waterloo \\ 
        in fulfillment of the \\
        thesis requirement for the degree of \\
        Doctor of Philosophy \\
        in \\
        Computer Science \\

        \vspace*{2.0cm}

        Waterloo, Ontario, Canada, 2013 \\

        \vspace*{1.0cm}

        \end{center}
\end{titlepage}

\noindent\textbf{Copyright notice.}
\chapref{chap:transversal} contains material from~\cite{Paetznick2013a}, which is copyrighted by the American Physical Society.
Chapters~\ref{chap:ancilla} and~\ref{chap:threshold} contain material from~\cite{Paetznick2011} which is copyrighted by Rinton Press.
\\ \\
Remaining material is:
\copyright\ Adam Paetznick 2013 \\

\pagestyle{plain}
\setcounter{page}{2}

\cleardoublepage 

  \noindent
I hereby declare that I am the sole author of this thesis. 
This is a true copy of the thesis, including any required final revisions, as accepted by my examiners.

  \bigskip
  
  \noindent
I understand that my thesis may be made electronically available to the public.

\cleardoublepage


\begin{center}\textbf{Abstract}\end{center}

Quantum	computing offers the potential for efficiently solving otherwise classically difficult problems, with applications in material and drug design, cryptography, theoretical physics, number theory and more.  However, quantum systems are notoriously fragile; interaction with the surrounding environment and lack of precise control constitute noise,
which makes construction of a reliable quantum computer extremely challenging.
Threshold theorems show that by adding enough redundancy, reliable and arbitrarily long quantum computation is possible so long as the amount of noise is relatively low---below a ``threshold'' value.
The amount of redundancy required is reasonable in the asymptotic sense, but in absolute terms the resource overhead of existing protocols is enormous when compared to current experimental capabilities. 

In this thesis we examine a variety of techniques for reducing the resources required for fault-tolerant quantum computation.  First, we show how to simplify universal encoded computation by using only transversal gates and standard error correction procedures, circumventing existing no-go theorems.  The cost of certain error correction procedures is dominated by preparation of special ancillary states. We show how to simplify ancilla preparation, reducing the cost of error correction by more than a factor of four.  Using this optimized ancilla preparation, we then develop improved techniques for proving rigorous lower bounds on the noise threshold.  The techniques are specifically intended for analysis of relatively large codes such as the $23$-qubit Golay code, for which we compute a lower bound on the threshold error rate of $0.132$ percent per gate for depolarizing noise. This bound is the best known for any scheme.

Additional overhead can be incurred because quantum algorithms must be translated into sequences of gates that are actually available in the quantum computer.  In particular, arbitrary single-qubit rotations must be decomposed into a discrete set of fault-tolerant gates.  We find that by using a special class of non-deterministic circuits, the cost of decomposition can be reduced by as much as a factor of four over state-of-the-art techniques, which typically use deterministic circuits.

Finally, we examine global optimization of fault-tolerant quantum circuits.   Physical connectivity constraints require that qubits are moved close together before they can interact, but such movement can cause data to lay idle, wasting time and space.  We adapt techniques from VLSI in order to minimize time and space usage for computations in the surface code, and we develop a software prototype to demonstrate the potential savings.    

\cleardoublepage


\begin{center}\textbf{Acknowledgements}\end{center}
I must begin by thanking my supervisor, Ben Reichardt, for his support over the past four years.  Ben is responsible for teaching me much of what I know about fault-tolerant quantum computation.  In addition he has served as a tremendous guide in terms of academic writing and speaking, and navigation of the academic world in general.  Much of my writing and speaking style is due to Ben's advice.

I would also like to thank Richard Cleve for supporting me throughout, but especially for support in the past two years during which Ben has been at USC. Special thanks also to Michele Mosca for bringing me into the quantum circuits and Torque group.  I am grateful to other members of the Torque team for their enthusiasm and support including especially Martin Roetteler and Rich Lazarus.  The surface code was largely a mystery until it was marvelously explained to me by Austin Fowler.  His tenacity for finding practical solutions to important problems has inspired me to try to do the same.

Some of my most valuable discussions and collaborations occurred during internships away from Waterloo.  I would like to thank all of those in the quantum computing group at HRL, and Jim Harrington and Bryan Fong in particular, for their hospitality and support.  Thanks also to the QuArC group at Microsoft Research including: Krysta Svore, Alex Bocharov, Dave Wecker and Nathan Wiebe.

Much of my financial support has come from the Mike and Ophelia Lazaridis fellowship, for which I am very grateful.

Of course, I must also acknowledge the support of my peers, at Waterloo and elsewhere, for their friendship and for helpful suggestions and conversations.  This includes: Vadym Kliuchnikov, Cody Jones, Peter Brooks, Robin Kothari, Alessandro Cosentino, Matt Amy, Vinayak Pathak, Lucy Zhang, David Gosset, Rajat Mittal, Ansis Rosmanis, Stacy Jeffery, Moritz Ernst, Tomas Jochym-O'Connor, Jaimie Sikora, Sevag Gharibian, Sarvagya Upadhyay, Laura Mancinska, Abel Molina and Shelby Kimmel.  To my many other friends including Troy Borneman, Chad Daley, Mike Wesolowski, Kurt Schreiter, Mike Zhang, Daniel Park, Chris Wood, Holger Haas, Shane Farnsworth, Andrew Achkar and Halle Revell, thank you for making the experience in Waterloo an enjoyable one for me and my wife Marion.

Finally, my personal and academic successes are due largely to the influence, love and support of my parents, Duane and Phyllis, and my brother Brandon and his wife Heather.  Thank you for your unwavering encouragement, especially during the first few years in Waterloo, which were difficult for both Marion and myself.  I am similarly grateful to Marion's parents, Bill and Sue, and my sister-in-law Gwen and her husband Rich.

\cleardoublepage


\begin{center}\textbf{Dedication}\end{center}

\noindent
To my loving wife Marion,
\\*\\*
I would not have even considered this pursuit had it not been for your enthusiastic encouragement and support.  Conferences and internships have kept us apart for long stretches, and while I have been traveling all over the world, you have been working the extra jobs to keep us afloat.
When we are together, you fill me with life and laughter.
This thesis is as much a product of your time, effort and love as it is of mine.
I love you.

\cleardoublepage

\renewcommand\contentsname{Table of Contents}
\tableofcontents
\cleardoublepage
\phantomsection

\addcontentsline{toc}{chapter}{List of Tables}
\listoftables
\cleardoublepage
\phantomsection		

\addcontentsline{toc}{chapter}{List of Figures}
\listoffigures
\cleardoublepage
\phantomsection		


\pagenumbering{arabic}

\chapter{Motivation and results}
\label{chap:motivation}

The discovery of quantum mechanics in the early $1900$s represented a fundamental departure from previous understanding of the natural world.
In a similar way, quantum computers, conceived by Feynman in $1982$, represent a fundamental shift from the traditional way of solving computational problems~\cite{Feyn82}.  Feynman observed that simulation of quantum mechanics, though an apparently difficult task for (classical) computers, is accomplished tautologically by natural physical systems.  Consequently, a computing device operating according to the laws of quantum mechanics could have a distinct advantage over its classical counterparts.

Indeed, simulation of quantum mechanical systems is of enormous practical importance, with potential applications in drug design, materials science, protein folding and more~(see, e.g., \cite{Kassal2010}). Feynman's original ideas have since been refined and show that exponential speedups for simulation of quantum mechanical systems are indeed possible, in theory~\cite{Abrams1997,Boghosian1998,Zalk98b}.

Exponential improvements are not limited to simulation, though. In $1994$, Shor developed a polynomial-time algorithm for factoring large numbers, a problem which is widely believed to be intractable for classical computers~\cite{Shor1994}.
Other exponential speedups exist including algorithms for solving linear systems of equations~\cite{Harr09}, and other mathematical problems~\cite{Kedlaya2004,Janzing2006,Hallgren2007,Alagic2010}. Finding new algorithms is a subject of active research~\cite{Mosca2008,Childs2010}.

To date, however, quantum computers capable of outperforming classical devices do not exist.  The limited number of experimental efforts that have been attempted, while encouraging, fall well short of the scale necessary for real-world applications~\cite{Ladd2010}.  Some modern technologies such as transistors and optical drives do exploit aspects of quantum mechanics; recently, quantum mechanics been used to develop highly secure communication devices~\cite{Stucki2011}.  But none of these devices are sophisticated enough to execute quantum algorithms.

Executing large-scale algorithms on a quantum computer is a daunting task.  Quantum algorithms rely on the ability to create and maintain highly entangled quantum states.  Interaction with the environment quickly causes \emph{decoherence}, which destroys entanglement.  Decoherence can be delayed by carefully isolating the quantum information from its environment.  However, too much isolation also prevents (wanted) access to the quantum system, making control and readout difficult.
At the same time, coherently controlling a large quantum mechanical system for the duration of an algorithm requires extreme accuracy. Such stringent control requirements, combined with the inherent fragility of quantum information, raise concerns about the feasibility of constructing a quantum computer.

Is accurate large-scale quantum computation possible?
It turns out that, by incorporating enough redundancy, quantum computation with arbitrary accuracy is possible, at least in principle~\cite{Aharonov1996a}.  In practice, the engineering challenges are significant and the necessary amount of redundancy can be overwhelmingly large.  In this thesis, we will discuss the challenges and propose a variety of methods for reducing resource requirements.
 
\section{The role of noise in a quantum computer}
\label{sec:motivation.noise}
Errors in a quantum computer originate from two sources.  First, control of the quantum system may be imperfect. For example, operations in a quantum computer can be described by rotations about a set of fixed axes.  Over time, small over- or under-rotations can accumulate, resulting in data corruption.  Second, the surrounding environment may interact undesirably with the system.  For example, data stored in an electron can be altered by interaction with surrounding magnetic fields.
Collectively, imperfect control and environmental interactions represent \emph{noise} in a quantum computer.

Noise is not exclusive to quantum systems.
Classical devices can also suffer from errors due to imperfections, or external physical phenomena.  However, most electronics can be manufactured so that errors are vanishingly rare.  When this is not possible, errors can be suppressed by adding redundancy.  Error-correcting codes use a large number of physical bits in order to represent some smaller number of ``logical'' bits~\cite{MacWilliams1988}.  As long as the number of physical bit errors is small enough, the information inside of the code can be retrieved accurately.

Indeed, a very simple kind of error protection is used in dynamic random-access-memory (DRAM), which is ubiquitous in modern electronics.  Each bit in DRAM is stored in a small capacitor as an electric charge, which may leak away over time.  To avoid data loss, each charge is periodically ``refreshed'' by reading it and then rewriting it.
Unfortunately, directly refreshing quantum bits is not possible.  Merely reading a quantum bit, or \emph{qubit}, has the effect of changing its state.  

One might hope that quantum hardware could be manufactured to reduce noise to acceptable levels.  However, most quantum algorithms will require billions of operations and many hundreds or thousands of qubits.  Controlling such a large number of qubits, each with an error rate below one part in a billion is far beyond the capability of current technology, and is likely to remain so for the foreseeable future.

The inability to refresh is due, in part, to the fact that quantum information cannot be cloned~\cite{Woot82}.
One might expect that the use of error-correcting codes for quantum information is therefore also prohibited.  Nevertheless, quantum information can be protected by combining classical error-correcting codes in a novel way~\cite{Shor1997}.
Indeed, so long as the probability of an error is below a constant \emph{threshold} value, it is possible to use error-correcting codes to protect quantum information during arbitrarily long computations~\cite{Aharonov1996a}. 

Error correction is not the only technique available for protecting quantum information. Decoherence-free subspaces and dynamical decoupling are capable of improving the fidelity of quantum operations~\cite{Palma1996,Duan1997,Viola1999,Ban1998}.  However, these methods have limitations and are generally regarded as complementary to active error correction, which is where we will focus our attention.

\section{Requirements imposed by error-correcting codes
\label{sec:motivation.overhead}
}
Quantum error-correcting codes permit high-quality protection of quantum information from noise, but is it enough?  At a minimum, the amount of noise that can be tolerated by error correction must meet or exceed the amount of noise in the physical system.  Threshold theorems tell us that arbitrary accuracy \emph{is} possible even if error rates are constant, but small enough~\cite{Aharonov1996a,Kitaev1997a,Knill1996a,Reichardt2005,Terhal2005,Aliferis2005,Aharonov2006,Ng2009,Preskill2012a}.  What is the noise threshold for quantum computing, and can it be physically achieved?

Initial estimates of the threshold error rate were around $0.01$ percent per gate~\cite{Zalk96}, but have been subsequently improved to as high as one- to three-percent per gate~\cite{Knill2004,Raussendorf2007,Wang2011}.   This range of error rates meets or approaches gate fidelities reported by a variety of experimental efforts for small-scale systems~\cite{Ladd2010,Monz11,Chow2012,Ghosh2013}.
It seems, therefore, that quantum error-correcting codes have the capacity to protect quantum information in realistic conditions.  

But there is a second, potentially more alarming concern. In principle, quantum computation with error correction is efficient. If the size of the ideal circuit is $n$ then the corresponding fault-tolerant circuit need only be a factor of poly(log $n$) larger. However the constants involved can be quite large, and numerical studies have shown that the resource requirements can be astoundingly large in absolute terms.
A single encoded quantum gate can require millions or billions of physical gates~\cite{Knill2004,Raussendorf2007a,Paetznick2011,Jones2012b}.
In addition, many proposed quantum computing architectures impose limitations on the placement of and interactions between qubits. Imposing geometric constraints of this kind only increases the overhead costs. 

The necessary resources depend on the algorithm, desired level of accuracy, clock speed, noise properties of the hardware and so on.  Regardless of other factors, though, estimates are often dominated by the resources required for error correction. For example, under an error-correction scheme proposed by Knill, a quantum algorithm consisting of ten billion operations would require a resource overhead factor of about one million when the error rate per (physical) gate is $10^{-3}$~\cite{Knill2004}.  That is, if the size of the original algorithm is $n$, then the size of the quantum computer would need to be roughly $10^6 n$.
For other size and error parameters, the gate and qubit overhead can range from one-thousand to one-billion fold, or more.

These kinds of resource requirements place a huge burden on the construction of a quantum computer.  Even if billions of qubits can be coherently controlled, such large overhead is clearly undesirable.
The fear is that the overhead required to protect quantum information is so large as to make quantum computers wholly impractical, or to effectively negate any algorithmic speedups over classical computers.
The most important goal of the quantum circuit designer, therefore, is to reduce resource requirements to manageable levels.

\section{Summary of new results
\label{sec:motivation.results}
}


Resource overhead in fault-tolerant circuits is incurred in a variety of ways, including large error-correction circuits, large gate costs, low encoding rates and more.  Thus one should consider a variety of optimization strategies in order to address each problem.
Accordingly, this thesis proposes a number of new techniques for reducing the resources required to accurately implement quantum algorithms, subject to realistic constraints imposed by quantum computing hardware.

\subsection*{Universality with transversal gates}
Fault-tolerant computation involves performing operations on the data while it is encoded.  For most quantum error-correcting codes, there is a small set of operations that can be performed easily, and another set of operations that are much more difficult but are required in order to implement quantum algorithms.
The Toffoli gate, for example, is used heavily in classical subroutines but usually involves costly decomposition into a sequence of other gates.
In~\chapref{chap:transversal} we show that a particular family of quantum codes admits a simple ``transversal'' implementation of the controlled-controlled-$Z$ gate.  A relatively cheap implementation of Toffoli can then be obtained with the help of encoded Hadamards, which we show can also be implemented transversally.  Toffoli and Hadamard are universal for quantum computation~\cite{Shi02}, and so only these simple transversal gates are necessary.

\subsection*{Smaller error correction circuits}
Error correction dominates the resource costs of many fault-tolerance schemes. Reducing the cost of error-correction therefore reduces the total cost by nearly the same amount.  \chapref{chap:ancilla} examines methods for efficiently preparing so-called ``stabilizer states'', which comprise the bulk of the cost for several types of error correction.  These methods can be applied to a large class of quantum error-correcting codes, and are particularly effective for codes of medium to large size. For example, the cost of error-correction for the $23$-qubit Golay code can be reduced by more than a factor of four when compared to previous methods.

\subsection*{Improved noise thresholds}
Computational accuracy increases rapidly as the physical noise rate drops below the threshold.  Thus, an effective way to reduce resource requirements is to \emph{increase} the noise threshold by improving lower bounds.  \chapref{chap:threshold} describes a technique for more accurately calculating lower bounds on the noise threshold when noise is modeled as a Pauli channel.  We calculate a threshold error rate of $0.132$ percent per gate for depolarizing noise, the best lower bound currently known.
Our proof uses malignant set counting~\cite{Aliferis2005}, extensively tailored for our optimized error-correction circuits and for Pauli channel noise.  Instead of assuming adversarial (i.e., worst-case) noise at higher levels of code concatenation, the counting procedure keeps track of multiple types of malignant events to create a transformed independent noise model for each level, allowing for a more accurate analysis.

\subsection*{Low-cost approximations of single-qubit unitaries}
Fault-tolerance schemes offer a universal but finite set of gates from which to implement quantum algorithms.  An arbitrary unitary requested by an algorithm must be approximated by decomposition into a sequence of fault-tolerant gates. 
Traditional approximation methods output a deterministic sequence of gates~\cite{Dawson2005,Fowl04c,Selinger2012a,Kliuchnikov2012b}. In~\chapref{chap:repeat} we explore the use of non-deterministic but repeatable quantum circuits.  By optimized direct computer search, we find a large number of such circuits and show how to use them to reduce the cost of approximating a single-qubit unitary by about a factor of three.

\subsection*{Circuit optimization subject to geometric constraints}
Resource calculations often ignore geometric connectivity constraints imposed by a quantum computer.  Fault-tolerant quantum circuits encoded in the \emph{surface code} automatically respect two-dimensional nearest-neighbor constraints but do not consider global dimensions of the computer, wasting both space and time.  To solve this problem, \chapref{chap:braidpack} proposes two algorithms for placing fault-tolerant quantum circuits onto a two-dimensional qubit lattice of fixed, but arbitrary size.  The algorithms exploit topological properties of the surface code in order to transform the initial circuit into one that fits compactly into the lattice geometry.
\chapter{The mechanics of a quantum computer
\label{chap:mechanics}
}

Classical computers operate based on the laws of electricity and magnetism.  However, the physical details are usually abstracted and, instead, operations are described in terms of bits and logic gates.  Similarly, though quantum computers operate based on the laws of quantum mechanics, we will use abstractions such as qubits and quantum gates.  In this chapter, we summarize the mathematics of quantum computation.  This summary introduces only the concepts that are necessary for quantum error correction and fault tolerance. For a more complete treatment, the reader is referred to any of several textbooks~\cite{Nielsen2000,Kita02,Kaye2007}.

\section{States
\label{mechanics.states}
}
The content, or \emph{state}, of a classical computer is described by bits.  A bit is value either zero or one, or alternatively, a bit is a vector
\begin{equation}
\vec{v} = a\vec{0} + b\vec{1}
\enspace,
\end{equation}
where $a,b\in \{0,1\}$ and such that $a + b = 1$.  A string, or \emph{register}, of $n$ bits is then a length $n$ vector over the field $\mathbb{Z}_2 = \{0,1\}$, i.e., an ordered collection of bits.

The state of a quantum computer is described by \emph{qubits}.  Like a bit, a qubit is a vector
\begin{equation}
\ket\psi = a\ket{0} + b\ket{1}
\enspace,
\end{equation}
except that the ``amplitudes'' $a, b \in \mathbb{C}$ are now free to take complex values and must satisfy the normalization condition $\abs{a}^2 + \abs{b}^2 = 1$. The notation $\ket{\cdot}$, is called a ``ket'' and is conventional for quantum states. Measurement of a qubit yields a bit, the value of which is determined by a probability distribution defined by $a$ and $b$.  The normalization condition ensures that the total probability is equal to one.  See~\secref{sec:mechanics.measurement}.  

A register of qubits is a unit vector in a $2^n$-dimensional vector space over the complex field $\mathbb{C}$.  However, unlike a classical register, a register of $n$ qubits has length $2^n$, one entry for each of the possible bit strings of length $n$.  This is akin to a probabilistic classical register which may take one of $2^n$ possible values according to a probability distribution.  In this way, a qubit register is a generalization of a probabilistic register in which the coefficients are complex and could be negative, for example.  The normalization condition for a register $\sum_i a_i \ket{x_i}$ is $\sum_i \abs{a_i}^2 = 1$.

Any two $n$-qubit registers $\ket{\psi} = \sum_i a_i\ket{i}$ and $\ket{\phi} = \sum_i b_i\ket{i}$ obey the inner product
\begin{equation}
\langle \ket{\psi}, \ket{\phi} \rangle = \langle\phi|\psi\rangle = \sum_i a_ib^*_i
\enspace.
\end{equation}
The normalization condition enforces that a quantum register has inner product one with itself, i.e., $\langle{\psi}|{\psi}\rangle = 1$.

Registers of qubits can be joined together by tensor product.  For example, the tensor product of the state $\ket\psi$ and $\ket\phi$ defined above is given by
\begin{equation}
\ket\psi \otimes \ket\phi = \sum_{i,j} a_i b_j \ket{i}\otimes\ket{j}
\enspace .
\end{equation} 
Often the $\otimes$ notation is dropped, instead using the shorthand $\ket\psi\ket\phi$, or sometimes $\ket{\psi, \phi}$.  The tensor product of $k$ identical registers $\ket\psi$ is denoted by $\ket{\psi}^{\otimes k}$, or sometimes $\ket{\psi^k}$.

\section{Operations}

Computers map input states to output states through a series of operations called \emph{gates}.  A classical gate takes some number of bit registers as input, and outputs one or more bit registers as output.  A quantum gate is similar, but manipulates registers of qubits.

A quantum gate operating on $n$ qubits can be described by a $2^n \times 2^n$ \emph{unitary} matrix.  A matrix $U$ is unitary if and only if $UU^\dagger = I$, where $U^\dagger$ is the matrix obtained by transposing $U$ and then taking the entry-wise complex conjugate, and $I$ is the identity matrix of appropriate dimension.  Unitary operations are reversible.  That is, the inputs of a quantum gate $U$ can be obtained from the outputs by performing the gate $U^\dagger$. 

Like registers, quantum gates can be joined by tensor product.  Again, the $\otimes$ notation is sometimes dropped for visual clarity.  This can create an ambiguity between matrix multiplication $UV$ and the tensor product $U\otimes V$.  When the intended product cannot be inferred from the context we will use $\otimes$ explicitly.

\subsection{Pauli operators
\label{sec:mechanics.operations.paulis}
}
One particularly important class of unitary gates is the single-qubit Pauli operators.  There are four such Pauli operators:
\begin{subequations}
\label{eq:single-qubit-paulis}
\begin{align}
I &= \begin{pmatrix} 1&0 \\ 0&1 \end{pmatrix} \enspace ,\\
X &= \begin{pmatrix} 0&1 \\ 1&0 \end{pmatrix} \enspace ,\\
Y &= \begin{pmatrix} 0&-i \\ i&0 \end{pmatrix} \enspace ,\\
Z &= \begin{pmatrix} 1&0 \\ 0&-1 \end{pmatrix} \enspace .
\end{align}
\end{subequations}
The square of any Pauli is equal to the identity $I$, and except for $I$, the Paulis pairwise anticommute.  That is, $PQ = -QP$ for $P,Q \in \{X,Y,Z\}$ and $P\neq Q$.

The Paulis are orthogonal under the Hilbert-Schmidt matrix inner product
\begin{equation}
\langle U, V \rangle := \Tr(U^\dagger V)
\enspace .
\end{equation}
Accordingly, they form an orthogonal basis for the set of $2\times 2$ complex matrices.  Any $2\times 2$ unitary $U$ can be written as a linear combination
\begin{equation}
U = \cos(\theta)I - i\sin(\theta)(aX + bY + cZ)
\enspace,
\end{equation}
for $\theta\in [0,\pi]$ and nonnegative real values $a,b,c$ such that $\sqrt{a^2 + b^2 + c^2} = 1$

The set of tensor products of Pauli operators forms a group under multiplication.  The product of any two Pauli operators is a Pauli operator, up to a possible unit phase $\{\pm 1,\pm i\}$.  The extra phase can usually be ignored, and the corresponding group is called the \emph{Pauli group}.


\section{Measurement
\label{sec:mechanics.measurement}
}
Results of a quantum operation or quantum algorithm are obtained by \emph{measuring} quantum registers.  Let $\{\ket{\phi_i}\}$ be an orthonormal basis for a quantum register $\ket\psi$ such that $\ket\psi = \sum_i a_i \ket{\phi_i}$.  The measurement of $\ket\psi$ with respect to this basis yields outcome $i$ with probability $\abs{a_i}^2$.  For example, measurement of the single-qubit state $a\ket{0} + b\ket{1}$ yields outcome zero with probability $\abs{a}^2$ and outcome one with probability $\abs{b}^2$.  Since $\ket{0}$ and $\ket{1}$ are eigenstates of $Z$, this is called a $Z$-basis measurement.

We may alternatively measure in the $X$ eigenbasis $\{\ket{+} = \frac{1}{\sqrt{2}}(\ket 0 + \ket 1), \ket{-} = \frac{1}{\sqrt{2}}(\ket 0 + \ket 1)\}$.  Measurement in the $X$ basis is equivalent to first performing the Hadamard gate
\begin{equation}
H = \frac{1}{\sqrt 2} \begin{pmatrix} 1&1 \\ 1&-1 \end{pmatrix}
\end{equation}
and then measuring in the $Z$ basis, since $H\ket{+} = \ket 0$ and $H\ket{-} = \ket 1$.

Measurement in other bases, and measurement of multi-qubit registers is physically possible in principle.  However, we will use only single-qubit $Z$-basis and $X$-basis measurement in this thesis.

\section{Entanglement
\label{sec:mechanics.entanglement}
}
Unlike bits of a classical register, qubits in a quantum register need not be independent of each other.  Consider the so-called ``Bell-state'' on two-qubits
\begin{equation}
\label{eq:bell-state}
\ket{\psi} = \frac{1}{\sqrt{2}}(\ket{00} + \ket{11})
\enspace .
\end{equation}
This state is a ``superposition'' of two cases, one in which both qubits have value zero, and one in which both qubits have value one.  

If we measure the first qubit of $\ket\psi$, then we get a classical bit, either zero or one.  But in this case, we know that the value of the second qubit must be equal to the value of the first qubit.  That is, if we measure zero on the first qubit, then the value of the second qubit must also be zero.  Similarly, if we measure a one on the first qubit, then the second qubit must also have value one.

A state such as~\eqnref{eq:bell-state} in which qubit values are not independent is said to be \emph{entangled}.  Entangled states are an important part of many quantum algorithms and are used heavily in quantum error-correcting codes.

\section{Universality
\label{sec:mechanics.universality}
}
Any quantum algorithm can be expressed as a sequence of unitary operations and single-qubit measurements.  However, rather than construct a quantum computer capable of executing an infinite number of possible unitary operations, it is more practical to decompose quantum algorithms into a finite, but universal set of gates.

\begin{definition}[Universality]
\label{def:universality}
A set of quantum gates $G$ is \emph{universal} if for any unitary $U$ and $\epsilon > 0$, there exists some $k$ and $V = G_1 G_2\ldots G_k$ such that $G_1, G_2\ldots G_k \in G$ and $\norm{V-U} \leq \epsilon$.
\end{definition}

Informally~\defref{def:universality} says that a universal gate set is one from which any unitary $U$ can be approximated to any desired error tolerance $\epsilon$.  The choice of norm $\norm{V-U}$ is largely arbitrary; when necessary, the choice of norm will be stated explicitly.  It can be shown that the set of arbitrary single-qubit gates with the addition of any non-trivial multi-qubit gate---i.e., one that cannot be expressed as the product of single-qubit gates---is universal~\cite{DiVi95}.  Thus the problem of universality can be reduced to just the single-qubit case.

\subsection{The Clifford group
\label{sec:mechanics.universality.clifford}
}
One special class of quantum gates is the \emph{Clifford} gates.
A gate $G$ on $n$ qubits is Clifford if and only if $e^{i\theta} G^\dagger P G \in \mathcal{P}^{\otimes n}$ for all $P \in \mathcal{P}^{\otimes n}$ and some unit phase $e^{i\theta}$, where $\mathcal{P} = \{I,X,Y,Z\}$.  That is, the Clifford gates are those that map Pauli operators to Pauli operators under conjugation.  The Clifford operators form a group.  The single-qubit Clifford group has size $24$ and can be generated by $\{H, S= \left( \begin{smallmatrix}1&0 \\ 0&i \end{smallmatrix} \right) \}$.  The entire Clifford group can be generated by adding a single two-qubit gate, usually
\begin{equation}
\text{CNOT} = \begin{pmatrix} 1&0&0&0 \\ 0&1&0&0 \\ 0&0&0&1 \\ 0&0&1&0 \end{pmatrix}
\enspace .
\end{equation}
The first input of the CNOT is called the \emph{control} and the second input of the CNOT is called the \emph{target}. The CNOT gate flips the value of the target qubit only if the state of the control qubit is $\ket 1$.

The Clifford group is important in the study of fault-tolerant quantum computing for two reasons.  First, many quantum error-correcting codes permit very simple and robust encoded versions of Clifford gates. Second, and more importantly, it is particularly easy to calculate the effect Pauli errors as they propagate through sequences of Clifford gates.  Indeed, the Clifford group contains several important quantum gates including $H$ and CNOT, but quantum computations that contain only Cliffords can be efficiently simulated by a classical computer, a result known as the Gottesman-Knill theorem.  In fact, the Clifford group is strictly \emph{less} powerful than (universal) classical computation~\cite{Aaronson2004a}.

Propagation of Pauli errors through Clifford gates is used heavily throughout this thesis.  For convenience, we give the relevant equations explicitly for $X$ and $Z$.  Propagation for $Y$ follows from $Y = iXZ$.
\begin{subequations}\begin{align}
HX &= ZH,\\ HZ &= XH,\\
SX &= YS,\\ SZ &= ZS,\\
\CNOT(I\otimes X) &= (I\otimes X)\CNOT,\\
\CNOT(X\otimes I) &= (X\otimes X)\CNOT,\\
\CNOT(I\otimes Z) &= (Z\otimes Z)\CNOT,\\
\CNOT(Z\otimes I) &= (Z\otimes I)\CNOT
\enspace .
\end{align}
\end{subequations}

\subsection{Non-Clifford gates
\label{sec:mechanics.universality.non-clifford}
}
The relatively meager computational power of the Clifford group implies that Clifford gates alone cannot be universal for quantum computation.  It turns out, however, that the addition of any non-Clifford gate is sufficient for universality (see, e.g., \cite{Campbell2012} Appendix D).  The most common choice is the single-qubit gate
\begin{equation}
T = \begin{pmatrix} 1&0 \\ 0&e^{i\pi/4} \end{pmatrix}
\enspace .
\end{equation}
Note that $T^2 = S$.  There are other sensible choices, however.  For example the three-qubit Toffoli gate, defined by $\ket{a,b,c} \mapsto \ket{a,b,c\oplus(a\cdot b)}$, is universal for classical computation and is therefore also useful in constructing classical reversible subroutines such as addition.  Some other alternatives are discussed in~\chapref{chap:repeat}. 

\section{Circuits
\label{sec:mechanics.circuits}
}
It is often convenient and helpful to describe sequences of quantum gates visually, as circuits.  Technically, a quantum circuit is a directed acyclic graph in which the vertices represent quantum gates, and the edges represent qubits. \figref{fig:teleportation} shows an example of a circuit composed of gates from $\{\CNOT,H,X,Z\}$. 

A circuit can be partitioned into time-steps in which each qubit is involved in at most one gate. By convention, time goes from left to right.  Note that this is the opposite of the convention for matrix multiplication, in which gates are applied on the state $\ket\psi$ from right to left.  In~\figref{fig:teleportation}, the Hadamard gate is applied first, followed by a CNOT on qubits two and three and then a CNOT on qubits one and two.

Measurements output classical bits, indicated by the double lines. Quantum gates can be conditionally applied based on classical measurement values.  In this example, the $X$ gate is applied only if the $Z$-basis measurement on the second qubit is one, and the $Z$ gate is applied only if the $X$-basis measurement on the first qubit is one.

\begin{figure}
\centering
\includegraphics{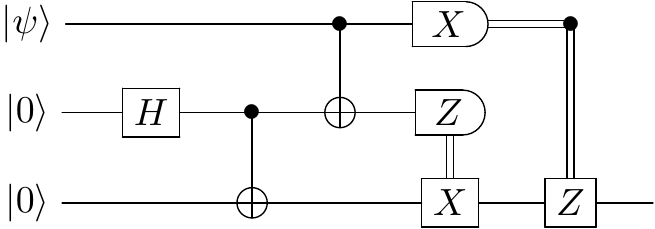}
\caption[Teleportation (two-qubit).]{\label{fig:teleportation}
An example of a quantum circuit.  The circuit takes three qubits as input, and outputs a single qubit.  CNOT gates are indicated by vertical lines between qubits; the black dot indicates the control, and the $\oplus$ indicates the target.  Measurements are represented by ``D'' shapes, and the basis ($X$ or $Z$) is indicated.  Classically-controlled gates are denoted by double lines.  This particular circuit performs ``teleportation'', transferring $\ket\psi$ from the first qubit to the third qubit. 
}
\end{figure}

\section{Teleportation
\label{sec:mechanics.teleportation}
}
The circuit shown in~\figref{fig:teleportation} demonstrates a uniquely quantum concept called \emph{teleportation}~\cite{Bennett1993}.  Teleportation can be useful for transporting quantum information quickly over large distances.
The effect of this circuit is to transfer the input state $\ket\psi$ of the first qubit on to the third qubit. Initially the second and third qubits must be located close together in order to execute the first CNOT gate.  The third qubit can then be transported to any desired location.  Upon executing the remainder of the circuit, the state of the first qubit is instantly transported to the location of the third qubit, up to Pauli corrections based on the measurement outcomes.

Teleportation is used frequently in fault-tolerant circuits, but for a different reason.  Consider the circuit shown in~\figref{fig:one-qubit-teleportation}.  This circuit also teleports the state $\ket\psi$, but requires only one additional qubit~\cite{Zhou2000}.  After teleportation, a $Z$-axis rotation
\begin{equation}
R_Z(\theta) = \cos(\theta/2)I - i\sin(\theta/2)Z
\end{equation}
is applied to the output.  Next, observe that 
\begin{equation}
R_Z(\theta) X = R_Z(\theta) X R_Z(-\theta) R_Z(\theta) = R_Z(2\theta)X R_Z(\theta)
\enspace .
\end{equation}  
Therefore, the $Z$-axis rotation may be shifted to the left of the conditional $X$ correction, and to the left of the CNOT gate (since $Z$ has no effect on the control of a CNOT).  The $R_Z(\theta)$ gate can now be performed ``offline'' on the ancillary qubit, before interaction with the state $\ket\psi$.  The technique of preparing a gate offline by commuting through the teleportation circuit is called \emph{gate teleportation}~\cite{Gottesman1999a}.

Of course, the conditional correction $R_Z(2\theta) X$ in the gate teleportation circuit is now more complicated than it was before.  However, there are certain cases in which fault-tolerantly executing $R_Z(2\theta)$ is far easier than executing $R_Z(\theta)$.  Offline preparation of the more difficult $R_Z(\theta)$ allows for more efficient error suppression, as we will see in~\chapref{chap:fault}.

\begin{figure}
\centering
\begin{subfigure}[b]{.42\textwidth}
\includegraphics[width=\textwidth]{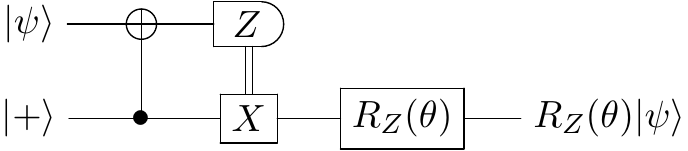}
\caption{\label{fig:one-qubit-teleportation}}
\end{subfigure}
\hfill
\begin{subfigure}[b]{.5\textwidth}
\includegraphics[width=\textwidth]{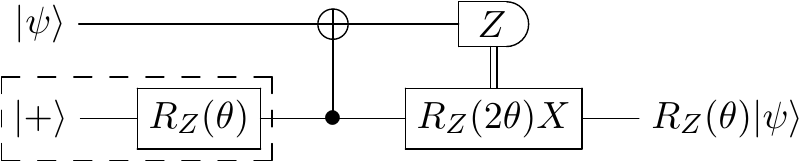}
\caption{\label{fig:gate-teleportation}}
\end{subfigure}
\caption[One-qubit teleportation.]{Two modifications of the teleportation circuit shown in~\figref{fig:teleportation}. (a) One-qubit teleportation. The input $\ket\psi$ is teleported using just one ancilla qubit, prepared as $\ket+$.  After teleportation, a $Z$-axis rotation is applied to the output. (b) Gate teleportation. Using the relation $R_Z(\theta)X = R_Z(2\theta)X R_Z(\theta)$, the $Z$-axis rotation can be shifted to the left and a new conditional correction is required.}
\end{figure}

\chapter{Protecting quantum information
\label{chap:qec}
}

In~\chapref{chap:motivation} we discussed the fragility of the information stored in quantum bits.  An unprotected quantum system interacts freely with its environment, causing the information that it contains to be corrupted or lost.  Before a qubit can be used for computation, it must be protected against noise.

In this chapter, we detail a major tool for protecting quantum information, quantum error-correcting codes.  Quantum codes use many physical qubits to represent one logical qubit, thereby reducing the impact of an error on any one of the physical qubits.  Quantum information is more complicated than classical information, and likewise quantum errors are more complicated than classical errors.  Nonetheless, it is still possible to use the wealth of classical coding theory to develop quantum codes.

\section{First things first: classical error correction
\label{sec:qec.classical}
}
Classical codes operate by adding redundancy.  For example, the simplest classical code is the two-bit repetition code in which a single \emph{logical} bit is encoded using two noisy bits. The logical value zero is encoded as $00$ and the logical one is encoded as $11$.  An error on either one of the two noisy bits will result in a value of $01$ or $10$.  This single error can be detected by taking the parity of the two bits (i.e., the sum of the bits modulo two); in this case an odd parity indicates an error.  By adding third bit of repetition, single bit-flips can be \emph{corrected}.  For example, the value $010$ can be restored by flipping the second bit back to zero.  The errant bit can be identified by taking the parity of each pair of bits.  An odd parity for the first two and the last two bits indicates an error on the middle bit.

A common simplifying assumption is that errors occur identically and independently on each bit.  If the probability of an error on a single bit is $p$, then the probability of a simultaneous error on two bits is $p^2$.  Since the three-bit repetition code can correct any single-bit error, an uncorrectable error occurs only when there are simultaneous errors on two or more of the bits.  The probability $p_L$ of this uncorrectable, or ``logical'' error is given by
\begin{equation}
p_L = 3p^2(1-p) + p^3
\enspace,
\end{equation}
where there are $\binom{3}{2}=3$ ways for two errors to occur.  So long as 
\begin{equation}
p < 3p^2(1-p) + p^3,
\end{equation}
which is true for $p < 0.5$, then the encoding yields a net improvement over just a single bit.\footnote{In this case, net improvement can also be obtained for $p> 0.5$ by inverting the correction procedure.}

The repetition code can be extended to correct larger numbers of errors by simply adding more bits.  The number of simultaneously correctable errors is given by $\lfloor (n-1)/2 \rfloor$ where $n$ is the number bits in the code.  In the limit of large $n$, each additional bit increases the number of correctable errors by one-half.

\subsection{Linear codes
\label{sec:qec.classical.linear}
}
Improved efficiency can be obtained by encoding more than one logical bit at a time.  Linear codes are defined by a $k\times n$ binary matrix $G$, where $n$ is the number of bits of the code and $k$ is the number of encoded logical bits.  The logical value $x$ is encoded into a \emph{codeword} $c$ by binary (i.e., sum modulo two) matrix-vector multiplication
\begin{equation}
c = \transpose{G} x
\enspace,
\end{equation}
where $x$ and $c$ are treated as column vectors.

All codewords satisfy a set of linear constraints called parity checks, defined by a $(n-k)\times n$ binary matrix $H$ such that
\begin{equation}
H\transpose{G} = 0
\enspace,
\end{equation}
which implies that $Hc = 0$ for all codewords $c$.

The parity check matrix $H$ is useful in identifying errors since for any codeword $c$ and any $n$-bit vector $e$,
\begin{equation}
H(c+e) = Hc + He = He
\enspace .
\end{equation}
The $(n-k)$-bit vector $He$ identifies the parity checks violated by the error $e$ and is called the error \emph{syndrome}.  Each syndrome can be associated with a recovery operation $e'$ that returns the vector ($c+e$) to a codeword, i.e., $H(c+e+e') = 0$.

The \emph{distance} of a linear code is defined as the minimum Hamming weight of any nonzero codeword.  The distance corresponds to the minimum number of bits that must be flipped to transform one codeword into another---i.e., the minimum Hamming distance between codewords.  The all zero vector is always a codeword of any linear code, and so the minimum Hamming distance cannot be larger than the minimum weight (nonzero) codeword.  Conversely, for any two codewords $c_1, c_2$, the linear combination $c = c_1 + c_2$ is also a codeword and the Hamming weight of $c$ is equal to the Hamming distance of $c_1$ and $c_2$.  Thus, the Hamming distance between $c_1$ and $c_2$ is at least the code distance.  The three-bit repetition code, for example, has distance three since $111$ has weight three.

A code with distance $d$ can detect up to $d-1$ bit errors.  This fact follows from the definition of minimum distance. Any vector $c+e$ that is not a codeword yields a nonzero syndrome, and so applying an error $e$ to a codeword $c$ results in a syndrome of zero only if $e$ has Hamming weight at least $d$.  A linear code can correct up to $t = \lfloor (d-1)/2 \rfloor$ bit errors.  The correction procedure takes a vector $c+e$ and replaces it with the closest (in Hamming distance) codeword $c'$. Informally, an error of weight $k$ moves the data $k$ steps away from the codeword.  So long as $k$ is less than halfway to any other codeword the correction procedure will succeed.  Again, the three-bit repetition code can detect errors up to weight two, but can only correct errors of weight one.

A linear code using $n$ noisy bits to encode $k$ logical bits to a distance of $d$ is denoted by $[n,k,d]$.
Perhaps the most well known class of linear codes is the family of $[2^r-1,2^r-r-1,3]$ Hamming codes, for $r\geq 2$~\cite{MacWilliams1988}.  The three-bit ($r=2$) Hamming code corresponds to the three-bit repetition code discussed above.  Parity check matrices for the seven-bit and $15$-bit Hamming codes are shown in~\tabref{tab:hamming-codes}.

\begin{table}
\centering
\begin{subtable}[b]{.3\textwidth}
\centering
$\begin{pmatrix} 
 0&0&0&1&1&1&1\\
 0&1&1&0&0&1&1\\
 1&0&1&0&1&0&1
\end{pmatrix}$
\caption{\label{tab:7-bit-Hamming-parity-checks}}
\end{subtable}
\hfill
\begin{subtable}[b]{.69\textwidth}
\centering
$\begin{pmatrix} 
 0&0&0&0&0&0&0&1&1&1&1&1&1&1&1\\
 0&0&0&1&1&1&1&0&0&0&0&1&1&1&1\\
 0&1&1&0&0&1&1&0&0&1&1&0&0&1&1\\
 1&0&1&0&1&0&1&0&1&0&1&0&1&0&1
\end{pmatrix}$
\caption{\label{tab:15-bit-Hamming-parity-checks}}
\end{subtable}
\caption[Hamming code parity checks]{\label{tab:hamming-codes}
Parity check matrices for the (a) $[7,4,3]$ and (b) $[15,11,3]$ Hamming codes.
}
\end{table}

\subsection{Dual codes
\label{sec:qec.classical.dual}
}

The generator matrix $G$ and the parity check matrix $H$ are interchangeable.  Just as $G$ defines the codewords of a linear code, $H$ defines the codewords of a different code, called the \emph{dual}.  The parity checks of the dual code are then given by $G$.  Alternatively, given a linear code $C$, the codewords of the dual code are given by the orthogonal complement of $C$ defined by the set $C^\perp = \{g: |g\cdot c| = 0\mod 2, \forall c\in C\}$.

\section{Quantum errors
\label{sec:qec.errors}
}

Unfortunately, classical codes cannot be used directly to protect quantum information, primarily because in addition to bit flips, qubits can suffer from more exotic kinds of errors. For example, consider the state $\ket + = \frac{1}{\sqrt{2}}(\ket{0} + \ket{1})$. If the Pauli operator $Z$ is accidentally applied to this state then it becomes $\ket{-} = \frac{1}{\sqrt{2}}(\ket{0} - \ket{1})$.  This kind of error is called a \emph{phase-flip}, since the relative phase between $\ket{0}$ and $\ket{1}$ has been swapped from $+1$ to $-1$.

\subsection{Discretization}
On the surface the problem appears to be even worse than just dealing with bit-flip and phase-flip errors. Consider the operator
\begin{equation}
\label{eq:diagonal-qubit-error}
E_\theta = \begin{pmatrix} 1&0 \\ 0&e^{-i2\theta} \end{pmatrix}
\enspace ,
\end{equation}
where where $\theta \in [0,\pi)$.
Accidental application of $E_\theta$ introduces one of an infinite number of continuous phase errors $e^{-i2\theta}$.  Bit-flip errors may be similarly continuous.

However, we may rewrite~\eqnref{eq:diagonal-qubit-error} as 
\begin{equation}
\label{eq:discrete-qubit-error}
E_\theta = e^{-i\theta}(\cos(\theta)I + i\sin(\theta)Z)
\enspace .
\end{equation}  
When written in this way, what was a continuous phase error now appears as a \emph{discrete} $Z$ error, but with a continuous amplitude.  Up to a global phase, the state is either left unchanged with amplitude $\cos(\theta)$ or incurs a phase-flip with amplitude $i\sin(\theta)$.  The global phase $e^{-i\theta}$ is generally unimportant, since it has no effect on measurement outcomes.

More generally, an error can be modeled as a unitary transformation $U_E$ on the joint state of the quantum computer $\ket\psi$ and its surrounding environment $\ket{E}$.  Using the fact that the Pauli operators form a basis for single-qubit operators, $U_E$ can be decomposed as
\begin{equation}
\label{eq:general-error}
U_E = \sum_{i,j} e_{ij} P_i \otimes E_j
\enspace ,
\end{equation}
where each $P_i$ is a tensor product of Pauli operators and $E_j$ acts only on the environment. 
The result of an error $U_E$ on the joint state is then given by
\begin{equation}
\label{eq:discrete-general-error}
U_E \ket\psi \ket E = \left( \sum_{i} P_i\ket\psi \right) \sum_j e_{ij} E_j\ket{E} 
\enspace .
\end{equation}
Again, as in~\eqnref{eq:discrete-qubit-error}, the error is written as a discrete sum over Pauli operators.

Equation~\eqnref{eq:discrete-general-error} implies that task of protecting quantum information can be reduced to the task of guarding against products of Pauli errors.  Additionally, since $Y = iXZ$, each tensor of Paulis can be expressed using only $X$ and $Z$, up to an unimportant global phase.  In other words, quantum errors can be expressed solely in terms bit-flips and phase-flips on individual qubits.

\subsection{Leakage and loss}
The error expressed in~\eqnref{eq:general-error} is not entirely general in that it does not directly account for leakage and loss errors.  Leakage occurs when the state $\ket\psi$ goes outside of the expected $2^n$ dimensional state space.  For example, a qubit may be represented physically by the first two energy levels of an ion. Thermal excitations could cause the ion to jump to a higher energy level, in which case the state would have to be represented by a \emph{qutrit}
\begin{equation}
\ket\psi = a_0\ket{0} + a_1\ket{1} + a_2\ket{2}
\enspace ,
\end{equation}
where the state $\ket{2}$ represents leakage outside of the qubit space.  Similarly, loss occurs when a qubit is removed or otherwise disappears from the computer.  This could happen if an ion is spontaneously ejected from a trap.

Left unchecked, leakage and loss errors can have serious consequences for protection of quantum information~\cite{Ghosh2013a}. However, they can usually be controlled with a small amount of effort~\cite{Pres98b,Fowler2013c}.  We will not consider leakage and loss errors in this thesis.

\section{Quantum error-correcting codes
\label{sec:qec.qecc}
}

Equation~\eqnref{eq:discrete-general-error} shows us that the state of a quantum register after being subjected to noise can be expressed as a superposition of the original state over a discrete set of bit-flip and phase-flip errors.  Informally then, the goal of a quantum error-correcting code is to project the register onto one of those superposition states, identify the error and reverse it.

More formally, let $\{\psi_i\}$ be an orthonormal basis for the codewords of a quantum error-correcting code $C$, and let $\{E_i\}$ be a set of errors against which we would like to protect.  The conditions under which the code $C$ can correct errors $\{E_i\}$ are given by the following theorem~\cite{Bennett1996,Knill2000}.
\begin{theorem}[Quantum error correction condition]
\label{thm:qecc-condition}
A code $C$ with codewords $\{\ket{\psi_i\}}$ can correct the set of errors $\{E_a\}$ if and only if
\begin{equation}
\label{eq:qecc-condition}
\bra{\psi_i}E_a^\dagger E_b\ket{\psi_j} = C_{ab}\delta_{ij}
\enspace ,
\end{equation}
where $\delta_{ij}$ equals one if $i=j$ and equals zero otherwise, and $C_{ab} \in \mathbb{C}$ is independent of $i$ and $j$.  
\end{theorem}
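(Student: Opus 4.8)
The plan is to prove the two directions of the equivalence by tying correctability of $\{E_a\}$ to the internal structure of a recovery operation. I take ``$C$ can correct $\{E_a\}$'' in the standard sense: there is a recovery channel $\mathcal{R}$, with operation elements (Kraus operators) $\{R_k\}$ satisfying $\sum_k R_k^\dagger R_k = I$, such that $\mathcal{R}\bigl(E_a\ket{\psi}\bra{\psi}E_a^\dagger\bigr)\propto\ket{\psi}\bra{\psi}$ for every code state $\ket{\psi}$ and every index $a$.

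For the ``only if'' direction, the engine is the following observation. Fix a code state $\ket{\psi}$ and an index $a$. Then $\mathcal{R}\bigl(E_a\ket{\psi}\bra{\psi}E_a^\dagger\bigr)=\sum_k (R_kE_a)\ket{\psi}\bra{\psi}(R_kE_a)^\dagger$ equals a rank-one operator, while each summand is positive semidefinite of rank at most one; hence every summand must be a nonnegative multiple of $\ket{\psi}\bra{\psi}$, so $R_kE_a\ket{\psi}=c_{ka}(\psi)\ket{\psi}$ for scalars $c_{ka}(\psi)$. Applying this with $\ket{\psi}=\ket{\psi_i}$, with $\ket{\psi}=\ket{\psi_j}$, and with $\ket{\psi}=(\ket{\psi_i}+\ket{\psi_j})/\sqrt{2}$ and comparing coefficients forces $c_{ka}(\psi_i)=c_{ka}(\psi_j)$, so in fact $R_kE_a$ acts on the entire code subspace as a single scalar $c_{ka}$. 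Inserting the resolution of the identity $I=\sum_k R_k^\dagger R_k$ then gives $\bra{\psi_i}E_a^\dagger E_b\ket{\psi_j}=\sum_k\bra{\psi_i}E_a^\dagger R_k^\dagger R_kE_b\ket{\psi_j}=\bigl(\sum_k\overline{c_{ka}}\,c_{kb}\bigr)\delta_{ij}$, which is of the claimed form with $C_{ab}=\sum_k\overline{c_{ka}}\,c_{kb}$, manifestly independent of $i,j$ (and in fact Hermitian and positive semidefinite).

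For the ``if'' direction I would first note that the hypothesis forces $C=(C_{ab})$ to be Hermitian (set $i=j$ and conjugate) and positive semidefinite (for any vector $v$, $\sum_{ab}\overline{v_a}v_bC_{ab}=\norm{\sum_a v_aE_a\ket{\psi_i}}^2\ge 0$). Diagonalizing $C$ and passing to the corresponding linear combinations $F_k=\sum_a M_{ka}E_a$ of the $E_a$, the hypothesis becomes $\bra{\psi_i}F_k^\dagger F_l\ket{\psi_j}=d_k\,\delta_{kl}\,\delta_{ij}$ with $d_k\ge 0$. For each $k$ with $d_k>0$ the vectors $\{F_k\ket{\psi_i}/\sqrt{d_k}\}_i$ are orthonormal, and for $k\ne l$ they span mutually orthogonal subspaces; writing $P_k$ for the projector onto the $k$-th such subspace, $\sum_k P_k\le I$. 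Setting $R_k=\sum_i\ket{\psi_i}\bra{\psi_i}F_k^\dagger/\sqrt{d_k}$ gives $R_k^\dagger R_k=P_k$ and $R_kF_l\ket{\psi_j}=\sqrt{d_k}\,\delta_{kl}\ket{\psi_j}$; completing the list with $R_0=\sqrt{I-\sum_k P_k}$ yields a trace-preserving recovery $\mathcal{R}$. Finally I would check correctness: for any $E$ in the span of $\{E_a\}$, expand $E\ket{\psi}=\sum_k\beta_kF_k\ket{\psi}$ (only indices $k$ with $d_k>0$ contribute, since $F_k\ket{\psi}=0$ otherwise); then $R_kE\ket{\psi}=\beta_k\sqrt{d_k}\ket{\psi}$ and $R_0E\ket{\psi}=0$, so $\mathcal{R}(E\ket{\psi}\bra{\psi}E^\dagger)=\bigl(\sum_k|\beta_k|^2 d_k\bigr)\ket{\psi}\bra{\psi}\propto\ket{\psi}\bra{\psi}$, i.e.\ $C$ corrects all errors in the span of $\{E_a\}$.

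I expect the main obstacle to be the lemma underlying the ``only if'' direction: upgrading ``each $R_kE_a$ scales each individual codeword'' to ``each $R_kE_a$ scales the whole code subspace by one scalar''. The rank argument constrains only one pure state at a time, and the promotion to a uniform scalar genuinely requires the superposition trick above; one must also pin down the meaning of ``can correct $\{E_a\}$'' carefully enough that a single recovery channel is available for all the $E_a$ simultaneously. The ``if'' direction is comparatively routine --- a diagonalization of $C$ followed by bookkeeping --- the only care being the treatment of the singular part of $C$ (the indices $k$ with $d_k=0$) and the completion of $\{R_k\}$ to an honest channel.
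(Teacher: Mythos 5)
Your proof is correct, and it is a complete proof, which is more than the paper actually offers. The paper does not prove \thmref{thm:qecc-condition}: it cites the original references for the result and then, below the theorem statement, gives only an informal plausibility argument confined to the two-codeword case (observing that if $E_a\lket{0}$ and $E_b\lket{1}$ overlap then measurement can confuse the two errors, and that if $\bra{\logical{0}}E^\dagger E\lket{0}\neq\bra{\logical{1}}E^\dagger E\lket{1}$ then $E$ distorts relative amplitudes undetectably). Both remarks are special cases of the necessity direction; the paper gives nothing at all toward sufficiency. What you have written is the standard Knill--Laflamme argument: for necessity, the observation that a trace-preserving channel mapping a rank-one state to a rank-one state forces each Kraus summand to be rank at most one, hence $R_kE_a\vert\psi\rangle = c_{ka}(\psi)\vert\psi\rangle$, followed by the superposition trick to promote $c_{ka}(\psi)$ to a constant on the codespace, followed by insertion of $I=\sum_k R_k^\dagger R_k$; for sufficiency, diagonalizing the Hermitian positive semidefinite matrix $C$, building $R_k$ from the resulting orthogonal error syndromes, and completing to a trace-preserving channel with $R_0=\sqrt{I-\sum_k P_k}$. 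This is precisely the proof the paper implicitly defers to. The one thing worth saying explicitly, since you flagged it yourself, is that your rank lemma does go through as stated: if $\sum_k A_k = \lambda\vert\psi\rangle\langle\psi\vert$ with each $A_k\succeq 0$, then for any $\vert\phi\rangle\perp\vert\psi\rangle$ one gets $\sum_k\langle\phi\vert A_k\vert\phi\rangle = 0$, which together with positivity forces each $A_k$ to be supported on $\mathrm{span}\{\vert\psi\rangle\}$; the case $E_a\vert\psi\rangle=0$ is handled trivially with $c_{ka}(\psi)=0$. So there is no hidden obstruction where you anticipated one.
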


\thmref{thm:qecc-condition} can be understood by considering a code with just two codewords $\{\lket{0}, \lket{1}\}$, where the notation $\lket a$ indicates the encoded logical state $\ket a$.  Then~\eqnref{eq:qecc-condition} requires that $E_a\lket 0$ and $E_b\lket 1$ are orthogonal.  If this were not the case, then an error $E_a$ on $\lket 0$ and $E_b$ on $\lket 1$ would yield overlapping states, and measurement of the error could confuse the two cases.  In particular, if $E_a^\dagger E_b$ is a logical operator (say $\overline{X}$), then~\eqnref{eq:qecc-condition} is certainly violated.  Likewise, consider an error $E$ with the property that
$\bra{\overline{0}}E^\dagger E\lket{0} \neq \bra{\overline{1}}E^\dagger E\lket{1}$, again violating~\eqnref{eq:qecc-condition}.  Then $E$ changes the relative amplitudes of $\lket 0$ and $\lket 1$ so that $E(\ket{0} + \ket{1}) \propto \lket{0} + \delta\lket 1$ for some $\delta$.  But $\lket{0} + \delta\lket 1$ is itself a codeword, so the error $E$ cannot be distinguished from a valid logical operation. 

\subsection{Stabilizer codes
\label{sec:qec.qecc.stabilizer}
}
The most widely studied class of quantum error-correcting codes is \emph{stabilizer codes}, the quantum analog of classical linear codes~\cite{Gott96,Cald97}.  A stabilizer code is defined by a stabilizer group $\mathcal M$ for which each element is a tensor product of Pauli operators.  The set of codewords is given by $\{\ket \psi : M\ket\psi = \ket\psi, M \in \mathcal M\}$; each codeword is a $+1$-eigenvector of all of the elements in the stabilizer group. Since $\mathcal M$ is a group, the stabilizers can be specified by a set of generating elements called \emph{stabilizer generators}.  The stabilizer generators are directly analogous to the parity checks of a classical linear code.

Error correction can be performed by measuring each of the stabilizer generators in order to determine the error syndrome.  The number of simultaneous single-qubit errors that the code can correct is given by $\lfloor (d-1)/2 \rfloor$, where $d$ is the code distance.  Define the \emph{normalizer} of $\mathcal{M}$, $\mathcal{N} := \{P\in \mathcal{P}^n : PS=SP, S\in \mathcal{M}\}$, as the set of $n$-qubit Pauli group elements that commute with all of the stabilizers. The distance of the code is then equivalent to the minimum weight non-identity element of $\mathcal{N} \setminus \mathcal{M}$. Here, the weight of an operator is defined as the number of $X$ , $Y$ and $Z$ operators in its tensor product decomposition.

A stabilizer group on $n$ physical qubits with $m$ generators encodes $n-m$ logical qubits.  The $2^m$ syndromes partition the $2^n$-dimensional state space, yielding a codespace of dimension $2^{n-m}$.  Each logical qubit $i$ is associated with a pair of logical operators $X_i$, $Z_i \in \mathcal{N} \setminus \mathcal{M}$ such that $X_i$ and $Z_i$ commute with all of the stabilizers, but anti-commute with each other.  Logical operators on different logical qubits also commute. The situation is in direct correspondence with single-qubit Pauli operators on physical qubits.
A stabilizer code encoding $k$ logical qubits into $n$ physical qubits to a distance of $d$ is denoted as $[[n,k,d]]$.

\subsubsection{Stabilizer algebra
\label{sec:qec.qecc.stabilizer.algebra}
}

Given a set of generators and logical operators for a stabilizer code, it is possible to write out each of the codewords $\{\ket{\psi_i}\}$ explicitly, and therefore to calculate how the encoded quantum state evolves under unitary operations and measurements.  However, the stabilizer formalism offers an alternative which is usually more efficient and intuitive.  Consider the effect of applying a unitary $U$ to a codeword $\ket\psi$. We would like to understand how $U$ impacts the stabilizers and the logical operators of the code.  By definition, we have
\begin{equation}
U\ket\psi = U(M\ket\psi) = (UMU^\dagger)U\ket\psi
\end{equation}
for any stabilizer $M$.  Thus, a stabilizer $M$ of the original state is transformed by conjugation $UMU^\dagger$ to a stabilizer of the new state $U\ket\psi$.  The logical operators are similarly transformed by conjugation.

In this way, stabilizers offer an analog of the Heisenberg interpretation of quantum mechanics~\cite{Gottesman1998}.  Rather than tracking the evolution of the state $\ket\psi$, we may track the evolution of the stabilizers.  For a code on $n$ qubits, there are $2^n$ possible terms in the expansion of $\ket\psi$, but only at most $n$ stabilizer generators.  Thus expressing an encoded state in terms of its code stabilizers can be exponentially more efficient than the corresponding expression as a quantum state.

The effect of measurements on the stabilizers is slightly more complicated, but can still be calculated efficiently.  Consider a $Z$-basis measurement on the first qubit of an $n$-qubit codeword.  After the measurement, the state is stabilized by the operator $Z\otimes I^{\otimes{n-1}}$, up to a phase of $\pm 1$.  The definition of the stabilizer group implies that all stabilizers must commute.  Thus, the stabilizers of the state after the measurement must all commute with $Z\otimes I^{\otimes{n-1}}$.  Any operator that was a stabilizer before the measurement, but anti-commutes with $Z$ on the first qubit cannot be a stabilizer after the measurement.  Note however, that it is always possible to express the set of stabilizer generators so that at most one generator anti-commutes with the measurement.  If both $M_1$ and $M_2$ anti-commute with the measurement, then $M_2$ can be replaced by $M_1 M_2$, which does commute.  Thus the single anti-commuting generator is replaced by $Z\otimes I^{\otimes{n-1}}$ and all of the other generators remain unchanged. 

To make this more concrete, we illustrate with an example using the $[[7,1,3]]$ code due to Steane~\cite{Steane1996}.  The stabilizer generators of this code can be expressed as
\begin{equation}
\label{eq:713-stabilizers}
\begin{tabular}{c@{}c@{}c@{}c@{}c@{}c@{}c@{}cc@{}c@{}c@{}c@{}c@{}c@{}c@{}c}
      &$I$&$I$&$I$&$I$&$X$&$X$&$X$&
      &$I$&$I$&$I$&$I$&$Z$&$Z$&$Z$\\
      &$I$&$X$&$X$&$I$&$I$&$X$&$X$&
      &$I$&$Z$&$Z$&$I$&$I$&$Z$&$Z$\\
      &$X$&$I$&$X$&$I$&$X$&$I$&$X$&
      &$Z$&$I$&$Z$&$I$&$Z$&$I$&$Z$\\
\hline
$X_L=$&$X$&$X$&$X$&$X$&$X$&$X$&$X$&
$Z_L=$&$Z$&$Z$&$Z$&$Z$&$Z$&$Z$&$Z$
\end{tabular}
\enspace ,
\end{equation}
where $X_L$ and $Z_L$ are the $X$ and $Z$ logical operators, respectively, and for visual clarity the tensor product notation has been omitted.  Now consider the effect of applying the Hadamard operator to each qubit.  Hadamard swaps $X$ and $Z$ under conjugation; $HXH = Z$ and $HZH = X$. So the result of applying $H^{\otimes 7}$ is
\begin{equation}
\begin{tabular}{c@{}c@{}c@{}c@{}c@{}c@{}c@{}cc@{}c@{}c@{}c@{}c@{}c@{}c@{}c}
      &$I$&$I$&$I$&$I$&$Z$&$Z$&$Z$&
      &$I$&$I$&$I$&$I$&$X$&$X$&$X$\\
      &$I$&$Z$&$Z$&$I$&$I$&$Z$&$Z$&
      &$I$&$X$&$X$&$I$&$I$&$X$&$X$\\
      &$Z$&$I$&$Z$&$I$&$Z$&$I$&$Z$&
      &$X$&$I$&$X$&$I$&$X$&$I$&$X$\\
\hline
$X_L=$&$Z$&$Z$&$Z$&$Z$&$Z$&$Z$&$Z$&
$Z_L=$&$X$&$X$&$X$&$X$&$X$&$X$&$X$
\end{tabular}
\enspace .
\end{equation}
The stabilizers have been preserved, and the $X_L$ and $Z_L$ logical operators have been swapped.  The operator $H^{\otimes 7}$ therefore acts as a logical Hadamard on the code.

Now consider a $Z$-basis measurement on the first qubit.  All but the operators $XIXIXIX$ and $Z_L=XXXXXXX$ commute with the measurement.  However, the $Z_L$ operator may be multiplied by $XIXIXIX$ so that it commutes with $Z\otimes I^{\otimes 6}$. (Remember that multiplication by a stabilizer is equivalent to multiplying by the identity.)  The resulting stabilizers after measurement are
\begin{equation}
\begin{tabular}{c@{}c@{}c@{}c@{}c@{}c@{}c@{}cc@{}c@{}c@{}c@{}c@{}c@{}c@{}c}
      &$I$&$I$&$I$&$I$&$Z$&$Z$&$Z$&
      &$I$&$I$&$I$&$I$&$X$&$X$&$X$\\
      &$I$&$Z$&$Z$&$I$&$I$&$Z$&$Z$&
      &$I$&$X$&$X$&$I$&$I$&$X$&$X$\\
      &$Z$&$I$&$Z$&$I$&$Z$&$I$&$Z$&
$\quad\pm$&$\mathbf{Z}$&$\mathbf{I}$&$\mathbf{I}$&$\mathbf{I}$&$\mathbf{I}$&$\mathbf{I}$&$\mathbf{I}$\\
\hline
$X_L=$&$Z$&$Z$&$Z$&$Z$&$Z$&$Z$&$Z$&
$Z_L=$&$I$&$X$&$I$&$X$&$I$&$X$&$I$
\end{tabular}
\enspace ,
\end{equation}
where the new stabilizer is highlighted in bold and the $\pm1$ phase depends on the measurement outcome.

\subsubsection{Stabilizer states
\label{sec:qec.qecc.stabilizer.states}
}

Normally we are interested in codes that contain at least one logical qubit.  For a stabilizer code on $n$ qubits, this means that the number of stabilizer generators should be $(n-k)$ for some $k > 0$.  Then the set of codewords lives in a $2^k$-dimensional subspace representing $k$ logical qubits.  If $k=0$, however, then the set of codewords has dimension one, a single quantum state.  

An $n$-qubit state that is defined by a set of $n$ stabilizer generators is called a \emph{stabilizer state}.  In the seven-qubit code, for example, adding the $Z$ logical operator $Z_L$ to the set of stabilizers yields a stabilizer state.  By definition, this state is a $+1$-eigenstate of $Z_L$ and so this is the encoded state $\ket 0$, just as (physical) $\ket 0$ is the $+1$-eigenstate of $Z$.

Not all quantum states are stabilizer states.  Consider the effect of applying $T$ to the first qubit of the encoded $\ket 0$ state defined above. The conjugation relations for $T$ are
\begin{equation}\begin{split}
\label{eq:T-conjugation}
TZT^\dagger &= Z\\
TXT^\dagger &= (X+Y)/\sqrt{2}
\enspace . 
\end{split}\end{equation}
Therefore some of the resulting stabilizers are no longer tensor products of Paulis, but rather linear combinations of tensor products of Paulis.  The encoded state $T\otimes I^{\otimes 6}\lket 0$ is not a stabilizer state.

On the other hand, an inductive argument shows that the output of any circuit composed of Clifford gates, $\ket 0$ preparation and $Z$-basis measurement is a stabilizer state.  Conversely, the definition of the Clifford group implies that any stabilizer state can be expressed by such a circuit~\cite{Aaronson2004a}.  Stabilizer states and their corresponding circuits are a major component of fault-tolerant error correction, and are discussed in more detail in~\chapref{chap:ancilla}.
 
\subsection{CSS codes
\label{sec:qec.qecc.css}
}
A particularly useful subset of stabilizer codes can be constructed from classical linear codes.  The construction requires two linear codes $C_1=[n,k_1,d_1], C_2=[n,k_2,d_2]$ that are orthogonal, i.e., $C_1^\perp \subseteq C_2$. The parity checks of $C_1$ can be translated into tensor products of Pauli $X$ operators, and the parity checks of $C_2$ can be translated into tensor products of Pauli $Z$ operators.   Together these operators form the stabilizer generators of the quantum error-correcting code. The tensor products of $X$ are called \emph{$X$ stabilizers} and the tensor products of $Z$ are called \emph{$Z$ stabilizers}.

Codes based on this construction are known as CSS codes after Calderbank, Shor and Steane and include the most commonly known codes such as the Steane's $[[7,1,3]]$ code given by~\eqnref{eq:713-stabilizers}~\cite{Steane1996}, which is based on the seven-bit Hamming code~(\tabref{tab:hamming-codes}), and Shor's $[[9,1,3]]$ code~\cite{Calderbank1996}, which is based on the nine-bit repetition code.  CSS codes are ubiquitous in the study of fault-tolerant quantum computation; all of the codes considered in this thesis are CSS.

CSS codes have a couple of properties which make them nice for study and for use in fault-tolerance schemes.  First, the codewords of a CSS code follow the form
\begin{equation}
\label{eq:css-codeword}
\lket{x} = \frac{1}{\sqrt{\abs{C_1^\perp}}}\sum_{w\in C_1^\perp} \ket{x + w}
\enspace,
\end{equation}
where $x$ is the coset representative of an element of $C_2 / C_1^\perp$. Equation~\eqnref{eq:css-codeword} shows that each codeword $x$ can be interpreted as a superposition over each of the $X$ stabilizers. The code $C_1^\perp$ partitions $C_2$ into $\abs{C_2}/\abs{C_1^\perp}$ cosets and so there are $2^{k_2 - (n - k_1)}$ codewords. 
Second, CSS codes permit independent correction of $X$ errors and $Z$ errors.  The $X$ stabilizers defined by $C_1$ are used to correct $Z$ errors, and the $Z$ stabilizers defined by $C_2$ are used to separately correct $X$ errors.  Independent $X$ and $Z$ correction is exploited in~\chapref{chap:ancilla} and~\chapref{chap:threshold}. As a consequence of these two properties, the CSS construction yields a $[[n,k_1+k_2-n,\min\{d_1,d_2\}]]$ quantum code.

\subsection{Concatenated codes
\label{sec:qec.qecc.concatenated}
}
Stabilizer codes can be combined to form other larger stabilizer codes.
Given two stabilizer codes $C_1 = [[n_1,k_1,d_1]]$ and $C_2 = [[n_2,1,d_2]]$, a $[[n_1 n_2,k_1,d_1 d_2]]$ code is be obtained by encoding each physical qubit of $C_1$ in the code $C_2$~\cite{Knill1996}.  This construction is known as code \emph{concatenation}, and is a key element of many threshold theorems including the one in~\chapref{chap:threshold}.  In particular, concatenation can be performed repeatedly in order to obtain an arbitrarily large code distance.

Concatenation can also be accomplished when $C_2$ encodes multiple logical qubits, in which case the resulting code is $[[n_1 n_2, k_1 k_2, d_1 d_2]]$~(see, e.g., \cite{Gottesman1997a}).
Other methods for combining codes include \emph{pasting} to increase $k$~\cite{Gottesman1996}, and \emph{welding}~\cite{Michnicki2012}.  We focus only on concatenation in this thesis, however.

\subsection{Topological codes
\label{sec:qec.qecc.topological}
}
Another notable subset of stabilizer codes are so-called \emph{topological} codes.  These codes have the property that the stabilizer generators can be defined locally when qubits are laid out as a lattice on some manifold.  Prominent examples include the toric code~\cite{Kitaev1997a}, and the surface code~\cite{Bravyi1998}.  

Each topological code is, in fact, a family of codes.  Notably, both the number of encoded qubits and the distance can be increased arbitrarily while maintaining locality of the stabilizer generators.  This permits fault-tolerance schemes which require only local interactions among qubits, a feature which is useful on a large number of proposed physical quantum computing architectures.  By contrast, concatenated codes require interactions between qubits which may be far apart.

\subsection{Non-stabilizer codes
\label{qec.qecc.non-stabilizer}
}
There are also quantum error correcting codes that do not conform to the stabilizer construction.  A variety of codes can be constructed by relaxing the stabilizer formalism in some way. Subsystem codes, for example, encode qubits as linear subsystems rather than two-dimensional subspaces~\cite{Baco06}.  Another relaxation of the stabilizer formalism can be used to construct approximate quantum error-correcting codes~\cite{Leung1997}. Codes can be used to protect qudits ($d$-dimensional quantum bits) rather than qubits~\cite{Knil96a}. Yet more codes are possible if the code block is entangled with an outside resource~\cite{Bowen2002}.
 
Stabilizer codes are generalized by so-called codeword stabilized codes~\cite{Looi2008,Cross2009a}.  A codeword stabilized code is characterized by a stabilizer state and a set of ``word operators'' that act as logical $X$ operators.  The structure of these codes is more complicated than for stabilizer codes.  The word operators need not commute with each other, for example. Codeword stabilized codes have not been widely studied in the context of fault-tolerant quantum computing.

%
%

\section{Experimental realization
\label{sec:qec.experiment}
}
The protection offered by quantum error-correcting codes was demonstrated experimentally as early as $1998$, when the three-qubit phase-flip code was implemented in liquid state NMR~\cite{Cory1998}.  Since then, a number of small codes have been used in a variety of experimental setups including liquid state NMR~\cite{Leung1999,Knill2001,Boulant2002,Boulant2005,Zhang2011,Zhang2011b,Zhang2012}, trapped ions~\cite{Chiaverini2004,Schindler2011}, linear optics~\cite{Pittman2005,Yao2012a}, cavity-QED~\cite{Ottaviani2010}, solid state NMR~\cite{Moussa2011}, and superconducting circuit-QED~\cite{Reed2012}.
Most demonstrations have been quite basic; a single logical qubit is encoded using a small number of physical qubits, left idle for some period of time, and then decoded.  A few studies have demonstrated more complicated operations such encoded gates~\cite{Zhang2012} and state distillation~\cite{Souza2011} (see~\chapref{chap:fault}).

On the one hand, experimental demonstrations of quantum error correction provide proof-of-principle that codes can indeed suppress noise in realistic physical systems.  The initial $1998$ experiment showed that the gate error rate could be reduced from $p$ to roughly $p^2$.  A similar, but much more recent study shows even sharper improvements~\cite{Zhang2011}.  On the other hand, the limited scale of the experiments illustrate the need to improve threshold and resource overhead requirements.  Most experimental setups are large enough to encode only a single logical qubit, whereas quantum algorithms require hundreds or thousands of qubits.  Experimental capabilities will continue to improve, but so must the resource costs of error correction.

\section{Alternative methods of protection
\label{qec.alternatives}
}
Quantum error-correcting codes are not the only means by which to protect quantum information.  For completeness, we briefly outline some alternative techniques. 

\subsection{Decoherence-free subspaces and dynamical decoupling
\label{sec:qec.alternatives.dfs-dd}
}
Originally formalized for quantum information by~\cite{Palma1996} and~\cite{Duan1997} and later coined by~\cite{Lidar1998}, a \emph{decoherence-free subspace} (DFS) encodes data into states for which the effect of environmental noise is trivial.  As a toy example, consider a noise model in which only $Z$ errors occur, and when they do they occur simultaneously on all qubits in the system.  That is, for a system of $n$ qubits, the only possible error is $Z^{\otimes n}$.  Even this very simple noise model can cripple a quantum computer.  But this error acts trivially on certain states, for example,
\begin{equation}
Z\otimes Z (a\ket{00} + b\ket{11}) = a\ket{00} + b\ket{11}
\enspace .
\end{equation}
Thus, by encoding in the subspace $\{\ket{00},\ket{11}\}$ ($\ket{00}$ for logical $\ket 0$, and $\ket{11}$ for logical $\ket 1$) the logical qubit is completely immune to errors.  In this way, a DFS is equivalent to an error-correcting code for a very simple and specific noise model.

Decoherence-free subspaces enjoy several advantages over error-correcting codes.  First, they usually require only a very small number (two in the above example) of physical qubits per logical qubit.  Second, since errors act trivially, a DFS requires no active intervention in order to correct errors.  Furthermore, the strength of the noise can be very high, in contrast to error correcting codes which can tolerate only low levels of noise (see~\chapref{chap:fault}).  On the other hand, given a particular noise model, finding the symmetries required to construct a DFS, provided that they even exist, is difficult.  Indeed, DFS is known to be insufficient for some reasonable noise models~\cite{Lidar2001a}.

\emph{Dynamical decoupling} (DD) is another technique for suppressing errors for simple and well-characterized noise models~\cite{Viola1999,Ban1998}.  If noise causes the system to evolve in an uncontrolled but predictable way, then quick control pulses can be used to periodically ``reverse'' the noise and cause it to cancel out.  Again a toy example is helpful. Say that noise acts continuously on a qubit, and that for a fixed duration of time $t$ the effect is given by
\begin{equation}
E(t) = \begin{pmatrix} 1&0 \\ 0&e^{i\theta t}\end{pmatrix}
\enspace .
\end{equation}
By periodically applying Pauli $X$, the effect of the noise can be canceled since
\begin{equation}
E(t)XE(t)X = E(t)E^{\dagger}(t) = I
\enspace .
\end{equation}
In more practical examples the noise and the required control pulses are more complicated, but the idea is the same.

Dynamical decoupling has the advantage of requiring no additional qubits. Its disadvantage, is that it requires fast and accurate control.  Moreover, complicated pulse sequences can make data manipulation more difficult, and increase gate times.

DFS and DD are usually considered as complementary to quantum error-correcting codes.  A variety of authors have considered methods for using DFS, DD and quantum error-correcting codes in different combinations~\cite{Lidar1999,Lidar2001a,Ng2011,Paz-Silva2012}. DFS and DD can act as a ``first line of defense'' against errors, after which error-correction is applied to achieve arbitrary accuracy. In this thesis we focus only on fault-tolerance protocols based on codes.  It is likely, however, that the the best complete strategies for suppressing errors will involve elements from all three techniques.

\subsection{Topological quantum computation
\label{sec:qec.alternatives.topological}
}
A third, and dramatically different alternative to quantum error-correcting codes is \emph{topological} quantum computation.  In topological quantum computation, data is stored in exotic particles called \emph{anyons}~\cite{Kitaev2003}.  Consider a pair of particles which are placed side-by-side, and then exchanged; the particle on the left moves to the right, and the particle on the right moves to the left.  For typical physical particles, such as photons or electrons, the effect of this exchange is essentially trivial.  For anyons, however, this exchange induces a non-trivial phase akin to a diagonal unitary gate.  Sequences of exchanges, called ``braids'', can be composed in order to quantum compute~\cite{Freedman2000,Freedman2002}.

The novel feature of topological quantum computation is that, in principle, it is inherently robust against errors.  The computational states are degenerate ground states, which means that errors are suppressed naturally by the system.  So long as the anyons are kept far enough apart, no active error suppression is required.  Though promising, the existence and capability to produce anyons with the right properties is still largely speculative~\cite{DasSarma2005,Nayak2008,Lesanovsky2012a,Stern2013}. 

\chapter{Fault tolerance: making quantum computing error-free
\label{chap:fault}
}

The most straightforward use of quantum error-correcting codes is in transmitting quantum information over noisy channels.  In this case, the sender encodes his quantum state and sends it over the noisy channel to the receiver who then decodes.  Of course, in a realistic setting, errors can occur before the encoding process and after decoding, when the quantum information is unprotected.

In order to achieve reliable quantum computation, the data must be protected at all times.  In particular, unitary gates should be performed while the data is still encoded.  The typical procedure involves alternating rounds of encoded gates and error correction.  The encoded gate manipulates the data in the error correcting code, and error correction attempts to eliminate errors introduced by the encoded gate. See~\figref{fig:alternating-ec-ga}.

\begin{figure}
\centering
\includegraphics{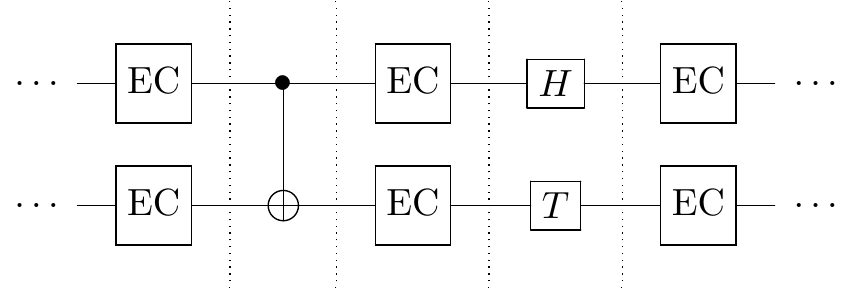}
\caption[Alternating gates and error correction.]{\label{fig:alternating-ec-ga} Typical fault-tolerant circuits are constructed by alternating rounds of error-correction with encoded gates.}
\end{figure}

The use of encoded gates alone is not enough.  Both the encoded gates and the error correction circuits should be \emph{fault tolerant}. Roughly, a quantum circuit is fault tolerant if the errors that occur during each step are small in number and can be kept well controlled.  Errors in a fault-tolerant circuit have very little chance of spreading or combining in order to cause data corruption.  In this chapter we make this concept precise, and examine techniques for constructing fault-tolerant quantum circuits.

\section{A brief history
\label{sec:fault.history}
}

%

Before delving into the details of fault-tolerant quantum computation, it is instructive to outline the path from its early beginnings to current state-of-the-art.  This history will show the successes and difficulties of the theory of fault-tolerant quantum computing, and provide motivation and context for the new results in subsequent chapters.

\subsection{Threshold proofs and calculations
\label{sec.fault.history.threshold}
}

The first proposal for fault tolerant quantum computation was posited by Shor in $1996$ \cite{Shor1997}.  Shor showed that his construction tolerates a noise rate that is logarithmic in the size of the computation (measured by the number of gates).  Roughly, Shor's error correction circuit contains a logarithmic number of gates, thus an error rate proportional to the inverse of that size is sufficient. 
Soon after, the first ``threshold theorems'' were proven independently by Aharonov and Ben-Or~\cite{Aharonov1996a}, Kitaev~\cite{Kitaev1997a}, and Knill, Laflamme and Zurek~\cite{Knill1996a} each of which permitted a constant noise rate per gate regardless of computation size.  Importantly, the amount of extra time and space resources required scales only as a polynomial in the logarithm of the computation size. 
\begin{theorem}[Constant noise threshold for quantum computation]
\label{thm:threshold}
Consider a quantum circuit $C$ of size $N$, a quantum computer with gates that fail independently with probability at most $p$, and target failure probability $\epsilon > 0$. There exists a different quantum circuit $C'$ of size at most
\begin{equation}
O\left(N\cdot \text{poly}\left(\log\frac{N}{\epsilon}\right)\right)
\end{equation}
that can be implemented on the quantum computer and simulates $C$ with probability of error at most $\epsilon$, provided that $p$ is below a constant \emph{threshold} value $p_{th}$.
\end{theorem}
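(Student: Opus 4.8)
The plan is to prove \thmref{thm:threshold} by \emph{code concatenation} combined with a recursive simulation argument. Fix a CSS code such as the $[[7,1,3]]$ code of~\eqnref{eq:713-stabilizers}, which corrects one error and (together with the transversal gates and fault-tolerant state preparation and measurement available for it) supports a fault-tolerant universal gate set. First I would replace each gate $g$ of the ideal circuit $C$ by a \emph{level-$1$ rectangle}: the encoded gate $\overline{g}$ preceded and followed by a fault-tolerant error-correction gadget. Following~\cite{Aliferis2005}, the analysis is cleanest in terms of \emph{extended rectangles} (exRecs)---a rectangle bundled with the trailing error correction of the preceding rectangle---so that consecutive gadgets compose without double-counting the shared error-correction step.

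The core is a \emph{level-reduction lemma}: an exRec built from the chosen code is \emph{good}---it acts on the encoded data exactly as the ideal gate, up to a Pauli frame update---whenever at most one of its $A$ elementary fault locations fails, where $A$ is a constant fixed by the code and the gadget constructions. Consequently a level-$1$ exRec can fail only if at least two of its locations fail, and a union bound over pairs of locations (malignant-set counting) gives an effective failure probability at most $c\,p^2$ with $c \le \binom{A}{2}$, and more precisely equal to the number of genuinely \emph{malignant} pairs. Concatenating the code with itself and applying the lemma at every level yields the recursion $p_{k+1} \le c\,p_k^2$, equivalently $c\,p_{k+1} \le (c\,p_k)^2$; setting $p_{th} := 1/c$ and assuming $p < p_{th}$ gives $c\,p_k \le (cp)^{2^k}$, a doubly exponential suppression of the encoded error rate.

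The stated parameters then follow by bookkeeping. To simulate $C$, which has $N$ locations, with total error at most $\epsilon$, it suffices by a union bound that $(cp)^{2^L} \le \epsilon/N$, which holds once the number of concatenation levels is $L = O\!\left(\log\log\frac{N}{\epsilon}\right)$. Each level multiplies the number of physical locations by the constant exRec size $A$, so the size of the resulting circuit is $|C'| = N\cdot A^{L} = N\cdot A^{O(\log\log(N/\epsilon))} = N\cdot\text{poly}\!\left(\log\frac{N}{\epsilon}\right)$, using $A^{\log\log x} = (\log x)^{\log A}$; the depth/time overhead is bounded in the same way, giving the claimed $O\!\left(N\cdot\text{poly}\!\left(\log\frac{N}{\epsilon}\right)\right)$ bound and completing the proof.

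I expect the main obstacle to be the level-reduction lemma. One must pin down precisely what it means for an exRec to be correct, show that good exRecs \emph{compose} (so that a circuit of good level-$j$ exRecs faithfully simulates the level-$(j{-}1)$ circuit, the error-correction gadgets mopping up residual errors), and fold the non-unitary elements---$X$- and $Z$-basis measurements and the classically-controlled corrections of~\secref{sec:mechanics.circuits}---into the same framework. The delicate points are correlated faults within a single gadget, the fact that a transversal two-qubit gate can propagate an error from one code block to another (so the exRec must carry enough surrounding error correction to contain it), and the need for a fault-tolerant \emph{non-Clifford} gate at every level---supplied here by the transversal \CCZ\ construction of~\chapref{chap:transversal} (or, alternatively, by state distillation). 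Counting malignant pairs exactly rather than bounding them by $\binom{A}{2}$ is what ultimately fixes the numerical value of $p_{th}$, and is the quantitative heart of the later chapters.
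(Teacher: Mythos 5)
Your proposal follows essentially the same concatenation-and-level-reduction argument that the paper sketches after stating the theorem: a recursive bound $p_{k+1}\le c\,p_k^{t+1}$ from malignant-set counting in exRecs, a threshold $p_{\text{th}}=c^{-1/t}$, $O(\log\log(N/\epsilon))$ concatenation levels, and overhead $A^L=\text{poly}(\log(N/\epsilon))$. The only difference is that you instantiate with the distance-$3$ $[[7,1,3]]$ code (so $t=1$ and $p_{k+1}\le c\,p_k^2$), whereas the paper keeps the code a generic $[[n,k,d]]$ with $t=\lfloor(d-1)/2\rfloor$; your choice is fine, but note the paper's own remark that distance-$3$ thresholds required the exRec/level-reduction machinery of Reichardt and of Aliferis--Gottesman--Preskill, which you correctly invoke.
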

The intuition is that an $[[n,k,d]]$ code yields encoded gates with a logical error rate at most $cp^{t+1}$, where $t = \lfloor (d-1)/2 \rfloor$, $c = \binom{A}{t+1}$ and $A$ is the number of physical gates contained in a single encoded gate plus error correction.  Concatenating the code with itself $j$ times requires $n^{j+1}$ qubits per block but an inductive argument yields a logical error rate per gate of
\begin{equation}
\label{eq:concatenated-logical-error-rate}
p_j \leq \frac{1}{c^{1/t}} \left(c^{1/t} p \right)^{(t+1)^{j+1}}
\enspace .
\end{equation}
That is, the size of the code scales exponentially, but so does the minimum distance.
The right-hand side of~\eqnref{eq:concatenated-logical-error-rate} converges so long as the physical error rate obeys
\begin{equation}
p < \frac{1}{c^{1/t}} = p_{\text{th}}
\enspace .
\end{equation}
Taking the logarithm of both sides of~\eqnref{eq:concatenated-logical-error-rate} twice, we see that achieving a target error rate per gate of $p_j \leq \epsilon/N$ only requires concatenation to level $j = O(\log\log N/\epsilon)$.  The total code size is then a polynomial in $\log N/\epsilon$.

Interestingly, the early threshold theorems hold only for quantum error-correcting codes of distance at least five.  Thresholds for distance-three codes were not known until $2006$, when they were discovered independently by Reichardt~\cite{Reichardt2005}, and Aliferis, Preskill and Gottesman~\cite{Aliferis2005}.  A novel fault-tolerance scheme using distance-two error-\emph{detecting} codes was proposed by Knill in $2005$, though without explicit proof of a threshold~\cite{Knill2004}. Rigorous proof of a threshold for distance-two schemes was proposed by Reichardt~\cite{Reichardt2007} (see also~\cite{Reichardt2006}), and later by Aliferis, Preskill and Gottesman~\cite{Aliferis2007c}.

Existence of a noise threshold permits arbitrary quantum computation for a constant amount of engineering cost per gate, at least in principle.  In practice, the value of the threshold matters since, while error rates near one percent are currently achievable in some small-scale experiments, e.g.,~\cite{Ladd2010,Monz11,Chow2012,Ghosh2013}, rates much lower than say $10^{-6}$ on a large-scale are perhaps impossible even in the long-term.  

The earliest estimate based on a rigorous threshold proof was calculated by Aharonov and Ben-Or to be an error rate per gate of about $10^{-6}$.  Later calculations based on~\cite{Reichardt2005} and~\cite{Aliferis2005} were similarly low at $6.75\times 10^{-6}$ and $2.73\times 10^{-5}$, respectively. Since then, rigorous threshold bounds have steadily improved.  As of $2011$, the highest lower bound was $1.25\times 10^{-3}$ by Aliferis and Preskill~\cite{Aliferis2009}.  In~\chapref{chap:threshold}, we adapt the technique of~\cite{Aliferis2005} to prove a threshold of $1.32\times 10^{-3}$.

Another popular technique is to \emph{estimate} the threshold using Monte Carlo simulation.  Threshold estimates, though not rigorous, paint a much more optimistic picture than lower bounds.  An initial estimate by Zalka placed the threshold at about $10^{-3}$~\cite{Zalk96}.  In $2004$ Knill estimated a threshold for his distance-two scheme as high as three percent.  Simulations for the surface code indicate a threshold of about one percent~\cite{Wang2011}. \figref{fig:threshold-plot} shows thresholds from a large number of studies and for a variety of error-correcting codes, noise models, and geometric constraints. 

\begin{figure}
\begin{minipage}{\linewidth}
\centering
\includegraphics[width=.8\textwidth]{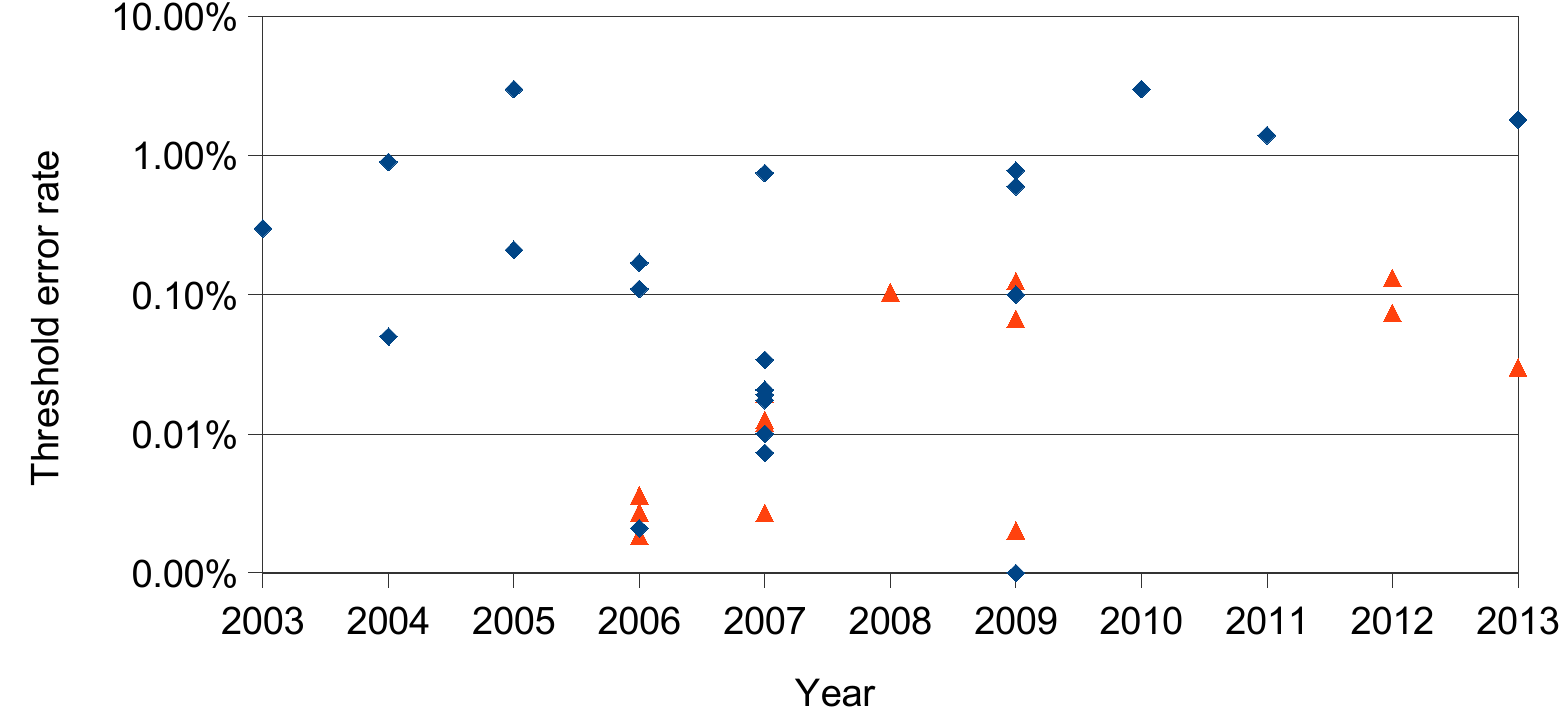}
\caption[Threshold bounds and estimates by year.]{\label{fig:threshold-plot}Threshold calculations since $2003$ arranged in chronological order by year.  ~Blue diamonds indicate estimates based on Monte Carlo sampling.\footnote{\cite{Steane2003,Reic04,Knill2004,Meto05,Szko04,Dawson2006,Svore2006b,Aliferis2007b,Raussendorf2007,Raussendorf2007a,Step09,Wang2009,Wang2009b,Fowler2009,Fujii2010a,Wang2011,Stephens2013}}  
Orange triangles indicate rigorous lower bounds (with varying assumptions).\footnote{\cite{Aliferis2005,Reichardt2005,Svore2006b,Aliferis2007c,Aliferis2007b,Spedalieri2009,Step07a,Aliferis2009,Paetznick2011,Fowler2012c,Lai2013a}}
}
\end{minipage}
\end{figure}

\subsection{Resource optimization
\label{sec:fault.history.resources}
}
Modern fault-tolerance schemes provide reasonable confidence that noise thresholds can be met using near-term technologies, at least for small numbers of qubits.  At the same time, the resource requirements for these schemes can be overwhelming. 
Knill, for example, estimates that his distance-two scheme would require a resource overhead ranging from one-thousand to one-billion fold, or more depending on computation size and gate error rate. Estimates for a cluster-state-based scheme due to Raussendorf, Harrington and Goyal are similarly large~\cite{Raussendorf2007a}.

Accordingly, the focus in quantum fault tolerance has shifted from threshold calculations to resource reduction and optimization.  In all schemes, particularly those based on concatenated codes, the dominant source of overhead is due to error correction.  Most encoded gates on an $n$-qubit code can be implemented using roughly $n$ gates.  Typical error correction procedures, meanwhile, require additional ancillary qubits and can require ten to one-hundred times as many gates~\cite{Shor1997,Steane1996,Knill2004}.

Steane has proposed an error correction method based on ancillary encoded stabilizer states~\cite{Steane1996} (see~\secref{sec:fault.qec.steane}), and in $2002$ showed a method for preparing such states fault-tolerantly~\cite{Stea02}. Steane's method uses a hierarchy of many encoded stabilizer states that can be used to verify the reliability of a single encoded state. Reichardt suggested a procedure for improving on Steane's method~\cite{Reichardt2006}, and in~\chapref{chap:ancilla} we examine additional improvements in detail.

Aliferis and Cross have demonstrated a fundamentally different approach to fault-tolerant error correction for the family of Bacon-Shor subsystem codes~\cite{Aliferis2007b}.  Their method eliminates the need for encoded ancillas and, instead, requires only nearest-neighbor two-qubit measurements which can be accomplished with just a single ``bare'' ancilla qubit. Similar bare-ancilla techniques are used for topological codes~\cite{Landahl2011,Fowler2012d}.

For many quantum error-correcting codes, Clifford operations can be implemented very efficiently.  In $2004$, Bravyi and Kitaev showed that universal fault-tolerant quantum computation is possible with only Clifford gates and special ``magic'' resource states~\cite{Bravyi2004}.  Specifically, a fault-tolerant $T$ gate can be obtained by progressively refining noisy magic states into fewer, but less noisy copies in a process known as \emph{state distillation}. See~\secref{sec:fault.gates.distillation}.

Unfortunately, state-distillation is usually very costly.  The cost of distilling a $T$ gate to fidelity $(1-\epsilon)$ scales as $O(\log^{2.47}(1/\epsilon))$, but again the numbers are large in absolute terms; usually thousands of magic states are required. Recently, though, a flurry of results have yielded significant improvements.  In $2012$, Meier, Eastin and Knill~\cite{Meier2012}, Bravyi and Haah~\cite{Bravyi2012a}, and Jones~\cite{Jones2012c} have each proposed new methods for $T$-gate distillation. The protocol of Jones comes arbitrarily close to $O(\log(1/\epsilon))$ in the number of magic states, and this is conjectured to be optimal. However, the total costs of the new protocols are more challenging to calculate, and so their practical benefits are less clear~\cite{Fowler2013,Jones2013b}.

Fowler and others have incorporated and optimized various distillation methods for use in the surface code~\cite{Fowler2012f,Fowler2013}, including a method for parallelization~\cite{Fowler2012g}.  Jones and Eastin have independently observed that distillation of so-called \emph{Toffoli states} can yield improvements compared to Toffoli gate constructions that use fault-tolerant Clifford and $T$ gates~\cite{Jones2012d,Eastin2012,Jones2013a,Jones2013b}.  In total, such optimizations can yield orders-of-magnitude improvements in the fault-tolerance resource overhead compared to naive methods~\cite{Jones2012d,Jones2013b}.

\subsection{Unitary decomposition
\label{fault.history.decomposition}
}
Fault-tolerance schemes provide universality through a small discrete set of encoded gates.  However, quantum algorithms are usually specified in terms of arbitrary unitaries.  Until recently, the standard method for decomposition into fault-tolerant gates has been the Solovay-Kitaev algorithm~\cite{Dawson2005}.  Once again, the decomposition cost of $O(\log^{3.97}(1/\epsilon))$ is asymptotically efficient, but often requires tens of thousands of fault-tolerant gates in absolute terms. 

In principle the decomposition cost is lower bounded by a more modest scaling of $O(\log(1/\epsilon))$~\cite{Kitaev1997a,Kita02}.
Fowler suggested an optimal approximation of single-qubit unitaries by optimized but exponential-time direct search~\cite{Fowl04c}.
In $2012$ Kliuchnikov, Maslov and Mosca (KMM) characterized the set of single-qubit unitaries that can be exactly decomposed with $\{\text{Clifford},T\}$ and gave an optimal and efficient algorithm for exact decomposition~\cite{Kliuchnikov2012}, and later an asymptotically optimal algorithm for approximate decomposition~\cite{Kliuchnikov2012a}.  Further improvements by Selinger~\cite{Selinger2012a}, and KMM~\cite{Kliuchnikov2012b} soon followed.

Several other methods for single-qubit unitary decomposition have been proposed.  One method involves preparing so-called \emph{Fourier states} and using phase kickback~\cite{Kita02}.  Using recent optimizations due to Jones~\cite{Jones2013}, this method is shown to be competitive with~\cite{Selinger2012} and~\cite{Kliuchnikov2012b} when using the surface code.  Bocharov and Svore have shown that decomposition into an alternative gate set $\{\text{Clifford},V_3=(I+2iZ)/\sqrt{5}\}$ can be up to six times better than~\cite{Kliuchnikov2012b}, but requires an implementation of $V_3$ which is more efficient than those currently known~\cite{Bocharov2013}.  In~\chapref{chap:repeat} we discuss a $V_3$ implementation that requires $5.26$ $T$ gates (in expectation), thus making~\cite{Bocharov2012} competitive with all of the methods above.  We also present a class of non-deterministic quantum circuits that can be used to approximate single-qubit unitaries for less than half the cost of existing methods.

\section{The noise threshold
\label{sec:fault.threshold}
}

Noise thresholds for quantum computation manifest in a variety of forms depending on physical noise and gate models, physical connectivity constraints, choice of error correcting code, method of error correction and the rigor with which the result is obtained.  In all cases though, the goal is the same: determine the conditions under which reliable large-scale implementation of a quantum algorithm is possible. We now discuss these various conditions, and outline techniques for calculating threshold values.

\subsection{Noise models
\label{sec:fault.threshold.noise}
}

In order for fault-tolerant techniques to be effective, the strength of the noise must be below a certain threshold value.  The way that strength is defined, and the methods for calculating the threshold depend on the way in which the noise is modeled.  Many different models can be considered and a broad categorization includes:
\begin{itemize}
  \item Stochastic - physical gates fail according to a probability distribution,
  \item Markovian - physical gates fail independently,
  \item Non-Markovian - gate failures may be correlated,
  \item Local - gate behavior is correlated to a constant number of other gates.
\end{itemize}
Additional classifications are also possible.  For example, one can consider noise which acts unitarily only on the computer, and does not include the environment.

\subsubsection{Pauli and Clifford channels
\label{sec:fault.theshold.noise.pauli}
}
The simplest way to model noise is as a Pauli channel.  In this setting, each gate is specified by the ideal version of the unitary followed by either the identity, or some Pauli-group error according to a probability distribution.
The Pauli channel is an example of a stochastic and Markovian noise model in that errors occur independently at each gate according to a fixed probability distribution.  Specific cases include physically motivated noise such at the the depolarizing channel and the dephasing channel~\cite{Nielsen2000}. In the depolarizing channel, for example, a single-qubit gate may be followed by one of $\{X,Y,Z\}$ each with probability $p/3$, where the parameter $0 \leq p \leq 1$ specifies the strength of the noise.

Pauli channels can be generalized by enlarging the set of possible errors.  Clifford channels, for example, implement the ideal gate followed by an element of the Clifford group.  Indeed, since Clifford channels provide more parameters than Pauli channels, they can more closely characterize physical behavior in many cases~\cite{Magesan2012a}.
Pauli and Clifford channel noise models are useful because they allow for efficient Monte Carlo sampling and simulation. See~\secref{sec:fault.threshold.monte-carlo}.

\subsubsection{More general noise models
\label{sec:fault.threshold.noise.general}
}

Threshold calculations can be made for more general kinds of noise models, as well. Aliferis, Gottesman and Preskill (AGP)~\cite{Aliferis2005}, assume a local non-Markovian error model which is similar to a Pauli channel except that, when an error occurs, an adversary is allowed to choose the Pauli error. In this model, gates fail stochastically, but the adversary is allowed to coordinate the errors (in both time and space) among faulty gates in the circuit.

AGP also prove a threshold for a stronger non-stochastic model in which the behavior of a gate can depend on conditions of both the quantum computer and the environment at other points in space and time.  That is, gate failures are no longer independent but can be correlated by a kind of quantum memory.  Others have also considered non-stochastic models with varying restrictions on the type and strength of correlations~\cite{Terhal2005,Aharonov2006,Ng2009}.

Preskill has considered the most general noise model of all~\cite{Preskill2012a}.  In his model, the coupling between the environment and the computer is allowed to be completely arbitrary, assuming only that single qubits can be prepared with reasonable fidelity.  Preskill shows that a positive threshold exists so long as the strength of $k$-qubit interactions decays rapidly (i.e., exponentially) with $k$. 


\subsection{Rectangles and gadgets
\label{sec:fault.threshold.rectangles}
}
Many threshold theorems consider fault-tolerant, noisy simulations constructed by compiling an ideal quantum circuit into a sequence of \emph{rectangles}, each of which contains an encoded operation ``gadget''~(Ga) and a trailing error correction gadget~(TEC). See~\figref{fig:exrec}.  The methods and notation here and in the remainder of the chapter follows~\cite{Aliferis2005}.
A gadget may contain many physical \emph{locations}, i.e., unitary gates and qubit preparations and measurements, each of which may be faulty (according to the prescribed noise model).  A gadget in which there are $n$ faulty locations is said to \emph{contain} $n$ faults.  

\begin{figure}
\centering
\includegraphics[scale=1.2]{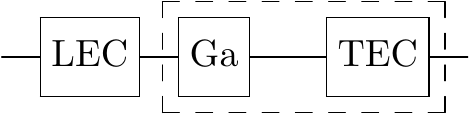}
\caption[An extended rectangle.]{\label{fig:exrec}
A rectangle, indicated here by the dotted line, includes a gate gadget (Ga) followed by a trailing error correction (TEC).  An extended rectangle (exRec) also includes the leading error correction (LEC).
}
\end{figure}

\begin{figure}
\centering
\includegraphics[height=1.3cm]{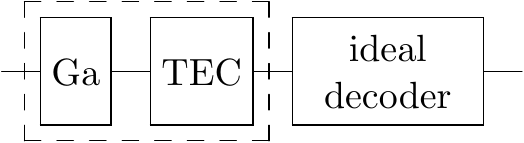}
\raisebox{.48cm}{\Large{~~$\equiv$~~}}
\raisebox{.15cm}{\includegraphics[height=1cm]{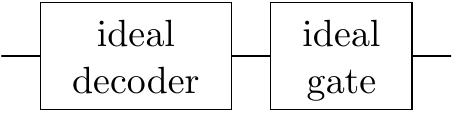}}
\caption[A correct rectangle.]{\label{fig:correct-rectangle}
A rectangle is correct if the rectangle followed by an ideal decoder is equivalent to an ideal decoder followed by the ideal gate.
}
\end{figure}

For simplicity, we will assume that data is encoded into a quantum error-correcting code that encodes just a single qubit.  That is, each logical qubit belongs to its own code block.  We will also assume that the same error-correcting code is used throughout.

A \emph{decoder} is gadget that maps an encoded logical state, possibly containing errors, to the corresponding single qubit state.  We can use the decoder gadget in order to reason about the relationship between a rectangle and the intended logical gate.
\begin{definition}[Rectangle correctness]
\label{def:rectangle-correctness}
A rectangle is \emph{correct} if the output of the rectangle followed by an ideal decoder (a decoder containing no faults) is equivalent to the output of an ideal decoder followed by an ideal implementation of the corresponding gate. See~\figref{fig:correct-rectangle}.
If a rectangle is not correct then it is \emph{incorrect}.
\end{definition}
In other words, a correct rectangle effectively acts as an encoded version of the intended gate.  If all rectangles are correct then a simple inductive argument shows that the compiled, noisy circuit successfully simulates the original ideal circuit.  By ``simulates'' we mean that the probability distribution obtained by measuring the outputs of ideal circuit is equivalent to the probability distribution obtained by measuring the outputs of the noisy fault-tolerant circuit.
We should emphasize here that the decoder gadget, ideal or otherwise, is conceptual only.  It is not actually used in the fault tolerant simulation.

For a fixed stochastic noise model and a fixed quantum error-correcting code, the probability that a rectangle is correct is a constant and therefore the probability that all rectangles are correct will generally be exponentially small in the number of gates in the circuit being simulated.  To achieve a constant success probability, code concatenation (see~\secref{sec:qec.qecc.concatenated}) is often used. In a concatenated fault tolerant simulation, each gate is first compiled into a rectangle, called a level-one rectangle ($1$-Rec), as described above.  Then, a level-two rectangle ($2$-Rec) is constructed by compiling each physical gate of the $1$-Rec into a rectangle.  This process is repeated as many times as desired, resulting in a circuit composed of a hierarchy of rectangles. 

\subsubsection{Strict fault tolerance}
\defref{def:rectangle-correctness} says nothing about the conditions under which we can expect the rectangle to be correct.  Of course, we should expect that a rectangle is correct when it contains zero faults.  It will be helpful to impose some additional constraints on each gadget, however. A gadget which satisfies these constraints will be called \emph{strictly fault tolerant}.

Informally, strict fault tolerance requires that a gadget must $1$) faithfully perform its encoded function (either correction of errors or data manipulation) and $2$) control the propagation of errors.  The roles of gate and error correction gadgets are distinct, and we define strict fault-tolerance separately for each.

In the definitions below, let $t = \lfloor (d-1)/2 \rfloor$, where $d$ is the minimum distance of the error-correcting code in use.
\begin{definition}[Strict fault tolerance: Ga]
\label{def:strict-fault-tolerance-gate}
Consider a Ga that contains $r$ faults and for which the input contains an error of weight $s$ such that $r+s\leq t$. Then the Ga is \emph{strictly fault tolerant} if and only if:
\begin{enumerate}
  \item the effect of perfectly decoding the output of the Ga is the same as first perfectly decoding the input to the Ga and then performing the corresponding ideal gate, and
  \item the weight of the error at the output of the Ga is at most $r+s$.
\end{enumerate}
\end{definition}

\begin{definition}[Strict fault tolerance: EC]
\label{def:strict-fault-tolerance-ec}
Similarly consider an EC that contains $r$ faults and has an input with a weight $s$ error. The EC is \emph{strictly fault tolerant} if and only if:
\begin{enumerate}
  \item for $r+s\leq t$ the state obtained by decoding the output of the EC is the same as the state obtained by \emph{removing} the EC and (ideally) decoding the input, and
  \item the output of the EC contains an error of weight at most $r$ for all $r\leq t$, regardless of $s$.
\end{enumerate}
\end{definition}

In the above definitions, the input $\ket\psi$ to the gadget is some quantum state on $n$ qubits.  The input is said to \emph{contain} an error of weight-$k$ if $\ket\psi$ is equal to a codeword multiplied by some Pauli error of weight $k$, modulo the stabilizers and the logical operators.

\subsubsection{The extended rectangle}
Rectangles do not overlap, but the output of a rectangle is the input of another and so rectangles do not act independently when errors occur.  An error on the output of one rectangle could combine with an error that occurs in the subsequent rectangle to cause a logical error.
In order to circumvent this problem, the preceding (or leading) error correction gadget~(LEC) of a rectangle can be included to form an \emph{extended rectangle} (exRec). ExRecs \emph{do} overlap, but under certain reasonable assumptions, the behavior of an exRec is independent of the errors on its inputs.

In particular, if the correction applied by the LEC is deterministic for all possible input errors, then it can be shown that the syndrome on the output of the LEC is independent of the input~\cite{Cross2009}.  The correctness of the enclosed rectangle, therefore, can be determined by analyzing the exRec in isolation.  This observation is a key element of the malignant set counting technique discussed in~\secref{sec:fault.threshold.malignant}.

\subsection{Level reduction
\label{sec:fault.threshold.level-reduction}
}
If all rectangles at all levels of concatenation are correct, then the fault tolerant simulation reproduces the results of the corresponding ideal quantum circuit.  \emph{Level reduction} is a conceptual technique for coping with incorrect rectangles in order to maintain a faithful simulation result. The idea of level reduction is to incrementally replace each rectangle at the lowest level with either an ideal location (when the rectangle is correct), or a faulty location (when the rectangle is incorrect).  Repeating the process for each level of concatenation yields a quantum circuit that directly reflects the original circuit, and hopefully contains no faulty locations.

Level reduction begins by placing ideal decoders at the outputs of the rightmost $1$-Recs.  If a $1$-Rec is correct, then by definition, the behavior of the simulation is unchanged by moving the decoder to the left and replacing the rectangle with the corresponding ideal location.  If a $1$-Rec is incorrect, however, then the decoder is stuck and cannot be moved to the left.  Instead, an ideal decoder-encoder pair is placed to the left of the LEC of the corresponding $1$-\emph{exRec}.  The result is a $1$-exRec flanked by an ideal encoder and decoder that can be represented at level-two by a faulty location. See~\figref{fig:incorrect-exrec}.

\begin{figure}
\centering
\includegraphics{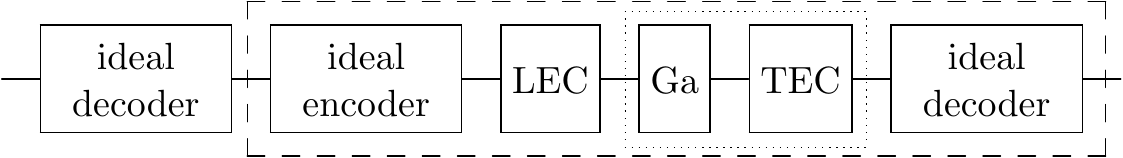}
\caption[Level-reduction on an incorrect rectangle.]{\label{fig:incorrect-exrec}
If a rectangle (indicated by the dotted line) is found to be incorrect, then the ideal decoder cannot be moved through to the left.  Instead, an ideal decoder-encoder pair is placed to the left of the LEC so that the entire exRec is flanked.  The new ideal decoder can now proceed to the left as normal.  The encoder-exRec-decoder sequence in the dashed box is replaced by a single faulty location in the next level of concatenation.
}
\end{figure}

By repeating the process for each $1$-Rec, the ideal decoders gradually sweep from right to left, across the entire simulation.  A decoder is free to move to the left until encountering an incorrect $1$-Rec at which point a new decoder-encoder pair is created to take its place.  The result is a level-$(k-1)$ simulation with faulty locations at previously incorrect $1$-Recs.
In this way, level reduction allows the level-$(k+1)$ analysis to proceed by treating each $k$-Rec as a single \emph{independent} location.  The probability that a ``location'' fails in the level-$(k+1)$ simulation is upper bounded by the probability that the corresponding $k$-Rec is incorrect.

As a concrete example, consider the circuit shown in~\figref{fig:overlapping-exrecs-with-decoders}.  The level-reduction procedure proceeds as follows.
\begin{enumerate}
  \item Examine exRec $2$.  If the enclosed rectangle is incorrect then replace the entire \emph{exRec} with a faulty version of the associated (level-zero) gate.  Otherwise, replace the \emph{rectangle} with an ideal version of the associated gate.  
  \item Examine exRec $3$.  Follow the same procedure as for exRec $2$.  
  \item Examine exRec $1$.  Depending on the outcomes of exRec $2$ and exRec $3$, one or both of the TECs may have been removed.  The enclosed rectangle now consists of the encoded CNOT and any remaining TECs.  If the remains of rectangle $1$ are incorrect, exRec $1$ is replaced with a faulty level-zero gate.  Otherwise, the rectangle is replaced with an ideal level-zero gate.  
\end{enumerate}

\begin{figure}
\centering
\includegraphics[scale=1]{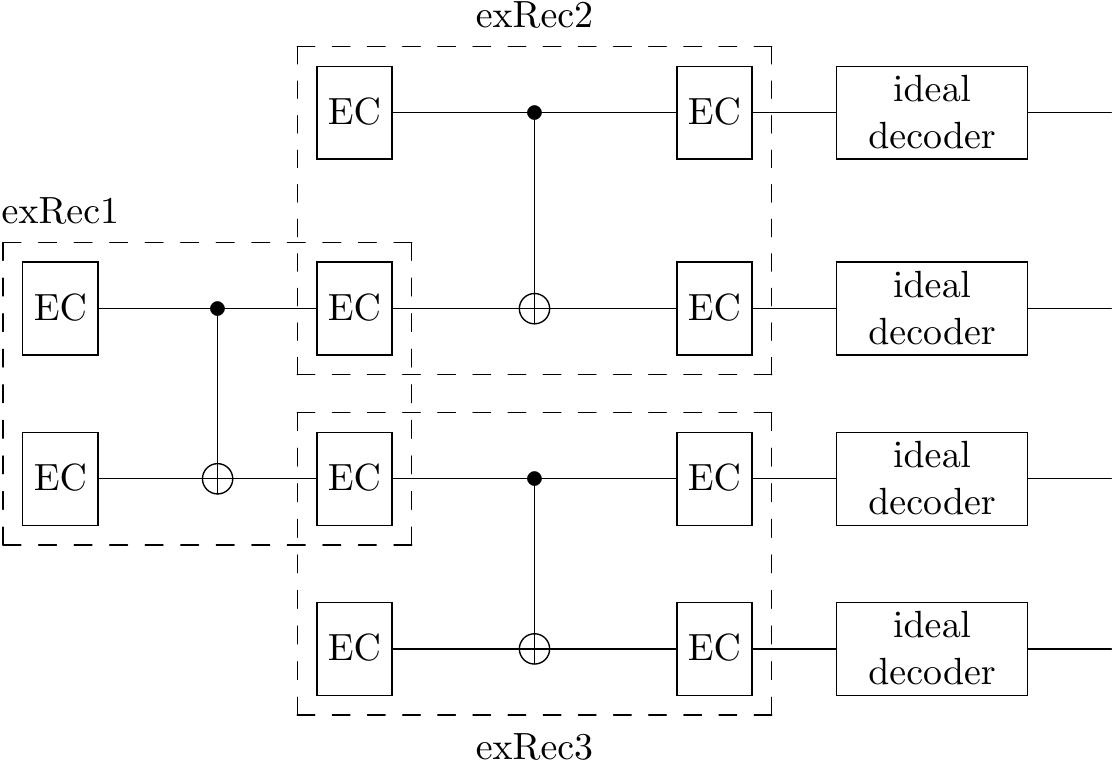}
\caption[Overlapping exRecs.]{An example of a fault tolerant simulation with three overlapping exRecs.  Level-reduction starts by sweeping the decoders back through exRec 2 and exRec 3, and then moving on to exRec 1.}
\label{fig:overlapping-exrecs-with-decoders}
\end{figure}

There are two technicalities in the level-reduction process that must be addressed.  First, when an incorrect rectangle is encountered, the newly created decoder-encoder splits the preceding rectangle, effectively removing the TEC from the rectangle.  This problem can be readily fixed by defining correctness for the partial rectangle in a straightforward way.  Second, flanking the incorrect exRec with an encoder and decoder allows us to treat it as a faulty location at level-two.  But the definition of incorrectness is insufficient to identify which error actually occurred.  AGP solve this problem by using an adversarial noise model in which the worst-case error is always assumed.  In~\chapref{chap:threshold}, we will see that other noise models can be accommodated by more carefully characterizing correctness and incorrectness.

\subsection{Malignant set counting
\label{sec:fault.threshold.malignant}
}
At each level $k$ of concatenation, the probability that the $k$-Rec is correct increases relative to level $k-1$ so long as the strength of the noise is below a certain value, i.e., the threshold. The threshold is calculated by upper bounding the probability that each type of rectangle is incorrect. But, as discussed above, rectangle behavior is dependent on its inputs. AGP therefore obtain an upper bound by instead analyzing the exRec.

Consider a code that corrects errors up to weight $t$, and assume that the gadgets in the exRec are strictly fault tolerant.  Then the enclosed rectangle is guaranteed to be correct if it contains no more than $t$ faults, and the probability of incorrectness $p_1$ can be naively upper-bounded as
\begin{equation}
\label{eq:p1-naive}
p_1 \leq \binom{n}{t+1}p^{t+1}
\enspace,
\end{equation}
where $n$ is the number of locations in the exRec and $p$ is an upper bound on the probability that a location is faulty.  An inductive argument shows that the threshold is then lower bounded by
\begin{equation}
\label{eq:pth-naive}
p_\text{th} \geq \binom{n}{t+1}^{-1/t}
\enspace .
\end{equation}

Equation~\eqnref{eq:p1-naive} (and therefore~\eqnref{eq:pth-naive}) can be improved by noting that, though the code can only correct errors up to weight $t$, an exRec that contains more than $t$ faults need not be incorrect. Say, for example, that two faults occur, one in the LEC and one in the TEC and that the code can correct a single error---i.e., $t=1$.  If the TEC fault occurs early on, then it is likely that the two faults combine to cause an uncorrectable error.  But if the TEC fault occurs \emph{after} the error from the LEC has been corrected, then the rectangle will still be correct.  

Malignant set counting is the process of enumerating subsets of faulty locations in the exRec, and counting only those that can actually cause incorrectness. A set of locations is considered \emph{malignant} if there exists some fixed combination of nontrivial Pauli errors acting on that set of locations that causes the enclosed rectangle to be incorrect.  Let $M_k$ be the number of malignant sets of size $k$.  Then by counting all of the malignant sets of size at most $K$, we may use the bound
\begin{equation}
p_1 \leq \binom{n}{K+1}p^{K+1} + \sum_{k=t+1}^{K} M_k p^{k}
\enspace,
\end{equation}
which can be substantially better than~\eqnref{eq:p1-naive}.

Malignant set counting is both conceptually simple, and highly flexible.  As a concrete example, AGP used malignant set counting to prove a threshold of $2.73\times 10^{-5}$ for a deterministic scheme based on the $[[7,1,3]]$ code.  Later, they extended the technique to accommodate non-deterministic gadgets~\cite{Aliferis2009}.  Malignant set counting can also be used to analyze schemes that are geometrically constrained~\cite{Svore2006b,Lai2013a}.  Furthermore, since malignant set counting yields concrete polynomials, it is easy to calculate effective noise strengths, even for very low physical error rates.

A significant drawback of malignant set counting is that the number of possible subsets grows exponentially with $K$.  It is usually feasible to count subsets only up to some small fixed size.  In~\chapref{chap:threshold} we discuss a solution that eliminates many subsets of locations which are unlikely to be simultaneously faulty, thereby permitting much larger values of $K$. 

\subsection{Alternative proof techniques
\label{sec:fault.threshold.alternatives}
}
Malignant set counting and related techniques are effective for proving threshold lower bounds for schemes based on concatenated codes. 
For other codes, and for topological codes in particular, the arguments made by level-reduction no longer apply, since there are no ``levels'' so-to-speak.  Alternative proof techniques are available, however.  One popular method is to map errors in topological codes onto models based on statistical physics~\cite{Dennis2001a,Harrington2004}. With these models, it is possible to prove thresholds in the range $1$-$10$ percent.  However, these high thresholds are obtained by assuming the ability to measure stabilizer generators without creating correlated errors and the ability to classically compute corrections based global information about the syndromes.  Recently, Fowler used a combinatorial argument to prove a lower bound of $7.4\times 10^{-4}$ for the surface code~\cite{Fowler2012c}.  His model includes explicit circuits used to measure stabilizer generators (so that measurements can introduce correlations) and requires only locally-bounded classical syndrome processing. 

\subsection{Monte Carlo simulation
\label{sec:fault.threshold.monte-carlo}
}

An alternative solution to the malignant set counting complexity problem is to randomly sample rather than exhaust over all possible subsets.  Any stochastic error model induces a probability distribution of faulty locations, which can be sampled using the Monte Carlo method.  The result is an estimate of the threshold to within some statistical confidence interval. Aliferis and Cross have used this technique to calculate thresholds for a variety of codes~\cite{Aliferis2007b}.
Steane~\cite{Steane2003} and Knill~\cite{Knill2004} have used Monte Carlo sampling to directly simulate depolarizing noise on sequences of rectangles and calculate the probability of correctness.
Svore and others have used a more limited simulation of a single level-one exRec~\cite{Svore2005,Svore2006a,Cross2009}.  Their simulations yield a value called the \emph{pseudo-threshold}, which is a rough estimate of the threshold rather than a statistical bound, but is easier to calculate.

Monte Carlo simulation has been used extensively to estimate thresholds for schemes based on topological error-correcting codes, which do not conform to the rectangle and gadget paradigm outlined in~\secref{sec:fault.threshold.rectangles}~\cite{RAUSSENDORF2006,Raussendorf2007,Raussendorf2007a,Fowler2009,Wang2009b,Fujii2010a,Wang2011,Stephens2013}.  In these cases, a small patch of the code is simulated many times over a range of physical error rates and for progressively larger code distances.  Plotting the results by code distance yields a ``waterfall'' shape in which the intersection of the curves converges to a point which is deemed the threshold. \figref{fig:waterfall} shows simulation results for the surface code~\cite{Fowler2013d}.

\begin{figure}
\centering
\includegraphics[width=.7\textwidth]{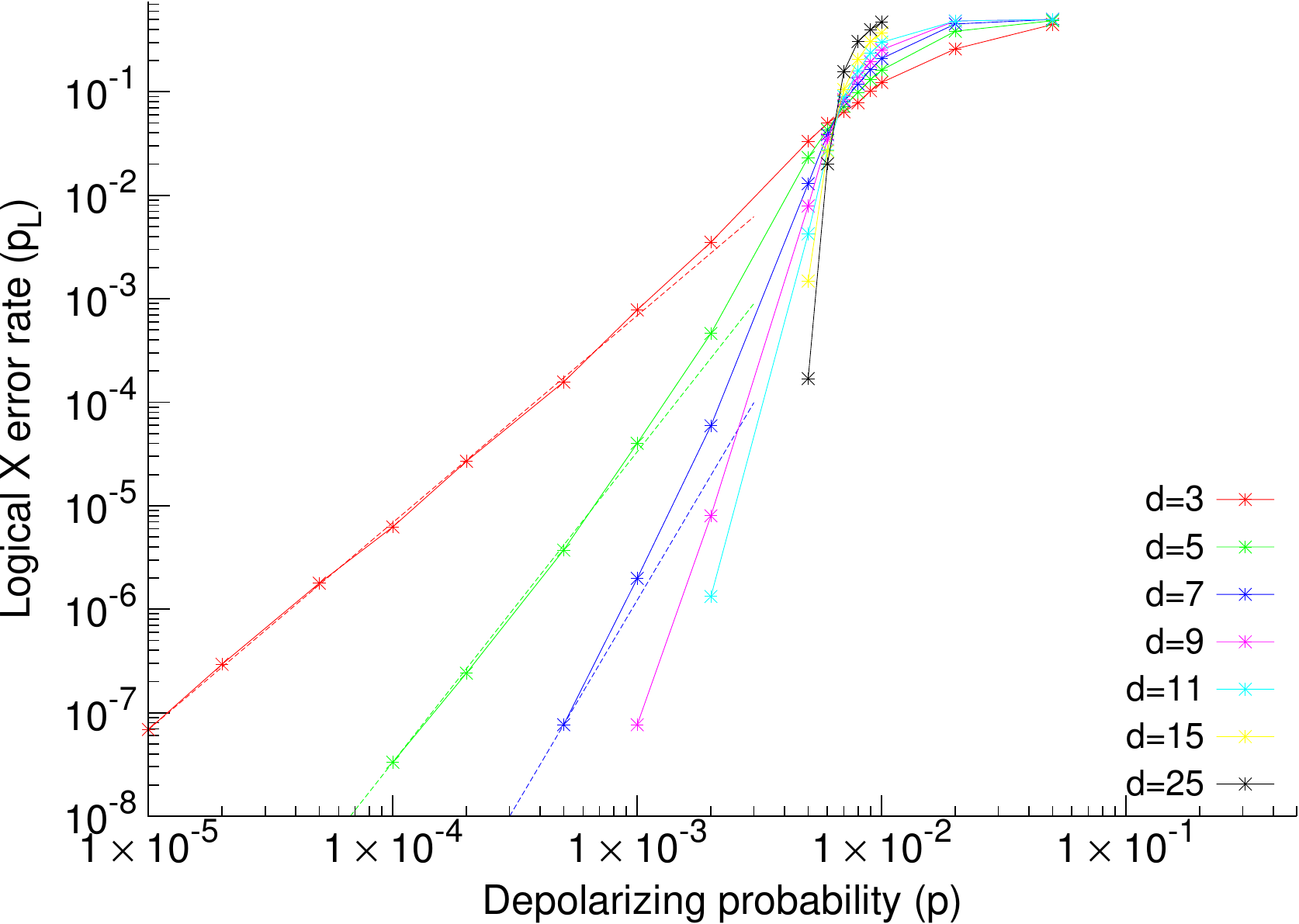}
\caption[Monte Carlo waterfall plot for the surface code.]{\label{fig:waterfall}
An example of Monte Carlo simulation for the surface code.  A patch of the surface code is simulated for a variety of depolarizing noise strengths $p$ and code distances $d$.  The threshold corresponds to the intersection point. Reproduced, with permission, from~\cite{Fowler2013d}.
}
\end{figure}

\subsection{Limitations
\label{sec:fault.threshold.limitations}
}

Threshold theorems describe the circumstances for which reliable quantum computation is possible.  Under which circumstances is quantum computation \emph{not} possible?  In other words, what are the upper bounds on the noise threshold?

Harrow and Nielsen showed that two-qubit gates are incapable of generating entanglement when subject to depolarizing noise with strength $0.74$, or $0.50$ for more general noise~\cite{Harrow2003}.  This result for depolarizing noise was sharpened to $0.67$ by~\cite{Virmani2004}.  Another way to upper bound the threshold is to allow perfect stabilizer operations and then determine the noise rate at which $\{\Clifford,T\}$ circuits can be simulated classically. (Recall from~\secref{sec:mechanics.universality.clifford} that Clifford circuits can be simulated classically.)  Using this technique~\cite{Virmani2004} show that classical simulation is possible for dephasing noise with strength $0.3$, or about $0.15$ for worst-case noise.  This result was extended to depolarizing noise with strength $0.45$ by~\cite{Buhrman2006b}.  Both~\cite{Virmani2004} and~\cite{Buhrman2006b} were later shown to be tight in the sense that magic state distillation (see~\secref{sec:fault.gates.distillation}) permits universal quantum computation if the noise strength on the $T$ gate is below the bound~\cite{Reic05}. More recently, it was shown that the results are tight for all single-qubit non-Clifford gates, not just $T$~\cite{VanDam2009}.

On the other hand, \cite{Kempe2008} have considered the case in which single-qubit gates are perfect, but $k$-qubit gates are subject to depolarizing noise.  For the case $k=2$, they show that the output of the circuit is independent of the input when the noise strength is $0.357$.
Plenio and Virmani give perhaps the most comprehensive set of upper bounds, using both noisy Clifford and non-Clifford operations for a variety of noise models and schemes~\cite{Plenio2010}.  In particular, they give a depolarizing noise upper bound of $0.26$ without restriction on the protocol for non-Clifford gates.  The bounds in other cases are as low as $0.03$.

The upper bounds above assume limits on the correlations present in the noise.  Depolarizing noise, for example, assumes that errors on distinct gates are independent.  We may also ask what kinds of correlations can be tolerated. Preskill's result shows that correlations that decay exponentially with the number of qubits can be tolerated~\cite{Preskill2012a}.  However, the correlations cannot be unlimited. Ben-Aroya and Ta-Shma show that controlled-phase flips cannot be corrected, even approximately~\cite{Ben-Aroya2009}.  Kalai has speculated that the types of errors afflicting highly entangled codewords will be strongly correlated across large numbers of qubits~\cite{Kalai2011}.  Such strong correlations, if they exist, would quickly thwart fault-tolerance schemes.  Experimental demonstrations of quantum error-correcting codes (see~\secref{sec:qec.experiment}) suggest, however, that these speculations are overly pessimistic.
A more complete discussion of the impact of correlated noise can be found in~\cite{Staudt} and~\cite{Preskill2012a}.

\section{Encoded computing
\label{sec:fault.gates}
}
Fundamentally, fault tolerance is the practice of simulating an ideal computation by carefully manipulating encoded data.  In particular, we should like that encoded operations meet the conditions given by~\defref{def:strict-fault-tolerance-gate}, namely that they faithfully execute the intended logical operations, and that they prevent errors from spreading among physical qubits.  

The logical operations permitted within a fixed error-correcting code are limited, however.  For a stabilizer code, the set of available operations corresponds exactly with the normalizer, i.e., operators that commute with all of the stabilizers~\secref{sec:qec.qecc.stabilizer}.  Most unitary operations that can be performed on a code block do not actually realize a unitary operation on the encoded qubits.

Furthermore, proposals for quantum computing architectures usually provide a small set of physical one- and two-qubit operations (see, e.g.,~\cite{Ladd2010}).
Translating these physical gates into an arbitrary fault-tolerant encoded operation is a challenging task and is not possible to do exactly in general.  Instead, fault-tolerance schemes provide a small set of universal logical operations constructed from elementary physical gates.
One popular choice of universal gate set is $\{H,T,\text{CNOT}\}$, though there are others. In~\chapref{chap:transversal} we will use $\{H,\CCZ\}$, and in~\chapref{chap:repeat} we will discuss another alternative gate set.
  
\subsection{Transversal gates
\label{sec:fault.gates.transversal}
}

The simplest and most well-behaved class of encoded operations is called \emph{transversal}.  A circuit is transversal if each physical gate acts on at most one qubit in the encoded block.  In the case of multi-qubit operations, the circuit is transversal if each gate acts on at most one qubit in each of the encoded blocks, and no qubit is involved in more than one gate. See~\figref{fig:transversal-gate}.

\begin{figure}
\centering
\includegraphics[height=4cm]{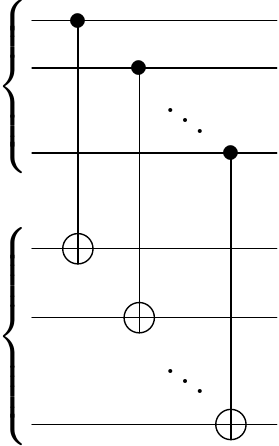}
\caption[A transversal CNOT.]{\label{fig:transversal-gate} 
Transversal implementation of an encoded CNOT.  Each gate touches exactly one qubit per block, and no qubit is involved in more than one gate.}
\end{figure}

Transversal circuits are automatically (strictly) fault tolerant.  A single faulty gate can produce only a single error on a given block.  Thus the maximum weight of an error on any block after application of a transversal circuit is at most $r + s$, where $r$ is the maximum weight of an existing error on any block and $s$ is the number of faulty gates.

The set of encoded gates that can be implemented transversally depends on the error-correcting code. The single-qubit Pauli operators are transversal for any stabilizer code, and CNOT is transversal for any CSS code (a consequence of independent $X$ and $Z$ stabilizers). Specific codes may admit transversal implementations of other operations.  The $[[7,1,3]]$ code, for example, admits transversal implementation of $H$ and $S$, in addition to $\{X,Y,Z,\text{CNOT}\}$.

As noted in~\secref{sec:mechanics.universality.clifford}, the gate set $\{H,S,\text{CNOT}\}$ generates the Clifford group which, though useful, is insufficient for universal quantum computation. Indeed, no quantum error-correcting (or error-detecting) code admits transversal implementation of a universal set of gates~\cite{Eastin2009a}.  In \chapref{chap:transversal}, however, we will see a scheme which effectively circumvents this limitation by incorporating error correction.

\subsection{State distillation
\label{sec:fault.gates.distillation}
}

Given the Clifford group, universality can be achieved by adding a single non-Clifford gate (see, e.g., ~\cite{Campbell2012} Appendix D).  Fault-tolerant implementation of the non-Clifford gate is usually accomplished by preparing many noisy copies of a special resource state, and ``distilling'' them into a single high-fidelity copy. The high-fidelity state can then be used to effect the desired gate using gate-teleportation (\secref{sec:mechanics.teleportation}).  

Importantly, the distillation and gate teleportation circuits for certain resource states can be accomplished using only Clifford gates and $Z$-basis measurement. For example, the state $\ket{A} = T\ket{+}$ can be distilled and teleported to implement $T$ using only CNOT, $H$ and $S$~\cite{Bravyi2004}. See~\figref{fig:tgate-implementation}.  

\begin{figure}
\centering
\begin{subfigure}[b]{\textwidth}
\centering
\includegraphics[width=.35\textwidth]{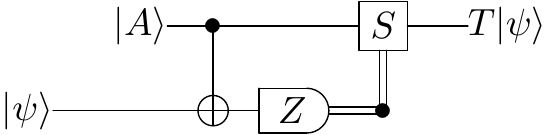}
\caption{\label{fig:T-gate-teleportation}}
\end{subfigure}
\begin{subfigure}[b]{.45\textwidth}
\centering
\includegraphics[width=.95\textwidth]{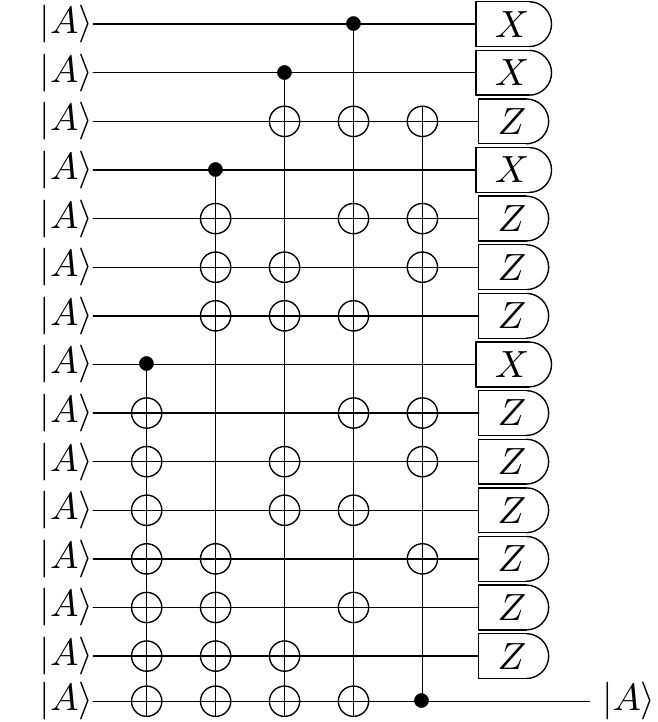}
\caption{}
\end{subfigure}
\hfill
\begin{subfigure}[b]{.45\textwidth}
\includegraphics[width=\textwidth]{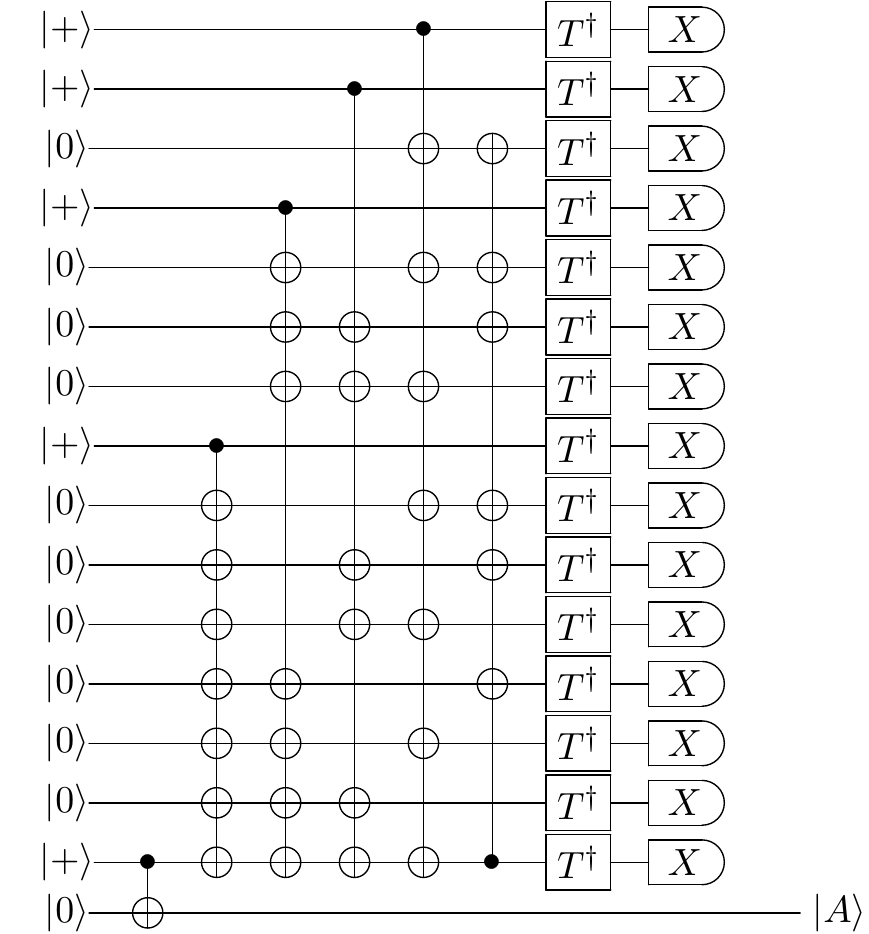}
\caption{\label{fig:T-distillation-Bell-pair}}
\end{subfigure}
\caption[State distillation for the $T$ gate.]{\label{fig:tgate-implementation} 
State distillation of the $T$ gate.
(a) For many quantum error-correcting codes, the $T$ gate is implemented by preparing the resource state $\ket A = \frac{1}{\sqrt 2}(\ket 0 + e^{i\pi/4}\ket 1)$ and using gate-teleportation. Conditioned on the measurement outcome, an $S$ correction may be required. (b)  $15$ noisy $\ket A$ states can be used to prepare a single high-fidelity $\ket A$ state, conditioned on a $+1$ outcome for each measurement~\cite{Bravyi2004}. The circuit is based on the decoding circuit for the $[[15,1,3]]$ code.
(c) Alternatively, transversal $T$ may be applied to one half of a Bell-pair that is encoded into the $[[15,1,3]]$ code. A logical $X$-basis measurement then teleports the $T$ gate onto the other half of the Bell-pair, again conditioned on $+1$ results for each measurement~\cite{Raussendorf2007a}.  An abstract version of this circuit is shown in~\figref{fig:distillation-by-gate-teleportation}.
}
\end{figure}

The distillation circuit shown in~\figref{fig:T-distillation-Bell-pair} can be understood as a novel kind of gate teleportation circuit.  Consider the circuit in~\figref{fig:distillation-by-gate-teleportation}.  This circuit implements teleportation of the state $T\ket+$.  In this case, however, the $T$ gate has been commuted to the right side of the CNOT (rather than the left).  Before performing the $T$ gate, the top ancilla is encoded into a quantum error correcting code that supports transversal (or otherwise robust) implementation of $T$. In this case we use the $[[15,1,3]]$ code which is based on the $15$-bit Hamming code and supports transversal $T$~\cite{Knill1996a}. An encoded $X$-basis measurement then completes the circuit.  Usually, the entire circuit is already encoded in the code that we are using to implement Clifford gates.  The encoding gate in~\figref{fig:distillation-by-gate-teleportation} then concatenates this ``base'' code with the $[[15,1,3]]$ code for the purpose of robustly implementing the $T$ gate.

\begin{figure}
\centering
\includegraphics{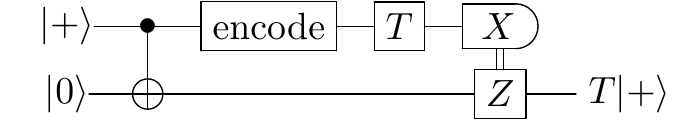}
\caption[Distillation by encoded gate teleportation.]{\label{fig:distillation-by-gate-teleportation}
This circuit outputs $\ket{A} = T\ket+$ by gate teleportation.  Before performing the $T$ gate, the top qubit is encoded into an error correcting code.  The $T$ gate and $X$-basis measurements are performed logically on the code.
}
\end{figure}

Variations on~\figref{fig:tgate-implementation} and~\figref{fig:distillation-by-gate-teleportation} also work.
Recently, Bravyi and Haah showed how to construct a wide class of quantum codes that admit efficient implementation of the encoded $T$ gate~\cite{Bravyi2012a} and can similarly be used for distillation.  Others have developed protocols based on codes that admit transversal Hadamard~\cite{Meier2012,Jones2012c}.
Toffoli gates can be obtained using a similar distillation and teleportation procedure~\cite{Eastin2012,Jones2012d}.

Early proposals for fault tolerant implementations of non-Clifford gates differed somewhat from the protocol described above.  Shor proposed a procedure for implementing the Toffoli gate based on fault-tolerant construction of a cat state plus other transversal gates~\cite{Shor1997}.  Knill, Laflamme and Zurek proposed the use of the $[[15,1,3]]$ code, for which $T$ is transversal, but $H$ is not~\cite{Knill1996a}.  They construct fault-tolerant $H$ using preparation of an encoded $\ket +$ state and a teleportation-like circuit.  These methods are further discussed in~\chapref{chap:transversal}.

Topological codes offer a qualitatively different way to perform fault-tolerant encoded gates.  Many topological codes are also stabilizer codes, and for those codes the same concept of transversality still applies.  However, it can be more productive to implement encoded gates by instead deforming the surface on which the code is supported.  In the surface code, for example, encoded qubits are defined by introducing logical ``defects'' into the lattice of physical qubits.  Encoded gates are then performed by moving defects around each other and fault-tolerance is ensured by keeping the defects sufficiently far apart (see, e.g.,~\cite{Fowler2012d}).  Code deformation is not universal on its own, though.  State distillation is typically used for topological codes, as well.

\section{Robust error correction
\label{sec:fault.qec}
}
Fault-tolerant encoded gates are carefully designed to prevent errors from spreading between qubits.  Even so, errors must be periodically identified and flushed away by measuring error syndromes and making corrections.  There is a very simple circuit that measures the error syndrome. \figref{fig:naive-syndrome-measurement} shows an example for a weight-four stabilizer.  However, this circuit is not fault tolerant.  An error on the ancilla qubit can spread to many of the data qubits, possibly causing a logical error. More complicated error correction circuits are usually required in order to limit the spread of errors.

\begin{figure}
\centering
\includegraphics{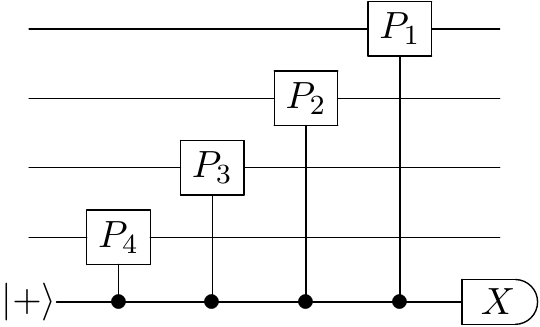}
\caption[Naive syndrome measurement.]{\label{fig:naive-syndrome-measurement}
This circuit measures the the four-qubit stabilizer $P_1\otimes P_2\otimes P_3\otimes P_4$ where the $P_i$ are Pauli operators.  Each Pauli operator is applied, controlled on the ancilla qubit $\ket +$.  The measurement outcome corresponds to the eigenvalue of the stabilizer.  This circuit is not fault tolerant since an error on the ancilla can spread to the other qubits through the controlled-$P_i$ gates.
}
\end{figure}

A variety of error-correction techniques have been studied, and three broad categories are so-called Shor-type~\cite{Shor1997}, Steane-type~\cite{Steane1996} and Knill-type~\cite{Knill2004a} error correction.  This is only a rough categorization, and it leaves significant room for introducing new ideas and optimization within or beyond these categories; see, e.g.,~\cite{Reic04,DiVi06,Aliferis2007b}. 

Common to each of these types of error correction is the use of ancillary qubits to extract error information from the data blocks.  Before interacting with the data, the ancilla qubits need to be prepared in an entangled state.  Error information is transferred by coupling this state with the data.  Finally, measurements are used to obtain syndrome information.  The methods differ mainly in the type of entangled states that are required.

\subsection{Steane error correction
\label{sec:fault.qec.steane}
}   
Steane-type error correction is based on the circuit shown in~\figref{fig:steane-ec}.  $X$ errors are corrected by preparing an encoded $\ket +$ state, performing CNOT from the data to the ancilla and then measuring the ancilla in the $Z$ basis.  $Z$ errors are independently corrected by instead preparing encoded $\ket 0$, performing CNOT from the ancilla to the data, and measuring the ancilla in the $X$ basis.  Note that, under ideal conditions, neither of the circuits have any effect on the encoded data. The state $\ket +$ is the $+1$-eigenstate of $X$ and is therefore invariant as the target of a CNOT.  Likewise, a CNOT does not activate when its control qubit is in state $\ket 0$.

Steane error correction requires that $X$ and $Z$ errors can be corrected independently, and therefore applies only to CSS codes (see~\secref{sec:qec.qecc.css}).  Transversal measurement of the ancilla effectively measures all of the stabilizer generators of a particular type (either $X$ or $Z$) in parallel. Thus, it is typically more efficient for large codes than Shor-type error correction, which measures each generator individually.  Its conceptual simplicity has also made it a popular choice for threshold studies, e.g.,~\cite{Aliferis2005,Reichardt2005,Cross2009}.

The drawback of Steane error correction is that preparation of sufficiently robust encoded $\ket 0$ and $\ket +$ states can be complicated.  Systematic techniques for preparing such encoded stabilizer states exist~\cite{Stea02,Patel2003}, but errors can occur during preparation.  The ancilla state must therefore be ``verified'' before being coupled to the data~\cite{Stea02,Reic04,Reichardt2006}.  Several techniques for improving stabilizer state preparation and verification are discussed in detail in~\chapref{chap:ancilla}.

\begin{figure}
\centering
\includegraphics{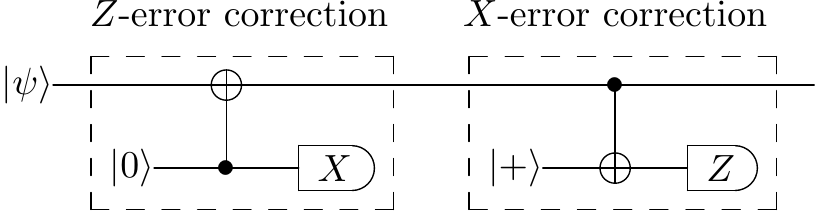}
\caption[Steane-style error correction.]{\label{fig:steane-ec} In Steane-style error-correction, $Z$ and $X$ errors are corrected separately. $Z$ errors are corrected by preparing an encoded $\ket 0$, performing transversal CNOT, and then transversally measuring in the $X$-basis.  Similarly, $X$ errors are corrected by preparing encoded $\ket +$, performing transversal CNOT, and measuring transversally in the $Z$-basis.} 
\end{figure}

\subsection{Knill error correction
\label{sec:fault.qec.knill}
}
Knill-type error correction, like Steane-type, uses encoded ancillary states.  In this case, however, the required states are more complicated.  Knill error correction is based on gate teleportation. See~\figref{fig:knill-ec}.  First an ancillary Bell state is prepared, followed by application of the desired unitary. Then a Bell measurement serves simultaneously to teleport the data, and measure the error syndrome.

\begin{figure}
\centering
\includegraphics{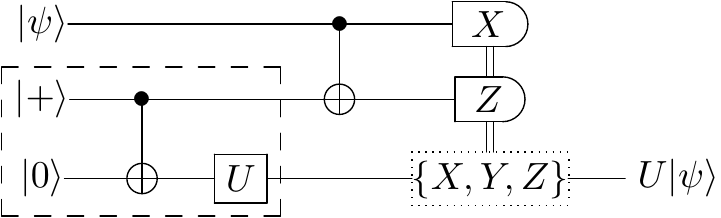}
\caption[Knill-style error correction.]{\label{fig:knill-ec}
In Knill-style error correction, syndrome measurements and an encoded gate are accomplished simultaneously.  The circuit above implements the single-qubit encoded unitary $U$ and corrects both $X$ and $Z$ errors.  The resource state in the dashed box can be prepared and verified offline.  Conditioned on the (logical) measurement outcomes, a Pauli correction may be required (as in teleportation).  In most cases this Pauli correction can be noted classically and need not actually be applied.  
}
\end{figure}

The advantage of preparing a more complicated resource state is that the preparation can be done ``offline''.  The bulk of the work of both error correction and encoded gates can be completed before ever touching the data.  As a result, it is possible to use error-\emph{detection}, throwing away ancillary states that exhibit errors.  Since codes can detect far more errors than they can correct, this method can offer substantially higher thresholds than Steane-type or Shor-type error correction~\cite{Knill2004}.  The concept has strong similarities with state distillation, which also uses error detection.

High-levels of non-determinism, however, can be very resource intensive and can lead to poor threshold performance in certain circumstances~\cite{Lai2013a}.  Rigorous threshold analysis is also more complicated~\cite{Reichardt2007,Aliferis2007c}.

\subsection{Shor error correction
\label{sec:fault.qec.shor}
}
Shor-type error correction is the only of the three types that does not use encoded ancillary states to extract syndrome information.  Instead, each syndrome is measured by preparing a so-called GHZ, or ``cat'' state as in~\figref{fig:shor-ec-circuits}.  Like the Steane and Knill methods, the ancillary state must first be verified using error detection to make sure that errors do not spread back to the data. See~\figref{fig:shor-ec-circuits}. In order to ensure reliable results, each syndrome measurement is repeated a number of times that is proportional to the distance of the code.

\begin{figure}
\centering
\includegraphics{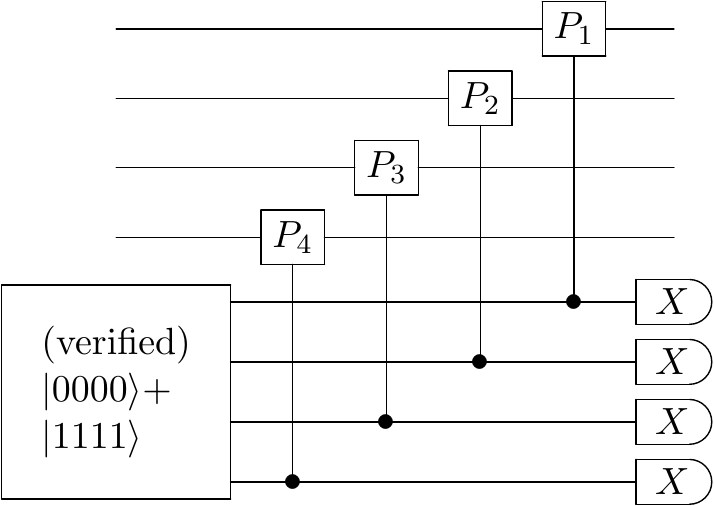}
\caption[Shor-style syndrome measurement.]{\label{fig:shor-ec-circuits}
A weight-four Shor-style syndrome measurement.  This circuit differs from~\figref{fig:naive-syndrome-measurement} in that each qubit of the cat-state ancilla interacts with at most one qubit of the encoded data.  The cat state $\frac{1}{\sqrt{2}}(\ket{0000} + \ket{1111})$ must be checked for errors (verified) before it can be used.
}
\end{figure}

The size of a stabilizer measurement circuit corresponds directly with the weight of the stabilizer.  Consequently, Shor-type error correction is most useful for codes that have low-weight stabilizers. In some cases, syndrome measurements can be implemented using a single ``bare'' ancilla qubit, without creating a cat state.  Bare ancillas are usually used in surface code schemes, for example~\cite{Fowler2012d}.

\section{Resource requirements
\label{sec:fault.resources}
}
The resource requirements for fault-tolerant quantum computation can be specified in a variety of ways including: circuit size (number of gates), circuit depth (time), circuit width (number of physical qubits), or the number of a particular type of gate ($T$ gates, for example).  Often it is possible to trade one type of resource for another.  One common example is to trade circuit width and circuit depth.  Fowler, for example, has shown how to minimize computational depth in the surface code at the expense of a larger qubit lattice~\cite{Fowler2012g}.  Therefore, it is often sensible to express resource requirements in terms of circuit area (depth$\times$width) or volume (space$\times$time).

Threshold theorems show that both the time and space resources required for reliable quantum computation scale efficiently with respect to the size the original noisy computation, in the asymptotic sense.  Given a noisy circuit of size $n$, it is possible to construct a fault-tolerant simulation that takes time and space $n \cdot \text{polylog}(n)$.  The constants in the polynomial can be overwhelmingly high, however, and some examples were noted in~\secref{sec:motivation.overhead}.

We now have a clearer picture for why the overhead is so large, and how the various parts of fault-tolerance schemes contribute to the overhead.  The most obvious sources are error correction and state distillation, both of which involve multiple rounds of error checking in order to produce resource states of suitable fidelity. But there are other less obvious sources, too.
For example, the resource overhead increases rapidly as the gate error-rate approaches the threshold. 
From~\eqnref{eq:concatenated-logical-error-rate} we see that a physical error rate of $p = p_\text{th}/\alpha$ induces a multiplicative factor in the overhead of $1/\log\alpha$, which increases exponentially near the threshold $p_\text{th}$.
Additional overhead is incurred from decomposing unitaries from the quantum algorithm into the limited set of fault-tolerant gates.

In the remaining chapters, we will examine each of these sources of overhead, in turn.  In most cases, our focus will be on optimizing size and width requirements, though some optimizations will also improve circuit depth.

\section{Architectural considerations
\label{sec:fault.architecture}
}

In addition to suppressing noise, fault tolerant constructions must also satisfy other hardware constraints.  This can mean, for example, accounting for more complicated noise models such as those with qubit leakage, but may also involve limitations on the set of available gates, or the placement of and interactions among qubits.

One of the most significant limitations of proposed quantum computing architectures is qubit geometry.  Many such proposals involve a lattice of qubits in a limited number of spatial dimensions~(see references contained in~\cite{Step09,Fowler2012d}).  Qubits in the lattice are allowed to interact only with a small number of nearby qubits, usually only nearest neighbors.  A variety of studies have considered lattices in one-dimension~\cite{Gott00,Shafaei2013,Devitt04,Fowl04,Step09,Saeedi2010}, two-dimensions~\cite{Svore2005,Svore2006b,Fowler2012d}, and three-dimensions~\cite{Bomb07b,Haah2011,Michnicki2012,Bravyi2012,Kim2012}.  Geometric connectivity constraints can significantly impact the performance of a fault-tolerance scheme, particularly for those based on concatenated codes~\cite{Svore2006b,Lai2013a}. Topological codes, however, are each tailored to a specific geometry and suffer little when the computer geometry is similar to the intended topology of the code.

Other limitations have also been considered.  Gate execution times can vary depending on the gate.  Measurements often take longer than unitary gates, though it is possible to overcome this limitation~\cite{DiVi06,Paz-Silva2010a}.  Fault-tolerance can also be achieved when control of individual qubits is limited~\cite{Bririd2003,Kay2005,Kay2007,Fitzsimons2007,Fitzsimons2009,Paz-Silva2010,Paz-Silva2011}.  Production of qubit lattices on physical substrates will likely include some number of defective qubits.  With some care fault-tolerance protocols can be adapted to avoid defective regions, even subject to geometric locality constraints~\cite{Nagayama}.
Practically speaking, it is easier to manufacture many small regions rather than one monolithic lattice.  Several authors have considered fault-tolerant quantum computation in which qubits are distributed among many small nodes~\cite{Devitt2008,VanM09,Kim2009,Devitt2009a,Hors11}.

An assumption that is almost ubiquitous in analysis of fault-tolerance schemes is that perfect and arbitrarily fast classical control logic is available.  In reality, though, classical computers have limitations, and connecting classical and quantum logic requires physical space.  Decoding and interpreting measurement results is efficient for concatenated codes, and can be made similarly efficient for topological codes~\cite{Devitt2010,Fowl11b,Fowler2013a}.  Even so, low-latency high-performance classical logic is desirable and may be necessary for architectures with small quantum gate times.  One attractive option is to use low-power superconducting technology~\cite{Herr11,Mukhanov2011,Volkmann2013,Holmes2013,Herr2013}, which could be placed nearby or on the same substrate as the qubits.  Developing and optimizing the necessary classical control algorithms is a worthy topic of future research.

\chapter[Universality with transversal gates]{Fault-tolerant universal computation with transversal gates
\label{chap:transversal}
}

This chapter is based on material that appears in~\cite{Paetznick2013a}.
\vspace{1cm}

At the highest level, fault-tolerant quantum computation involves only two steps: encoded computation, and error correction.  Thus reducing resource overhead requires simplification of either or both of these steps.  In this chapter we address the former, encoded computation.  In particular, we show that a universal set of fault-tolerant gates can be implemented using only the simplest of constructions, transversal gates.

Recall from~\secref{sec:fault.gates.transversal} that a transversal gate is the application of physical gates transversally across the codewords, usually meaning that the $j$th gate is applied to the $j$th qubits of the codewords, for every~$j$.  Transversal gates are highly desirable because they are both extremely simple and automatically fault tolerant, according to~\defref{def:strict-fault-tolerance-gate}.
Depending on the gate, a transversal implementation may or may not preserve the codespace and execute a valid encoded operation.  Consider the $[[7,1,3]]$ code, for example.  Transversal application of Hadamard preserves the set of stabilizers, and exchanges the $X$ and $Z$ logical operators and so transversal Hadamard implements logical Hadamard.  On the other hand, transversal $T$ is \emph{not} a logical operation on this code; it corrupts the $X$ logical operator.

Until $2007$, an important open question in quantum information theory was whether or not there exist codes that admit transversal implementation of a universal set of gates.  Due to the inability to find one, it was conjectured that no such code existed.  Zeng, Cross and Chuang confirmed this conjecture for stabilizer codes on qubits~\cite{Zeng2007}, and then along with Chen and Chung extended the result to qudits~\cite{Chen2008a}.  Soon after, Eastin and Knill showed that the conjecture is true for any nontrivial quantum code~\cite{Eastin2009a}.
\begin{theorem}[Transversal universality is impossible~\cite{Eastin2009a}]
\label{thm:eastin-knill}
For any quantum code capable of detecting an error on any physical subsystem, the set of transversal logical operations is not universal.
\end{theorem}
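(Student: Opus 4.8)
\noindent\emph{Proof strategy.}\ The plan is to show that the group of logical operations implementable by transversal circuits on the code block is finite, and then to observe that a finite group of unitaries cannot be dense and hence cannot be universal in the sense of~\defref{def:universality}.

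First I would fix the tensor decomposition of the code block into the physical subsystems with respect to which ``transversal'' is defined --- say subsystems of dimensions $d_1,\dots,d_n$ --- and write $P_C$ for the projector onto the code space. Let $\mathcal{G}$ be the set of transversal unitaries $\bar U=\bigotimes_i U_i$ (with $U_i$ acting on subsystem $i$) that preserve the code space, i.e.\ satisfy $\bar U P_C \bar U^\dagger = P_C$. This $\mathcal{G}$ is a subgroup of the compact Lie group $\prod_i \mathrm{U}(d_i)$, and it is \emph{closed}, since the constraint $\bar U P_C \bar U^\dagger = P_C$ is continuous (polynomial in the matrix entries); hence $\mathcal{G}$ is itself a compact Lie group. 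Restricting the action of each $\bar U\in\mathcal{G}$ to the code space and reading it off on the $k$ encoded qubits gives a group homomorphism from $\mathcal{G}$ onto the group $\mathcal{L}$ of transversally-implementable logical operations, a subgroup of $\mathrm{PU}(2^k)$; being the continuous image of a compact group, $\mathcal{L}$ is compact.

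The heart of the argument is to show $\mathcal{L}$ is discrete, hence (being compact) finite. I would argue at the level of Lie algebras. Any one-parameter subgroup of $\prod_i \mathrm{U}(d_i)$ that lies in $\mathcal{G}$ has the form $t\mapsto \bigotimes_i e^{\,itH_i} = \exp\!\big(it\textstyle\sum_i \widetilde H_i\big)$, where $\widetilde H_i$ is $H_i$ acting on subsystem $i$ and trivially elsewhere; so every generator of $\mathcal{G}$ is a sum $H=\sum_i \widetilde H_i$ of single-subsystem Hermitian terms, and the corresponding one-parameter subgroup of $\mathcal{L}$ is generated (to first order) by $P_C H P_C$. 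Now I invoke error detection: because the code detects any error supported on a single subsystem, the error-detection condition --- the special case of~\thmref{thm:qecc-condition} with error set $\{I,E\}$, together with linearity in $E$ --- forces $P_C \widetilde H_i P_C = \lambda_i P_C$ for scalars $\lambda_i$. Summing over $i$ gives $P_C H P_C = \big(\sum_i\lambda_i\big)P_C$: the induced logical generator is a multiple of the identity and acts trivially on the encoded data. Hence the homomorphism $\mathcal{G}\to\mathcal{L}$ kills the whole Lie algebra, $\mathcal{L}$ has dimension zero, and a zero-dimensional compact group is finite. A finite subgroup of $\mathrm{PU}(2^k)$ is not dense, so it is not universal.

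The step I expect to be the main obstacle is not the error-detection input (which is essentially Knill--Laflamme) but the Lie-theoretic bookkeeping that makes ``trivial Lie algebra $\Rightarrow$ finite'' rigorous: one must carefully verify that the code-preserving transversal unitaries form a \emph{closed} subgroup, handle the global-phase quotient to $\mathrm{PU}(2^k)$ cleanly, and --- for the theorem in full generality --- also accommodate transversal circuits that permute subsystems (these only multiply $\mathcal{L}$ by a finite group, the relevant subgroup of $S_n$) and subsystem codes, where a code-preserving transversal unitary may act on gauge as well as logical degrees of freedom, so that ``the logical action'' needs a slightly more careful definition. With that scaffolding in place the contradiction with~\defref{def:universality} is immediate.
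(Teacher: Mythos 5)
The paper does not prove this theorem; it states it with a citation to Eastin and Knill and immediately moves on to the workaround. So there is no "paper's proof" to compare against. Your proposal is, however, a faithful reconstruction of the argument in the cited reference: identify the code-preserving transversal unitaries as a closed (hence compact) subgroup of $\prod_i \mathrm{U}(d_i)$, observe that its Lie algebra consists of sums $\sum_i \widetilde H_i$ of single-subsystem Hermitian generators, invoke the Knill--Laflamme error-detection condition (\thmref{thm:qecc-condition} with $E_a=I$) to get $P_C\widetilde H_i P_C = \lambda_i P_C$ and hence a trivial logical Lie algebra, and conclude the logical group is finite and therefore not dense. The caveats you flag --- quotient by global phases, subsystem permutations contributing only a finite factor, and the gauge-freedom subtlety for subsystem codes --- are exactly the points Eastin and Knill have to handle carefully, so your self-assessment of where the work lies is accurate. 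No gaps of substance; this is the standard proof.
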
  

\thmref{thm:eastin-knill} is unfortunate because the traditional method for completing a universal set of fault-tolerant gates is state distillation, a procedure which is highly costly compared to transversal gates. See~\secref{sec:fault.gates.distillation}.  Indeed state distillation dominates the resource overhead for fault-tolerant quantum computation~\cite{Raussendorf2007a,Fowler2013}.

In this chapter we propose a way of implementing a universal set of quantum gates transversally,
up to a correction that can be made by the standard error-correction procedure. The inclusion of error correction means that~\thmref{thm:eastin-knill} is preserved. However, since error correction is required anyway, our protocol effectively shows that the no-go theorems~\cite{Zeng2007,Chen2008a,Eastin2009a} can be circumvented without adding any new machinery.
Separate injection and distillation procedures are not required.  


Our construction is based on two main insights for the class of ``triorthogonal" quantum stabilizer codes, introduced recently by Bravyi and Haah~\cite{Bravyi2012a}.  First, we observe that the controlled-controlled-$Z$ operation (defined by $\CCZ \ket{a,b,c} = (-1)^{abc} \ket{a,b,c}$ for bits $a,b,c$) can be implemented transversally for any triorthogonal quantum code.  Second, we show that Hadamard can be implemented by transversal $H$ gates followed by stabilizer measurements and Pauli $X$ corrections.  Together, $H$ and $\CCZ$ are universal for quantum computation~\cite{Shi02,Aharonov2003}.


\section{Triorthogonal quantum codes}
\label{sec:transversal.triorthogonal}
Let us begin by specifying the construction of stabilizer codes based on triorthogonal matrices. For two binary vectors $f, g \in \{0,1\}^n$, let $f\cdot g \in \{0,1\}^n$ be their entry-wise product, and let $\abs f$ denote the Hamming weight of~$f$.
\begin{definition}[Triorthogonal matrix~\cite{Bravyi2012a}]
\label{def:triorthogonal-matrix}
An $m\times n$ binary matrix $G$, with rows $f_1,\ldots,f_m \in \{0,1\}^n$, is \emph{triorthogonal} if
\begin{align*}
|f_i\cdot f_j| &= 0 \!\!\!\pmod 2 &\text{and}\quad |f_i\cdot f_j\cdot f_k| &= 0 \!\!\!\pmod 2
\end{align*}
for all pairs $(i,j)$ and triples $(i,j,k)$ of distinct indices. 
\end{definition}

An $m \times n$ triorthogonal matrix $G$ can be used to construct an $n$-qubit, ``triorthogonal," CSS code as follows. 
\begin{definition}[Triorthogonal code~\cite{Bravyi2012a}]
\label{def:triorthogonal-code}
For each even-weight row of a triorthogonal matrix $G$, add an $X$ stabilizer generator by mapping non-zero entries to $X$ operators, e.g., $(1,0,1) \mapsto X \otimes I \otimes X$. Similarly add a $Z$ stabilizer for each row of the orthogonal complement $G^\perp = \{g ~:~ \abs{g \cdot f} = 0 \mod 2,  \forall f\in G\}$. The logical $X$ and $Z$ operators are then given by mapping non-zero entries of the odd-weight rows of $G$ to $X$ and $Z$, respectively.
\end{definition}

For example, the $[[15,1,3]]$ code is triorthogonal~\cite{Knill1996a}, and is discussed in detail in~\secref{sec:transversal.example}.
Bravyi and Haah have constructed a $[[49,1,5]]$ triorthogonal error-correcting code and a family of $[[3k+8, k, 2]]$ triorthogonal error-detecting codes~\cite{Bravyi2012a}.

\subsection{Triply-even codes
\label{sec:transversal.triorthogonal.triply-even}
} 
A special subset of triorthogonal codes admit transversal implementation of the single-qubit $T$ gate.  The $[[15,1,3]]$ code is a well-known example.
These codes, in addition to the conditions in~\defref{def:triorthogonal-matrix} satisfy the slightly more restrictive condition
\begin{equation}
\label{eq:triply-even}
|f_i \cdot f_j| = 0\mod 4
\enspace ,
\end{equation}
for all distinct pairs of even-weight rows $(f_i, f_j)$.  This condition implies that all of the stabilizers of the code have weight $0\mod 8$.  Codes that satisfy~\eqnref{eq:triply-even} are called \emph{triply even}~\cite{Betsumiya2012}.
In general, $T$ is transversal for triorthogonal codes only up to (non-transversal) Clifford corrections~\cite{Bravyi2012a}.

\section{Transversal CCZ}
\label{sec:transversal.ccz}
\def\fodd {f_{\star}} 

We next construct a fault-tolerant $\CCZ$ gate for a triorthogonal code.
We claim that for any triorthogonal code, transversal application of $\CCZ$ gates realizes $\CCZ$ gates on the encoded qubits. 

\begin{theorem}[Transversal $\CCZ$ for triorthogonal codes]
\label{thm:transversal-ccz}
Let $C$ be a triorthogonal code based on a triorthogonal matrix $G$.  Then transversal application of $\CCZ$ implements logical $\CCZ$ transversally on each of the encoded qubits of $C$.
\end{theorem}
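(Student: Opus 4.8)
The plan is to compute the action of transversal $\CCZ$ directly on a triple of encoded computational-basis states and read off that it reproduces logical $\CCZ$ on each triple of corresponding encoded qubits. The computation rests on two elementary facts over $\mathbb{Z}_2$: first, the entry-wise product is multilinear, i.e.\ $(u \oplus u')\cdot v = (u\cdot v)\oplus(u'\cdot v)$; second, the Hamming-weight parity $v \mapsto \abs{v} \bmod 2$ is a $\mathbb{Z}_2$-linear functional, so $\abs{u\oplus v} \equiv \abs{u} + \abs{v} \pmod 2$. Combining these, the weight parity of a threefold entry-wise product of $\mathbb{Z}_2$-linear combinations of the rows $f_1,\dots,f_m$ of $G$ expands as a trilinear form in the coefficients of those combinations.

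Next I would write out the encoded basis states. Split the rows of $G$ into the even-weight rows (which, by \defref{def:triorthogonal-code}, give the $X$-stabilizers) and the $k$ odd-weight rows $f_1,\dots,f_k$ (which give the logical $X$ operators). Using the CSS codeword form~\eqnref{eq:css-codeword}, the logical state is $\lket{x} = 2^{-(m-k)/2}\sum_{y}\ket{(x,y)G}$, where $(x,y)\in\{0,1\}^m$ has its $k$ logical coordinates fixed to $x$ and the remaining coordinates summed over. Since $\CCZ$ is diagonal, transversal $\CCZ$ acts on a product of three length-$n$ bit strings $u,v,w$ by the phase $(-1)^{\abs{u\cdot v\cdot w}}$, so on $\lket{x}\otimes\lket{x'}\otimes\lket{x''}$ it produces a sum over $y,y',y''$ of such phases, each attached to the original basis vector $\ket{(x,y)G}\otimes\ket{(x',y')G}\otimes\ket{(x'',y'')G}$.

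The core step is to evaluate that phase. Writing $\alpha=(x,y)$, $\beta=(x',y')$, $\gamma=(x'',y'')$ in $\{0,1\}^m$ and $u=\alpha G$, $v=\beta G$, $w=\gamma G$, the two facts above give
\[
\abs{u\cdot v\cdot w} \equiv \sum_{a,b,c=1}^{m} \alpha_a\,\beta_b\,\gamma_c\,\abs{f_a\cdot f_b\cdot f_c} \pmod 2 .
\]
Triorthogonality annihilates every term in which $a,b,c$ are not all equal: if $a,b,c$ are distinct, $\abs{f_a\cdot f_b\cdot f_c}\equiv 0$ by the triple condition of \defref{def:triorthogonal-matrix}; if exactly two coincide, say $a=b\neq c$, then $f_a\cdot f_a = f_a$ and $\abs{f_a\cdot f_c}\equiv 0$ by the pairwise condition. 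The surviving diagonal terms contribute $\abs{f_a}$, which is $0\bmod 2$ for the even-weight (stabilizer) rows and $1\bmod 2$ for the odd-weight (logical) rows, so
\[
\abs{u\cdot v\cdot w} \equiv \sum_{i=1}^{k} x_i\, x'_i\, x''_i \pmod 2 ,
\]
a phase independent of $y,y',y''$. Hence transversal $\CCZ$ maps $\lket{x}\otimes\lket{x'}\otimes\lket{x''}$ to $(-1)^{\sum_i x_i x'_i x''_i}\,\lket{x}\otimes\lket{x'}\otimes\lket{x''}$, which is exactly $\CCZ$ applied logically to the $i$-th encoded qubit of the three blocks for every $i$. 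Extending by linearity to arbitrary encoded inputs, and invoking that transversal circuits are automatically strictly fault tolerant (\defref{def:strict-fault-tolerance-gate}), finishes the argument.

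The point that needs care — and the only real content — is that the phase is \emph{constant across the superposition}: this is what guarantees the codespace is preserved, rather than merely that some phase is applied, and it is precisely the vanishing of the ``off-diagonal'' trilinear terms by triorthogonality that makes it so. The even/odd weight split in \defref{def:triorthogonal-code} is exactly what arranges for stabilizer rows to contribute trivially and logical rows to contribute the desired cubic phase, so no separate bookkeeping for cross-talk between distinct logical qubits is needed. One should also note that logical $\CCZ$ is itself diagonal in the encoded computational basis, so comparing phases termwise is legitimate.
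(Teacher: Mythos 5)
Your proof is correct and follows essentially the same route as the paper's: expand the triple entry-wise product of codeword superpositions by multilinearity, invoke triorthogonality to annihilate every term in which a row index appears at most twice, and read off that the surviving diagonal odd-weight-row terms give exactly the cubic phase $\sum_i x_i x'_i x''_i$. The only presentational difference is that the paper first does the single-logical-qubit case and then lifts to general $k$, whereas you write the full trilinear form $\sum_{a,b,c}\alpha_a\beta_b\gamma_c\,\abs{f_a\cdot f_b\cdot f_c}$ at once, which makes the cancellation bookkeeping somewhat more explicit.
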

\begin{proof}
For simplicity consider first the case of a triorthogonal code with a single encoded qubit, i.e., based on a triorthogonal matrix $G$ with a single odd-weight row $f_{\star}$. Let $\mathcal{G}_0 \subseteq \{0,1\}^n$ be the linear span of all the even-weight rows of $G$ and let $\mathcal{G}_1$ be the coset $\{\fodd + g : g\in \mathcal{G}_0\}$. Then the encoding of $\ket a$, for $a \in \{0,1\}$, is given by the uniform superposition over $\mathcal{G}_a$: $\ket{\logical{a}} = \frac{1}{\sqrt{\abs{\mathcal{G}_a}}} \sum_{g \in \mathcal{G}_a} \ket g$.

The action of transversal $\CCZ$ on an encoded basis state $\ket{\logical{a,b,c}}$, for $a, b, c \in \{0,1\}$, is therefore given by 
\begin{equation}
\begin{split}
\CCZ^{\otimes n} \ket{\logical{a,b,c}} 
 &=\sum_{g \in \mathcal{G}_a, h \in \mathcal{G}_b, i \in \mathcal{G}_c} \CCZ^{\otimes n} \ket{g,h,i} \\
 &= \sum_{g \in \mathcal{G}_a, h \in \mathcal{G}_b, i \in \mathcal{G}_c} (-1)^{|g\cdot h\cdot i|}\ket{g,h,i}
\enspace.
\end{split}
\label{eq:transversal-ccz-on-basis-state}
\end{equation}
Now $g \cdot h \cdot i$ can be expanded as $(a \fodd + g')\cdot (b \fodd + h')\cdot (c \fodd + i')
$,
where $g', h', i' \in \mathcal{G}_0$. Expanding further gives one term $abc (\fodd \cdot \fodd \cdot \fodd) = abc \fodd$, plus other triple product terms in which $\fodd$ appears at most twice. Since $G$ is triorthogonal, these other terms necessarily have even weight. The term $abc \fodd$ has odd weight if and only if $a = b = c = 1$. Substituting back into~\eqref{eq:transversal-ccz-on-basis-state}, as desired, 
\begin{equation}
\CCZ^{\otimes n} \ket{\logical{a,b,c}} = (-1)^{abc} \ket{\logical{a,b,c}} 
\enspace.
\end{equation}

In the case that $G$ has some number $k > 1$ of odd-weight rows $\{\fodd^{(1)}, \fodd^{(2)},\ldots,\fodd^{(k)}\}$ we may define $2^k$ cosets, one for each codeword.
Let $\mathbf{a}$ be a length-$k$ binary vector where each element $a_i$ represents a logical qubit of the code, and let
\begin{equation}
\label{eq:coset-vector}
\mathcal{G}_\mathbf{a} := \{g' + \sum_{i=1}^k a_i \fodd^{(i)} ~|~ g' \in \mathcal{G}_0\}
\enspace .
\end{equation}
From~\eqnref{eq:coset-vector}, we can see that the expansion of $g\cdot h\cdot i$ will contain even-weight terms plus $k$ terms of the form $a_ib_ic_i(\fodd^{(i)}\cdot \fodd^{(i)}\cdot \fodd^{(i)})$, each of which is odd if and only if $a_i = b_i = c_i = 1$.  Again substituting back into~\eqnref{eq:transversal-ccz-on-basis-state} we obtain
\begin{equation}
\CCZ^{\otimes n} \ket{\logical{\mathbf{a},\mathbf{b},\mathbf{c}}} = \prod_{i=1}^k(-1)^{a_ib_ic_i} \ket{\logical{\mathbf{a},\mathbf{b},\mathbf{c}}}
\enspace .
\end{equation}
Thus transversal $\CCZ$ implements logical CCZ transversally across each of the encoded qubits.
\end{proof}

We note that transversality of $\CCZ$ for the the subset of triply-even codes follows trivially from the fact that $\CCZ$ can be expressed as a sequence of gates from $\{T,\text{CNOT}\}$~\cite{Nielsen2000}. \thmref{thm:transversal-ccz} extends this result to all triorthogonal codes.
In a sense,~\thmref{thm:transversal-ccz} shows that $\CCZ$ is more ``natural'' than $T$ for triorthogonal codes, since Clifford corrections may be required for $T$~\cite{Bravyi2012a}, but are never required for $\CCZ$.

If the orthogonality conditions on the matrix $G$ are increased, then additional types of diagonal operations are transversal.  If $G$ satisfies the condition that all $j$-tuple products have weight $(0\mod 2)$ for all $2 \leq j \leq h$, then the $h$-fold controlled-$Z$ gate is transversal in the corresponding stabilizer code.
This observation is similar to a result of Landahl and Cesare, who demonstrated that codes satisfying increasingly stringent conditions on weights of the codewords admit transversal $Z$-axis rotations of increasing powers of $1/2^k$~\cite{Landahl2013}.

\section{Transversal Hadamard}
\label{sec:transversal.hadamard}
To achieve universality, we also require a fault-tolerant implementation of the Hadamard gate. For Hadamard to be transversal, the code must be self-dual, i.e., $\mathcal{G}_0 = G^\perp$. Unfortunately, no triorthogonal code is self-dual. Indeed, otherwise, since $\CCZ$ is transversal it would be possible obtain a transversal implementation of Toffoli and $H$ for the same code. See~\figref{fig:toffoli-ccz-equivalence}.  However, Toffoli and $H$ together are universal~\cite{Shi02,Aharonov2003} and so transversal implementations of both would violate~\thmref{thm:eastin-knill}.

\begin{figure}
\centering
\includegraphics[height=2cm]{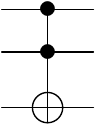}
\raisebox{1cm}{~~\large{=}~~}
\includegraphics[height=2cm]{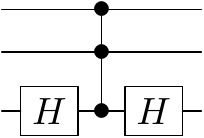}
\caption[Toffoli and CCZ equivalence.]{\label{fig:toffoli-ccz-equivalence}
The Toffoli gate is equivalent to a $\CCZ$ gate in which the target qubit is conjugated by Hadamard gates.
}
\end{figure}

Nonetheless, fault-tolerant and effectively transversal implementations of logical $H$ are still possible. 
\begin{theorem}[Transversal $H$ for triorthogonal codes]
\label{thm:transversal-hadamard}
Let $C$ be a triorthogonal code based on a triorthogonal matrix $G$.  Then the encoded Hadamard gate on each of the encoded qubits of $C$ can be implemented fault-tolerantly using transversal $H$, fault-tolerant syndrome measurement and classically-controlled transversal $X$ gates.
\end{theorem}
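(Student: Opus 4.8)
The plan is to realize logical Hadamard by the composite gadget consisting of (i) transversal $H^{\otimes n}$, then (ii) fault-tolerant syndrome measurement for $C$, then (iii) a classically-controlled transversal $X$ determined by the measured syndrome; I would first check that ideally this performs $\logical{H}$ on every encoded qubit, and then that it satisfies \defref{def:strict-fault-tolerance-gate}.

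Set up notation as in \defref{def:triorthogonal-code}: let $\mathcal{G}_0\subseteq\{0,1\}^n$ be the span of the even-weight rows of $G$, so the $X$-type stabilizer subgroup of $C$ is $\{X(u):u\in\mathcal{G}_0\}$, the $Z$-type stabilizer subgroup is $\{Z(w):w\in G^\perp\}$, and the logical operators of the $i$-th qubit are $\logical{X}_i=X(\fodd^{(i)})$ and $\logical{Z}_i=Z(\fodd^{(i)})$ for the odd-weight rows $\fodd^{(i)}$ of $G$. From \defref{def:triorthogonal-matrix} (pairwise orthogonality of \emph{all} rows, together with $|u|$ even and $|\fodd^{(i)}|$ odd) one gets the two facts that drive the whole argument: $\mathcal{G}_0\subseteq G^\perp$, and $G^\perp\cap\operatorname{rowspan}(G)=\mathcal{G}_0$. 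Since $H$ conjugates $X\leftrightarrow Z$, the transversal gate $H^{\otimes n}$ carries $C$ to the CSS code $C^\ast$ whose $X$-type stabilizers are $\{X(w):w\in G^\perp\}$ and whose $Z$-type stabilizers are $\{Z(u):u\in\mathcal{G}_0\}$; and because the odd-weight supports $\fodd^{(i)}$ are fixed while their Pauli type is swapped, $H^{\otimes n}$ sends $\logical{X}_i^{\,C}\mapsto\logical{Z}_i^{\,C^\ast}$ and $\logical{Z}_i^{\,C}\mapsto\logical{X}_i^{\,C^\ast}$, i.e. it realizes logical Hadamard on every encoded qubit, as a map $C\to C^\ast$.

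The remaining two steps must move the state from $C^\ast$ back to $C$ without disturbing the encoded data. The key observation is that every $Z$-stabilizer $Z(w)$, $w\in G^\perp$, of $C$ commutes with \emph{every} logical operator of $C^\ast$, since $\fodd^{(i)}$ is a row of $G$ and hence orthogonal to all of $G^\perp$; therefore measuring the $Z$-stabilizer generators of $C$ on $H^{\otimes n}\lket{\psi}$ cannot corrupt the encoded $\logical{H}\logical{\psi}$. (The $X$-stabilizer generators of $C$ lie in $X(\mathcal{G}_0)\subseteq X(G^\perp)$ and are already stabilizers of $C^\ast$, so their measurement is deterministic.) The $X$-stabilizers $X(g)$, $g\in G^\perp$, of $C^\ast$ surviving the $Z$-measurement are exactly those with $g\in(G^\perp)^\perp$, hence $g\in\mathcal{G}_0$; together with the freshly measured $\{Z(w):w\in G^\perp\}$ these generate precisely the stabilizer group of $C$, so the post-measurement state equals the $C$-encoding of $\logical{H}\logical{\psi}$ up to an $X$-type Pauli realizing the measured $Z$-syndrome $t$. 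I would then undo that Pauli with a classically-controlled transversal $X(v)$, where the decoder is arranged to return a low-weight $v\in G^\perp$ with $v\cdot w = t(w)$ for all $w\in G^\perp$; such a $v$ exists because every syndrome that can actually arise is trivial on the generators coming from $\mathcal{G}_0$, and the inclusion $\mathcal{G}_0\subseteq G^\perp$ makes exactly those functionals the ones induced by elements of $G^\perp$. A short Fourier/stabilizer computation then shows that $X(v)$ returns the state to the codespace of $C$ with logical value $\logical{H}\logical{\psi}$, and that any two admissible choices of $v$ differ by an element of $\mathcal{G}_0$ and so act identically on the code.

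Finally, strict fault tolerance: $H^{\otimes n}$ and the classically-controlled transversal $X$ each turn a single internal fault into one single-qubit error and never couple qubits within a block, the $C$-syndrome-measurement subgadget is fault tolerant by hypothesis, and the fact that the measured $Z(w)$ commute with all logical operators of $C^\ast$ prevents faults from masquerading as logical corruption; assembling these gives conditions~(1) and~(2) of \defref{def:strict-fault-tolerance-gate} with output error weight at most $r+s$ whenever $r+s\le t$. I expect the genuine obstacle to be the middle of the previous paragraph—certifying that the syndrome-dependent $X$-correction lands back in $C$ \emph{with no spurious logical Pauli}. That is precisely where triorthogonality is indispensable: the identities $\mathcal{G}_0\subseteq G^\perp$ and $G^\perp\cap\operatorname{rowspan}(G)=\mathcal{G}_0$ are what guarantee a valid correction inside $G^\perp$ and confine the remaining ambiguity to the $X$-stabilizer group of $C^\ast$, where it is invisible on the encoded state.
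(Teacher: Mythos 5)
Your proof is correct and takes essentially the same route as the paper: observe that transversal $H$ preserves the $\mathcal{G}_0$ part of the stabilizer group and swaps the logical $X$/$Z$ operators, then restore the $Z$-stabilizers in $G^\perp\setminus\mathcal{G}_0$ by measuring them and applying a syndrome-conditioned transversal $X$ correction, with fault tolerance following from the fact that the $\mathcal{G}_0$-determined minimum distance is preserved throughout. The one place you go further than the paper is in explicitly identifying the correction $X(v)$ as lying in $G^\perp$ and showing that ambiguity in $v$ is confined to $\mathcal{G}_0$ (hence invisible on the code) — the paper asserts existence of a suitable $\chi$ from the more abstract observation that $\zeta$ is neither in the current stabilizer nor in its normalizer, which is a shorter but less constructive version of your argument; conversely, the paper's strict-fault-tolerance paragraph is slightly more explicit than yours, spelling out $d_Z\leq d_X$ from $\mathcal{G}_0\subset G^\perp$ to argue that the intermediate (dual) code retains distance $d$, whereas you rely mostly on the hypothesis that the syndrome-measurement subgadget is fault tolerant.
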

\begin{proof}
When transversal $H$ is performed on a triorthogonal code, the logical operators are transformed properly: logical $X$ maps to logical $Z$ and vice versa. A subset of the stabilizers is preserved; observe that $\mathcal{G}_0 \subset G^\perp$, and thus each element of $\mathcal{G}_0$ corresponds to both $X$ and $Z$ stabilizers, which transversal $H$ swaps.  Transversal $H$ does not preserve the $Z$ stabilizers corresponding to $G^\perp \setminus \mathcal{G}_0$, so these must be restored by measuring and correcting them.  

Consider the effect of measuring one of the $Z$ stabilizer generators $\zeta$ corresponding to $G^\perp \setminus \mathcal{G}_0$.  The measurement projects the code block onto either the $+1$ or $-1$ eigenspace of $\zeta$ according to the measurement outcome.  Let $\chi$ be a tensor product of $I$ and $X$ operators such that $\chi$ anticommutes with $\zeta$ and commutes with all other $Z$ stabilizer generators and $Z$ logical operators.  Such an operator always exists since $\zeta$ is neither an element of the (current) stabilizer nor an element of the normalizer.  If the measurement outcome is $-1$, then applying $\chi$ restores the code block to the $+1$ eigenspace of $\zeta$.

Importantly, even with additional $X$ corrections to fix the $Z$ stabilizers of $G^\perp \setminus \mathcal{G}_0$, the procedure is fault tolerant. That is, $k$ gate failures can lead to a data error of weight at most~$k$, for $k$ less than half the code's distance~$d$. Let $d_Z$ be the code's distance against $Z$ errors, as determined by the $X$ stabilizers of $\mathcal{G}_0$. Likewise, let $d_X$ be the distance against $X$ errors, as determined by the $Z$ stabilizers of $G^\perp$.  The minimum distance of the code (against arbitrary Pauli errors) is then $d=\min\{d_X,d_Z\}$.  But $\mathcal{G}_0 \subset G^\perp$ implies that $d_Z < d_X$ and, therefore, the code's minimum distance is determined solely by $\mathcal{G}_0$.  Since both the $X$ and $Z$ stabilizers of $\mathcal{G}_0$ are preserved, a minimum distance of $d$ is maintained throughout.  So long as the stabilizer measurements are performed fault-tolerantly, and since the other operations are transversal, the entire procedure is fault-tolerant.
\end{proof}

In fact, the Hadamard construction of~\thmref{thm:transversal-hadamard} holds for any CSS code in which the $X$ and $Z$ logical operators have identical supports and transversal Hadamard conjugates the $X$ stabilizers to a subset of the $Z$ stabilizers.  The triorthogonality condition (\defref{def:triorthogonal-matrix}) is not strictly necessary.  Rather it is the symmetry of the $X$ and $Z$ stabilizers in the triorthogonal code construction that is important.

Informally,~\thmref{thm:transversal-hadamard} takes advantage of the fact that the $X$ and $Z$ stabilizers have an asymmetry which is required in order to provide triorthogonality (and therefore transversal $\CCZ$), but which is otherwise unnecessary.  In principle, the extra $X$-error distance provided by the $Z$ stabilizers could be used to improve performance for biased noise~\cite{Aliferis2008,Brooks2012}.  But it can be difficult to properly exploit this asymmetry in practice. For example, direct application of transversal $T$ is not allowed because it splits $X$ errors into both $X$ and $Z$ errors (see~\eqnref{eq:T-conjugation}).  We choose, instead, to use the asymmetry to reduce the complexity of the Hadamard gate.

The stabilizer measurements required by~\thmref{thm:transversal-hadamard} can be incorporated into the normal fault-tolerant error-correction procedure.
Steane's procedure~\cite{Steane1996}, for example, involves a transversal CNOT from the data to an encoded $\ket +$ ancilla state.
Transversal $Z$-basis measurements of the ancilla then permit correcting $X$ errors on the data, while simultaneously restoring the stabilizer group.
See~\figref{fig:steane-hadamard}. (See also~\secref{sec:fault.qec.steane}.)  
Alternatively, Knill-style or Shor-style error correction could be used. 
In any case, the required stabilizers can be measured and corrected using $H$, $X$, CNOT, $\ket 0$ preparation and $Z$-basis measurements.  By using $\CCZ$ gates to simulate CNOT and $X$, universality can be achieved using only $\ket 0$ preparation, $Z$-basis measurement, and $H$ and $\CCZ$ gates. 

\begin{figure}
\centering
\raisebox{3.5cm}{\includegraphics[height=3.5cm]{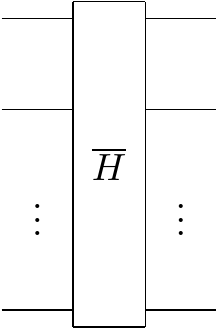}}
\raisebox{5cm}{~~~\Huge{=}~~~}
\includegraphics[height=7cm]{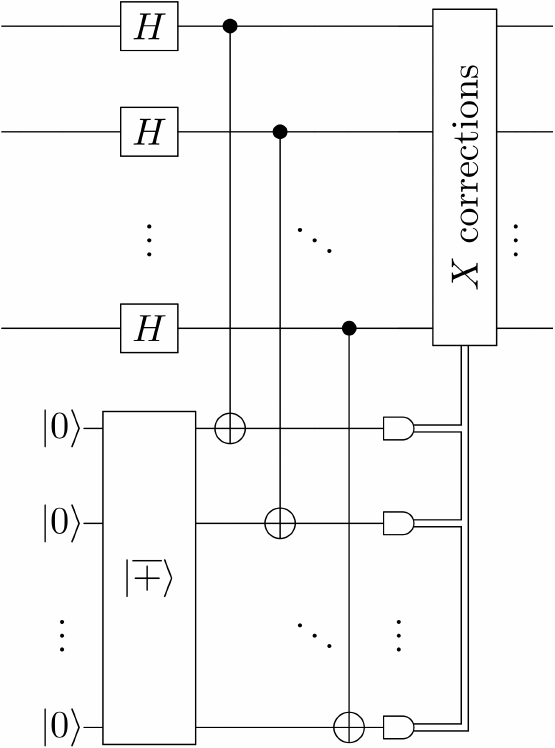}
\caption[Transversal $H$ plus error correction.]{\label{fig:steane-hadamard}
An implementation of the logical Hadamard operation in a triorthogonal code, using Steane's method for error correction.  Transversal Hadamard gates are applied to the data block.  In order to restore the data to the codespace, and also correct any $X$ errors, an encoded $\ket +$ state is prepared, coupled to the data with transversal CNOT gates and measured. $X$ corrections are applied as necessary.
}
\end{figure}

\section{Example: 15-qubit codes
\label{sec:transversal.example}
}

In order to make our universal construction concrete, we now walk through an example based on the $15$-qubit code. We present the example in two equivalent ways.  First with the $[[15,1,3]]$ code, and then with the $[[15,7,3]]$ code.

The $[[15,1,3]]$ code is based on the triorthogonal matrix in~\tabref{tab:15-bit-Hamming-parity-checks}.
The stabilizer generators can be presented as:
\begin{center}
\label{eq:1513-stabilizers}
\begin{tabular}{c@{$\,\!\!$}c@{$\,\!\!$}c@{$\,\!\!$}c@{$\,\!\!$}c@{$\,\!\!$}c@{$\,\!\!$}c@{$\,\!\!$}c@{$\,\!\!$}c@{$\,\!\!$}c@{$\,\!\!$}c@{$\,\!\!$}c@{$\,\!\!$}c@{$\,\!\!$}c@{$\,\!\!$}c@{$\quad$}c@{$\,\!$}c@{$\,\!$}c@{$\,\!$}c@{$\,\!$}c@{$\,\!$}c@{$\,\!$}c@{$\,\!$}c@{$\,\!$}c@{$\,\!$}c@{$\,\!$}c@{$\,\!$}c@{$\,\!$}c@{$\,\!$}c@{$\,\!$}c@{,}}
   &   &   &   &   &   &   &   &   &   &   &   &   &   &    & $Z$&$\cdot$&$\cdot$&$\cdot$&$\cdot$&$\cdot$&$Z$&$\cdot$&$\cdot$&$Z$&$\cdot$&$Z$&$\cdot$&$\cdot$&$\cdot$\\
   &   &   &   &   &   &   &   &   &   &   &   &   &   &    & $\cdot$&$Z$&$\cdot$&$\cdot$&$\cdot$&$\cdot$&$Z$&$\cdot$&$Z$&$\cdot$&$\cdot$&$Z$&$\cdot$&$\cdot$&$\cdot$\\
   &   &   &   &   &   &   &   &   &   &   &   &   &   &    & $\cdot$&$\cdot$&$Z$&$\cdot$&$\cdot$&$\cdot$&$Z$&$Z$&$\cdot$&$\cdot$&$\cdot$&$Z$&$\cdot$&$\cdot$&$\cdot$\\
   &   &   &   &   &   &   &   &   &   &   &   &   &   &    & $\cdot$&$\cdot$&$\cdot$&$Z$&$\cdot$&$\cdot$&$Z$&$\cdot$&$Z$&$Z$&$\cdot$&$\cdot$&$\cdot$&$\cdot$&$\cdot$\\
   &   &   &   &   &   &   &   &   &   &   &   &   &   &    & $\cdot$&$\cdot$&$\cdot$&$\cdot$&$Z$&$\cdot$&$Z$&$Z$&$\cdot$&$Z$&$\cdot$&$\cdot$&$\cdot$&$\cdot$&$\cdot$\\
   &   &   &   &   &   &   &   &   &   &   &   &   &   &    & $\cdot$&$\cdot$&$\cdot$&$\cdot$&$\cdot$&$Z$&$Z$&$Z$&$Z$&$\cdot$&$\cdot$&$\cdot$&$\cdot$&$\cdot$&$\cdot$\\
$\cdot$&$\cdot$&$\cdot$&$\cdot$&$\cdot$&$\cdot$&$\cdot$&$X$&$X$&$X$&$X$&$X$&$X$&$X$&$X,$ & $\cdot$&$\cdot$&$\cdot$&$\cdot$&$\cdot$&$\cdot$&$\cdot$&$Z$&$Z$&$Z$&$Z$&$Z$&$Z$&$Z$&$Z$ \\
$\cdot$&$\cdot$&$\cdot$&$X$&$X$&$X$&$X$&$\cdot$&$\cdot$&$\cdot$&$\cdot$&$X$&$X$&$X$&$X,$ & $\cdot$&$\cdot$&$\cdot$&$Z$&$Z$&$Z$&$Z$&$\cdot$&$\cdot$&$\cdot$&$\cdot$&$Z$&$Z$&$Z$&$Z$ \\
$\cdot$&$X$&$X$&$\cdot$&$\cdot$&$X$&$X$&$\cdot$&$\cdot$&$X$&$X$&$\cdot$&$\cdot$&$X$&$X,$ & $\cdot$&$Z$&$Z$&$\cdot$&$\cdot$&$Z$&$Z$&$\cdot$&$\cdot$&$Z$&$Z$&$\cdot$&$\cdot$&$Z$&$Z$ \\
$X$&$\cdot$&$X$&$\cdot$&$X$&$\cdot$&$X$&$\cdot$&$X$&$\cdot$&$X$&$\cdot$&$X$&$\cdot$&$X,$ & $Z$&$\cdot$&$Z$&$\cdot$&$Z$&$\cdot$&$Z$&$\cdot$&$Z$&$\cdot$&$Z$&$\cdot$&$Z$&$\cdot$&$Z$
\end{tabular}
\end{center}
where the $X$ stabilizers come directly from~\tabref{tab:15-bit-Hamming-parity-checks} and the $Z$ stabilizers come from the orthogonal complement.  For visual clarity, identity operators are indicated by dots.  The logical $X$ and $Z$ operators correspond to transversal $X$ and $Z$, respectively.  By construction, this code is triorthogonal according to~\defref{def:triorthogonal-code}.  The four $X$ stabilizers provide distance three protection against $Z$ errors and the $11$ $Z$ stabilizers provide distance seven protection against $X$ errors.

Transversal Hadamard swaps the $X$ and $Z$ logical operators.  The $X$ and $Z$ stabilizers are also swapped.  The bottom four generators (both $X$ and $Z$) are preserved, since they are symmetric.  The remaining six $Z$ generators have now become $X$ generators.  Now the code provides distance three protection against $X$ errors and distance seven protection against $Z$ errors; it is the dual of the original code.  The $[[15,1,3]]$ code is restored by measuring each of the top six $Z$ generators.  For each $-1$ outcome, the $X$ correction corresponds to one of the six asymmetric $X$ generators of the dual code. 

There is an alternative way to interpret this example by using the $[[15,7,3]]$ code.  The $[[15,7,3]]$ code has the same $X$ generators as the $[[15,1,3]]$ code, but uses only the bottom four $Z$ generators.  It encodes seven logical qubits.  The logical $Z$ operators correspond to the top six generators of the $[[15,1,3]]$ code, plus transversal $Z$.  However, the code, as given, is not triorthogonal.

In order to induce triorthogonality, we will treat six of the encoded qubits as ``gauge qubits''.  That is, we will not use them to store computational data.  Instead, we will require that they are always prepared as encoded $\ket{0^6}$, so that the logical $Z$ operators are now stabilizers. If we choose the six gauge qubits so that the remaining computation qubit is the one with transversal logical $Z$, then we recover the $[[15,1,3]]$ code, which is triorthogonal.  Now, transversal $H$ can be interpreted as implementing logical $H$, except that the gauge qubits have been corrupted because they are no longer in state $\ket{0^6}$. As for the $[[15,1,3]]$ code, the gauge can be restored by measuring the six corresponding logical $Z$ operators.

\section{Computation with triorthogonal codes}
\label{sec:transversal.computation}
The simplest way to use the $\CCZ$ and Hadamard constructions above is with a concatenated triorthogonal code.  The relation shown in~\figref{fig:toffoli-ccz-equivalence} implies that a universal set of fault-tolerant operations can be constructed from only $\CCZ$ and $H$ gates.  Thus using triorthogonal codes for computation could be advantageous for circuits that contain large numbers of Toffoli gates.  One could also imagine using multiple codes for computation by, for example, teleporting into the code best suited for each logical operation.  In this setting, a triorthogonal code could be used to implement efficiently the $\CCZ$ operation.  

Threshold error rates for triorthogonal codes are largely unknown, though one estimate for the $[[15,1,3]]$ code is roughly $0.01$ percent per gate for depolarizing noise~\cite{Cross2009}. Toffoli- and $\CCZ$-type gates have been demonstrated in a number of experimental settings,  with fidelities ranging from $68$ to $98$ percent~\cite{Monz2009,Mariantoni2011,Fedorov2012,Reed2012}.
If, however, the $\CCZ$ operation is constructed from a sequence of one- and two-qubit gates then the threshold is likely lower. 
Since resource overhead increases rapidly as the physical noise rate approaches threshold, our construction is likely to be outperformed by schemes based on other codes, for which the threshold can be nearly one percent or higher~(see, e.g., \cite{Knill2004,Raussendorf2007,Wang2011,Stephens2013}). The existence of high-performing triorthogonal codes is not out of the question, however, and could be a fruitful area of research.

\section{Toffoli state distillation
\label{sec:transversal.distillation}
}
Ironically, while the original motivation for implementing $\CCZ$ transversally was to eliminate state distillation, \thmref{thm:transversal-ccz} also implies an alternative protocol for distillation.  Bravyi and Haah have proposed distillation procedures using triorthogonal codes that permit fault-tolerant implementation of the $T$ gate~\cite{Bravyi2012a}.  We show that a similar procedure can be used to implement Toffoli gates.

\subsection{A recursive distillation protocol}
The Toffoli state is defined by the output of the Toffoli gate on input $\ket{+,+,0}$, where the third qubit is the target. The circuit in~\figref{fig:toffoli-distillation} uses a $[[3k+8,k,2]]$ triorthogonal code and  $3k+8$ noisy $\CCZ$ gates to produce $k$ Toffoli states with higher fidelity. 
Note that the Hadamard gates are performed after decoding and thus the circuit in~\figref{fig:steane-hadamard} is not required.

\begin{figure}
\centering
\includegraphics{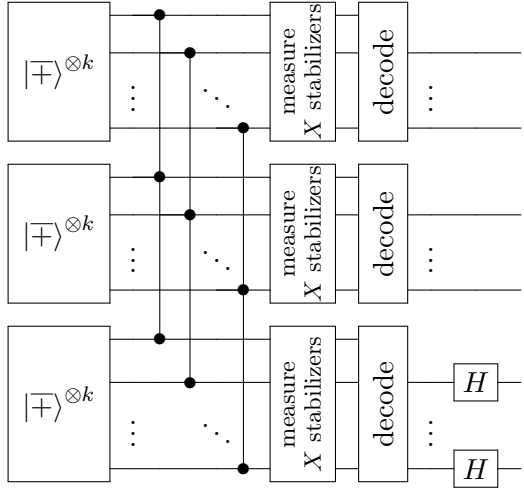}
\caption[Toffoli state distillation using CCZ.]{\label{fig:toffoli-distillation}
A Toffoli state distillation circuit using a triorthogonal code encoding $k$ qubits.  Three separate blocks are prepared in the encoded state $\ket{\overline{+}}^{\otimes k}$ and then transversal $\CCZ$ gates are applied.  Conditioned on detecting no errors, each block is decoded and Hadamard gates are applied to each of the target qubits, yielding $k$ Toffoli states.}
\end{figure}

To simplify the analysis we will assume that all Clifford gates can be implemented perfectly.  This assumption is justified by the fact that many quantum error-correcting codes admit simple (e.g., transversal) implementation of the Clifford group.  Using such a code, we can then arbitrarily reduce the logical error rate per Clifford gate using fault tolerant protocols for that code.  Error-free Clifford gates are conventional for analysis of state-distillation protocols, though some studies have considered a more complete error model~\cite{Jochym-O'Connor2012,Brooks2013a}.

The circuit in~\figref{fig:toffoli-distillation} is directly adapted from the $T$-gate distillation protocol of Bravyi and Haah.  Their protocol involved only a single code block, but the error analysis can be re-used here directly.  Consider a $[[3k+8,k,2]]$ triorthogonal code for some $k > 2$. Suppose that each qubit is independently subjected to a $Z$ error with probability $p$, after which the $X$ stabilizers are (perfectly) measured and the code block is decoded. Bravyi and Haah show that, conditioned on detecting no errors during stabilizer measurement, the probability of an error on a (logical) qubit after decoding is given by $(3k+1)p^2$ to leading order in $p$.  The scaling in the error comes from counting the number of weight-two logical $Z$ operators that have support on a particular logical qubit, which is equal to $(3k+1)$.

For Bravyi and Haah, the independent $Z$ errors originate from $T$ gates.  In this case, the independent $Z$ errors instead originate from $\CCZ$ gates.  However, the error analysis for a given code block is precisely the same.
Given access to $\CCZ$ gates that contain $Z$ errors independently with probability $p$, and conditioned on detecting no errors during stabilizer measurement, the circuit in~\figref{fig:toffoli-distillation} produces $k$ Toffoli states with error rate $(3k+1)p^2$ per state, to leading order in $p$.

Perhaps the most obvious way to obtain $\CCZ$ gates containing $Z$ errors with probability $p$ is to use a recursive protocol.  At the lowest level of recursion, we may choose to use physical $\CCZ$ gates if they are available, or an equivalent circuit composed of Clifford and $T$ gates.  These physical $\CCZ$ gates may also contain $X$ errors.  But it is possible to eliminate $X$ errors by probabilistically applying Clifford gates, a process known as ``twirling''~\cite{DiVincenzo2002}.  Alternatively, $X$ errors can be eliminated by using gate teleportation. See~\figref{fig:ccz-teleportation}.  Since $\CCZ$ is diagonal, it can be commuted through the control of a CNOT gate.  $X$ errors on the $\CCZ$ have no impact on the $X$-basis measurements, and $Z$ errors on the $\CCZ$ can lead only to $Z$ errors on the output.  The concept here is similar to that of~\figref{fig:T-distillation-Bell-pair} for $T$ distillation, except with a three-qubit gate.

\begin{figure}
\centering
\includegraphics{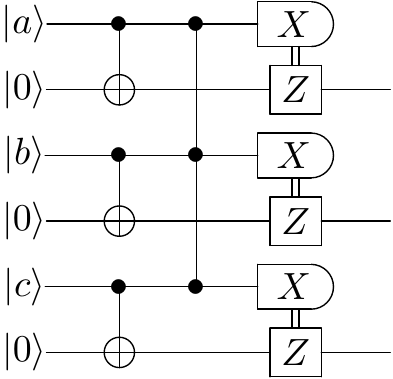}
\caption[CCZ implementation from Toffoli gate teleportation.]{\label{fig:ccz-teleportation}
This circuit implements a $\CCZ$ gate on the input $\ket{abc}$.  Each input qubit is individually teleported onto an ancilla.  The $\CCZ$ gate commutes through the CNOT controls and can therefore be performed after the CNOT gates.  Assuming perfect Clifford operations, the output contains only $Z$ errors.
}
\end{figure}

Another issue with the recursive protocol is that errors on the $k$ Toffoli states of the output of~\figref{fig:toffoli-distillation} are not independent.  Therefore, two Toffoli states from the same distillation circuit cannot be used together as inputs to a distillation circuit at the next level up.  When many Toffoli states are required, as is expected in large-scale quantum algorithms, then Toffoli states can be routed appropriately without any waste.

\subsection{A bootstrap distillation protocol}
We find, however, that a more efficient method is to use the Toffoli distillation protocol due to Eastin~\cite{Eastin2012} and Jones~\cite{Jones2012d} to implement $\CCZ$ gates, and use the triorthogonal protocol only at the top level.
A Toffoli state can be used to implement the $\CCZ$ gate with the help of classically controlled Clifford gates as shown in~\figref{fig:ccz-gate-teleportation}.  To see how~\figref{fig:toffoli-distillation} can be combined with the protocol of Eastin and Jones, we give the following illustrative example.

\begin{figure}
\centering
\raisebox{.75cm}{\includegraphics[height=3cm]{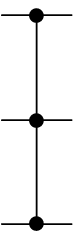}}
\raisebox{2cm}{~~~\Huge{=}~~~}
\includegraphics[height=4.5cm]{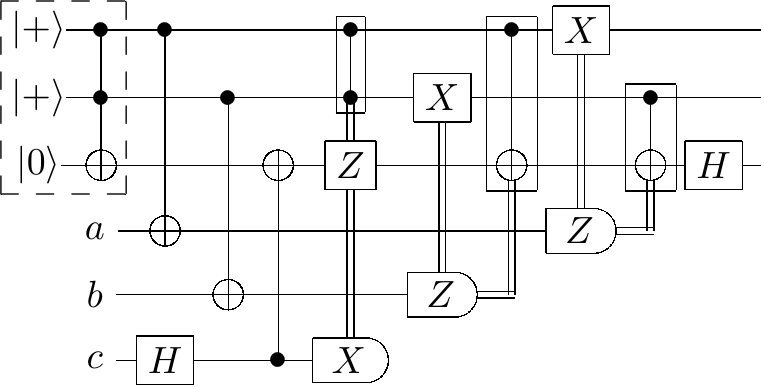}
\caption[CCZ gate teleportation.]{\label{fig:ccz-gate-teleportation}
A $\CCZ$ gate can be implemented by consuming a single Toffoli state~\cite{Nielsen2000}.  The input qubits are teleported into the Toffoli state (enclosed by the dashed line) with Clifford corrections conditioned on the measurement outcomes.}
\end{figure}

Suppose we wish to implement a Toffoli gate with error below $10^{-13}$.  The procedure of~\cite{Jones2012d} consumes eight $T$ gates with error rate $p$ to produce a Toffoli state with error rate $28p^2$. See~\figref{fig:jones-error-detecting-toffoli}.  The $T$ gates can be implemented using a combination of protocols; Table I of~\cite{Jones2012c} lists optimal protocol combinations for a large range of target error rates. If physical $T$ gates can be performed with error at most $10^{-2}$, then using the Toffoli construction of~\cite{Jones2012d}, as given, requires on average $540.16$ $T$ gates.  

\begin{figure}
\centering
\includegraphics{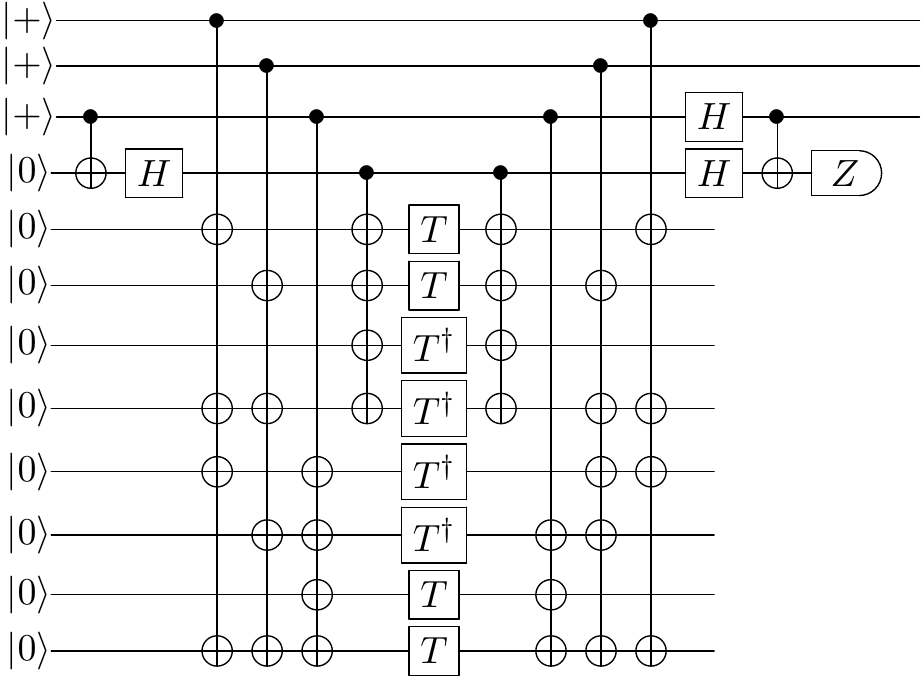}
\caption[An error detecting Toffoli gate.]{\label{fig:jones-error-detecting-toffoli}
This circuit prepares a Toffoli state on the top three qubits~\cite{Jones2012d}.  Assume that each $T$ gate fails with probability $p$ and the Clifford gates are perfect.  Then  conditioned on a $Z$-basis measurement outcome of zero, the probability of an error on the output is $28p^2$, to leading order in $p$.  The bottom eight qubits can be discarded.
}
\end{figure}

Alternatively, we could use a $[[3k+8,k,2]]$ triorthogonal code and~\figref{fig:toffoli-distillation} for distillation at the top level, and construct Toffoli states using~\figref{fig:jones-error-detecting-toffoli} as input to implement the $\CCZ$ gates. The distillation circuit fails to detect a faulty Toffoli state input only if the number of errors on \emph{each} triorthogonal code block is even. To leading order, this occurs only if a pair of input Toffoli states contain identical errors.  There are seven possible errors on the output of states from~\cite{Jones2012d}, each of which are equally likely. Thus, if the input Toffoli states have error $p_1$, then to leading order the failure probability of the triorthogonal protocol is given by $7(3k+1)(p_1/7)^2$ per output Toffoli state. For $k=100$, this yields an average $T$-gate cost of $428.7$, a savings of $25$\% over~\cite{Jones2012d} alone. Calculations for a range of target error rates are shown in~\figref{fig:t-cost-plot}.

\begin{figure}
\centering
\includegraphics[width=.7\textwidth]{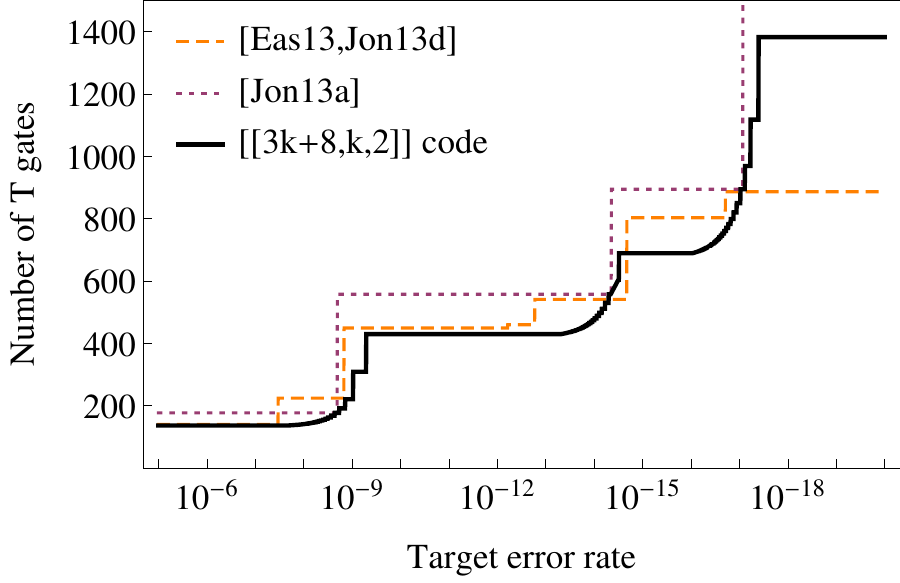}
\caption[$T$ count for Toffoli distillation protocols.]{\label{fig:t-cost-plot}
The average number of physical $T$ gates required for three different Toffoli state distillation protocols.  For the previous protocols of~\cite{Eastin2012,Jones2012d} and~\cite{Jones2013a}, input $T$ gates are first distilled to the appropriate fidelity according to Table I of~\cite{Jones2012c}.  The solid black line shows the cost of our protocol for $[[3k+8,k,2]]$ triorthogonal codes where an even integer $2 \leq k \leq 100$ has been optimally selected at each target error rate. Input $\CCZ$ gates to the triorthogonal protocol are produced using~\cite{Jones2012d}. Physical $T$ gates are assumed to have error at most~$10^{-2}$.}
\end{figure}

The $T$ gate count alone is an incomplete measure of the overhead.  Indeed,~\figref{fig:t-cost-plot} shows that the double error-detecting protocol of~\cite{Jones2013a} usually has higher $T$ gate cost than the single error-detecting protocol. However, the double error-detecting protocol can still yield savings since smaller code distances may be used for Clifford gates in intermediate distillation levels~\cite{Fowler2013,Jones2012d,Jones2013a,Jones2013b}.  Our protocol similarly allows for reduced Clifford gate costs and offers the flexibility to be used recursively or on top of any other Toffoli state distillation protocol, including~\cite{Eastin2012,Jones2012d} and~\cite{Jones2013a}.  Complete overhead calculations depend on architectural considerations. 

Jones has performed detailed optimizations and resource calculations of various Toffoli constructions for the surface code~\cite{Jones2013b}, though the protocol of~\figref{fig:toffoli-distillation} is not among them.  He finds that the single-error detecting circuit of~\cite{Eastin2012,Jones2012d} usually requires the smallest total space-time volume.  Given the results in~\figref{fig:t-cost-plot}, we expect that triorthogonal distillation performs similarly well in the surface code.  The corresponding optimization and volume calculations have not been performed here, however.

\section{Alternative methods for universality
\label{sec:transversal.alternatives}
}
Although state distillation is the most widely used protocol, other methods for achieving universality exist for certain codes. Shor's original proposal used Toffoli states and teleportation to implement Toffoli gates for the class of ``doubly-even" codes~\cite{Shor1997}.  However, each Toffoli state was prepared using a verified cat state and a particular four-qubit transversal gate rather than distillation, which was developed afterwards.  Shor's approach was later extended by Gottesman to accommodate any stabilizer code~\cite{Gottesman1998a}.

Knill, Laflamme and Zurek showed that $T$ and CNOT are transversal for the $[[15,1,3]]$ code~\cite{Knill1996a}.  For the Hadamard they proposed the circuit shown in~\figref{fig:1513-H-teleportation}.  Except for preparation of $\ket +$, each of the gates in this circuit can be performed transversally.  This circuit bears a striking resemblance to the gate teleportation circuit used for state distillation in~\figref{fig:T-gate-teleportation}.  Indeed, the most costly part of~\figref{fig:1513-H-teleportation} is the fault-tolerant preparation of the ``resource state'' $\ket +$.  One difference in this case, though, is that $\ket +$ is a stabilizer state, and can be prepared with the methods discussed in~\chapref{chap:ancilla}.  This method for achieving universality has also been used by Bombin and others in topological color codes~\cite{Bomb07b,Bombin2013}.

\begin{figure}
\centering
\includegraphics{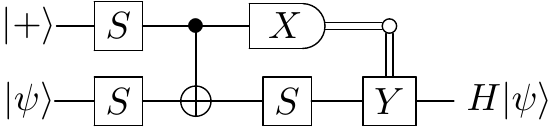}
\caption[Gate teleportation of $H$ in the {$[[15,1,3]]$} code.]{\label{fig:1513-H-teleportation}
This circuit implements $H$ (up to a global phase) with the help of $\ket +$ and $X$-basis measurement.}
\end{figure}

Another alternative has been employed to implement a fault-tolerant $T$ gate in the $[[7,1,3]]$ code.  Shor's cat-state method can be used multiple times to measure the operator $TXT^\dagger = SX$ of which $\ket A = T\ket+$ is the $+1$-eigenstate. Conditioned on the outcomes of these measurements, an ancilla state is projected onto encoded $\ket A$ with high fidelity~\cite{Aliferis2005}.

Recently, Jochym-O'Connor and Laflamme have proposed a different protocol for universality~\cite{Jochym-O'Connor2013}.  They concatenate two different codes and use the incomplete set of transversal gates from each one in order to obtain a universal set overall. Their method uses only transversal gates (in a certain sense), but whereas the distance of a concatenated code is typically given by the product $d_1d_2$ of the two code distances, they achieve a minimum distance of only $\min\{d_1,d_2\}$.  Thus, while their protocol is conceptually interesting, it is less efficient than ours.

These methods for achieving universality suggest several possible categorizations.
\begin{description}
\item[Distillation and teleportation]~

This category includes traditional $T$~\cite{Bravyi2004,Meier2012,Bravyi2012a,Jones2012c} and Toffoli~\cite{Eastin2012,Jones2012d} distillation protocols, as well as the $[[15,1,3]]$ protocol shown in \figref{fig:1513-H-teleportation} \cite{Knill1996a}.
\item[Cat state projection]~

Protocols in this category use cat states and transversal gates in order to measure a particular operator of which the desired state is an eigenstate.  This includes~\cite{Shor1997,Gottesman1998a} and~\cite{Aliferis2005}.
\item[Transversal gates and error correction]~ 

This category includes the protocol described in this chapter, and potentially~\cite{Jochym-O'Connor2013}.
\end{description}

Each protocol, regardless of the category requires preparation of some sort of ancillary state.  Even the Hadamard described in~\secref{sec:transversal.hadamard} requires an ancilla in order to measure the stabilizer generators.  Another way to partition universality techniques, therefore, is based on the type of ancilla state that is prepared.  One obvious choice is to group the protocols that require only stabilizer states such as $\ket +$ or cat states, and those that require non-stabilizer states such as $\ket A$ or Toffoli states.

Regardless of categorization, though, the most important property of each protocol is the amount of resources required.  Transversal gates plus error correction is the simplest of all protocols. But the uncertainty regarding thresholds for triorthogonal codes prevents more thorough analysis.  High thresholds and minimal connectivity requirements suggest that the $T$ or Toffoli distillation in the surface code may require fewer resources overall.

%
%

\chapter{Reducing the overhead of error-correction
\label{chap:ancilla}
}

This chapter is based on material that appears in~\cite{Paetznick2011}.
\vspace{1cm}

We have seen in Chapters~\ref{chap:fault} and~\ref{chap:transversal} that error correction circuits are much more complicated than transversal gates.  Furthermore, since error correction is also required in distillation circuits, it is the dominant factor in determining a scheme's resource overhead, and is usually the major bottleneck in determining the noise threshold.  In particular, the details of how error correction is implemented are more important than the properties of the underlying quantum error-correcting~code.  

For example, with the nine-qubit Bacon-Shor code, a fault-tolerant logical CNOT gate between two code blocks can be implemented using nine physical CNOT gates, whereas an optimized error-correction method uses $24$ physical CNOT gates~\cite{Aliferis2007b}.  For larger quantum error-correcting codes, the asymmetry between computation and error correction is greater still.   

Larger quantum error-correcting codes, with higher distances and possibly higher rates, can still outperform smaller codes.  Separate numerical studies by Steane~\cite{Steane2003} (see also~\cite{Stea07}) and Cross, DiVincenzo and Terhal~\cite{Cross2009} have each compared fault-tolerance schemes based on a variety of codes.  They identify larger codes that, compared to the $[[7,1,3]]$ code and the nine-qubit Bacon-Shor code, can tolerate higher noise rates with comparable resource requirements.  In particular, their estimates single out the $23$-qubit Golay code as a top performer.  

The method most commonly used for error-correction in large codes is due to Steane \cite{Steane1996}.  In this method, encoded ancilla states $\ket 0$ and $\ket +$ are prepared and used to detect errors on the data.  The complexity of ancilla preparation grows quickly with the size of the code, however, and dominates the overall cost of error-correction. 

In this chapter we present a variety of methods for reducing the cost of ancilla state preparation for CSS codes.
Our derivation is based on two main ideas.  First, we simplify Steane's Latin-rectangle-based scheme for preparing encoded $\ket 0$ states~\cite{Stea02}, by taking advantage of overlaps among the code's stabilizers. Second, we reduce the overall number of encoded ancilla states required for error correction by carefully tracking the exact propagation of errors. 

To demonstrate the utility of our approach, we give an optimized fault-tolerant error-correction procedure for the Golay code that uses only $640$ CNOT gates (compared to $1177$ for a more naive procedure), while also being highly parallelizable. All of our methods are generally applicable to other large quantum error-correcting codes.

\section{Preparation of encoded stabilizer states
\label{sec:ancilla.stabilizer-states}
}

Robust preparation of stabilizer states is a key ingredient of both Steane- and Knill-style error correction protocols. Indeed, preparation of stabilizer states is required for any fault-tolerance scheme based on stabilizer codes in order to prepare logical qubits for computation. 

One way to prepare a stabilizer state for an $n$-qubit code is to prepare any state on $n$ qubits, say $\ket{0^n}$.  Then by measuring each of the stabilizer generators (including the corresponding logical operator) the state is projected onto the one-dimensional subspace that defines the stabilizer state.  Steane has proposed an alternative method for CSS codes, which is more compact~\cite{Stea02}.

\subsection{Steane's Latin rectangle method}
Steane's method involves constructing and solving a partial Latin rectangle based on the stabilizer generators.  For simplicity, consider a $[[n,1,d]]$ CSS code.  Let $n_X$ be the number of $X$ stabilizer generators. Then the $X$ stabilizer generators form a $n_X\times n$ binary matrix in which the $X$ operators in the tensor product are represented as $1$s.  Each column represents a (physical) qubit in the code, and each row represents one stabilizer generator.  To prepare encoded $\ket 0$, Gaussian elimination is performed until the matrix is of the form 
\vspace{.25cm}
\begin{equation}
\label{eq:stabilizer-standard-form}
{\scriptstyle{n_X}} \left\{ \left(
\begin{array}{c|c}
\raisebox{0ex}[1.5ex]{\mbox{\ensuremath{\overbrace{I}^{n_X}}}} & 
\raisebox{0ex}[1.5ex]{\mbox{\ensuremath{\overbrace{A}^{n-n_X}}}}
\end{array}
\right)\right.
\end{equation}
The first $n_X$ qubits, called ``control" qubits, are prepared as $\ket{+}$, and the remaining ``target" qubits are prepared as $\ket{0}$.  The matrix $A$ is called the \emph{redundancy matrix}, and represents a partial Latin rectangle, the solution to which is used to schedule rounds of CNOT gates from control to target qubits.

For example, by swapping qubits three and four, the $X$ stabilizers of the $[[7,1,3]]$ code (\tabref{tab:7-bit-Hamming-parity-checks}) are of the form~\eqnref{eq:stabilizer-standard-form}.  A schedule of three rounds of CNOT gates and corresponding quantum circuit are shown in~\figref{fig:latin-rectangle-steane-code}.

To see that this procedure indeed prepares encoded $\ket 0$, notice that the stabilizer generators of the initial state of the control and target qubits are described by the binary matrix
\begin{equation}
\left(
\begin{array}{c|c}
I^{n_X} & 0\\
\hline
0 & I^{n_Z+1}
\end{array}
\right)
\enspace,
\end{equation}
where the first $n_X$ rows are weight-one $X$ generators and the last $n_Z + 1 = n - n_X$ rows are weight-one $Z$ generators. 
The first $n_X$ qubits are controls and the remaining $n_Z+1$ qubits are targets.  Let $S_i$ be the operator corresponding to row $i$, let $U$ be the unitary operation corresponding to the CNOT schedule, and let $\ket{\psi_0}$ be the initial state.  Then $U$ performs the transformation
\begin{equation}
\begin{split}
S_i &\mapsto US_iU^\dagger\\
\ket{\psi_0} &\mapsto U\ket{\psi_0}
\enspace .
\end{split}
\end{equation}
The operators $US_iU^\dagger$ form an independent set of stabilizers of $U\ket{\psi_0}$, the first $n_X$ of which are the $X$ stabilizer generators of the code, by construction.  The remaining $n_Z$ operators are also independent stabilizers of $U\ket{\psi_0}$.  Indeed, they form a basis for the $(n-n_X)$-dimensional subspace orthogonal to the $X$ stabilizers and are therefore equivalent to the $Z$ stabilizer generators and the $Z$ logical operator of the code.  Therefore $U\ket{\psi_0} = \lket{0}$.

The procedure for encoded $\ket +$ is entirely analogous, except that the $Z$ stabilizers are used in place of the $X$ stabilizers, and the roles of control and target are swapped.  The procedure can also be generalized to CSS codes that encode multiple qubits.

\begin{figure}
\centering
\begin{subfigure}[b]{0.28\textwidth}
$$\begin{array}{ccc|cccc}
1&2&4&3&5&6&7\\
\hline
X& & &3&2& &1\\
 &X& &1& &2&3\\
 & &X& &3&1&2
\end{array}$$
\caption{\label{fig:713-latin-schedule}}
\end{subfigure}
\hfill
\begin{subfigure}[b]{0.32\textwidth}
\centering
\includegraphics[scale=.65]{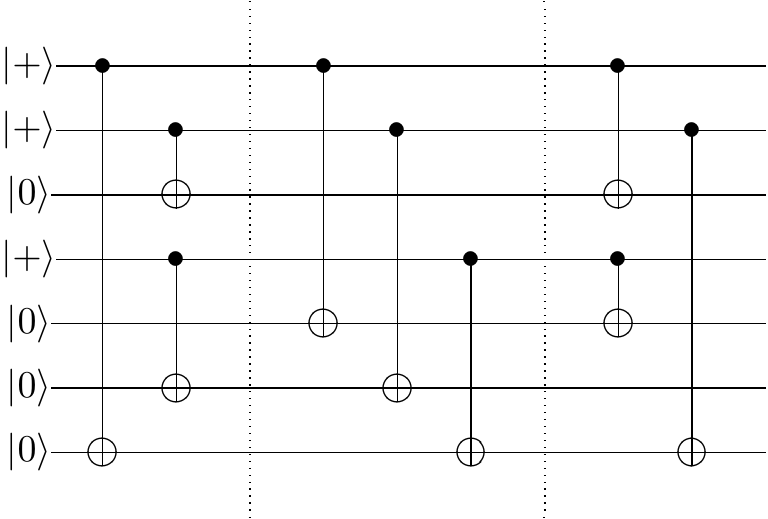}
\caption{\label{fig:steane-prep}}
\end{subfigure}
\hfill
\begin{subfigure}[b]{0.3\textwidth}
\includegraphics[scale=.65]{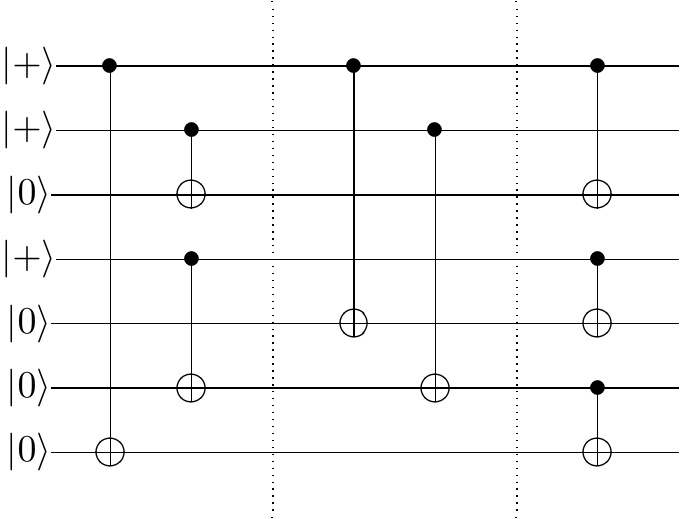}
\caption{\label{fig:steane-overlap}
}
\end{subfigure}
\caption[Preparation of encoded $\ket 0$ for the {$[[7,1,3]]$} code.]{\label{fig:latin-rectangle-steane-code}
Preparation of encoded $\ket 0$ for the $[[7,1,3]]$ code.
(a) Swapping columns three and four of~\tabref{tab:7-bit-Hamming-parity-checks} yields a $3\times 4$ matrix of the form given by~\eqnref{eq:stabilizer-standard-form}.  The four rightmost columns define a partial Latin rectangle, a solution to which is shown. (b) The Latin rectangle solution defines a schedule of CNOT gates. A nonzero value $t$ of entry $(r,c)$ specifies a CNOT on qubits $r$ and $c$ controlled by $r$ in timestep $t$.
(c)
An alternative circuit for preparing encoded $\ket 0$ using one fewer CNOT gate. The new CNOT gate has the same effect as the two removed gates.  See~\secref{sec:ancilla.stabilizer-states.overlap}.
}
\end{figure}

\subsection{Exploiting stabilizer overlap
\label{sec:ancilla.stabilizer-states.overlap}
}
Steane's Latin rectangle method treats each stabilizer generator independently. However, by taking advantage of similarities between stabilizer generators it is possible to significantly reduce the number of CNOT gates.

To explain the optimization, consider once again the $[[7,1,3]]$ code.  The Latin rectangle-based encoding schedule, shown in \figref{fig:713-latin-schedule}, needs nine CNOT gates.  An equivalent circuit requiring only eight CNOT gates is shown in \figref{fig:steane-overlap}. This circuit removes two of the CNOTs for which qubit seven is a target and replaces them with a single CNOT from qubit six to qubit seven in round three.  This works because in~\ref{fig:steane-prep} qubits six and seven are both the targets of CNOTs from qubits two and four; the corresponding stabilizer generators overlap on qubits six and seven.

The phenomenon of overlapping stabilizers generalizes to any CSS code, and the savings for larger codes is substantially greater.  However, larger codes are harder to analyze by hand.  We now describe a systematic method for optimizing stabilizer state preparation.

Our method for exploiting overlaps in large codes identifies the amount of overlap between each pair of stabilizer generators and uses the largest overlaps first.  The amount of overlap between each pair of stabilizers can be calculated by $\transpose{A}A=O$, where $A$ is the redundancy matrix of the $X$ stabilizer generators when expressed in form~\eqnref{eq:stabilizer-standard-form}, and $\transpose{A}$ is the transpose of $A$.  Entry $O(i,j)$ of this matrix corresponds to the number of non-zero entries shared by stabilizers $i$ and $j$.

The algorithm proceeds by selecting the set of disjoint pairs $\{(i_k,j_k)\}$ that yields the largest sum $\sum_k O(i_k, j_k)$, for some $k \leq n_X/2$.  The overlap between each pair of columns $(i,j)$ is then removed from column $j$ of $A$, and the process is repeated until no overlaps remain.  The schedule of CNOT gates is then obtained from the chosen pairs, and the remaining $1$s in $A$, while also accounting for the time-ordering required by the overlap CNOTs.

For example, swapping columns three and four of the $[[7,1,3]]$ code as before, we obtain an overlap matrix
\begin{equation}
\begin{tabular}[b]{c|cccc}
 &3&5&6&7\\
\hline
3&2&1&1&2\\
5& &2&1&2\\
6& & &2&2\\
7& & & &3
\end{tabular}
\enspace,
\end{equation}
where the lower triangular entries have been omitted because the matrix is symmetric.  Each diagonal entry indicates the weight of the corresponding column, and the off-diagonal entries indicate the overlap between pairs of columns.  In this example we see that column seven has overlap two with each of the other three columns.  In~\figref{fig:steane-overlap} we have chosen to use the overlap between columns six and seven.  Alternatively we could have chosen to use the overlap between columns three and seven or columns five and seven.  

In this case, column seven is the only choice that yields improvement over Steane's method.  Overlaps of one yield no net gain.  In~\secref{sec:ancilla.examples} we will examine larger codes for which there are more overlaps.

In the asymptotic setting, for arbitrarily large circuits of CNOT gates, the overlap-based method bares resemblance to the algorithm presented in~\cite{Patel2003}.  Both methods exploit similarities across columns (or rows) of a matrix to eliminate CNOT gates.  Our method differs in that we use only the redundancy matrix rather than the full $n\times n$ linear transformation, and we exploit similarities between columns without first using Gaussian elimination to make the columns identical.  This way, and by making the optimizations by hand, we are usually able to preserve circuit depth.

\subsection{Benefits of optimized preparation circuits}
The most obvious benefit of this method is the reduction in the size of the encoding circuit. For the $[[7,1,3]]$ code the number of CNOT gates was reduced from nine to eight.
In other cases, the depth of the circuit can also be decreased.
A less obvious, but more important benefit for fault-tolerance is that the number of correlated errors that can occur during the encoding circuit is also reduced.

\begin{definition}[Correlated error]
\label{def:correlated-error}
Consider an encoding circuit $C$ for a code with distance $d$.  An error $e$ caused by a set of $k \leq \lfloor (d-1)/2 \rfloor$ faulty locations in $C$ is \emph{correlated} if $e$ propagates through $C$ to an error $f$ such that $|f| > k$.  An error that is not correlated is said to be \emph{uncorrelated}.
\end{definition}
Informally, an error is correlated if its weight, modulo the stabilizers, is larger than the number of faulty locations that combined to cause the error.  This definition is motivated by the desire for strict fault tolerance (\defref{def:strict-fault-tolerance-ec}).  If each location in the circuit fails with probability $p$, then an uncorrelated error of weight $k$ occurs with probability at most $p^k$.
Preparation of stabilizer states with small numbers of correlated errors is highly desirable for fault-tolerant error correction, as we shall see in~\secref{sec:ancilla.verification}.

For the $[[7,1,3]]$ code, the number of correlated errors can be counted by hand.  Modulo the stabilizers, the only weight-two errors that can occur due to a single fault in~\figref{fig:steane-prep} are $\{X_1X_3, X_2X_7, X_3X_4\}$, and there are no weight-three errors.  Here the notation $X_i$ indicates an $X$ error on qubit $i$.  In~\figref{fig:steane-overlap}, however, there are only two possible correlated errors $\{X_1X_3,X_4X_5\}$.  A correlated $XX$ error could occur on the final CNOT between qubits six and seven.  However, $X_6X_7$ is equivalent to $X_4X_5$ modulo the stabilizer $X_4X_5X_6X_7$.  The reduction in the number of correlated errors is fairly modest for this code, but can be substantially larger for other codes.

\section{Extension to non-stabilizer states
\label{sec:ancilla.stabilizer-states.extension}
}
Stabilizer state preparation can be extended in order to encode an arbitrary state $\ket\psi$.  One way to prepare an arbitrary state is to use a teleportation protocol due to Knill~\cite{Knill2004a}.  The idea here is to prepare an encoded Bell pair and then teleport the (physical) input state $\ket\psi$ into the encoding. See~\figref{fig:encoded-psi-by-teleportation}.  The circuit requires two encoded stabilizer states $\ket 0$ and $\ket +$ plus some additional Clifford operations.

\begin{figure}
\centering
\includegraphics{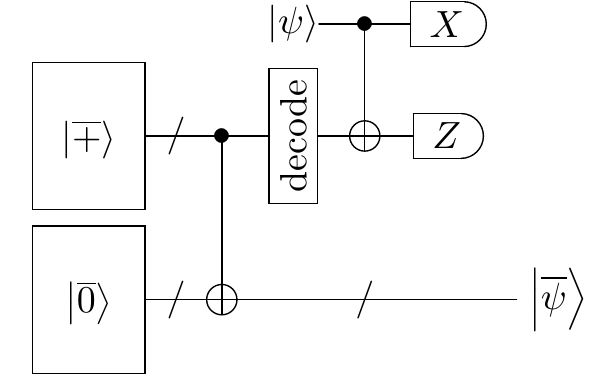}
\caption[Encoding an arbitrary state using teleportation.]{\label{fig:encoded-psi-by-teleportation}
Encoding of an arbitrary state $\ket\psi$ by teleportation~\cite{Knill2004a}. An encoded Bell pair is constructed by preparing stabilizer states $\lket 0$ and $\lket +$ and coupling with CNOT.  One half of the Bell pair is then decoded.  The decoded half is then used in a Bell measurement to teleport the input state $\ket\psi$ onto the encoded half of the Bell pair. 
}
\end{figure}

A more efficient alternative, however, is to use just the encoded $\ket 0$ preparation circuit and a controlled version of the logical $X$ operator, as shown in~\figref{fig:encoded-psi-by-ctrlX}. Let $U$ be the unitary operation implemented by the $\ket 0$ encoding circuit, and consider the operator
\begin{equation}
\label{eq:propagated-XL} 
X'_L = U^\dagger X_LU
\end{equation} 
obtained by propagating logical $X$ from the output through $U$ to the input. 
In~\figref{fig:encoded-psi-by-ctrlX} we take one of the input $\ket 0$ qubits with support on the logical operator $X'_L$, and replace it with $\ket\psi$.  Let $\tilde{X}_L$ be the part $X'_L$ that does not have support on this qubit.  Then we perform $\tilde{X}_L$, controlled by $\ket\psi$. Assuming that the encoding circuit contains only CNOT gates, $X'_L$ is a tensor product of $X$ and $I$ and so the controlled operation can be accomplished with CNOT gates.
Finally, implementing the circuit for encoded $\ket 0$ (using either Steane's method or by exploiting overlaps) outputs an encoded version $\lket\psi$ of the input state.  Here we have assumed a single-qubit state $\ket\psi$, though the procedure can be adapted to multiple qubits.

\begin{figure}
\centering
\includegraphics{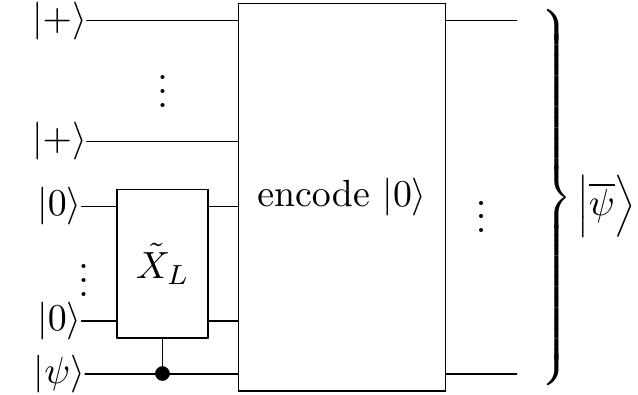}
\caption[Encoding an arbitrary state without teleportation.]{\label{fig:encoded-psi-by-ctrlX}
Encoding of an arbitrary single-qubit state $\ket\psi$ without resorting to teleportation.  Qubits are prepared as in the encoding procedure for encoded $\ket 0$, except that one of the $\ket 0$ inputs is replaced by $\ket\psi$.  Controlled on $\ket\psi$, the $X$ logical operator is conditionally applied.  Here $\tilde{X}_L$ indicates the part of logical $X$ with support disjoint from $\ket\psi$.  Implementing the encoding circuit for $\ket 0$ then yields the encoded state $\lket\psi$.
}
\end{figure}

To see that this works, we examine the effect of the circuit on each of the basis states of $\ket\psi = a\ket0 + b\ket1$. Let $\Lambda_X$ be the controlled $\tilde{X}_L$ operation, $U$ be the unitary corresponding to the encoding circuit, and $\ket\phi$ be the state of the $(n-1)$ qubits other than $\ket\psi$. We need to show that
\begin{equation}
\label{eq:encode-psi}
U \Lambda_X \ket\phi (a\ket0 + b\ket1) = a\lket{0} + b\lket{1}
\enspace .
\end{equation}

We will examine the two basis states $\ket 0$ and $\ket 1$ separately.  The result will then follow by linearity.
The case in which $\ket\psi = \ket 0$ is obvious.  In this case, the controlled $\tilde{X}_L$ gate does not activate, and we obtain $U\ket\phi\ket0 = \lket 0$, by construction.

Now consider the case $\ket\psi = \ket 1$.  Since the control activates in this case, the circuit is equivalent to setting $\ket\psi$ to $\ket 0$, applying $X'_L$ and then applying $U$.  That is
\begin{equation}
U\Lambda_X \ket\phi\ket1 = UX'_L\ket\phi\ket0
\enspace .
\end{equation}
Using~\eqnref{eq:propagated-XL}, we then obtain
\begin{equation}
UX'_L\ket\phi\ket0 = X_LU\ket\phi\ket0 = X_L\lket0 = \lket1
\enspace .
\end{equation}

Note that when using Steane's Latin rectangle construction, $X'_L = X_L$ since all of the qubits on which $X_L$ has support are targets of CNOT gates.  Any $X$ operator of $X_L$ on a control qubit can be removed by multiplying by a stabilizer. Pauli $X$ commutes through a CNOT target, and therefore $X_L$ commutes through a Latin rectangle circuit. For overlap-based circuits, the operator $X'_L$ may be somewhat different, but will still be a tensor product of $X$ and $I$.

Circuits of the form given by~\figref{fig:encoded-psi-by-ctrlX} are typically used for state distillation in topological codes~\cite{Fowler2012d}, where the code in question is the $[[7,1,3]]$ code (for the $S$ gate) or the $[[15,1,3]]$ code (for the $T$ gate).  The overlap-based optimizations given in~\secref{sec:ancilla.stabilizer-states.overlap} therefore suggest that such distillation circuits could be improved, particularly~\figref{fig:T-distillation-Bell-pair}.  In~\secref{sec:ancilla.examples.1513} we show this optimization explicitly.

\section{Examples
\label{sec:ancilla.examples}
}
The $[[7,1,3]]$ code is useful as a toy example for demonstrating the overlap optimization technique.  However, the actual resource savings are somewhat undramatic.  We now illustrate larger resource savings that can be obtained when using larger codes.

\subsection{[[15,1,3]] code
\label{sec:ancilla.examples.1513}
}
For our first example, we examine the $[[15,1,3]]$ code.  The stabilizers of this code are given by~\eqnref{eq:1513-stabilizers}.  There are four $X$ stabilizer generators, each of which have weight eight. A Latin rectangle encoding circuit for encoded $\ket 0$, therefore has size $28$ and depth seven.  There are ten $Z$ stabilizer generators. In~\eqnref{eq:1513-stabilizers} there are four generators of weight eight and six of weight four.  However, using Gaussian elimination we can obtain the following presentation in which each generator has weight four:
\begin{equation}
\label{eq:1513-reduced-Z-stabilizers}
\begin{tabular}[b]{c@{~}c@{~}c@{~}c@{~}c@{~}c@{~}c@{~}c@{~}c@{~}c@{~}c@{~}c@{~}c@{~}c@{~}c}
Z&$\cdot$&$\cdot$&$\cdot$&$\cdot$&$\cdot$&Z&$\cdot$&$\cdot$&$\cdot$&Z&$\cdot$&Z&$\cdot$&$\cdot$\\
$\cdot$&Z&$\cdot$&$\cdot$&$\cdot$&$\cdot$&Z&$\cdot$&$\cdot$&$\cdot$&Z&$\cdot$&$\cdot$&Z&$\cdot$\\
$\cdot$&$\cdot$&Z&$\cdot$&$\cdot$&$\cdot$&Z&$\cdot$&$\cdot$&$\cdot$&Z&$\cdot$&$\cdot$&$\cdot$&Z\\
$\cdot$&$\cdot$&$\cdot$&Z&$\cdot$&$\cdot$&Z&$\cdot$&$\cdot$&$\cdot$&$\cdot$&$\cdot$&Z&Z&$\cdot$\\
$\cdot$&$\cdot$&$\cdot$&$\cdot$&Z&$\cdot$&Z&$\cdot$&$\cdot$&$\cdot$&$\cdot$&$\cdot$&Z&$\cdot$&Z\\
$\cdot$&$\cdot$&$\cdot$&$\cdot$&$\cdot$&Z&Z&$\cdot$&$\cdot$&$\cdot$&$\cdot$&$\cdot$&$\cdot$&Z&Z\\
$\cdot$&$\cdot$&$\cdot$&$\cdot$&$\cdot$&$\cdot$&$\cdot$&Z&$\cdot$&$\cdot$&Z&$\cdot$&Z&Z&$\cdot$\\
$\cdot$&$\cdot$&$\cdot$&$\cdot$&$\cdot$&$\cdot$&$\cdot$&$\cdot$&Z&$\cdot$&Z&$\cdot$&Z&$\cdot$&Z\\
$\cdot$&$\cdot$&$\cdot$&$\cdot$&$\cdot$&$\cdot$&$\cdot$&$\cdot$&$\cdot$&Z&Z&$\cdot$&$\cdot$&Z&Z\\
$\cdot$&$\cdot$&$\cdot$&$\cdot$&$\cdot$&$\cdot$&$\cdot$&$\cdot$&$\cdot$&$\cdot$&$\cdot$&Z&Z&Z&Z
\end{tabular}
\enspace .
\end{equation}
The corresponding Latin rectangle circuit for encoded $\ket+$ then has size $30$.  The depth is at least six, the maximum weight of a column of~\eqnref{eq:1513-reduced-Z-stabilizers}.

By exploiting overlaps between pairs of generators, as described above, we construct the circuits shown in~\figref{fig:1513-overlap-circuits}.  The circuit for encoded $\ket 0$ has size $22$ and the circuit for $\ket +$ has size $25$, a size decrease by roughly $27\%$ and $20\%$, respectively.  The depth of both circuits is seven.  The depth for the $\ket +$ circuit has actually increased relative to the Latin rectangle circuit.  The extra timestep is necessary to exploit overlaps between two weight-six columns. 

\begin{figure}
\centering
\begin{subfigure}[b]{.45\textwidth}
\centering
\includegraphics[scale=1.15]{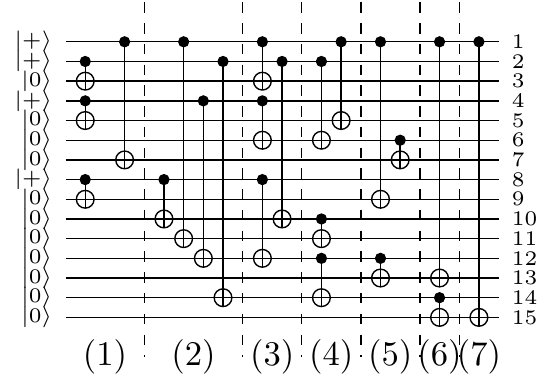}
\caption{\label{fig:1513-encode0}
Encoded $\ket 0$
}
\end{subfigure}
\begin{subfigure}[b]{.45\textwidth}
\centering
\includegraphics[scale=1.15]{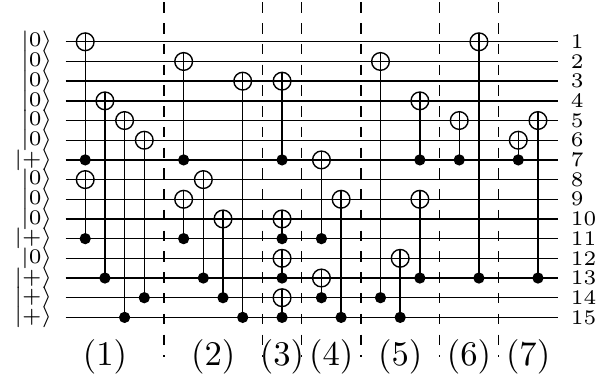}
\caption{\label{fig:1513-encode+}
Encoded $\ket +$
}
\end{subfigure}
\caption[Optimized encoding circuits for the {$[[15,1,3]]$} code.]{\label{fig:1513-overlap-circuits}
Optimized encoding circuits for the $[[15,1,3]]$ code. (a) An encoding circuit for $\ket 0$ requires $22$ CNOT gates and seven rounds. (b)  An encoding circuit for $\ket +$ requires $25$ CNOT gates and seven rounds.  Gates in the same round are applied in parallel.
}
\end{figure}

As an immediate consequence of~\figref{fig:1513-encode0}, the gate cost of state distillation can be decreased.  This circuit can be substituted for the bulk of the CNOT gates in~\figref{fig:tgate-implementation} using the protocol discussed in~\secref{sec:ancilla.stabilizer-states.extension}.  Additional savings can be obtained by noting that not all of the qubits need to be prepared at the beginning of the circuit.  For example, qubit $15$ is not required until timestep six.

Thorough analysis of the resource savings requires choosing another error-correcting code for computation and specifying any geometric constraints.  The standard $[[15,1,3]]$ encoding circuit has been heavily optimized by hand for use in the surface code, for example~\cite{Fowler2012f}.  Similar hand optimization of~\figref{fig:1513-encode0} could yield improved results, though we do not perform the required analysis here.

\subsection{Bacon-Shor codes
\label{sec:ancilla.examples.bacon-shor}
}
Next we consider the family of Bacon-Shor codes~\cite{Baco06}. For a fixed $n$, this code family  uses $n^2$ physical qubits to encode one logical qubit to a distance of $n$ and $(n-1)^2$ logical qubits to a distance of two.  Usually, only the single distance-$n$ qubit is used and the state of the remaining  ``gauge'' qubits is ignored.  In this case the code is treated as $[[n^2,1,n]]$.
 
The qubits of this code can be laid out as an $n\times n$ square lattice.  In this geometry, the stabilizer generators can be expressed in a particularly simple form.  The $X$ stabilizer generators correspond to neighboring pairs of rows, and the $Z$ stabilizers correspond neighboring to pairs of columns.  Following~\cite{Aliferis2007b}, for each row $j$ let $X_{j,*}$ be the operator that acts as a tensor product of Pauli $X$ on the qubits of row $j$ and acts trivially elsewhere.  Similarly, for each column $j$ let $Z_{*,j}$ be the operator that acts as a tensor product of Pauli $Z$ on column $j$.  Then the stabilizer generators of the code are given by
\begin{equation}
\{X_{j,*}X_{j+1,*}; Z_{*,j}Z_{*,j+1} ~|~ j\in [n-1] \}
\enspace .
\end{equation}

When presented in this way, we immediately see that each generator has weight $2n$ and, except for $X_{1,*}X_{2,*}$ and $X_{n-1,*}X_{n,*}$ (and respectively, $Z_{*,1}Z_{*,2}$ and $Z_{*,n-1}Z_{*,n}$) overlaps with two other generators in on exactly $n$ qubits.  In order to see how to take advantage of these overlaps, however, we will prefer to present the generators in a different way.  Consider the product of the last two $X$ generators $X_{n-2} := (X_{n-2,*}X_{n-1,*})(X_{n-1,*}X_{n,*}) = X_{n-2,*}X_{n,*}$.  This operator has support on rows $(n-2)$ and $n$.  We may similarly define operators $X_j$ using the recursion relation
\begin{equation}
X_j := (X_{j,*}X_{j+1,*})X_{j+1} = X_{j,*}X_{n,*}
\enspace .
\end{equation}
The set $\{X_j ~|~ j\in [n-1]\}$ forms an alternate basis of $X$ stabilizer generators for the code.  Each column of the generator matrix has weight one, except for the last $n$ columns which each have weight $n-1$.  For example, the $X$ generators for the case $n=4$ are given by
\begin{equation}
\begin{tabular}[b]{c@{~}c@{~}c@{~}c@{~}c@{~}c@{~}c@{~}c@{~}c@{~}c@{~}c@{~}c@{~}c@{~}c@{~}c@{~}c}
$X$&$X$&$X$&$X$&$\cdot$&$\cdot$&$\cdot$&$\cdot$&$\cdot$&$\cdot$&$\cdot$&$\cdot$&$X$&$X$&$X$&$X$\\
$\cdot$&$\cdot$&$\cdot$&$\cdot$&$X$&$X$&$X$&$X$&$\cdot$&$\cdot$&$\cdot$&$\cdot$&$X$&$X$&$X$&$X$\\
$\cdot$&$\cdot$&$\cdot$&$\cdot$&$\cdot$&$\cdot$&$\cdot$&$\cdot$&$X$&$X$&$X$&$X$&$X$&$X$&$X$&$X$\\
\end{tabular}
\enspace .
\end{equation}
The weight-one columns can be filled in using a total of $(n-1)^2$ CNOT gates, and column $n$ can be filled in using $(n-1)$ additional CNOTs.  Then the remaining block of $(n-1)^2$ $X$s can be filled, using overlaps, with $(n-1)$ CNOTs. The corresponding circuit prepares logical $\ket 0$ on each of the encoded qubits (including the gauge qubits) using $(n-1)(n+1)$ CNOTs. See~\figref{fig:bacon-shor-overlap-circuit}. By obtaining a similar presentation of the $Z$ generators, encoded $\ket +$ can be prepared across all logical qubits for the same cost.

\begin{figure}
\centering
\includegraphics{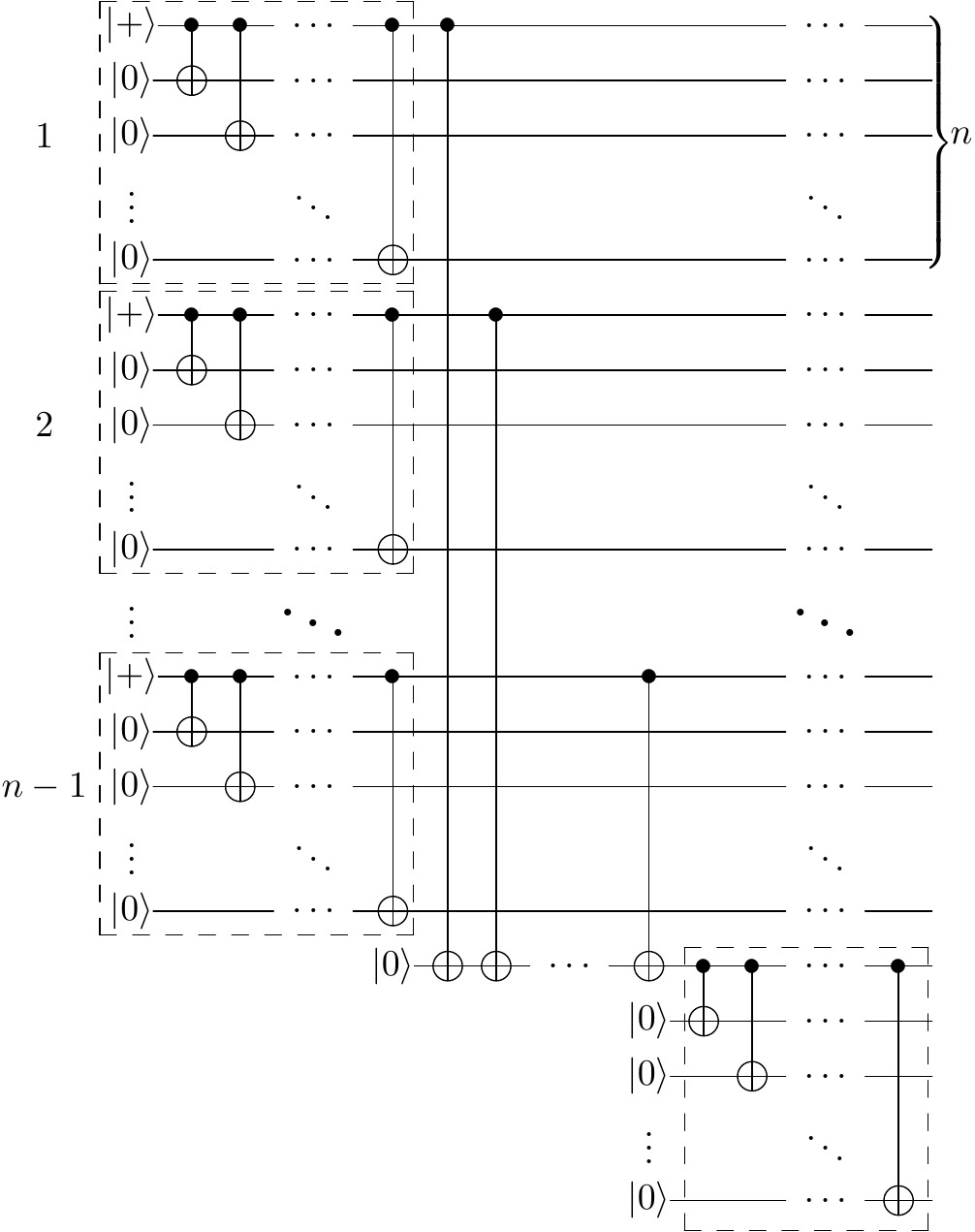}
\caption[Overlap circuit for Bacon-Shor encoded $\ket 0$.]{\label{fig:bacon-shor-overlap-circuit}
This circuit prepares logical $\ket 0$ on each of the qubits (including gauge qubits) of an $n$-qubit Bacon-Shor code.  For visual clarity, each of the $n$ boxed subcircuits use CNOTs from the same control qubit. Alternate but equivalent subcircuits can be implemented in depth $\lceil \log_2(n)\rceil$.
}
\end{figure}

From~\figref{fig:bacon-shor-overlap-circuit} we see that the circuit consists of $(n-1)$ cat state preparations, plus another circuit that also resembles a cat state.  A cat state can be prepared in depth $\lceil \log_2(n) \rceil$ using a tree-like sequence of CNOT gates, and so the entire circuit can be implemented in depth $n + 2\lceil \log_2(n) \rceil - 1$.

Indeed, Aliferis and Cross have observed that by preparing each of the gauge qubits in logical $\ket +$ rather than $\ket 0$, the encoded $\ket 0$ state (on the distance $n$ qubit) can be expressed as a tensor product of $n$ cat states $(\ket{0^n} + \ket{1^n})/\sqrt{2}$, breaking the coupling required in~\figref{fig:bacon-shor-overlap-circuit}.  Thus, if we are unconcerned with the state of the gauge qubits, then encoded $\ket 0$ can be prepared using only $n(n-1)$ CNOTs and $\lceil \log_2(n) \rceil$ timesteps.

Both~\figref{fig:bacon-shor-overlap-circuit} and the cat state method of~\cite{Aliferis2007b} compare favorably to the Latin rectangle method. The Latin rectangle method requires each of the rows to be filled separately, yielding $(n-1)(2n-1)$ CNOT gates and a depth of $2n-1$.  The overlap and cat state circuits beat this by roughly a factor of two in size.
Statistics for all three encoding methods are shown in~\tabref{tbl:bacon-shor-encoding-stats}.

\begin{table}
\centering 
\begin{tabular}{c|l@{\qquad}l}
\hline\hline
Method & Size (CNOTs) & Depth\tabularnewline
\hline
Latin rect. & $(n-1)(2n-1)$ & $2n-1$\tabularnewline
Cat state   & $n(n-1)$      & $\lceil \log_2(n)\rceil$\tabularnewline
Overlap     & $(n-1)(n+1)$  & $n + 2\lceil \log_2(n)\rceil - 1$\tabularnewline
\hline\hline
\end{tabular}
\caption[Circuit statistics for Bacon-Shor encoded $\ket 0$.]{\label{tbl:bacon-shor-encoding-stats}
Circuit statistics for encoding $\ket 0$ or $\ket +$ for a $[[n^2,1,n]]$ Bacon-Shor code.  Column one shows the Latin rectangle method due to~\cite{Stea02}, column two shows the cat state method due to~\cite{Aliferis2007b} and column three shows the overlap method from~\secref{sec:ancilla.stabilizer-states.overlap}.  The cat state method of~\cite{Aliferis2007b} prepares $\ket 0$ on the distance-$n$ logical qubit and $\ket +$ on each of the gauge qubits, whereas the overlap method prepares $\ket 0$ on each of the gauge qubits.
} 
\end{table}

This example also illustrates why exploiting stabilizer overlaps reduces the number of correlated errors produced by the encoding circuit when compared to the Latin rectangle method.  Reichardt has observed that the correlated errors in a Latin rectangle circuit can be characterized in a systematic way~\cite{Reichardt2006}.  Consider a single $X$ stabilizer generator of weight $m$.  Ignoring the qubits on which this generator acts trivially, the circuit for this generator is of the form
\begin{equation}
\label{eq:cat-state-circuit}
\includegraphics{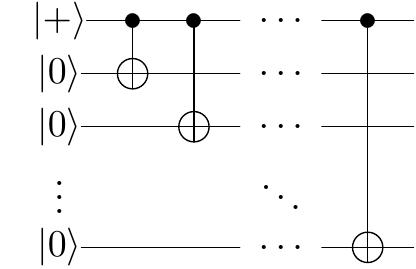}
\enspace .
\end{equation}
Next consider the $X$ errors that can occur as a result of a single faulty gate in the circuit.
Pauli $X$ errors on target qubits do not propagate and are uncorrelated.  Any correlated $X$ error must have support on the first qubit and some consecutive sequence of qubits $\{j,\ldots,m\}$ for $j > 1$.  Up to multiplication by the stabilizer, $X_1\ldots X_m$ is trivial and $X_1X_3\ldots X_m$ has weight one.  Thus, there are exactly $(m-2)$ unique correlated errors that occur with first-order probability.

Of course, the stabilizer generators of the entire code are not disjoint, and so the total number of first-order correlated errors is more complicated to compute.  However, in the case of the Bacon-Shor code the intersections between stabilizers are particularly simple, and do not affect the analysis.  An $n$-qubit Bacon-Shor code has $(n-1)$ $X$ generators each of weight $2n$, and so a Latin rectangle encoding circuit will contain $(n-1)(2n-2)$ correlated $X$ errors to first-order.

The situation for the overlap-based circuit is somewhat different. From~\figref{fig:bacon-shor-overlap-circuit} we see that the encoding circuit contains $n$ subcircuits of the same form as~\eqnref{eq:cat-state-circuit}.  Each of these subcircuits can produce $(n-2)$ correlated $X$ errors from a first-order fault.  The extra CNOT gates that span the circuit add another $(n-1)$ such correlated errors.  Thus the entire circuit can produce $n(n-1) - 1$ order-one correlated $X$ errors, roughly half of the number of correlated $X$ errors produced by a corresponding Latin rectangle circuit.

\subsection{Golay code
\label{sec:ancilla.examples.golay}
}

In our final example, we construct circuits for encoding $\ket 0$ in the $23$-qubit Golay code.  The Golay code has $11$ $X$ stabilizer generators, each of weight eight:
\begin{equation} \label{eq:golay-stabilizers}
\begin{tabular}{c@{}c@{}c@{}c@{}c@{}c@{}c@{}c@{}c@{}c@{}c@{}c@{}c@{}c@{}c@{}c@{}c@{}c@{}c@{}c@{}c@{}c@{}c@{}}
$\cdot$ &$X$ &$\cdot$ &$\cdot$ &$X$ &$\cdot$ &$\cdot$ &$X$ &$X$ &$X$ &$X$ &$X$ &$\cdot$ &$\cdot$ &$\cdot$ &$\cdot$ &$\cdot$ &$\cdot$ &$\cdot$ &$\cdot$ &$\cdot$ &$\cdot$ &$X$\\
$X$ &$\cdot$ &$\cdot$ &$X$ &$\cdot$ &$\cdot$ &$X$ &$X$ &$X$ &$X$ &$X$ &$\cdot$ &$\cdot$ &$\cdot$ &$\cdot$ &$\cdot$ &$\cdot$ &$\cdot$ &$\cdot$ &$\cdot$ &$\cdot$ &$X$ &$\cdot$\\
$\cdot$ &$X$ &$X$ &$\cdot$ &$X$ &$X$ &$X$ &$\cdot$ &$\cdot$ &$\cdot$ &$X$ &$X$ &$\cdot$ &$\cdot$ &$\cdot$ &$\cdot$ &$\cdot$ &$\cdot$ &$\cdot$ &$\cdot$ &$X$ &$\cdot$ &$\cdot$\\
$X$ &$X$ &$\cdot$ &$X$ &$X$ &$X$ &$\cdot$ &$\cdot$ &$\cdot$ &$X$ &$X$ &$\cdot$ &$\cdot$ &$\cdot$ &$\cdot$ &$\cdot$ &$\cdot$ &$\cdot$ &$\cdot$ &$X$ &$\cdot$ &$\cdot$ &$\cdot$\\
$X$ &$X$ &$X$ &$X$ &$\cdot$ &$\cdot$ &$\cdot$ &$X$ &$\cdot$ &$\cdot$ &$X$ &$X$ &$\cdot$ &$\cdot$ &$\cdot$ &$\cdot$ &$\cdot$ &$\cdot$ &$X$ &$\cdot$ &$\cdot$ &$\cdot$ &$\cdot$\\
$X$ &$\cdot$ &$X$ &$\cdot$ &$X$ &$\cdot$ &$X$ &$X$ &$X$ &$\cdot$ &$\cdot$ &$X$ &$\cdot$ &$\cdot$ &$\cdot$ &$\cdot$ &$\cdot$ &$X$ &$\cdot$ &$\cdot$ &$\cdot$ &$\cdot$ &$\cdot$\\
$\cdot$ &$\cdot$ &$\cdot$ &$X$ &$X$ &$X$ &$X$ &$\cdot$ &$X$ &$X$ &$\cdot$ &$X$ &$\cdot$ &$\cdot$ &$\cdot$ &$\cdot$ &$X$ &$\cdot$ &$\cdot$ &$\cdot$ &$\cdot$ &$\cdot$ &$\cdot$\\
$\cdot$ &$\cdot$ &$X$ &$X$ &$X$ &$X$ &$\cdot$ &$X$ &$X$ &$\cdot$ &$X$ &$\cdot$ &$\cdot$ &$\cdot$ &$\cdot$ &$X$ &$\cdot$ &$\cdot$ &$\cdot$ &$\cdot$ &$\cdot$ &$\cdot$ &$\cdot$\\
$\cdot$ &$X$ &$X$ &$X$ &$X$ &$\cdot$ &$X$ &$X$ &$\cdot$ &$X$ &$\cdot$ &$\cdot$ &$\cdot$ &$\cdot$ &$X$ &$\cdot$ &$\cdot$ &$\cdot$ &$\cdot$ &$\cdot$ &$\cdot$ &$\cdot$ &$\cdot$\\
$X$ &$X$ &$X$ &$X$ &$\cdot$ &$X$ &$X$ &$\cdot$ &$X$ &$\cdot$ &$\cdot$ &$\cdot$ &$\cdot$ &$X$ &$\cdot$ &$\cdot$ &$\cdot$ &$\cdot$ &$\cdot$ &$\cdot$ &$\cdot$ &$\cdot$ &$\cdot$\\
$X$ &$\cdot$ &$X$ &$\cdot$ &$\cdot$ &$X$ &$\cdot$ &$\cdot$ &$X$ &$X$ &$X$ &$X$ &$X$ &$\cdot$ &$\cdot$ &$\cdot$ &$\cdot$ &$\cdot$ &$\cdot$ &$\cdot$ &$\cdot$ &$\cdot$ &$\cdot$
\end{tabular}
\end{equation}
The $Z$ stabilizers are entirely symmetric (the code is self-dual).  The logical $X$ and $Z$ operators correspond to transversal $X$ and transversal $Z$, respectively.

Latin rectangle circuits for $\ket 0$ use $77$ CNOT gates and seven time steps.  The overlap optimized circuit for $\ket 0$ also has depth $7$ but uses only $57$ CNOT gates, a savings of about $35\%$. See~\figref{fig:Golay-overlap-57}.  Since the $X$ and $Z$ stabilizers of the Golay code are symmetric, $\ket +$ can be prepared from the circuit for $\ket 0$ by taking the dual circuit in the standard way.

\begin{figure}
\centering
\includegraphics[scale=1]{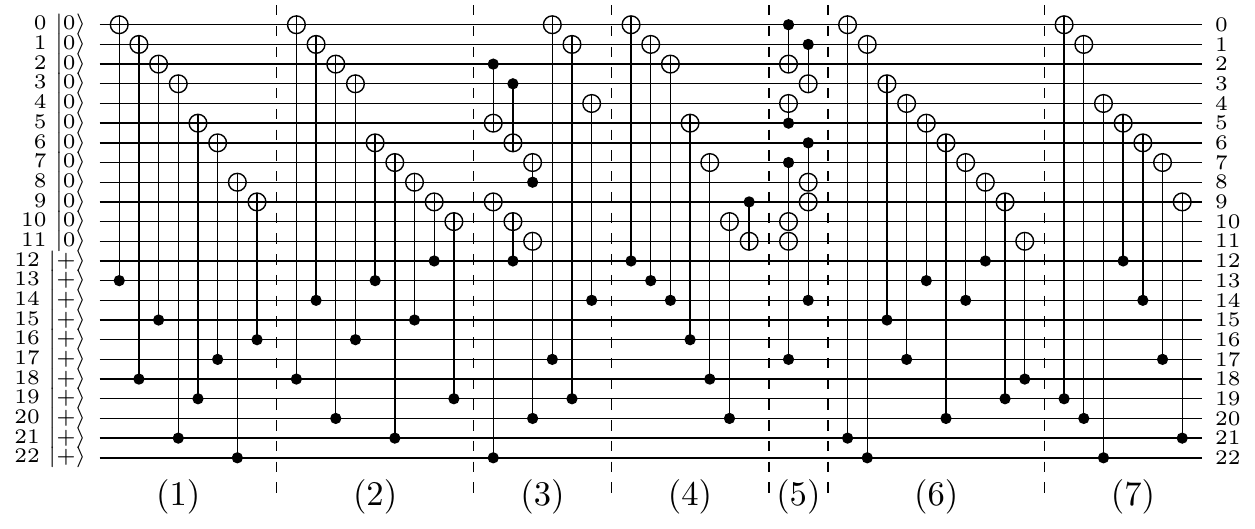}
\caption[A $57$ CNOT circuit for encoded $\ket 0$ in the Golay code.]{\label{fig:Golay-overlap-57}
An optimized circuit for preparing $\ket 0$ encoded in the Golay code uses $57$ CNOT gates applied in seven rounds.  Gates in the same round are applied in parallel.}
\end{figure}

By reducing the number of CNOT gates, this circuit also reduces the number of correlated errors.  For example, a single failure in the Latin rectangle encoded circuits can cause up to $22$ weight-two errors, but a single failure in~\figref{fig:Golay-overlap-57} can only cause up to $16$ weight-two errors.  The contrast for second-order faults is even larger.  The improvement for the overlap optimized circuit is roughly a factor of two. The correlated error counts for first and second order are shown in \tabref{tbl:overlap-errorWeights}.

\begin{table}
\centering
\begin{subtable}[b]{.47\textwidth}
\begin{tabular}{c|l@{~~}l@{~~}l@{~~}l@{~~}l@{~~}l@{~~}l|}
\hline \hline 
$X$-error weight: & 
2 & 3 & 4 & 5 & 6 & 7 \tabularnewline
\hline
Order 1: &
16 & 14 & 4 & 0 & 0 & 0 \tabularnewline
Order 2: & 
- & 493 & 400 & 35 & 2 & 0 \tabularnewline
\hline \hline 
\end{tabular}
\caption{\label{tbl:overlap-errorWeights}
Overlap
} 
\end{subtable}
\hfill
\begin{subtable}[b]{.47\textwidth}
\centering 
\begin{tabular}{c|l@{~~}l@{~~}l@{~~}l@{~~}l@{~~}l@{~~}l|}
\hline \hline 
$X$-error weight: & 
2 & 3 & 4 & 5 & 6 & 7 \tabularnewline
\hline
Order 1: &
22&22&11&0&0&0 \tabularnewline
Order 2: & 
- &848&718&73&3&0 \tabularnewline
\hline \hline 
\end{tabular}
\caption{\label{tbl:Golay-Latin-correlated-X-errors}
Latin rectangle
} 
\end{subtable}
\caption[Correlated $X$ error counts for Golay encoded $\ket 0$.]{
Correlated $X$ error counts for circuits encoding $\ket 0$ in the Golay code.
(a) Correlated $X$ error counts for the overlap optimized circuit in \figref{fig:Golay-overlap-57}. (b) Correlated $X$ error counts for a Latin rectangle encoding circuit (not shown).
}
\end{table}

\begin{table}
\end{table}

We briefly note that the overlap method, and the circuit in~\figref{fig:Golay-overlap-57} in particular, may not be optimal.  Indeed there are equivalent circuits with fewer CNOT gates. However, \figref{fig:Golay-overlap-57} is the smallest circuit we found that also preserves \emph{depth}. 

\section{Encoded ancilla verification
\label{sec:ancilla.verification}
}
None of the stabilizer state preparation circuits shown thus far are fault tolerant.
A single physical fault may lead to errors on multiple qubits. For example, an $XX$ error on the final CNOT of~\figref{fig:steane-overlap} leaves the weight-two error $X_6X_7$.  The code is limited by its distance and cannot necessarily protect against such correlated errors. As a result, the ancilla states themselves must be checked for errors.  
The primary task of fault-tolerant ancilla preparation then, is to prevent errors in the preparation circuit from spreading through the ancilla block. 

\subsection{Steane-style verification}
One way to check for errors which is particularly useful for large CSS codes is to use a Steane-style error-detection circuit.  To check for $X$ errors, a second encoded ancilla is prepared as $\ket +$ and a transversal CNOT is used to copy errors from the first ancilla to the second, as shown in~\figref{fig:verify-X}. If the $Z$-basis measurement implies the presence of an error, then the ancilla is discarded and the process begins again. To check for $Z$ errors, we instead prepare encoded $\ket 0$ and swap the control and target of the CNOT.  However, correlated $X$ errors that occur during preparation of $\ket 0$ can propagate through the CNOT to the original ancilla.  To prevent this we first check the $\ket 0$ state for $X$ errors, and then proceed to use it for $Z$ error detection, as in~\figref{fig:verify-Z}.  Again, if an error is detected, the the ancilla is discarded.

\begin{figure}
\centering
\begin{subfigure}[b]{.4\textwidth}
\centering
\includegraphics{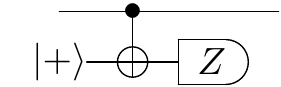}
\caption{\label{fig:verify-X}
$X$-error verification
}
\end{subfigure}
\begin{subfigure}[b]{.4\textwidth}
\centering
\includegraphics{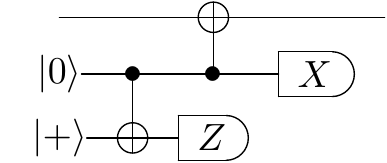}
\caption{\label{fig:verify-Z}
$Z$-error verification
}
\end{subfigure}
\caption[First order verification circuits.]{\label{fig:verify-first-order}
First-order verification circuits. (a) $X$ errors are copied onto the encoded $\ket +$ ancilla and then detected by the $Z$-basis measurement. (b) An encoded $\ket 0$ ancilla is first checked for $X$ errors in order to prevent $X$ errors from spreading to the top qubit.  Then $Z$ errors are copied from the top qubit and detected by the $X$-basis measurement.
}
\end{figure}

The circuits in~\figref{fig:verify-first-order} are sufficient to detect correlated errors up to first order.  But for high distance codes we desire verification up to order $t = \lfloor (d-1)/2 \rfloor$.  Higher-order verification can be accomplished by using additional and more complex hierarchical error detection circuits.  In general, $(t+1)t+1$ encoded ancillas are sufficient to produce a single ancilla verified to order $t$.  For example, use $t$ $X$-error verifications, followed by $t$ $Z$-error verifications in which each encoded $\ket 0$ ancilla has been verified using an additional $t$ $X$-error verifications.
The total overhead required to prepare a fault-tolerant ancilla depends also on the probability that any errors are detected.

To maximize efficiency, preparation and verification circuits may be constructed using a pipeline architecture in which part of the computer is dedicated to preparing many ancillas in parallel. Even so, ancilla production constitutes the majority of the space requirement for a fault-tolerant quantum circuit. In~\cite{Isailovic2008a}, for example, the ancilla pipeline is estimated to take up to $68$ percent of the entire circuit footprint.

  
%

One of the reasons that a hierarchical verification structure is required is because identically prepared stabilizer states produce identical sets of correlated errors.  For example, say that two encoded ancillas are identically prepared.  Assume that a single failure occurs in the first ancilla and propagates through the preparation circuit to produce a weight three error.  Then the same single failure in the other ancilla will produce the \emph{same} weight three error.  When the error from the first ancilla is copied to the second, the two errors will cancel each other and no error will be detected.  This is a second-order event that results in a weight-three error.

However, DiVincenzo and Aliferis~\cite{DiVi06} have observed that different preparation circuits exhibit different error propagation behavior, and this can be exploited.  Intuitively, if the sets of errors produced by two different preparation circuits are sufficiently different, only a small number of errors will cancel out at each verification, and fewer verifications steps will be required overall.
Therefore, we seek to prepare encoded ancillas that produce different correlated error sets.  In the next section we analyze the correlated errors produced by preparation circuits for the Golay code, and randomized methods for finding ancillas with different correlated error sets.

\subsection{Optimization by counting correlated errors in the Golay code}
\label{sec:ancilla.verify.golay}
Since the circuits and therefore the correlated errors differ depending on the employed error-correcting code, the verification circuits that can be obtained by mixing preparation circuits will also differ.  The most concrete way to show the benefits of this technique are with an example.  In this section we consider the $23$-qubit Golay code.  The Golay code is an illustrative example because it has relatively large distance, but is small enough for manual inspection.  Furthermore, estimates show that the Golay code has a fairly high threshold.  The examples discussed here will also be used in~\chapref{chap:threshold} to prove a lower bound on the threshold for the Golay code.

For the Golay code, the standard recursive verification technique requires twelve encoded ancillas and at least $1177$ CNOT gates.  One such circuit is shown in \figref{fig:TwelveAncillaVerifyCkt}. Variants of this circuit have been used in previous studies of the Golay code, including in~\cite{Steane2003} and~\cite{Cross2009}.  
By considering many different preparation circuits, we find that the number of ancillas can be significantly reduced. We now outline two methods that produce circuits of the form shown in~\figref{fig:FourAncillaVerifyCkt}, requiring only four encoded $\ket{0}$ ancillas and as few as $297$ CNOT gates.


\begin{figure}
\centering
\includegraphics[width=14.2cm]{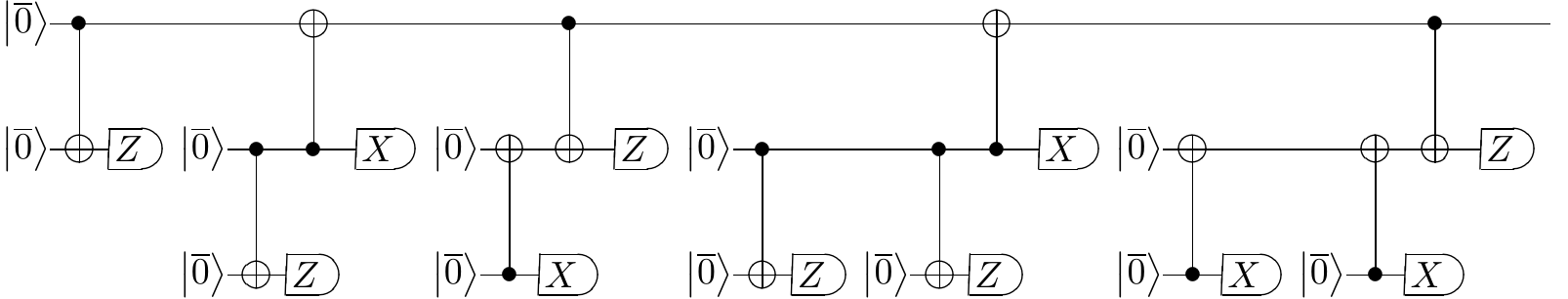}
\caption[Twelve ancilla verification circuit for the Golay code.]{\label{fig:TwelveAncillaVerifyCkt}
This circuit produces a single Golay encoded $\ket 0$ state that is ready to be used in fault-tolerant error correction.  Each of the twelve encoded $\ket{0}$ ancillas, denoted $\lket{0}$, is identically prepared using the Steane Latin rectangle method (see \secref{sec:RandomSteane}).  The wires represent $23$-qubit code blocks and the indicated CNOT and measurement operations are transversal.
} 
\end{figure}

\begin{figure}
\centering
\includegraphics[scale=1]{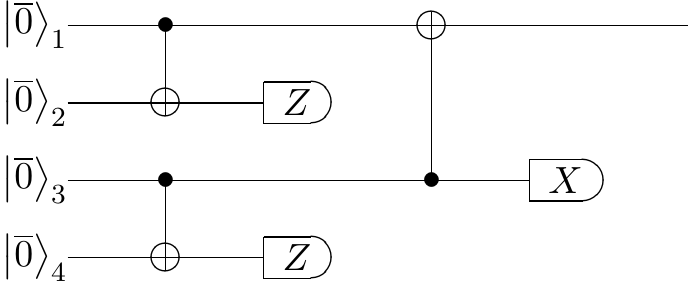}
\caption[Four ancilla verification circuit for the Golay code.]{Our simplified ancilla preparation and verification circuit uses only four encoded $\ket 0$ ancillas.  The ancillas are prepared using different encoding circuits, shown in \figref{fig:Golay-overlap-57} and \tabref{tbl:overlap-permutations}, and also in~\tabref{tab:randomizedSchedules}.}  
\label{fig:FourAncillaVerifyCkt}
\end{figure}

\subsubsection{Randomized method for preparing encoded \texorpdfstring{$\ket 0$}{0}} \label{sec:RandomSteane}

An $X$ error in the preparation circuit can propagate to other qubits only if it occurs on a control qubit, and then only through the $X$ stabilizer being created from that control qubit.  Thus single faults can create up to $22$ weight-two errors (for each of the eleven $X$ stabilizers, either $IIIIIIXX$ or $IIXXXXXX \sim XXIIIIII$), $22$ weight-three errors and eleven weight-four errors ($IIIIXXXX$ for each stabilizer).  

A single $X$ fault, i.e., a fault resulting in an $X$ error, cannot break the verification circuit in \figref{fig:FourAncillaVerifyCkt}.  If it creates a correlated error on the first ancilla, that error will be detected on the second ancilla, and both will be discarded.  Four or more $X$ faults also cannot break the verification circuit because we only seek fault tolerance up to order three.  

Two $X$ faults can break the verification circuit only if there is one failure in each ancilla preparation that propagates to an error of weight at least three---necessarily the same error so that it is undetected.  To obtain a crude estimate for how likely this is to occur, consider a circuit obtained by sampling uniformly at random over all possible circuits that prepare encoded $\ket 0$. (Several methods for approximating such a sample are discussed below.) Pretend that the correlated errors created by such a circuit are uniformly distributed among all errors of the same weights.  The number of errors on encoded $\ket{0}$ for each weight are given in \tabref{tbl:ketZero-errorWeights}.  Then the probability that two preparation circuits share no such correlated errors is estimated as 
\begin{equation*}
\frac{\binom{1771-22}{22}}{\binom{1771}{22}} \cdot \frac{\binom{1771-11}{11}}{\binom{1771}{11}} \approx 0.71
 \enspace .
\end{equation*}
Here, $\binom{1771-22}{22}$ is the number of ways to select $22$ weight-three $X$ errors on the second ancilla such that none of them correspond to the $22$ weight-three errors on the first ancilla.  Similarly $\binom{1771-11}{11}$ is the number of ways to select $11$ weight-four $X$ errors on the second ancilla.

\begin{table}
\centering 
\begin{tabular}{c|cccccccc}
\hline \hline 
Weight: & 
0 & 1 & 2 & 3 & 4 & 5 & 6 & 7 \tabularnewline
\hline
Number of $X$ errors: &
1 & 23 & 253 & 1771 & 1771 & 253 & 23 & 1 \tabularnewline
Number of $Z$ errors: & 
1 & 23 & 253 & 1771 & 0 & 0 & 0 & 0 \tabularnewline
\hline \hline 
\end{tabular}
\caption[Distribution of errors for Golay encoded $\ket 0$.]{The number of errors on Golay encoded $\ket{0}$ by Hamming weight.  All $Z$ errors are correctable so there are no $Z$ errors of weight greater than three.}
\label{tbl:ketZero-errorWeights}
\end{table}

Three $X$ errors can break the circuit if they lead to an undetected error of weight four or greater on the first ancilla.  Consider the case that there are two failures while preparing the first ancilla and one failure while preparing the second ancilla.  The number of different weight-four errors created with second-order probability (i.e., excluding those created with first-order probability) depends on the circuit. For ten random circuits, the smallest count we obtained was $688$ and the largest $735$, with an average of $711$.  Using this average value, we estimate that the probability of a random circuit succeeding against three $X$ errors is roughly $[\binom{1771-711}{11} / \binom{1771}{11}]^2 \approx 1.2 \cdot 10^{-5}$.  (Here the square is because we want the circuit to work against both the case of two failures in the first ancilla, one failure in the second, and vice versa.) Overall, we expect to have to try about $1.2 \cdot 10^5$ random pairs of preparation circuits before we find one that gives fully fault-tolerant $X$-error verification.

The result of $X$-error verification is a single ancilla free of correlated $X$ errors up to weight-three, but possibly containing correlated $Z$ errors.  The $Z$-error propagation can be analyzed in a manner similar to that used for $X$ errors.  A single failure in an $X$-error verified ancilla can produce roughly $60$ $Z$ errors of weight three. Again assuming a uniform distribution, the probability of finding two $X$-error verified ancillas that share no correlated $Z$ errors of weight three is $\binom{1771-60}{60} / \binom{1771}{60} \approx 0.12$.  In total, we expect to try about five $X$-error fault-tolerant pairs in order to find two pairs that are fully fault-tolerant for both $X$-error and $Z$-error verification, as $\binom{5}{2} = 10$.  

To find fault-tolerant verification circuits in this way, one needs to be able to generate sufficiently random preparation circuits.  As the Latin rectangle procedure for finding encoding circuits is fully algorithmic, it can be randomized by starting with a random presentation of the Golay code.  Alternatively, one can begin with a fixed encoding circuit and randomly permute the seven rounds of CNOT gates (all of the CNOTs commute).  
The Golay code is preserved by qubit permutations in a symmetry group known as the Mathieu group $M_{23}$.\footnote{This symmetry is inherited from the classical $23$-bit Golay code.  See, e.g.,~\cite{Huffman1998} pp.\ 1411.  Generators for this group can be obtained at~\cite{Gang1999}.} Therefore another option is to permute encoding circuits based on random elements of $M_{23}$.  By trying roughly $10^5$ random pairs, we found $14$ pairs of ancillas that were fully fault-tolerant against $X$ errors.  Of the $\binom{14}{2}$ combinations, six were also fully fault-tolerant against $Z$ errors.  \tabref{tab:randomizedSchedules} presents one such set.  

\begin{table}
\newcommand{\myfontsize}{\fontsize{10}{11}\selectfont}
\myfontsize
\def\colspace{{$\;\,$}}
\centering
\begin{subtable}[b]{.45\textwidth}
\centering
\begin{tabular}{c|r@{\colspace}r@{\colspace}r@{\colspace}r@{\colspace}r@{\colspace}r@{\colspace}r}
  & 1& 2& 3& 4& 5& 6& 7\\ \hline
 2& 0&22& 7&11& 4& 8&19\\
 3& 9&19& 4& 8& 7& 1& 6\\
10& 5& 1& 0& 6&14& 7& 9\\
12& 1& 0&14& 5&22&11& 4\\
13& 6& 8&22& 9& 0& 4& 5\\
15& 4& 5& 9&14&19&22& 7\\
16&14& 7& 5& 4&11& 6& 8\\
17& 8&11& 6&19& 5& 0& 1\\
18& 7& 9& 1&22& 8& 5&11\\
20&19& 6&11& 7& 1&14&22\\
21&11& 4&19& 0& 6& 9&14
\end{tabular}
\caption{Ancilla 1}
\vspace{.5cm}
\end{subtable}
\hfill
\begin{subtable}[b]{.45\textwidth}
\centering
\begin{tabular}{c|r@{\colspace}r@{\colspace}r@{\colspace}r@{\colspace}r@{\colspace}r@{\colspace}r}
  & 1& 2& 3& 4& 5& 6& 7\\ \hline
 0& 5&16&17&22& 1&15& 9\\
 3&15& 2& 6& 5&17&16&11\\
 7& 1&22& 4&17& 2& 5& 6\\
 8& 6&13&16& 1&15& 4&17\\
10&22&11& 5&13&16& 6& 1\\
12& 9&17&13& 2& 6&22&16\\
14& 4& 6&11&15&13& 2&22\\
18&16& 1&15&11& 9&13& 2\\
19&17& 4& 1& 9&22&11&13\\
20&11&15& 9& 6& 4& 1& 5\\
21& 2& 5&22&16&11& 9& 4
\end{tabular}
\caption{Ancilla 2}
\vspace{.5cm}
\end{subtable}
\begin{subtable}{.45\textwidth}
\centering
\begin{tabular}{c|r@{\colspace}r@{\colspace}r@{\colspace}r@{\colspace}r@{\colspace}r@{\colspace}r}
  & 1& 2& 3& 4& 5& 6& 7\\ \hline
 1&21&16& 7&13&10&15& 0\\
 2&16& 7&12& 0&18&19&13\\
 3&13& 0&15&12&19&10&20\\
 4&12&21&18&20& 7&13&10\\
 5& 6&13&21&10& 0&18&19\\
 8&18&19&13&21&15&20&16\\
 9&19& 6&10&15&20& 7&21\\
11&20&12& 6& 7&13&16&15\\
14& 7&18&20&16&21& 0& 6\\
17& 0&15&19& 6&16&21&12\\
22&10&20&16&19& 6&12&18
\end{tabular}
\caption{Ancilla 3}
\end{subtable}
\hfill
\begin{subtable}{.45\textwidth}
\centering
\begin{tabular}{c|r@{\colspace}r@{\colspace}r@{\colspace}r@{\colspace}r@{\colspace}r@{\colspace}r}
  & 1& 2& 3& 4& 5& 6& 7\\ \hline
 0& 1&16& 3&12&17&13&11\\
 2&22&18&14& 3&20&17& 6\\
 4& 3&20& 6& 1&12&22&13\\
 5& 6&14&16&20& 1&12&17\\
 7&20&22&17&13&16&18& 1\\
 8&16& 6&18&11& 3& 1&20\\
 9&18&12&13&16&14&20& 3\\
10&14&17&20&22&13&11&12\\
15&12&11& 1&17&18& 6&22\\
19&17&13&11&18& 6&16&14\\
21&11& 3&12& 6&22&14&16
\end{tabular}
\caption{Ancilla 4}
\end{subtable}
\caption[Random ancilla preparation schedules for Golay encoded $\ket 0$.]{\label{tab:randomizedSchedules}
Four seven-round ancilla-preparation schedules.  In each table, the entry in row $i$, column $j$ specifies the target qubit of a CNOT gate with control qubit $i$ applied in round~$j$.  Using these schedules in the verification circuit of \figref{fig:FourAncillaVerifyCkt}, the output encoded $\ket 0$ state is fully fault-tolerant against both $X$ and $Z$~errors.  
} 
\end{table}

\subsubsection{Overlap method for preparing encoded \texorpdfstring{$\ket 0$}{0}} \label{sec:Overlap}

Ideally, though, we could use preparation circuits based on the overlap optimization of~\secref{sec:ancilla.stabilizer-states.overlap}. 
The smaller number of correlated errors produced by~\figref{fig:Golay-overlap-57} means that it should be easier to find fault-tolerant circuits.  However, unlike Latin rectangle schedules the overlap-based schedule depends on a fixed code presentation and on a fixed round ordering, since the CNOT gates do not commute.

To obtain randomized overlap method encoding circuits, we use the qubit permutation symmetry of the Golay code and permute the qubits of \figref{fig:Golay-overlap-57} according to a pseudo-random element of the symmetry group~$M_{23}$.  By analyzing the correlated error sets of randomly permuted circuits, we have found many sets of fault-tolerant four-ancilla preparation circuits.  In fact, we have even found sets for which the fault order required for a weight-$k$ error to pass verification is at least $k+1$ (rather than $k$) for all $k \leq 2$. This reduces, for example, the probability of accumulating an uncorrectable error on the data block by first a weight-two error in $Z$-error correction and then another weight-two error in $X$-error correction.  One such set of four permutations is given in \tabref{tbl:overlap-permutations}.  

\begin{table}
\centering
\small
\begin{tabular}{c|c}
\hline\hline
Ancilla & Qubit permutation \tabularnewline
\hline
$\lket{0}_2$ &  (0, 20, 13, 7, 12, 14, 1)(2, 11)(3, 19, 5, 4, 8, 22, 6, 15, 10, 16, 9, 18, 21, 17) 
\tabularnewline
$\lket{0}_3$ &  (0, 14, 6, 12, 16, 2, 11, 22, 17, 21, 9, 20, 5, 7, 3, 13, 18, 4, 15, 1, 10, 8, 19) 
\tabularnewline
$\lket{0}_4$ & (0, 12, 4, 17, 9, 6, 1)(2, 10, 18, 22, 21, 16, 13)(3, 11, 20, 15, 7, 19, 5)(8)(14) 
\\ \hline \hline
\end{tabular}
\caption[Golay code permutations for ancilla verification.]{The first ancilla in \figref{fig:FourAncillaVerifyCkt} is prepared using the circuit of \figref{fig:Golay-overlap-57}.  The other three ancillas are prepared in the same way, except with the qubits rearranged according to the above permutations.  
} \label{tbl:overlap-permutations}
\end{table}

\subsection{Resource overhead}\label{sec:Overhead}
To evaluate the practical importance of our optimizations, we now analyze the resource requirements of Steane-style error correction circuits based on ancillas prepared by~\figref{fig:FourAncillaVerifyCkt}.  We use Monte Carlo simulation to compare overhead of our ancilla preparation and verification circuits for the Golay code to that of standard circuits.

One natural measure for the overhead is the number of CNOT gates used to ready an ancilla.  Another overhead measure, important given the difficulty of scaling quantum computers, is the space complexity, i.e., the number of qubits that must be dedicated to ancilla preparation in a pipeline so that an ancilla is always ready in time for error correction.  We consider both measures.  

\begin{figure}
\centering
\begin{subfigure}[b]{.49\textwidth}
\includegraphics[width=\textwidth]{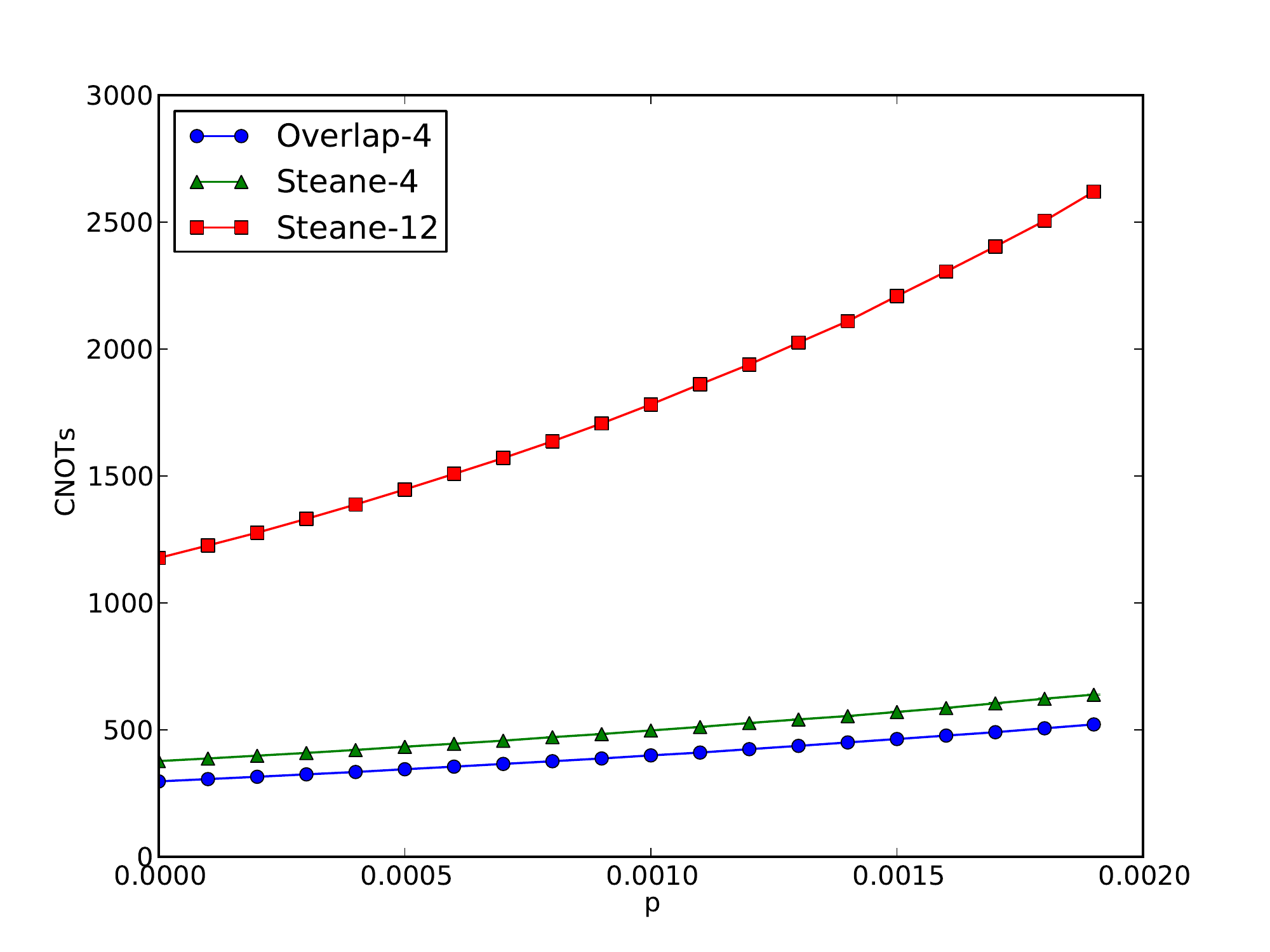}
\caption{\label{fig:sim-cnot-overhead}}
\end{subfigure}
\begin{subfigure}[b]{.49\textwidth}
\includegraphics[width=\textwidth]{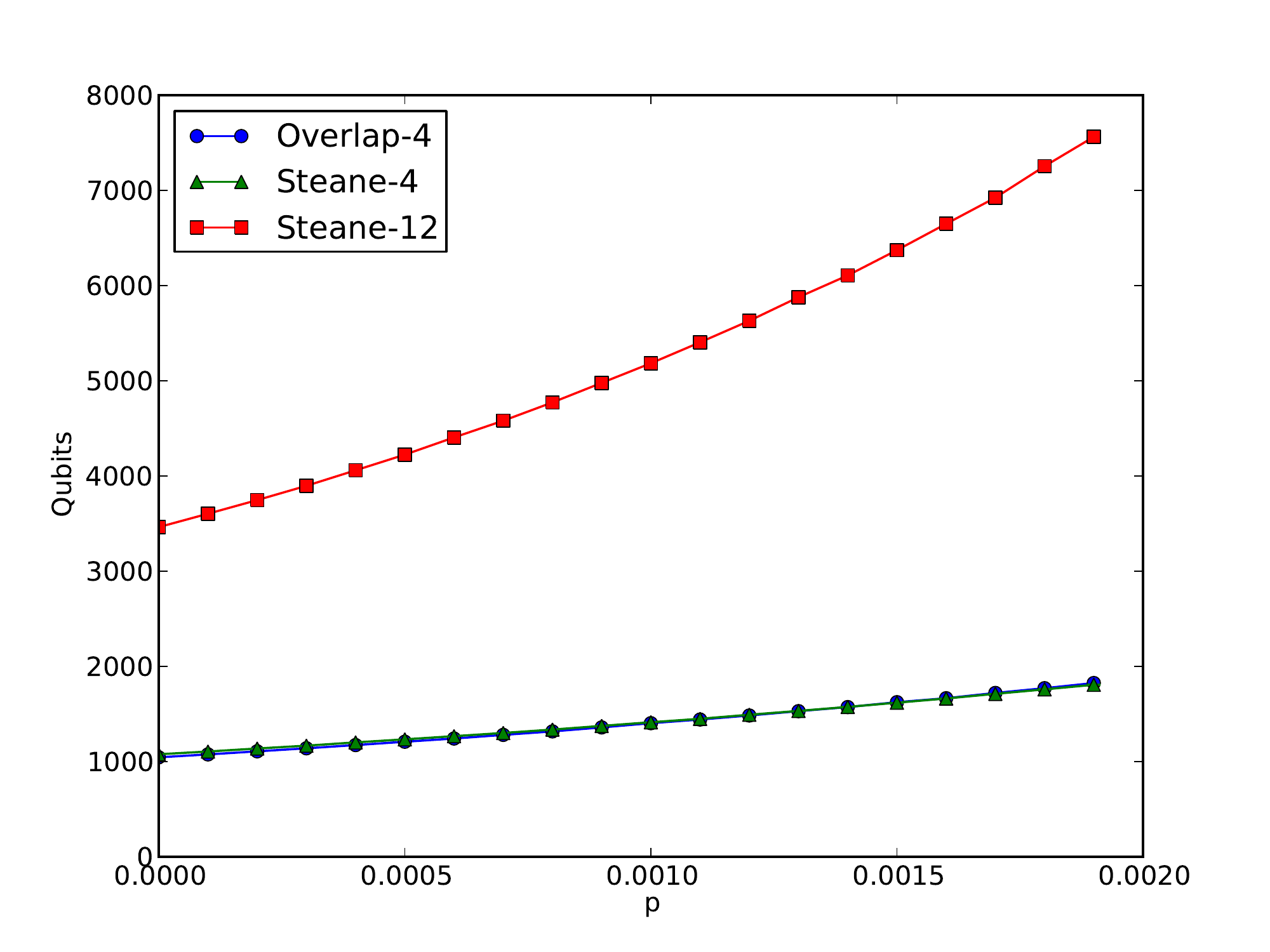}
\caption{\label{fig:sim-qubit-overhead}}
\end{subfigure}
\caption[Overhead estimates for Golay code error correction.]{Overhead estimates for the twelve-ancilla ancilla preparation and verification circuit and for each of our optimized circuits. The Steane-$4$ circuit is based on ancillas prepared according to~\tabref{tab:randomizedSchedules}.  Overlap-$4$ is based on ancillas prepared according to~\figref{fig:Golay-overlap-57} and~\tabref{tbl:overlap-permutations}. (a) Expected number of CNOT gates required to produce a verified encoded $\ket 0$.  (b) Number of qubits required to produce one verified encoded $\ket{0}$, in expectation, at every time step.  
Standard error intervals are too small to be seen here.}
\end{figure}

As listed in the third column of \tabref{tbl:verify-compare}, the overlap-based four-ancilla preparation and verification circuit involves roughly a factor of four fewer CNOT gates than the standard twelve-ancilla circuit.  In fact, this understates the improvement.  The overhead also depends on the acceptance rates of each verification test.  For an ancilla to leave the twelve-ancilla circuit, it must pass eleven tests, compared to only three tests for the four-ancilla circuit.  The probability of passing all tests should be significantly higher for the optimized circuit, and so one expects the ratio between the \emph{expected} numbers of CNOT gates used by the two circuits to be greater than four.  

To estimate the expected overhead, each circuit was modeled and subjected to depolarizing noise in a Monte Carlo computer simulation.  We assumed that test results are available soon enough that a failed verification circuit can be immediately aborted; later test failures are therefore the most costly.  This assumption impacts the twelve-ancilla circuits the most, since there are many ways to construct the hierarchy of verifications.  The circuit shown in~\figref{fig:TwelveAncillaVerifyCkt} is a reasonable choice here because only six of the verification tests depend on results of previous tests.  Other circuits---see, e.g., \cite[Sec.~2.3.2]{Reichardt2006}---may contain as many as nine dependent tests.

Estimates of the expected number of CNOT gates required for each circuit are given in the last column of \tabref{tbl:verify-compare} for the CNOT depolarization rate $p = 10^{-3}$, and are plotted versus~$p$ in \figref{fig:sim-cnot-overhead}.  At $p = 10^{-3}$, the overlap method reduces the expected number of CNOT gates by roughly a factor of~$4.5$, compared to the twelve-ancilla circuit, and the improvement for our optimized Latin rectangle scheme is a factor of~$3.6$.  At lower error rates, the improvement is less.  To investigate the effects of different error parameters, we also considered setting the rest error rate to zero; in this case, the expected number of CNOT gates used in the overlap circuit further decreases by about~$11$ percent, compared to less than four percent for our other four-ancilla circuit and less than two percent for the twelve-ancilla circuit.  The larger improvement for the overlap circuit is due primarily to the fact that the overlap preparation method replaces many CNOT gates with rest locations. 

To evaluate the space overhead, we plot in \figref{fig:sim-qubit-overhead} the number of qubits required to produce a single verified encoded $\ket 0$, in expectation, per time step, for each of the preparation and verification circuits. Thus, for example, the space overhead for a pipeline to produce a single \emph{unverified} ancilla state is $8 \cdot 23 = 184$ qubits; at any given time step, one $23$-qubit block is initialized, and CNOT gates are applied to seven other blocks---one per round in, e.g., \figref{fig:Golay-overlap-57}---so that one ancilla is prepared. (In fact, the overhead is slightly less than this since some of the qubits in the block can be prepared during rounds one and two.)  Estimates are calculated recursively by computing E[qubits] = (E[qubits]$_1$ + E[qubits]$_2$)/Pr[accept] for each verification step where the numerator is the expected number of qubits required to prepare the two states used in that verification step and Pr[accept] is the probability that the verification measurement detects no errors.  The results at $p = 10^{-3}$ are given in the second column of \tabref{tbl:verify-compare}.  Both of our optimized schemes reduce the required space by a factor of~$3.6$ at~$p = 10^{-3}$.

\begin{table}
\centering 
\begin{tabular}{c|cccccccc}
\hline \hline 
Verification & $\Pr$[accept] & E[\# qubits] & min \# CNOTs & E[\# CNOTs] \tabularnewline
\hline
Steane-$12$ & $0.419 \pm 0.001$ & $5183 \pm 14.2$ & $1177$ & $1782 \pm 4.9$ \tabularnewline
Steane-$4$ & $0.648 \pm 0.002$ & $1413 \pm  3.7$ & $377$ & $497.6 \pm 1.3$ \tabularnewline
Overlap-$4$ & $0.633 \pm 0.002$ & $1399 \pm 3.8$ & $297$ & $399.4 \pm 1.1$ \tabularnewline
\hline \hline 
\end{tabular}
\caption[Acceptance probabilities for Golay code ancilla verification.]{\label{tbl:verify-compare}
Estimates of the acceptance probability and overhead for the twelve-ancilla fault-tolerant ancilla preparation circuit and our two optimized circuits, at a depolarizing noise rate of $p = 10^{-3}$. The column labeled Pr[accept] gives the probability that all auxiliary ancilla measurements in the verification circuit detect no errors.  The next column, E[qubits], gives the expected number of physical qubits required to produce one verified encoded $\ket{0}$.  This is calculated recursively, by computing the expected number of qubits needed to pass each verification step.  The last two columns specify, respectively, the minimum number of CNOT gates and the expected number of CNOT gates required to produce a single verified ancilla.  
} 
\end{table}

To judge the significance of these results, recall that the ancilla production pipeline can consume the majority of resources in a fault-tolerant quantum computer.  In the case of~\cite{Isailovic2008a}, physical ancilla production space is proportional to the number of CNOT gates in the pipeline.  A factor of~$4.5$ reduction in the CNOT overhead for ancilla preparation should give, very roughly, about a $50$~percent improvement in the total footprint of the quantum computer.

\def\threshOverlap{1.32 \times 10^{-3}}
\def\threshSteane{1.24 \times 10^{-3}}

\chapter{Improving threshold lower bounds
\label{chap:threshold}
}

This chapter is based on material that appears in~\cite{Paetznick2011}.
\vspace{1cm}

The malignant set counting technique discussed in~\chapref{chap:fault} provides a simple way to calculate lower bounds on the the noise threshold, particularly for low-distance codes. However, it suffers from two limitations.  First, the number of faulty gate sets of size $k$ scales exponentially with $k$. A large fraction of faulty sets may be harmless, but counting all of them is computationally intractable.
Second, the assumed noise model is adversarial and, while more general than the model of independent Pauli channels, is probably overly pessimistic.

The first limitation is particularly troublesome if we wish to prove high thresholds for large codes which can correct many more sets of errors than smaller codes.  Large codes can be more efficient than small codes because they require fewer levels of concatenation in order to achieve the same level of error protection.  Using large codes could, therefore, lead to significant reduction in resource overhead. 

Instead of exhaustively counting all subsets of locations, Aliferis and Cross have used Monte Carlo sampling in order to estimate the fraction of malignant subsets to within prescribed confidence intervals~\cite{Aliferis2007b}.  Despite this improvement, the scaling of the population size is still exponential, and so the ability to count large subsets is limited.  

In this chapter, we show how malignant set counting can be adapted to prove good thresholds for large codes while simultaneously removing the requirement for an adversarial noise model.  The adaptation is based on two main ideas.  First, when errors occur independently, it is possible to partition the error correction circuit into small subcircuits.  Malignant subsets within each subcircuit can be counted separately, and then recombined in an efficient way.  By combining information from each subcircuit, we can effectively count very large sets.

The second main idea involves the way that error rates are calculated for each level of code concatenation.  Standard malignant set counting calculates the probability that \emph{any} uncorrectable error occurs during execution of the encoded gate.  This error rate can then be re-used to calculate similar probabilities at increasing levels of concatenation.  We instead keep track of the probability of \emph{each} type of uncorrectable error that can occur.  This can significantly improve the accuracy of the effective noise model for higher levels of concatenation.  

For example, say that the probability that an encoded gate introduces a logical $Z$ error is $0.01$ and that the probability of a logical $X$ error is the same.  In standard malignant set counting, this would be treated as a total error probability of $0.02$ at the next level of concatenation.  Using our method, error rates are reported separately, potentially saving a factor of two in this example. 

By combining these two ideas and including error-correction optimizations from~\chapref{chap:ancilla}, we can calculate rigorous lower bounds on the noise threshold for relatively large codes.  As a concrete example we calculate an error-rate bound of $0.00132$ per gate for the $23$-qubit Golay code.  This bound is the best known for any code and is an order of magnitude improvement over the best previous lower bound for the Golay code~\cite{Aliferis2007b}, based on an adversarial noise model.

\section{Requirements and assumptions}
\label{sec:threshold.requirements}
Before describing the adapted malignant set counting procedure in detail, it is worthwhile to examine the requirements that will be imposed on the noise model and fault-tolerance scheme.  There are essentially only two requirements: 
\begin{enumerate}
  \item errors must occur independently at each circuit location, and
  \item error-correction and gate gadgets must be strictly fault-tolerant.
\end{enumerate}
Roughly, the strict fault-tolerance requirement means that for a code that corrects up to $t$ errors, the probability that the circuit causes a weight-$k$ error on the data is no more than $O(p^k)$ for all $k\leq t$ and gate error rate $p$.  This requirement was described in~\secref{sec:fault.threshold.rectangles}.
We begin, instead, with the noise model.

\subsection{Noise model}
\label{sec:threshold.requirements.noise}
An important requirement of the modified malignant set counting technique is that errors occur independently at each physical circuit location.  Indeed, one primary motivation for modifying the malignant set counting procedure was to move away from the adversarial noise model in which circuit locations \emph{fail} independently, but the \emph{errors} at the failing locations are correlated.

We study noisy circuits constructed from the following physical operations: $\ket{0}$ and $\ket{+}$ initialization, a CNOT gate, and single-qubit measurement in the $Z$ and $X$ eigenbases.  Every qubit in the computer can be involved in at most one operation per discrete time step.  CNOT gates are allowed between arbitrary qubits, without geometry constraints.  Resting qubits are also subject to noise.

\begin{definition}[Independent Pauli noise with parameter $\gamma$]
\label{def:independent-pauli-noise}

Choose weights $w_{ab} \in [0,1]$ for all $a,b \in \{I,X,Y,Z\}$ such that
\begin{equation}
\label{eq:cnot-weight-restriction}
\sum_{a,b:ab\neq II} w_{ab} = 15
\enspace .
\end{equation}
Additionally, choose weights $w_{\ket 0}, w_{\ket +}, w_{\text{m}X}, w_{\text{m}Z}, w_{\text{r}X}, w_{\text{r}Y}, w_{\text{r}Z} \in [0,1/\gamma]$, such that $(w_{\text{r}X} + w_{\text{r}Y} + w_{\text{r}Z})\gamma \leq 1$.

Then noisy operations are modeled by: 
\begin{enumerate}
\item
A noisy CNOT gate is a perfect CNOT gate followed by, with probability $15 \gamma$, a non-trivial two-qubit Pauli error drawn from $\{I,X,Y,Z \}^{\otimes 2} \setminus \{I \otimes I\}$ according to $\{w_{ab}/15\}$.    
\item
Noisy preparation of a $\ket 0$ state is modeled as ideal preparation of $\ket 0$, followed by application of an $X$ error with probability $w_{\ket 0} \gamma$.  Similarly, noisy preparation of $\ket +$ is modeled as ideal preparation of $\ket +$ with probability $1-w_{\ket +}\gamma$ and of $\ket - = Z \ket +$ with probability $w_{\ket +}\gamma$.
\item
Noisy $Z$-basis ($\ket 0, \ket 1$) measurement is modeled by applying an $X$ error with probability $w_{\text{m}X} \gamma$, followed by ideal $Z$-basis measurement.  Similarly, noisy $X$-basis ($\ket +, \ket -$) measurement is modeled as ideal measurement except preceded by a $Z$ error with probability $w_{\text{m}Z} \gamma$.
\item
A noisy rest operation is modeled as applying either the identity gate, with probability $1-(w_{\text{r}X}+w_{\text{r}Y}+w_{\text{r}Z})\gamma$, or a Pauli error $a \in \{X,Y,Z\}$ with probability $w_{\text{r}a}\gamma$.
\end{enumerate}
All locations fail independently of each other.  
\end{definition}

Informally, this noise model works by modeling each physical location as an ideal operation, possibly followed (or preceded) by an error on the corresponding qubits.  When an error occurs, it is selected from a probability distribution defined by the weights for that location. \defref{def:independent-pauli-noise} defines weights only for CNOT, qubit preparation and measurement in the $X$ and $Z$ bases, and rest locations.  This set of locations is sufficient for the fault-tolerance schemes considered in this chapter.  However, additional locations (e.g., Hadamard) can be added as necessary. The counting procedure and threshold calculations of this section can be extended to accommodate any number of location types.

The condition imposed by \eqnref{eq:cnot-weight-restriction} is for convenience and concreteness, only.  A sum of $15$ was chosen to correspond nicely with a depolarizing noise model in which $w_{ab}=1$ for all $a,b$.

The noise model described by~\defref{def:independent-pauli-noise} is quite flexible and greatly improves our ability to analyze fault-tolerant quantum circuits when compared to an adversarial noise model.  However, it is weaker than adversarial noise and may seem artificial compared to even more general, or more physically realistic noise models described in~\chapref{chap:fault}.  

We justify~\defref{def:independent-pauli-noise} in two ways.  First, as a special case, this noise model describes independent depolarizing noise, which is commonly used in Monte Carlo threshold estimates~\cite{Zalk96,Steane2003,Reic04,Knill2004,Dawson2006,Svore2006b,Cross2009,Lai2013a}.  Therefore, our adapted malignant set counting technique can be used to obtain rigorous threshold lower bounds that can be more fairly compared with Monte Carlo threshold estimates.
Second, although physical noise may be complicated, methods for rigorously replacing realistic physical noise with simpler models do exist.  For example, Magesan et al. have shown how to replace an arbitrary single-qubit channel with a Pauli channel that approximates the original channel as closely as possible without underestimating the error strength~\cite{Magesan2012a}.

During error counting, $X$ and $Z$ errors are usually considered separately and the error probability is computed by omitting the $Z$ or $X$ part of each error, respectively.  For example, when considering only $X$, error $XY$ is equivalent to $XX$, $XZ$ is equivalent to $XI$ and so on.  Thus, the marginal distribution of $X$ errors for a CNOT is:
\begin{equation}\begin{split}
\Pr[IX] &= w_{IX}+w_{IY} + w_{ZX}+w_{ZY},\\
\Pr[XI] &= w_{XI}+w_{YI} + w_{XZ}+w_{YZ},\\
\Pr[XX] &= w_{XX}+w_{XY} + w_{YX}+w_{YY}
\enspace .
\end{split}
\end{equation}
The $Z$ error distribution for CNOT, and the $X$ and $Z$ error distributions for rest locations are calculated similarly.  When preparing $\ket 0$ or measuring in the $Z$ basis, no $Z$ errors are possible, and similarly no $X$ errors are possible when preparing $\ket +$ or measuring in the $X$ basis.

For computer analysis, it is convenient to choose integer-valued weights for each location.  Any noise model that satisfies~\defref{def:independent-pauli-noise} can be approximated to arbitrary precision with integer weights by relaxing~\eqnref{eq:cnot-weight-restriction} and rescaling $\gamma$.

\subsection{Additional assumptions}
\label{sec:threshold.requirements.assumptions}
In order to both reduce the time-complexity of the counting procedure, and to simplify its analysis we will make a few additional assumptions.  First, we assume that the quantum error-correcting code (or codes) in use are CSS codes.  Specifically, when $X$ and $Z$ errors can be corrected independently, as is the case for CSS codes, the number of errors that must be counted is significantly reduced.  This optimization is described in~\secref{sec:threshold.split}.  

The second simplifying assumption is that quantum gates are not geometrically constrained.  That is, multi-qubit gates can act on any set of qubits of appropriate size, and the properties of a quantum gate do not depend on the qubits on which the gate acts or the position of the gate within the circuit.

The unconstrained geometry assumption is common to many threshold calculations, including the AGP method of malignant set counting.  AGP do not require use of CSS codes.  However, nearly all fault-tolerance schemes that have been studied use CSS codes. (Some exceptions include~\cite{DiVi97,Gottesman1998a}.)

Finally, we will assume some level of determinism in the error-correction gadgets.  Specifically, syndrome measurements and corresponding corrections must be deterministic, though offline procedures such as ancilla preparation and verification may still be non-deterministic.  In particular, verification procedures such as those described in~\chapref{chap:ancilla} are allowed.

\section{Splitting up the extended rectangle}
\label{sec:threshold.split}
Perhaps the biggest drawback of malignant set counting for high-distance codes is that obtaining an accurate threshold value requires counting large subsets, but the counting complexity scales poorly with subset size.  The number of subsets of size $k$ in an exRec with $n$ locations scales as $\binom{n}{k}$, which is exponential in $k$.

Monte Carlo simulations of circuits using the $23$-qubit Golay code~\cite{Steane2003,Dawson2006,Cross2009} indicate that the depolarizing noise threshold should be on the order of $p = 10^{-3}$.  Unfortunately, it is not straightforward to prove such a high threshold using malignant set counting.  For example, say that we check for malignancy all location subsets of size up to $\kGood$, and we assume that all larger subsets are malignant.  Then the estimate we obtain for the probability of an incorrect rectangle is at least
\begin{equation}
\sum_{k=\kGood+1}^n \binom{n}{k} p^k (1-p)^{n-k}
\enspace .
\end{equation} 
Using optimized circuits from~\chapref{chap:ancilla}, the size of CNOT exRec for the Golay code is $n = 5439$. For this size and $p = 10^{-3}$, probability of incorrectness drops below $10^{-3}$ only for $\kGood \geq 14$.  However, there are more than $10^{41}$ subsets of size at most $14$, so checking them one at a time is computationally intractable.  
  
Instead of checking each set for malignancy, one can sample random sets of locations in order to estimate the fraction that are malignant.  This technique, called malignant set sampling, can provide threshold estimates with statistical confidence intervals.  However, both malignant set counting and sampling techniques study the threshold for worst-case adversarial noise, and may be overly conservative for a more physically realistic, non-adversarial noise model such as depolarizing noise.  For example, malignant set sampling results from~\cite{Aliferis2007b} estimate a threshold of only $p \approx 10^{-4}$ for the Golay code.  

On the other hand, when a large number of errors occur, it is relatively unlikely that all of the errors occur in the same region.  Rather, we expect errors to be distributed roughly evenly throughout the exRec.
  We therefore choose to divide the exRec into a hierarchy of components and sub-components.  We then compute an upper bound on the probability of each error that a component may produce, by counting location sets up to a certain small size.  At the exRec level, we synthesize the component error bounds into upper bounds on the probability that the rectangle is incorrect.  The resulting error probabilities are treated as an effective transformed noise model for the encoded gate.  With some care, the transformed noise model can be fed recursively back into the procedure to determine an effective noise model for the next level of encoding, and so on. See~\secref{sec:threshold.asymptotic.preserve-noise}. 

Effectively, dividing the exRec into components allows us to account efficiently for even very large location subsets.  Most large sets will be roughly evenly divided between the components, with only a small number of locations in each component.  
The remainder of this section outlines the exRec component structure.

\subsection{Circuit components} 
\label{sec:threshold.split.components}

We will divide the exRec into its encoded operation and its error corrections.  The error corrections will each divide into $X$-error correction and $Z$-error correction, and further recursive divisions will continue until reaching the physical location~level.  

\begin{figure}
\centering
\includegraphics[scale=1]{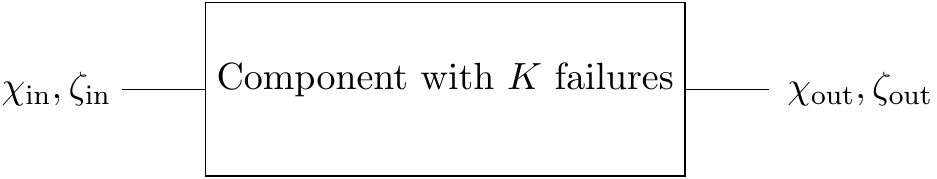}
\caption[A circuit component.]{A circuit component with input error $(\chiIn, \zetaIn)$ and output error $(\chiOut, \zetaOut)$} \label{fig:component}
\end{figure}

\def\xIn{x_\text{in}}
\def\zIn{z_\text{in}}
\def\xOut{x_\text{out}}
\def\zOut{z_\text{out}}

Each component in the hierarchy has input error $(\chiIn,\zetaIn)$, some number of internal failures~$K$, and output error $(\chiOut,\zetaOut)$ which depends on the internal failures and on the input error (see \figref{fig:component}).
Here, the notation $(\chi, \zeta)$ indicates an error equal to the product $\chi\zeta$ where~$\chi$ is a tensor product of $X$ and $I$ operators and $\zeta$ is a tensor product of $Z$ and $I$ operators.
For every error equivalence class on the inputs and outputs and for every~$k \in \N$, we would like to compute
\begin{equation}
  \label{eq:pr-component}
  \Pr\big[(\chiOut, \zetaOut)=(\xOut,\zOut), K=k \,\vert\, (\chiIn, \zetaIn)=(\xIn,\zIn)\big] \enspace ,
\end{equation}
the probability that there are exactly $k$ failures and the output error is $(\xOut, \zOut)$ conditioned on the input error $(\xIn, \zIn)$.    

For components that are physical gate locations the probability in~\eqnref{eq:pr-component} is defined by the appropriate Pauli-channel noise model (\defref{def:independent-pauli-noise}).  Larger components are analyzed by first analyzing each enclosed sub-component.  At the exRec level the LEC, transversal Ga and TEC components provide all of the information necessary to determine the probability that the enclosed rectangle is incorrect.  Indeed, we shall see in \secref{sec:threshold.split.exrec} that they contain enough information to compute the probability for each \emph{way} that the rectangle can be incorrect.  

There are, however, two logistical problems.  First, on each $n$-qubit code block, there $2^{n+1}$ inequivalent Pauli errors in total (assuming a single encoded qubit per block). For a component involving two code blocks, this means we should compute for each~$k$ up to $(2^{n+1})^4$ quantities, one for each combination of input and output errors.  Second, since there are $\binom{n}{k}$ size-$k$ subsets of $n$ locations and since each CNOT gate has $15$ different ways to fail, a computation that accounts for all possibilities scales roughly as $\binom{n}{k} 15^k$.  Such a computation is feasible only for small~$k$ and small~$n$.  

The first problem can be solved by observing that $X$ errors and $Z$ errors can be corrected independently for CSS codes.  Furthermore, error correction can be accomplished without using gates that mix $X$ and $Z$, so $X$ and $Z$ errors mostly propagate independently.  There are cases, such as ancilla verification, in which $X$ and $Z$ errors cannot be treated independently entirely. A specific example of this issue is discussed in~\secref{sec:threshold.golay.cnot.xverify}.
 Still, for most components, the $X$-error part of the output of a component depends only on the $X$-error part of the input and the $X$ failures that occur inside the component.  A similar observation holds for $Z$ errors.  Thus, expression~\eqnref{eq:pr-component} may be split into separate $X$ and $Z$ parts: 
\begin{subequations}\begin{align}
  \label{eq:prX-component}
  \Pr[\chiOut=\xOut, K_X &=k \vert \chiIn=\xIn] \\
  \label{eq:prZ-component}
  \Pr[\zetaOut=\zOut, K_Z &=k \vert \zetaIn=\zIn] \enspace .
\end{align}\end{subequations}
Here, the random variable $K_X$ is the number of failures inside the component that contain an $X$ when decomposed into a tensor product of Pauli operators.  The value $K_Z$ is similarly defined for~$Z$.  When considering $X$ and~$Z$ errors separately, the input and output of a two-block component contain at most roughly $2^{n}$ inequivalent errors, for codes that protect evenly against $X$ and $Z$ errors, and the worst case combination is a large but more manageable $2^{2n}$ cases.

The second problem is eliminated by noting that, for a fixed $k$, the probability of an order-$k$ fault decreases rapidly as the size of the component decreases.  For example, for $p = 10^{-3}$, the probability of an order-ten fault in an exRec of size $5000$ is about $0.018$.  However, the probability that all ten failures are located in a subset of $1000$ locations is less than $10^{-7}$. Thus there is little gain in counting errors of order-ten or higher in components of small size.  

In general, the probability that a component contains a fault of order greater than $\kGood$ can be bounded according to
\begin{equation}
\label{eq:prBad-bound}
  \Pr[K > \kGood] \leq \sum_{k=\kGood+1}^n \binom{n}{k} (1-p)^{n-k} p^k 
  \enspace ,
\end{equation}
where $p$ is an upper bound on the probability of a physical gate failure.
(A tighter bound can be achieved by considering separate $k$ for each location type.  See \cite{Paetznick2011} Appendix A.)  We will choose a value of $\kGood$ for each component and then pessimistically assume that all faults of order greater than $\kGood$ within the component cause the rectangle to be incorrect. For large enough values of~$\kGood$ the overall impact on the threshold is negligible.  There is a tradeoff here between running time and accuracy.  A larger value of $\kGood$ yields a more accurate bound on the probability that the rectangle is incorrect.  A smaller value of $\kGood$ is easier to compute.  We must choose for each component a suitable $\kGood$ that balances the two.  

In the end we are left with two sets of faults for each component, those of order at most~$\kGood$ and those of order greater than~$\kGood$.  Each fault in the first set is counted to obtain accurate estimates of \eqnref{eq:prX-component} and \eqnref{eq:prZ-component}.  When a fault from this set occurs we call it a \emph{good} event.  Faults in the second set are not counted and are instead bounded using \eqnref{eq:prBad-bound} and pessimistically added to the final incorrectness probability bounds for the rectangle.  When a fault from this set occurs we call it a \emph{bad} event.  The probability that the rectangle is incorrect is then upper-bounded by 
\begin{equation}
  \Pr[\incorrect] \leq \Pr[\incorrect, \good] + \Pr[\bad]
  \enspace .
\end{equation}

In general, there are four quantities we need to upper bound for each component:
\begin{subequations}
\label{eq:component-quantities}
\begin{align}
  &\Pr[\chiOut=\xOut, K_X=k, \goodX \vert \chiIn], \\
  &\Pr[\zetaOut=\zOut, K_Z=k, \goodZ \vert \zetaIn],\\
  &\Pr[\badX], \\
  &\Pr[\badZ] \enspace .
\end{align}
\end{subequations}
The event $\goodX \equiv \neg \badX$ occurs when there is a set of $X$-error failures in the component that we choose to count.  It will usually depend only on~$\kGood$ in which case $\goodX \Leftrightarrow (K_X \leq \kGood)$. In some cases $\goodX$ may depend on a vector~$\vec{k}$ representing the number of $X$-error failures across multiple sub-components.  The event $\goodZ \equiv \neg \badZ$ is similarly defined for $Z$.  

Finally, it is assumed that most components operate deterministically.  Non-deterministic components can be accommodated, however.  If, for example, the output errors of a component are dependent on a ``successful'' measurement outcome, then the component must also report the probability of success.  Then, the component output probabilities can be bounded using Bayes's rule
\begin{equation}
\Pr[\text{output}\vert \eventFont{success}] 
  = \frac{\Pr[\text{output},\eventFont{success}]}{\Pr[\eventFont{success}]}
  \leq \frac{\Pr[\text{output}]}{\Pr[\eventFont{success}]}
  \enspace .
\end{equation}

In the remainder of this section we outline the procedure for computing the above quantities for the error-correction and exRec components.  Details of lower level components, such as ancilla preparation and verification, depend on the choice of error-correcting code.

\subsection{The error-correction component} 
\label{sec:threshold.split.ec}

\begin{figure}
\centering
\begin{subfigure}[b]{.45\textwidth}
\includegraphics[width=\textwidth]{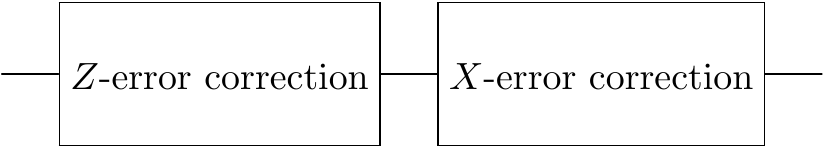}
\vspace{.25cm}
\caption{\label{fig:component.css-ec} CSS error-correction component}
\end{subfigure}
\hfill
\begin{subfigure}[b]{.45\textwidth}
\includegraphics[width=\textwidth]{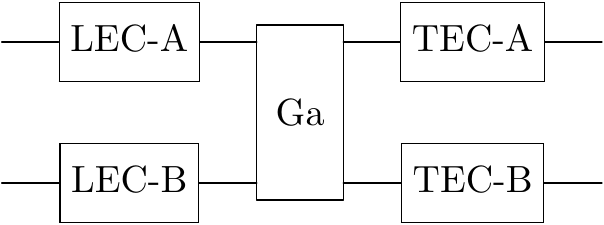}
\caption{\label{fig:component.two-qubit-exrec} Two-qubit exRec component}
\end{subfigure}
\caption[EC and exRec components.]{(a) The error-correction component for a CSS code consists of independent $Z$-error and $X$-error corrections.  Here, we have chosen an arbitrary convention that $X$-error correction follows $Z$-error correction. (b) The (encoded) two-qubit exRec consists of two leading error-correction (LEC) components, a gate gadget (Ga) component and two trailing error-correction (TEC) components.
}
\end{figure}

An error-correction component consists of $Z$-error correction and $X$-error correction, as shown in \figref{fig:component.css-ec}. (Recall that CSS codes admit independent correction of $X$ and $Z$ errors.) After extracting the error syndrome, the lowest-weight correction is computed.  The correction itself can be applied classically, and therefore without error, by a change in the qubit's Pauli frame~\cite{Knill2004}.

There are two types of error correction components: leading error correction (LEC) and trailing error correction (TEC).  For the LEC, we may assume that the input errors $\chiIn$ and $\zetaIn$ are both zero.  This is because we have assumed that syndrome measurement and correction are deterministic. The probability that the rectangle is incorrect depends only on the syndrome of the output of the LEC and that syndrome depends only on the errors inside of the LEC~\cite{Cross2009}.  

To be more precise, consider the two errors $X_1$ and $X_1 X_L$, where $X_1 = X\otimes I^{n-1}$ and $X_L$ is the logical $X$ operator of the code.  These two errors yield the same syndrome, but they are inequivalent since $X_1 X_L$ flips the logical state of the encoded qubit, and $X_1$ does not.  But correctness of the rectangle that follows is independent of the logical state of the input. The rectangle is not accountable for a logical error that occurred prior to its execution.
Accordingly, we may treat $X_1$ and $X_1 X_L$ as equivalent errors in this case.  More generally, we may assume that all of the errors at the output of the LEC are correctable, since the relationship with the logical operator is irrelevant.  This reduces the number of inequivalent errors at the output of each LEC by a factor of two, and therefore reduces the counting complexity by the same amount.

For trailing error correction, we care only about the result of applying a logical decoder to the output.  In other words, we only need to know whether the output errors $\chiOut$ and $\zetaOut$ represent correctable errors or not.  The four relevant quantities are:
\begin{center}
\begin{tabular}{cc}
\underline{LEC} & \underline{TEC}\\
  $\Pr[\chiOut=\xOut, K_X=k, \good \vert \chiIn=0]$,& $\Pr[D(\chiOut)=d, K_X=k, \good \vert \chiIn=\xIn]$, \\
  $\Pr[\zetaOut=\zOut, K_Z=k, \good \vert \zetaIn=0]$,& $\Pr[D(\zetaOut)=d, K_Z=k, \good \vert \zetaIn=\zIn]$, 
\end{tabular}
\end{center}
where $d \in \{0, 1\}$ and $D(e)$ identifies whether $e$ is a correctable error ($0$) or an uncorrectable error ($1$).  That is, $D(e)=1$ if and only if $e$ decodes to a nontrivial Pauli error. The details of $D$ depend on the choice of error-correcting code. 

\subsection{The exRec component} 
\label{sec:threshold.split.exrec}

A two-qubit exRec, shown in \figref{fig:component.two-qubit-exrec}, is divided into five components: two leading error corrections, gate gadget, and two trailing error corrections.  At this level, we are interested in \emph{malignant} events---the events for which the rectangle is incorrect. Furthermore, when a malignant event occurs we would like to know \emph{how} the rectangle is incorrect.  

Let $\ket{\psi_1}$ be the two-qubit state obtained by applying ideal decoders on the two blocks of the Ga immediately following the LECs.  Similarly let $\ket{\psi_2}$ be the state obtained by applying ideal decoders immediately following the TECs. Then define $\malig_{IX}$ as the event that $(I\otimes X) U_{\text{Ga}} \ket{\psi_1} = \ket{\psi_2}$, where $U_{\text{Ga}}$ is the two-qubit unitary corresponding to the ideal Ga gate.  Similarly define the events $\malig_{XI}$, $\malig_{XX}$, $\malig_{IZ}$, $\malig_{ZI}$, $\malig_{ZZ}$.  The event $\malig_E$ can be informally interpreted as the event in which the rectangle introduces a logical error $E$.

The relevant quantities are
\begin{subequations}
\label{eq:exrec-quantities}
\begin{align}
&\Pr[M_X, K_X=k, \good]\text{, and}\\
&\Pr[M_Z, K_Z=k, \good],
\end{align}
\end{subequations} 
for $M_X \in \{ \malig_{IX}, \malig_{XI}, \malig_{XX} \}$ and $M_Z \in \{ \malig_{IZ}, \malig_{ZI}, \malig_{ZZ} \}$. Each of the malignant events can be determined by propagating errors from the output of the LECs and Ga through the TECs.  For example, let $x_1$ and $x_2$ be the $X$ errors on the outputs of the first and second LECs, respectively.  Let $x'_1$ and $x'_2$ be the $X$ result of propagating $x_1$ and $x_2$ to the input of the TECs and combining with $X$ error $x_3$ of the Ga.  Then the probability of the malignant $IX$ even is given by
\begin{equation}
  \Pr[\malig_{IX}\vert x_1,x_2,x_3] =  \Pr[D(\chi_\out)=0\vert \chiIn=x'_1] \cdot
                                       \Pr[D(\chi_\out)=1\vert \chiIn=x'_2] 
  \enspace ,
\end{equation}
where as before, $D(x)$ determines whether $x$ is a correctable error ($0$) or not ($1$).  The quantities on the right-hand side can be readily obtained from the TEC components.  Recall from~\secref{sec:threshold.split.ec} that the errors $x_1$, $x_2$ are assumed to be correctable errors.  Therefore, $D(\chiOut)=0$ corresponds to a logical identity operator and $D(\chiOut)=1$ corresponds to a logical $X$. Probabilities of the other malignant events can be similarly calculated.

When counting $X$ and $Z$ errors separately, it is not possible to compute logical $Y$ error quantities and the analysis will therefore double-count $Y$ errors.  Intuitively this is not a great loss, because the correlations between $X$ and $Z$ are much smaller at this level than they are in the original noise model.  In \secref{sec:threshold.asymptotic} we show how to use~\eqnref{eq:exrec-quantities} to compute a lower bound on the threshold.

\section{Provisions for computer analysis}  
\label{sec:threshold.computer}

The component quantities~\eqnref{eq:component-quantities} are conceptually straightforward and easy to compute numerically for a fixed $\gamma$.  However, we would like to compute {exact} bounds that hold for a range of~$\gamma$.  In this section we discuss a few of the implementation details that allow for maintaining the bounds as polynomials with integer coefficients.  

The ultimate goal is to compute upper bounds on the probabilities of malignant events at the outermost layer of the exRec.  That is, we want to compute Equations~\eqnref{eq:component-quantities} and combine them to get, for example, 
\begin{equation}
\label{eq:malig-event-bound}
  \Pr[\malig_{IX}(\vec \chi) \vert \accept] \leq \Pr[\malig_{IX}(\vec{\chi}), \goodX \vert \accept] + \Pr[\badX \vert \accept] \enspace .
\end{equation}
Here, $\accept$ is the event that any and all non-deterministic sub-components (ancilla verification, for example) accept or succeed.
The right-hand side of this inequality decomposes into sums of individual component quantities of the form
\begin{equation}\begin{split} 
  \Pr[\chi=x,K_X=k]
  & = \sum_{\vec{\abs{k}}=k}
      \Pr[\chi=x, \vec{K}_X=\vec{k}] \enspace ,
\end{split}
\label{eq:ComponentQuantity}
\end{equation}
where $\vec{k} = (k_1, k_2, k_3, k_4)$ expresses the number of failing CNOT, rest, $\ket 0$ preparation and $Z$-basis measurements, respectively.  

For each term in the sum, the number of failures for each type of location is fixed, but the particular locations on which those failures occur are not fixed, nor are the errors that occur at those locations.  Let $L(\vec k) := \{ \vec{l} : (\vec{\abs{l_1}}, \vec{\abs{l_2}}, \vec{\abs{l_3}}, \vec{\abs{l_4}}) = \vec{k} \}$ be the set of all possible tuples of failing locations consistent with $\vec k$. Also, let $E(\vec{l})$ be the set of all possible tuples of $X$ errors consistent with failures at all locations $\vec l$.  To fix the locations and the errors, use
\begin{equation}\begin{split}
  \Pr[\chi=x, \vec{K}_X=\vec{k}]
  & = \sum_{\substack{ \vec{l} \in L(\vec{k}),
                       \vec{e} \in E(\vec{l})
                    }}
      \Pr[\chi=x, \vec{E}=\vec{e}] \\
  & = \sum_{\substack{ \vec{l} \in L(\vec{k}),
                       \vec{e} \in E(\vec{l})
                    }}
      \mathcal{I}(x, \vec{e}) \Pr[\vec{E}=\vec{e}]
\end{split}
\label{eq:LocationSum}
\end{equation}
where in the second line we have made the substitution $\mathcal{I}(x, \vec{e}) = \Pr[\chi=x \vert \vec{E}=\vec{e}]$.

The indicator function $\indicator(x, \vec{e})$ takes value one if the component produces the error $x$ for a given ``configuration'' of errors $\vec{e}$ and value zero otherwise.  The error configuration $\vec{e}$ fully specifies the locations that have failed and the error at each failing location. Let $\vec{n}=(n_1, n_2, n_3, n_4)$ be the total number of CNOT, rest, $\ket{0}$ preparations and $Z$-basis measurements in the component, respectively. Let $W_1=w_{IX}+w_{IY}+w_{XI}+w_{YI}+w_{XX}+w_{YY}$ be the sum of all of the CNOT $X$-error weights, let $W_2=w_{\text{r}X}+w_{\text{r}Y}$, $W_3=w_{\ket 0}$, $W_4=w_{\text{m}X}$ and $W:=\max\{W_1,W_2,W_3,W_4\}$.  For simplicity, assume also that $w_{IX}=w_{IY}=w_{XI}=w_{YI}=w_{XX}=w_{YY}=:w_1$, $w_{\text{r}X}=w_{\text{r}Y}=:w_2$ and let $w_{\ket 0}=:w_3$. Then from the marginal noise model discussed in \secref{sec:threshold.requirements.noise} and a configuration of $X$ errors $\vec{e}$ we have
\begin{equation}
\begin{split}
  \Pr[\vec{E} = \vec{e}] &= \prod_{j=1}^4 (1-W_j\gamma)^{n_j} \left( \frac{w_j\gamma}{1-W_j\gamma} \right)^{k_j} \\
  &\leq A_{\vec{n}} \left( \frac{\gamma}{1-W\gamma} \right)^{k} \prod_{j=1}^4 w_j^{k_j}
   \enspace ,
\end{split}
\label{eq:PrConfig}
\end{equation}
where $A_{\vec{n}} := \prod_{j=1}^4(1-W_j\gamma)^{n_j}$.  This inequality is a reasonable approximation for small $\gamma$.  It allows us to move $\gamma$ into a prefactor in front of the sum of \eqnref{eq:ComponentQuantity} and, assuming integer weights $\{w_j\}$, permits an integer representation in the computer analysis.  Indeed, substituting back into equation \eqnref{eq:ComponentQuantity} gives 
\begin{equation}
  \label{eq:Component-expression-X}
  \Pr[\chi=x,K_X=k]
  \leq  A_{\vec{n}} \left( \frac{\gamma}{1-W\gamma} \right)^{k}
    \sum_{\substack{ \vec{\abs{k}} = k \\
                     \vec{l} \in L(\vec{k}),
                     \vec{e} \in E(\vec{l})
         }}
    \indicator(x, \vec{e})
    \prod_{j=1}^4 w_j^{k_j}
  \enspace .
\end{equation}

Another advantage of counting component probabilities in this way, is that the counts compose nicely.  If we apply \eqnref{eq:ComponentQuantity} to itself and combine with \eqnref{eq:Component-expression-X}, we end up with
\begin{equation}\begin{split}
  \label{eq:prX-recursive}
  \Pr[\chi=x,K_X=k]
  &= \sum_{\substack{ \vec{\abs{k}} = k \\
                     \vec{x} \in \out(x)
          }}
     \prod_i \Pr[\chi_j=x_i, K_{X,i}=k_i] \\
  &\leq A_{\vec{n}}
    \left( \frac{\gamma}{1-W\gamma} \right)^{k}
    \Bigg[
    \sum_{\substack{ \vec{\abs{k}} = k \\
                     \vec{x} \in \out(x)
          }}
    \prod_i
    \sum_{\substack{ \vec{\abs{k_i}} = k_i \\
                     \vec{l} \in L(\vec{k}_i), \vec{e} \in E(\vec{l})
         }}
    \indicator(x_j, \vec{e})        
    \prod_{j=1}^4 w_j^{k_j}
    \Bigg] \enspace .
\end{split}\end{equation}
The substitution made in the first line can be applied successively for each sub-component $i$.  Once the lowest level component is reached, we use \eqnref{eq:Component-expression-X} to push dependence on~$\gamma$ outside of the sum.  The integer value inside of the brackets is the discrete convolution of weighted counts from the sub-components summed over all possible failure partitions $\vec k$ of size $k$.  It is a weighted count of all possible ways to produce error $x$ with an order $k$ fault.  

A similar formula holds for the general case in which each of the weights may be unique (i.e., $w_{IX}\neq w_{IY}\neq w_{XI}\ldots$, etc.).  In general, the product of weights $\prod_{j=1}^4 w_j^{k_j}$ is more complicated and may depend on the error configuration $\vec{e}$.

The primary task of the computer analysis is to compute $\indicator$ for each (good) error configuration, starting with the lowest level component, and to store the resulting weighted sums 
\begin{equation}
  \sum_{\substack{ \vec{\abs{k}} = k \\
                     \vec{l} \in L(\vec{k}), \vec{e} \in E(\vec{l})
         }}
    \indicator(x, \vec{e})
    \prod_{j=1}^4 w_j^{k_j}
\end{equation}
(or equivalent) for use in the counting of larger components.  At each level, counts for the sub-components are convolved to generate new counts.  The prefactor $A_{\vec{n}} \left( \frac{\gamma}{1-W\gamma} \right)^{k}$ need only be computed at the end, when calculating the threshold.

\section{Calculating the pseudo-threshold}
\label{sec:threshold.pseudo}

One quantity that can be immediately calculated from our counts is the so-called pseudo-threshold~\cite{Svore2006a} for the CNOT location. The pseudo-threshold for location $l$ is defined as the solution to the equation $p = p_l^{(1)}$, where $p$ is the probability that the physical (level-$0$) location fails, and $p_l^{(1)}$ is the probability that the $1$-Rec for location $l$ is incorrect. We may compute a lower bound on the pseudo-threshold for CNOT by upper bounding 
\begin{equation}
  p_\text{cnot}^{(1)} \leq \Pr[\bad \vert \accept] + \sum_{k} \big( \Pr[\malig_X, K_X=k, \good] + \Pr[\malig_Z, K_Z=k, \good] \big) \enspace ,
\end{equation}
where $\malig_X \equiv (\malig_{IX} \vee \malig_{XI} \vee \malig_{XX})$, $\malig_Z \equiv (\malig_{IZ} \vee \malig_{ZI} \vee \malig_{ZZ})$.  
  
The pseudo-threshold is of practical interest for cases in which a finite failure probability is acceptable and only a few levels of concatenation are desired.  For example, when the physical failure rate is sufficiently below the pseudo-threshold, a large code code could be used to bootstrap into other codes with lower overhead.

The pseudo-threshold is useful to us for two reasons.  First, pseudo-threshold estimates have been calculated for a variety of fault-tolerant quantum circuits and codes~\cite{Cross2009}, and therefore serve as a reference for our counting results.  Second, it was conjectured by~\cite{Svore2006a} that the pseudo-threshold is an upper bound on the asymptotic threshold.  It thus provides a reasonable target for our calculation of the asymptotic threshold lower bound, which requires a noise strength maximum to be specified.

\section{Calculating the asymptotic threshold}
\label{sec:threshold.asymptotic}
Traditionally, malignant sets are those for which any combination of Pauli errors at the corresponding locations combine to cause the enclosed rectangle to be incorrect.
Our malignant sets are different.  We count subsets of faulty locations, but the counted information is synthesized into error probability upper bounds based on a particular noise model and error correction scheme.

In this section we outline an alternative method for rigorously lower bounding the noise threshold that is tailored specifically to the information obtained by our counting procedure.  The basic idea is to treat each level-one rectangle in the level-two simulation as a single ``location'' with a transformed noise model based on the malignant event upper bounds obtained in~\secref{sec:threshold.split}.  In particular, we show how to treat each level-one exRec independently while maintaining valid upper bounds on the error probabilities.

The asymptotic noise threshold is defined as the largest value $\gammaTH$ such that, for all $\gamma < \gammaTH$, the probability that the fault-tolerant simulation succeeds can be made arbitrarily close to one by using sufficiently many levels of code concatenation.  To prove a lower bound on the threshold we must show, in particular, that the probability of an incorrect CNOT $k$-Rec decreases monotonically with~$k$ for all $\gamma < \gammaTH$.  Our counting technique gives an upper bound on the probability that a CNOT $1$-Rec is incorrect.  We now show how to upper bound incorrectness for level-two and higher and therefore lower bound $\gammaTH$.

\subsection{Preserving independent Pauli noise under level reduction}
\label{sec:threshold.asymptotic.preserve-noise}
Consider an isolated level-one CNOT exRec. Let $\Pr[\malig_E]$ be the probability that the malignant event $\malig_E$ occurs.  For this event, the enclosed $1$-Rec behaves as an encoded CNOT gate followed by a two-block error that, when ideally decoded, leaves a two-qubit error~$E$ on the decoded state. Then our counting technique provides upper bounds on $\Pr[\malig_E]$ for $E \in \{IX, XI, XX, IZ, ZI, ZZ\}$.  These upper bounds can be viewed as an error model for the CNOT $1$-Rec in which the correlations between $X$ and $Z$ errors are unknown.

We would now like to analyze the level-two CNOT exRec.  Ideally, we could treat each $1$-Rec in the level-two simulation as a single ``location'' and use the error model obtained from level-one to describe the probability of failure.  Then level-two analysis could proceed by feeding this ``transformed'' error model back into the counting procedure in order to compute $\Pr[\malig_E]$ for the CNOT $2$-Rec.

However, the transformed error model is based on analysis of an isolated level-one CNOT exRec.  A typical level-one simulation will contain many exRecs, and adjacent exRecs may share error corrections at which point they can no longer be considered independently.  

The reason that level reduction works when counting sets of malignant locations is because exRecs with incorrect rectangles are replaced with faulty gates in the same way regardless of the malignant event that actually occurs. The quantity used to bound incorrectness probability is strictly non-increasing as locations (i.e., TECs) are removed.  To see this, consider sets of exRec locations of size $k$ and denote the set of all such sets by~$S_k$. Let $M \subseteq S_k$ be those sets for which some combination of nontrivial errors at the $k$ locations causes the rectangle to be incorrect (i.e., the malignant sets).  The probability that the rectangle is incorrect due to failures at exactly $k$ locations is then no more than $\abs{M} p^k$.  If an error correction is removed from the exRec, some of the sets in $M$ now contain fewer than $k$ exRec locations. The remaining sets with $k$ exRec locations are those that do not contain a location in the removed error correction.  The number of such sets is at most $\abs{M}$ and so the original bound on the incorrectness probability still holds.  

\def\X{\eventFont{X}}
\def\I{\eventFont{I}}

The disadvantage to this approach for non-adversarial noise models is that it fails to consider all of the available information. In particular, for a fixed set of malignant locations it assumes the worst-case error for each location. The probability that a given set of $k$ locations is actually malignant can be significantly less than $p^k$.  To obtain a more accurate analysis of the second level, we would like to replace each incorrect $1$-Rec according to the malignant event that has actually occurred. 

Our transformed noise model of an isolated CNOT exRec provides upper bounds on the probability of each type of malignant event, but we must show that the bounds still hold when exRecs overlap.  Unfortunately, the bounds almost certainly will \emph{not} hold.  Consider, for example, the control block of the CNOT exRec, shown in \figref{fig:cnot-block}. Assume that the error immediately preceding the transversal CNOT is correctable (the error itself is not important). Let $\X$ be the event that an uncorrectable $X$ error exists on the output of the TEC and $\I$ be the event that the error on the output is correctable.  In other words $\X \equiv (\malig_{XI} \lor \malig_{XX})$ and $\I \equiv \lnot \X$.  Then define $\X' \equiv \lnot \I'$ as the event that an uncorrectable $X$ error exists on the block following the transversal CNOT but before error correction.  $\Pr[\malig_{XI}]$ will be non-increasing when removing the trailing error correction only if $\Pr[\X'] \leq \Pr[\X]$.  On the other hand, $\Pr[\malig_{IX}]$ will be non-increasing only if $\Pr[\I'] \leq \Pr[\I]$.  Since $\Pr[\X]+\Pr[\I]=\Pr[\X']+\Pr[\I']=1$, both conditions are satisfied only if $\Pr[\X]=\Pr[\X']$ and $\Pr[\I]=\Pr[\I']$, which of course is highly unlikely.  

\begin{figure}
\centering
\includegraphics[scale=1]{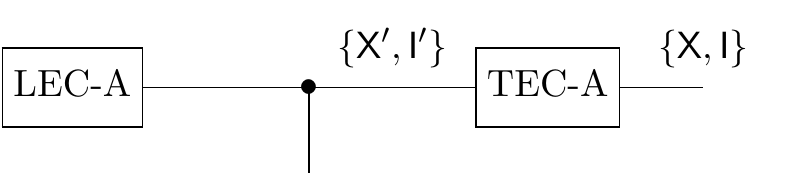}
\caption[Upper block of the CNOT exRec.]{Upper block of the CNOT exRec.  The error at the output of the TEC is either correctable ($\I$), or not ($\X$).  Similarly the error immediately preceding the TEC is either correctable ($\I'$) or not ($\X'$).}
\label{fig:cnot-block}
\end{figure}

In order to ensure a proper upper bound on each of the malignant event probabilities, we must calculate upper bounds for the complete exRec and for incomplete exRecs in which one or more trailing error corrections have been removed.  Calculations for the complete exRec were discussed in \secref{sec:threshold.split.exrec}.  Calculations for the incomplete exRecs are the same except that some of the TEC components are not considered.  Bounding the malignant event probability is a matter of finding a polynomial that bounds all four cases. Details of the bounding polynomial can be found in Appendix D of~\cite{Paetznick2011}.

Once proper bounds on the level-one malignant event probabilities are determined, we would like to plug the transformed error model into our counting procedure in order to determine the level-two error probabilities.  There are a few things to consider before doing so.  First, part of the counting strategy, such as ancilla verification, may rely on using the correlations between $X$ and $Z$ errors in order to avoid over-counting that occurs during postselection (for example, see~\secref{sec:threshold.golay.cnot.xverify}).  The transformed error model, however, contains no such correlation information, so the counting strategy must be altered accordingly.  Second, the CNOT malignant event upper bounds do not contain information about rest, preparation or measurement locations.  Level-one error models for these locations can be computed using the same counting strategy as the CNOT, but with an appropriately modified exRec.\footnote{Alternatively, they can be incorporated into the CNOT exRecs~\cite{Aliferis2007b}.}  

Finally, in the Pauli-channel noise model, the error probabilities of each location are constant multiples of the noise strength~$\gamma$.  Our upper bounds on the malignant event probabilities, however, need not have any scalar relationship. For computer analysis, error probabilities must be re-normalized in terms of $\gamma$ and error weights recalculated as follows. Let $\mathcal{P}^{(1)}_E$ be our upper bound on the level-one malignant event $\malig_E$.  Then construct a polynomial $\Gamma^{(1)}$ and choose constants $\alpha_E$ such that 
\begin{equation}
  \mathcal{P}^{(1)}_E(\gamma) \leq \alpha_E \Gamma^{(1)}(\gamma)
\end{equation}
for all $E$.  The polynomial $\Gamma^{(1)}$ can be viewed as an effective noise strength ``reference'' for level-one.  $\Gamma^{(1)}(\gamma)$ is a function of $\gamma$, but we will usually denote it as $\Gamma^{(1)}$ for convenience of notation. Together with weights $\alpha_E$, $\Gamma^{(1)}$ defines a new independent Pauli channel noise model. Again, see Appendix D of~\cite{Paetznick2011} for details of the construction.  

Now the new error model is input into the counting procedure and upper bounds on the level-two error rates are computed.  Let $\mathcal{P}^{(2)}_E(\Gamma)$ be the upper bound computed for $\malig_E$ at level-two.  Then we have the following conditions on the level-one and level-two malignant event probabilities: 
\begin{equation}\begin{split}
\label{eq:transformed-noise-polys-level-1-2}
  \Pr[\malig_E^{(1)}] &\leq \mathcal{P}^{(1)}_E(\gamma) \leq \alpha_E \Gamma^{(1)} \\
  \Pr[\malig_E^{(2)}] &\leq \mathcal{P}^{(2)}_E(\Gamma^{(1)})
  \enspace .
\end{split}\end{equation}

\subsection{Proving an asymptotic threshold}
\label{sec:threshold.asymptotic.proof}
The transformed noise model provides a means for computing malignant event probabilities at level-two based on the malignant event probabilities of level-one.  In principle, it is possible to repeat that procedure to calculate malignant event probabilities up to any desired level of concatenation.

To prove a noise threshold, we could continue to concatenate until the transformed noise strength is sufficiently low, and then use schemes for which a threshold is known.  For example, Aliferis and Preskill prove a threshold for depolarizing noise of $1.25\times 10^{-3}$ for a scheme based on the $[[4,2,2]]$ error-detecting code~\cite{Aliferis2009}.

In order to take full advantage of noise suppression of the large-distance code, though, we should prefer to prove a threshold directly.  To do so, consider again~\eqnref{eq:transformed-noise-polys-level-1-2}.
We claim that $\mathcal{P}^{(2)}_E$ obeys the following property: 

\begin{claim}
  For $0 \leq \epsilon \leq 1$, $\mathcal{P}^{(2)}_E (\epsilon \Gamma^{(1)}(\gamma)) \leq \epsilon^{t+1} \mathcal{P}^{(2)}_E (\Gamma^{(1)}(\gamma))$, where $t=\lfloor (d-1)/2 \rfloor$ and $d$ is the minimum distance of the (unconcatenated) code.
  \label{clm:P2-multiplicative}
\end{claim}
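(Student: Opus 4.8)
The plan is to establish a single structural fact about the polynomial $\mathcal{P}^{(2)}_E$ and then finish with an elementary scaling inequality. Write $x=\Gamma^{(1)}(\gamma)$, so that $\mathcal{P}^{(2)}_E$ is viewed as a function of the real variable $x\ge 0$. I want to show that, in the form delivered by the component-counting procedure of \secref{sec:threshold.split} run on the transformed level-one noise model, $\mathcal{P}^{(2)}_E(x)=\sum_{k\ge t+1}c_k\,x^{k}$ with all $c_k\ge 0$ and with no term of degree below $t+1$. Granting this, for $0\le\epsilon\le 1$ every power satisfies $\epsilon^{k}\le\epsilon^{t+1}$ because $k\ge t+1$, and since $x\ge 0$ on the range of $\gamma$ of interest every term $c_k x^{k}$ is nonnegative, so $\mathcal{P}^{(2)}_E(\epsilon x)=\sum_{k\ge t+1}c_k\,\epsilon^{k}x^{k}\le\epsilon^{t+1}\sum_{k\ge t+1}c_k\,x^{k}=\epsilon^{t+1}\,\mathcal{P}^{(2)}_E(x)$, which is the statement.

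The minimum-degree part is where the hypotheses enter. The bound $\mathcal{P}^{(2)}_E$ is assembled, exactly as in \secref{sec:threshold.split.exrec}, from \emph{good}-event and \emph{bad}-event contributions. For the good part, the level-two gadgets are obtained by replacing each physical location of a level-one gadget by a level-one rectangle, so strict fault tolerance is inherited at level two by the standard concatenation argument (cf.\ \secref{sec:fault.threshold.rectangles}); hence the enclosed level-two rectangle is correct whenever at most $t$ of its constituent level-one rectangles are faulty, and a good malignant event therefore requires at least $t+1$ faulty level-one ``locations.'' In the transformed model each faulty level-one location (CNOT, rest, preparation, or measurement) has failure probability bounded by a nonnegative constant times $\Gamma^{(1)}$ (for instance $\Pr[\malig_E^{(1)}]\le\alpha_E\Gamma^{(1)}$), so each such event contributes a factor of $x$; thus the good part is a nonnegative combination of powers $x^{k}$, $k\ge t+1$. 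For the bad part, provided $\kGood\ge t$ is chosen for every component of the hierarchy (for the Golay code $t=3$, so this is computationally unproblematic), a bad event in any component requires more than $\kGood\ge t$ faults, hence at least $t+1$; applying \eqnref{eq:prBad-bound} with $p$ replaced by a constant multiple $p'$ of $\Gamma^{(1)}$ and bounding each factor $(1-p')^{n-k}\le 1$ shows the bad part is likewise a nonnegative combination of powers $x^{k}$, $k\ge t+1$. The separate treatment of $X$ and $Z$ errors, and the attendant double-counting of logical $Y$ events, does not interfere: it can only enlarge the coefficients $c_k$, never lower the minimum degree.

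Nonnegativity of the $c_k$ follows from the bookkeeping of \eqnref{eq:Component-expression-X}--\eqnref{eq:prX-recursive}: each coefficient is a discrete convolution of weighted counts, i.e.\ a nonnegative integer (a count of error configurations) times a product of the nonnegative weights $w_j$ and, at level two, of the nonnegative constants $\alpha_E$; and $\Gamma^{(1)}(\gamma)\ge 0$ on the relevant range by its construction (Appendix~D of~\cite{Paetznick2011}). Two clean-up points: first, one should take $\mathcal{P}^{(2)}_E$ in the form where the prefactors $A_{\vec n}$ of \eqnref{eq:Component-expression-X} have been replaced by their upper bound $1$ (still a valid upper bound on $\Pr[\malig_E^{(2)}]$, and this avoids a spurious $A_{\vec n}<1$ factor that would break the scaling inequality); second, a residual factor of the form $\bigl(x/(1-Wx)\bigr)^{k}$ with $Wx<1$ is nonnegative with a nonnegative power-series expansion of minimum degree $k\ge t+1$, and satisfies $\bigl(\epsilon x/(1-W\epsilon x)\bigr)^{k}\le\epsilon^{k}\bigl(x/(1-Wx)\bigr)^{k}\le\epsilon^{t+1}\bigl(x/(1-Wx)\bigr)^{k}$, so it may be handled term-by-term just like $x^{k}$.

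The main obstacle is the minimum-degree verification: one must check it uniformly across the recursive component decomposition and after the renormalization of the noise model into an effective strength $\Gamma^{(1)}$. This is exactly where strict fault tolerance of the level-one gadgets (so that it propagates to level two), the choice $\kGood\ge t$ for every component, and the fact that the transformed per-location error probabilities are genuinely $O(\Gamma^{(1)})$ rather than $O(1)$ are all used; the remaining ingredients—nonnegativity of the coefficients and the inequality $\epsilon^{k}\le\epsilon^{t+1}$—are then routine.
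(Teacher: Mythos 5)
Your route is essentially the paper's: both arguments reduce the claim to (i) strict fault tolerance forces every contribution to $\mathcal{P}^{(2)}_E$ to have order at least $t+1$ in $\Gamma^{(1)}$, (ii) nonnegativity of the coefficients coming from the weighted error counting, and (iii) the elementary inequality $\epsilon^{k}\le\epsilon^{t+1}$ for $k\ge t+1$, $0\le\epsilon\le1$. The paper does not pass through a power-series expansion, however; it keeps $\mathcal{P}^{(2)}_E$ as a sum of rational terms of the form $\frac{\sum_{k} c(k)\Gamma^{k}}{1-\sum_{k}c(k)\Gamma^{k}}$ coming from $\frac{\Pr[\malig_E,\good]}{\Pr[\accept]}+\Pr[\bad\mid\accept]$, and bounds each one by the two-step observation that replacing $\Gamma$ by $\epsilon\Gamma$ shrinks the numerator by at least $\epsilon^{t+1}$ while it can only \emph{increase} the denominator $\Pr[\accept]$ (since the subtracted polynomial has nonnegative coefficients). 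Your clean-up paragraph handles the prefactors $A_{\vec n}$ and the $\bigl(x/(1-Wx)\bigr)^{k}$ factors by exactly this kind of monotonicity argument, but it never mentions the $1/\Pr[\accept]$ denominators that appear in the exRec-level bound \eqnref{eq:malig-event-bound} (and that the paper's proof lists explicitly as one of its three ingredients). That omission is a genuine, if small, gap in your structural claim ``$\mathcal{P}^{(2)}_E(x)=\sum_{k\ge t+1}c_k x^{k}$'': with the acceptance denominators present, $\mathcal{P}^{(2)}_E$ is a rational function, not a polynomial, and one must either expand $1/\Pr[\accept]$ as a nonnegative power series of minimum degree zero and absorb it, or, more simply, apply the same monotonicity observation you already used for $(x/(1-Wx))^{k}$ --- namely $\Pr[\accept](\epsilon\Gamma)\ge\Pr[\accept](\Gamma)$ --- which is exactly what the paper's proof does. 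With that patch your argument is complete and matches the paper's.
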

In other words, the level-two malignant event polynomials decrease with $\gamma$ at a rate that corresponds with the distance of the code.  This is just the kind of behavior that we should expect from a strictly fault-tolerant scheme.
Proof of this claim is based on the form of the polynomials constructed by our counting technique and the fact that our circuits are strictly fault-tolerant.  Details of the proof are delegated to \appref{app:counting.scaling-proof}.

We are now in a position to establish conditions for a noise threshold, i.e., the conditions under which the probability of a successful simulation can be made arbitrarily close to one. 

\begin{theorem}
 Let $M$ be the set of all level-one CNOT, preparation, measurement and rest malignant events consisting of: $\malig_{IX}$, $\malig_{XI}$, $ \malig_{XX}$, $\malig_{IZ}$, $\malig_{ZI}$, $\malig_{ZZ}$, $\malig_{X}^{\text{prep}}$, $\malig_{Z}^{\text{prep}}$,$\malig_{X}^{\text{meas}}$, $\malig_{Z}^{\text{meas}}$, $\malig_{X}^{\text{rest}}$ and $\malig_{Z}^{\text{rest}}$. Also let $\mathcal{P}^{(1)}_E$, $\mathcal{P}^{(2)}_E$ and $\Gamma^{(1)}$ be polynomials and $\alpha_E$ constants as discussed above.
 Then the tolerable noise threshold for depolarizing noise is lower bounded by the largest value $\gammaTH$ such that
 \begin{equation}
   \mathcal{P}^{(2)}_E(\Gamma^{(1)}(\gammaTH)) \leq \alpha_E \Gamma^{(1)}(\gammaTH)
 \end{equation}
 for all $\malig_E \in M$.  
 \label{thm:threshold}   
\end{theorem}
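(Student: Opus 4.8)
The plan is to establish the threshold by showing that the level-by-level recursion defined by the transformed noise model is contractive whenever $\gamma < \gammaTH$. The key structural input is \claimref{clm:P2-multiplicative}: the level-two malignant-event polynomials scale like $\epsilon^{t+1}$ under rescaling of the level-one noise reference $\Gamma^{(1)}$. First I would set up the recursion cleanly. By \eqnref{eq:transformed-noise-polys-level-1-2}, $\Pr[\malig_E^{(1)}] \leq \alpha_E \Gamma^{(1)}(\gamma)$ for every $E \in M$, and the transformed noise model at level one is itself an independent Pauli channel with weights $\alpha_E$ and strength $\Gamma^{(1)}$. Applying exactly the same counting procedure one level up yields $\Pr[\malig_E^{(2)}] \leq \mathcal{P}^{(2)}_E(\Gamma^{(1)})$. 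Crucially, because the code, the circuits, and the counting procedure are unchanged from level to level, the polynomials $\mathcal{P}^{(2)}_E$ serve uniformly at every higher level: if we define $\Gamma^{(j)}$ recursively so that $\Pr[\malig_E^{(j)}] \leq \alpha_E \Gamma^{(j)}$, then $\Gamma^{(j+1)}$ can be chosen with $\mathcal{P}^{(2)}_E(\Gamma^{(j)}) \leq \alpha_E \Gamma^{(j+1)}$, and the normalization in \appref{app:counting.scaling-proof} lets us take $\Gamma^{(j+1)} = \max_E \mathcal{P}^{(2)}_E(\Gamma^{(j)})/\alpha_E$ (or any convenient common upper bound of that form).

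Second I would run the fixed-point argument. Define $\gammaTH$ to be the largest $\gamma$ such that $\mathcal{P}^{(2)}_E(\Gamma^{(1)}(\gamma)) \leq \alpha_E \Gamma^{(1)}(\gamma)$ for all $\malig_E \in M$ — this is exactly the hypothesis of the theorem. Fix $\gamma < \gammaTH$ and set $\Gamma^{(1)} = \Gamma^{(1)}(\gamma)$. The defining inequality says the map $\Gamma \mapsto \max_E \mathcal{P}^{(2)}_E(\Gamma)/\alpha_E$ does not increase $\Gamma^{(1)}$; I would then show it is in fact strictly contracting toward $0$. Write $\Gamma^{(1)} = \epsilon \Gamma^{(1)}_{\star}$ where $\Gamma^{(1)}_{\star} := \Gamma^{(1)}(\gammaTH)$ and $\epsilon = \Gamma^{(1)}(\gamma)/\Gamma^{(1)}(\gammaTH) < 1$ (monotonicity of $\Gamma^{(1)}$ in $\gamma$, which holds by construction of the reference polynomial, gives $\epsilon \in [0,1)$). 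Then \claimref{clm:P2-multiplicative} gives
\begin{equation}
\mathcal{P}^{(2)}_E(\Gamma^{(1)}(\gamma)) = \mathcal{P}^{(2)}_E(\epsilon \Gamma^{(1)}_\star) \leq \epsilon^{t+1}\,\mathcal{P}^{(2)}_E(\Gamma^{(1)}_\star) \leq \epsilon^{t+1}\alpha_E \Gamma^{(1)}_\star = \epsilon^{t}\,\alpha_E\,\Gamma^{(1)}(\gamma),
\end{equation}
so $\Gamma^{(2)} \leq \epsilon^t \Gamma^{(1)}$. Iterating, $\Gamma^{(j+1)} \leq \epsilon^{t\cdot j}\,\Gamma^{(2)} \to 0$ as $j \to \infty$ (using $t \geq 1$ since the Golay code has $d = 7$), and hence $\Pr[\malig_E^{(j)}] \leq \alpha_E \Gamma^{(j)} \to 0$ for every malignant event type. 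Since a $j$-Rec is incorrect only if some malignant event occurs, $\Pr[\text{incorrect } j\text{-Rec}] \to 0$ doubly-exponentially in $j$.

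Third I would close the loop with level reduction to conclude the simulation succeeds with probability approaching one. Here I invoke the level-reduction machinery of \secref{sec:fault.threshold.level-reduction}, together with the observation in \secref{sec:threshold.asymptotic.preserve-noise} that the transformed noise model remains a valid \emph{independent} Pauli channel under level reduction — this is where the bounding-polynomial construction covering complete and incomplete exRecs (Appendix D of~\cite{Paetznick2011}) is needed, so that overlapping exRecs do not invalidate the per-event upper bounds. Given a target circuit of size $N$ and target error $\epsilon$, choosing the number of concatenation levels $j = O(\log\log(N/\epsilon))$ makes the per-location failure probability in the level-$j$ simulation at most $\epsilon/N$, so by a union bound over the $N$ rectangles the whole simulation fails with probability at most $\epsilon$. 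This establishes that $\gammaTH$ as defined is a genuine lower bound on the depolarizing-noise threshold.

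The main obstacle is \claimref{clm:P2-multiplicative} — the statement that $\mathcal{P}^{(2)}_E(\epsilon \Gamma^{(1)}) \leq \epsilon^{t+1}\mathcal{P}^{(2)}_E(\Gamma^{(1)})$ — and the compatible normalization of $\Gamma^{(1)}$. Everything downstream is a routine contraction/union-bound argument, but this scaling property is exactly what encodes ``strict fault tolerance buys a factor $(t+1)$ in the exponent of noise suppression,'' and it has to be extracted from the precise form of the counting polynomials: each term of $\mathcal{P}^{(2)}_E$ is a weighted count of fault configurations of some order $k$, multiplied by $(\Gamma^{(1)})^k$ with $k \geq t+1$ for any configuration that can actually cause a malignant event (this is where strict fault tolerance of the gadgets enters). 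Lower-order terms must be shown to be absent or harmless, and the ``bad'' tail contributions bounded by \eqnref{eq:prBad-bound} must also scale with a high enough power of $\Gamma^{(1)}$. I would delegate the detailed verification to \appref{app:counting.scaling-proof} as the excerpt indicates, and in the main proof simply cite the claim and carry out the contraction argument above.
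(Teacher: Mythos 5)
Your proof follows essentially the same route as the paper's: invoke \claimref{clm:P2-multiplicative} to show that the recursion on the transformed noise reference $\Gamma^{(j)}$ is contractive below $\gammaTH$, then iterate to drive $\Pr[\malig_E^{(j)}]$ to zero, with level reduction closing the loop. One nice refinement on your part: by writing $\Gamma^{(1)}(\gamma) = \epsilon\,\Gamma^{(1)}(\gammaTH)$ with $\epsilon<1$ (from monotonicity of $\Gamma^{(1)}$) and invoking \claimref{clm:P2-multiplicative} at the base point $\Gamma^{(1)}(\gammaTH)$, you \emph{derive} the strict inequality $\mathcal{P}^{(2)}_E(\Gamma^{(1)}(\gamma)) \leq \epsilon^{t}\alpha_E\Gamma^{(1)}(\gamma) < \alpha_E\Gamma^{(1)}(\gamma)$ for all $\gamma<\gammaTH$, whereas the paper's proof opens by assuming this strict inequality on $(0,\gammaTH)$ and defers its verification to a separate monotonicity check on the polynomials. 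Two small caveats: this derivation needs $\Gamma^{(1)}$ to be strictly increasing, which the paper only endorses informally and checks computationally; and your asserted iterate $\Gamma^{(j+1)} \leq \epsilon^{tj}\Gamma^{(2)}$ fails at $j=1$ (it would require $\epsilon^{t}\geq 1$) — the recursion actually produces the doubly-exponential bound $\Gamma^{(j)} \leq \epsilon^{(t+1)^{j-1}}\Gamma^{(1)}(\gammaTH)$, which is faster than what you state and faster than the paper's explicit recurrence \eqnref{eq:PrE-recurrence}, but the conclusion $\Gamma^{(j)}\to 0$ holds in any case.
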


\begin{proof}
Assume that $\mathcal{P}^{(2)}_E(\Gamma^{(1)}) < \alpha_E \Gamma^{(1)}$, for all $\malig_E$ and $\gamma \in (0,\gammaTH)$.  Then, for a fixed $\gamma \in [0,\gammaTH)$, there exists some positive $\epsilon < 1$ such that, for all malignant events $\malig_E$, $\mathcal{P}^{(2)}_E(\Gamma^{(1)}) \leq \epsilon \alpha_E \Gamma^{(1)}$.  

By choosing $\Gamma^{(2)} := \epsilon \Gamma^{(1)}$ we obtain an effective noise model for level two in which the weights~$\alpha_E$ are unchanged.  Since our counting method depends only on the error weights, the polynomials that upper bound the level-three malignant events will be the same as the polynomials that upper bound the level-two malignant events. That is, $\mathcal{P}^{(k)}_E(\Gamma) = \mathcal{P}^{(2)}_E(\Gamma)$ for $k \geq 2$.
Thus,
\begin{align}
  \Pr[\malig^{(3)}_E] \leq \mathcal{P}^{(3)}_E(\Gamma^{(2)}) 
  = \mathcal{P}^{(2)}_E(\epsilon \Gamma^{(1)})
  \leq \epsilon^{t+2} \alpha_E \Gamma^{(1)}
  \enspace ,
\end{align}
where the last inequality follows from \claimref{clm:P2-multiplicative}.
 Defining $\Gamma^{(3)} := \epsilon^{t+1} \Gamma^{(2)}$ and repeating this process $k$ times yields
\begin{equation}
\label{eq:PrE-recurrence}
  \Pr[\malig_E^{(k+1)}] 
  \leq \mathcal{P}^{(k+1)}_E(\Gamma^{(k)})
  \leq \epsilon^{(k-1)(t+1)+1} \alpha_E \Gamma^{(1)}
  \enspace ,
\end{equation}
which approaches zero in the limit of large $k$.  
\end{proof}

Testing of the assumption $\mathcal{P}^{(2)}_E(\Gamma^{(1)}) < \alpha_E \Gamma^{(1)}$ over a fixed interval $(0,\gammaTH)$ is straightforward if all of the malignant event polynomials (including $\Gamma^{(1)}$) are monotone non-decreasing up to sufficiently large values of $\gamma$.  Monotonicity is highly plausible for values of $\gamma$ surrounding or below threshold, but must be checked explicitly based on the weighted counts obtained from malignant set counting. Appendix C of~\cite{Paetznick2011} provides an explicit procedure for checking monotonicity. 

\section{Summary of the modified malignant set counting procedure}
\label{sec:threshold.summary}

The entire malignant set counting procedure is somewhat lengthy.  For convenience, we now summarize each of the steps.

\begin{enumerate}
  \item Choose a CSS code, error correction scheme, and an independent Pauli noise model.  Construct the corresponding extended rectangle that satisfies Definitions~\ref{def:strict-fault-tolerance-gate} and \ref{def:strict-fault-tolerance-ec}, for each encoded gate type.
  \item Partition each exRec into a hierarchy of small components.
  \item For each lowest-level component choose a small integer $k_\good$, count all of the errors that occur with up to $k_\good$ faulty locations, according to the weights of the selected noise model.  Also compute $\Pr[\bad]$, the probability that more than $k_\good$ locations are faulty. If necessary, compute $\Pr[\accept]$ that the component is accepted.
  \item For higher level components, again choose a $k_\good$, and count errors by convolving results from lower level components up to $k_\good$.  Calculate $\Pr[\bad]$ and $\Pr[\accept]$ as necessary.
  \item For each exRec, compute $\Pr[E]$ the probability of the logical error $E$ for each $X$ and $Z$ error. Construct the corresponding transformed Pauli noise model.
  \item  Either repeat the procedure (if parts of the exRec are non-deterministic), or bound the threshold analytically using~\thmref{thm:threshold}.
\end{enumerate}

\section{Example: a depolarizing noise threshold for the Golay code}
\label{sec:threshold.golay}

In order to quantify the efficacy of our adapted malignant set counting technique, we use it to calculate the depolarizing threshold of the $23$-qubit Golay code.  The Golay code is ideal for this task for a variety of reasons.  First, with distance seven, it is substantially larger than typically studied codes which usually have distance three. Still, it is small enough so that the number of possible errors on a single block is quite manageable. Second, numerical estimates place the Golay code as one of the top performers, with depolarizing threshold estimates on the order of $10^{-3}$~\cite{Steane2003,Dawson2006,Cross2009}.  On the other hand, malignant set sampling has yielded statistical lower bounds for adversarial noise of just $10^{-4}$, leaving ample room for improvement.

In this section, we prove a depolarizing noise threshold lower bound of $1.32\times 10^{-3}$ for the Golay code, which essentially matches numerical estimates and is the highest known rigorous lower bound for any code.  Furthermore, we show that the resource overhead for our scheme is usually substantially lower than the $[[4,2,2]]$ Fibonacci scheme for which the next best threshold lower bound is known~\cite{Aliferis2009}. 

Threshold results were obtained by implementing our counting technique as a collection of modules written in Python and C; the source code is available at~\cite{Paetznick}.  We calculated thresholds for error correction circuits based on the four-ancilla protocols described in~\secref{sec:ancilla.verify.golay}. Results are given in~\tabref{tab:threshold-results}.  The main program takes as input the four-ancilla preparation circuits, the noise model, and the good and bad event settings.  It outputs, for each type of exRec and each malignant event, a polynomial representing an upper bound on the event probability.  See \figref{fig:malig-events}.  These polynomials are either evaluated directly to calculate the pseudo-threshold, or processed into a transformed error model and fed back into the program.

The most time-consuming part of the computation involved the CNOT exRec component.  Computing weighted counts for this component required a custom convolution with nearly four trillion combinations.  Running the entire program to completion for a fixed ancilla preparation and verification schedule on $31$ cores in parallel took about four days.

\subsection{The depolarizing noise model}
\label{sec:threshold.golay.noise}
The depolarizing noise model is particularly easy to define in terms of the weights prescribed by~\defref{def:independent-pauli-noise}.  For the CNOT gate, choose $w_{ab}=1$ for all $a,b \in \{I,X,Y,Z\}$.  The rest location weights are chosen based on the one-qubit marginals of the CNOT.  Use $w_{\text{r}a} = \sum_{b \in \{I,X,Y,Z\}} w_{ab} = 4$ for $a \in \{X,Y,Z\}$.  For preparation and measurement locations use $w_{\ket 0}=w_{\ket +}=w_{\text{m}X}=w_{\text{m}Z}=4$.  The preparation and measurement weights are lower than the one-qubit marginals (which would imply values of eight) because any higher noise rate could be reduced to $4\gamma + O(\gamma^2)$ by repeating the preparation or measurement using two qubits coupled by a CNOT.

\subsection{The CNOT exRec}
\label{sec:threshold.golay.cnot}

The threshold calculation is most limited by the exRec with the largest number of locations.  The Golay code admits transversal implementations of encoded Clifford group unitaries.  Universality can be achieved by state distillation. Therefore the largest exRec in our case is for the encoded CNOT gate, an exRec that consists of four Steane-type error corrections plus $23$ CNOT gates (see \figref{fig:cnot-exrec-components}).  \tabref{tab:MarginalCounts} gives a breakdown of the number of locations for our preparation circuits, and the total number of locations in the CNOT exRec.

\begin{figure}
\centering
\hspace{1cm}
\begin{subfigure}[b]{6.5cm}
\includegraphics[width=6.5cm]{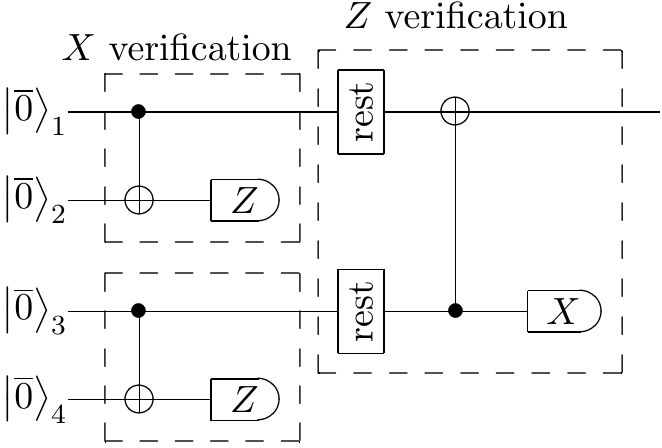}
\caption{\label{fig:ancilla-verification}}
\end{subfigure}
\begin{subfigure}[b]{14cm}
\includegraphics[width=13.5cm]{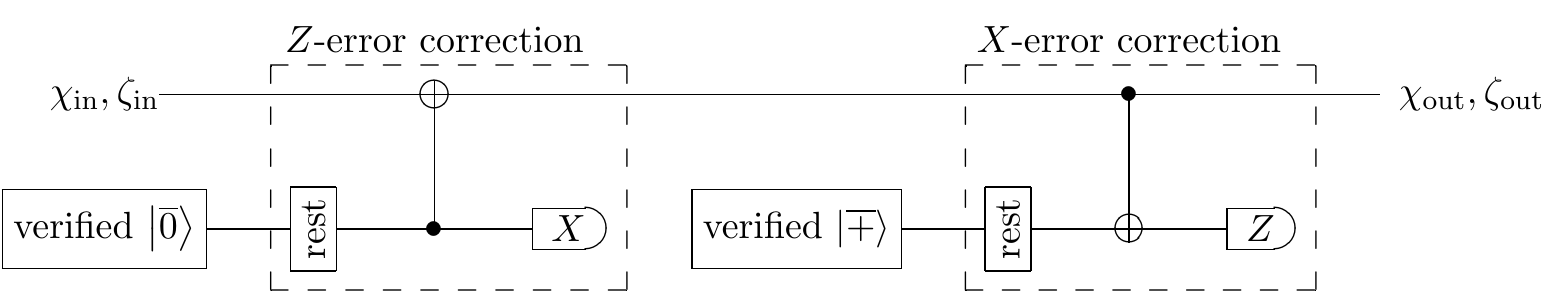}
\caption{\label{fig:ECclean}}
\end{subfigure}
\caption[CNOT exRec components.]{\label{fig:cnot-exrec-components}
Organization of a CNOT exRec, for the Golay code. The CNOT exRec includes four error corrections and a transversal CNOT gate as illustrated in~\figref{fig:component.two-qubit-exrec}.  (a) Each error-correction component consists of separate $Z$ and $X$ error corrections.  $Z$-error correction requires an encoded $\ket 0$ state ($\lket{0}$) that has been verified against errors, and $X$-error correction requires a verified $\lket{+}$ ancilla state.  (b) A verified $\lket 0$ state is prepared by checking two pairs of prepared $\lket 0$ states against each other for $X$ errors, then, conditioned on no $X$ errors being detected, checking the results against each other for $Z$ errors.  Verified $\lket +$ is prepared by taking the dual of the $\lket 0$ circuit. See~\chapref{chap:ancilla}.  
} 
\end{figure}

\begin{table}
\centering
\begin{tabular}{c|cccc|c|c}
\hline \hline
$\lket 0$ preparation & \multicolumn{4}{c|}{Location type} &  &CNOT exRec \\ 
circuit & CNOT & Prep. & Meas. & Rest & Total & total \\
\hline
Steane & 77 & 23 & 0 & 6 & 106 & 5439 \\
Overlap & 57 & 23 & 0 & 38 & 118 & 5823 \\
\hline \hline
\end{tabular}
\caption[Location counts for preparing $\ket0$ in the Golay code.]{Location counts for preparing encoded $\ket 0$ in the Golay code. Encoded $\ket{0}$ ancillas are prepared with either the pseudorandomly constructed Steane preparation circuits (\tabref{tab:randomizedSchedules}), or the overlap preparation circuits (\figref{fig:Golay-overlap-57} and \tabref{tbl:overlap-permutations}).  The last column shows the total number of locations inside the CNOT exRec shown in~\figref{fig:cnot-exrec-components}, including the transversal CNOT operation and four error corrections.}
\label{tab:MarginalCounts}
\end{table}

\begin{figure}
\centering
\begin{subfigure}[t]{.49\textwidth}
\includegraphics[width=\textwidth]{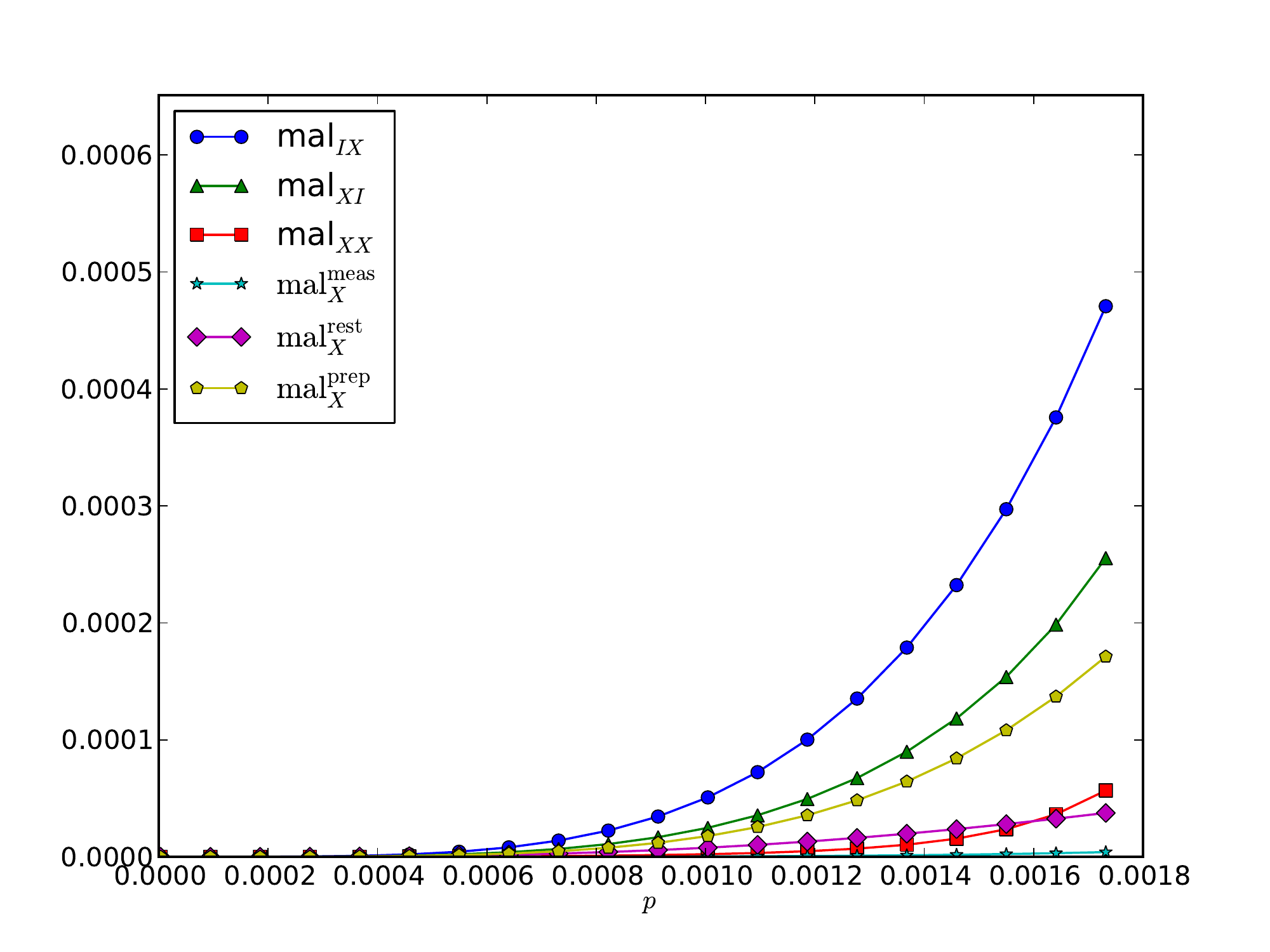}
\caption{\label{fig:malig-events-x}
$X$-error malignant events}
\end{subfigure}
\begin{subfigure}[t]{.49\textwidth}
\includegraphics[width=\textwidth]{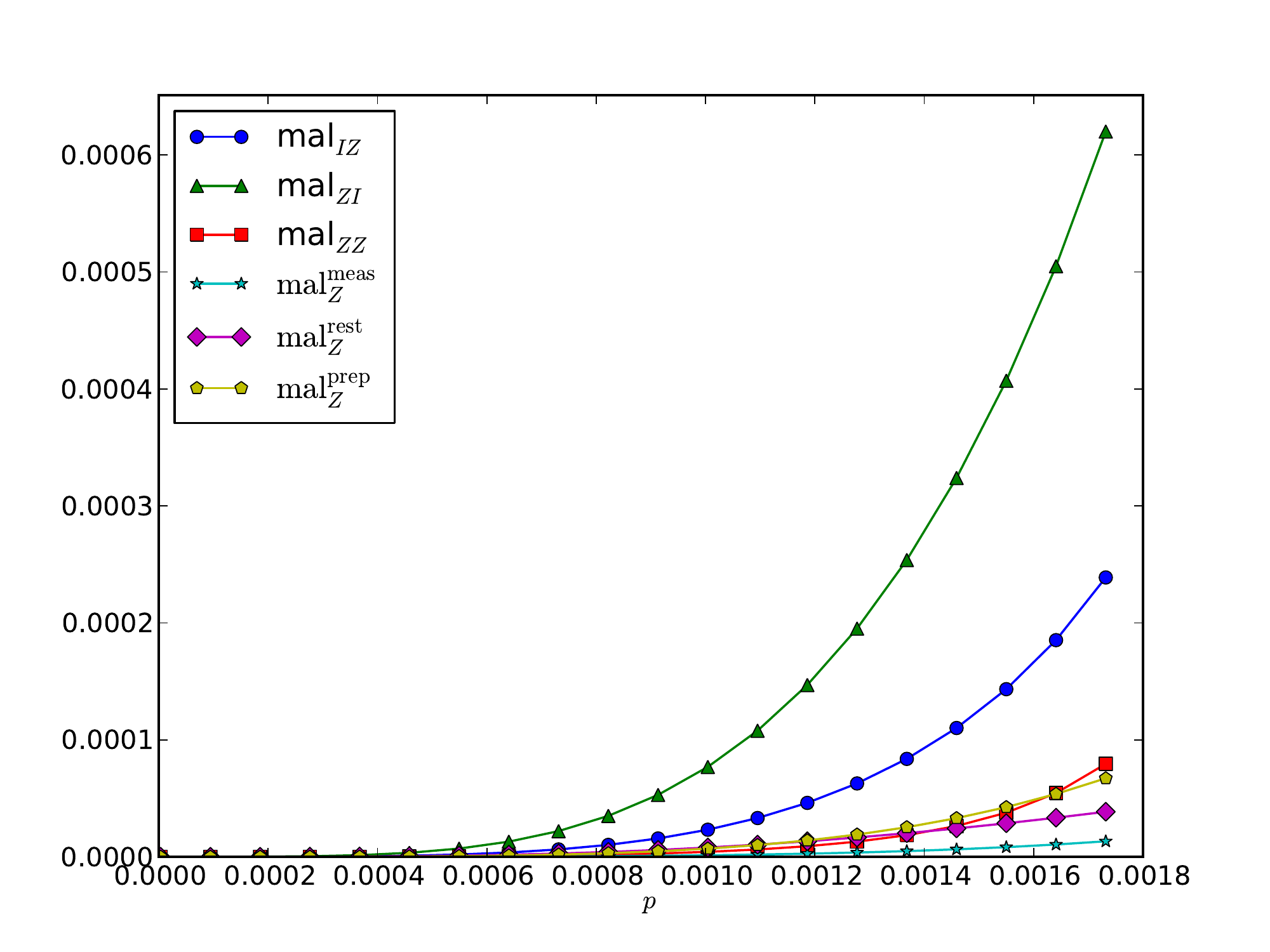}
\caption{\label{fig:malig-events-z}
$Z$-error malignant events}
\end{subfigure}
\caption[Malignant event probabilities for Golay code CNOT.]{These plots show upper bounds on probability of malignant events for the different level-one exRecs.  The $\malig_{IX}$, $\malig_{XI}$, $\malig_{XX}$, $\malig_{IZ}$, $\malig_{ZI}$ and $\malig_{ZZ}$ events all pertain to the CNOT exRec; the $\malig^{\text{prep}}_X$ and $\malig^{\text{prep}}_Z$ events correspond to the $\ket 0$ and $\ket +$ preparation exRecs, respectively; $\malig^{\text{meas}}_X$ and $\malig^{\text{meas}}_Z$ correspond to $Z$-basis and $X$-basis measurement exRecs; $\malig^{\text{rest}}_X$ and $\malig^{\text{rest}}_Z$ pertain to the rest exRecs.  Note that the upper bound on $\malig_{ZI}$ is significantly higher than that of its dual counterpart $\malig_{IX}$.  This is due largely to the arbitrary choice in error correction to correct $Z$ errors first and $X$ errors second.}
\label{fig:malig-events}
\end{figure}

\begin{table}
\centering
\begin{tabular}{c|c|c}
\hline\hline
Verification schedule & CNOT Pseudothreshold & Threshold \tabularnewline
\hline
Steane-$4$  & $1.72 \times 10^{-3}$ & $\threshSteane$ \tabularnewline
Overlap-$4$ & $1.73 \times 10^{-3}$ & $\threshOverlap$ \\
\hline \hline
\end{tabular}
\caption[Depolarizing noise threshold lower bounds for the Golay code.]{Threshold lower bounds for circuits based on our four-ancilla preparation and verification schedules for the Golay code, based on~\figref{fig:FourAncillaVerifyCkt}. Thresholds are given with respect to $p$ the probability that a physical CNOT gate fails, according to the depolarizing noise model defined in \secref{sec:threshold.golay.noise}.}
\label{tab:threshold-results}
\end{table}

\subsubsection{\texorpdfstring{$X$}{X}-error verification} 
\label{sec:threshold.golay.cnot.xverify}

$X$-error verification requires two encoded $\ket 0$ states. The first is verified against the second for $X$ errors by applying transversal CNOT gates between the two code blocks and then measuring each qubit of the second block in the $Z$ eigenbasis ($\ket 0, \ket 1$ basis).  Conditioned on no $X$ errors being detected, the first code block is accepted.  See \figref{fig:ancilla-verification}.  

Letting $\acceptX$ denote the event that no $X$ errors are detected, we use Bayes's rule 
\begin{equation}
  \Pr[\text{event} \vert \acceptX]=\frac{\Pr[\text{event}, \acceptX]}{\Pr[\acceptX]}
\end{equation} 
to compute the conditional probabilities of different error events.  For an event $\chi$ involving only~$X$ errors, this calculation is straightforward.  

However, if the event is a $Z$ error $\zeta$, then the numerator $\Pr[\zeta=z, \acceptX]$ is difficult to compute as it mixes $X$ and $Z$ errors.  The obvious bound, $\Pr[\zeta=z, \acceptX] \leq \Pr[\zeta=z]$, is quite pessimistic because in the depolarizing noise model we expect $X$ errors to occur with $Z$ errors roughly half of the time, and so $X$-error verification should remove many $Z$ errors.  It is important to obtain an accurate count of $Z$ errors since they strongly influence the acceptance rate of the upcoming $Z$-error verification.  Therefore, we also count $X$ and $Z$ errors \emph{together} for very low-order faults and apply a correction to the $Z$-only counts.  

Specifically, when counting $X$ and $Z$ errors together, we keep track of the errors that are \emph{rejected} rather than those that are accepted.  Since the $Z$-only counts contain all errors, we may subtract off the rejected error counts while maintaining proper counts for the accepted errors.
Details of are worked out in~\cite{Paetznick2011}.

The improvement obtained by counting $X$ and $Z$ errors simultaneously is twofold.  First, the reduction in $Z$ errors directly reduces the probability of a $Z$-error malignant event.  Indeed, we find that the correction cuts the number of $Z$ errors roughly in half, as expected.  More importantly, though, a smaller number of $Z$ errors means an increased acceptance probability during the upcoming $Z$-error verification.
We see from~\figref{fig:pr-accept-lower-bounds} that the lower bound on $Z$-error verification acceptance at $p=10^{-3}$ is about $0.84$.  We crudely estimate a lower bound \emph{without} the correction of about $0.63$, a decrease by a factor of~$1.3$.  There are four $Z$-error verifications of encoded $\ket 0$ in the (full) exRec and four similar $X$-error verifications of encoded $\ket +$.  Thus, in the normalization factor alone, the correction reduces upper bounds on the malignant event probabilities by roughly a factor of $1.3^8 \approx 8$.  The savings is less, of course, as $p$ decreases.

\subsubsection{exRec}
\label{sec:threshold.golay.cnot.exrec}

Counting of the exRec component was discussed in~\secref{sec:threshold.split.exrec}.  However, there are a few items of note for our example based on the Golay code.
First, the ancilla verification components are non-deterministic.  Accordingly, all of the malignant event probabilities must be conditioned on acceptance of all of the verification stages.  Since the counts reported by the ancilla verification stages assume successful verification already, calculating the conditional probability is simply a matter of dividing by the product of all of the acceptance probabilities.

Second, we seek to combine large subsets of the sub-component counts.  However, due to the block-size of the Golay code and size of the sub-components in the CNOT exRec, taking all possible convolutions of the sub-component error counts is impractical.  Instead,
the $\badX$ event for the exRec (and analogously the $\badZ$ event) occurs when any of the following are true: 
\begin{itemize}
  \item any of the sub-components are $\badX$,  
  \item there are more than $25$ $X$ failures in the exRec,
  \item there is more than one $X$ failure in the transversal CNOT \emph{and} there are more than than three $X$ failures in each of the two leading ECs.  
\end{itemize}

The last condition eliminates faults that are particularly difficult to count. The time required to count an exRec fault is proportional to the product of the number of unique syndromes that can result at the output of the two leading ECs and the transversal CNOT.  The number of unique syndromes that can result from the transversal CNOT with two $X$ failures is $\binom{23}{2} 3^2 = 2277$, while the number of unique syndromes with one $X$ failure is $23 \cdot 3 = 69$.  The numbers of unique syndromes at the output of the leading ECs are $24$, $277$ and $2048$ for one, two, and three $X$ failures respectively.  So, for example, the event $K_{X,1}=2, K_{X,2}=3, K_{X,3}=1$ ($277 \cdot 2048 \cdot 69 \approx 4 \cdot 10^7$) requires far less time than the event $K_{X,1}=2, K_{X,2}=3, K_{X,3}=2$ ($277 \cdot 2048 \cdot 2277 \approx 1 \cdot 10^{9}$).  In particular, we would like to avoid counting faults for which $K_{X,3} = 2$.  

Calculations for each of the $\badX$ terms are plotted in \figref{fig:prBad-cnot-upper-bounds}.
Label each of the exRec sub-components with numbers, starting with the LECs ($1,2$), then the CNOT ($3$), and then the TECs ($4,5$).  
The overall probability is generally dominated by either the transversal CNOT ($\Pr[\bad_X^{(3)}]$) or the condition involving the transversal CNOT and the two LECs ($\Pr[K_{X,3}>1] \prod_{j=1}^2 \Pr[K_{X,j} > 3 \vert \accept^{(j)}]$).  

\begin{figure}
\begin{subfigure}[b]{.495\textwidth}
\includegraphics[width=\textwidth]{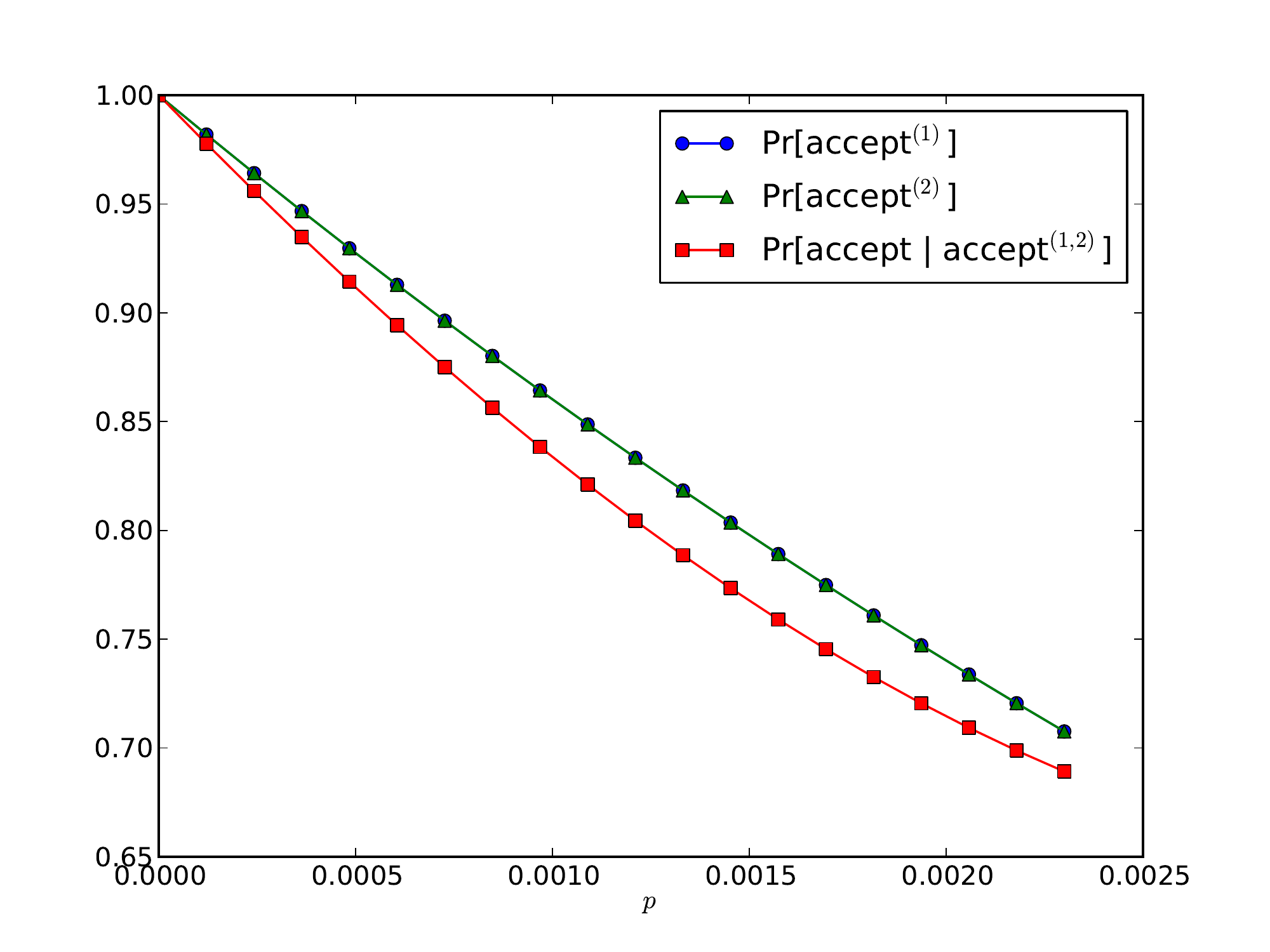}
\caption{\label{fig:pr-accept-lower-bounds}}
\end{subfigure}
\begin{subfigure}[b]{.495\textwidth}
\includegraphics[width=\textwidth]{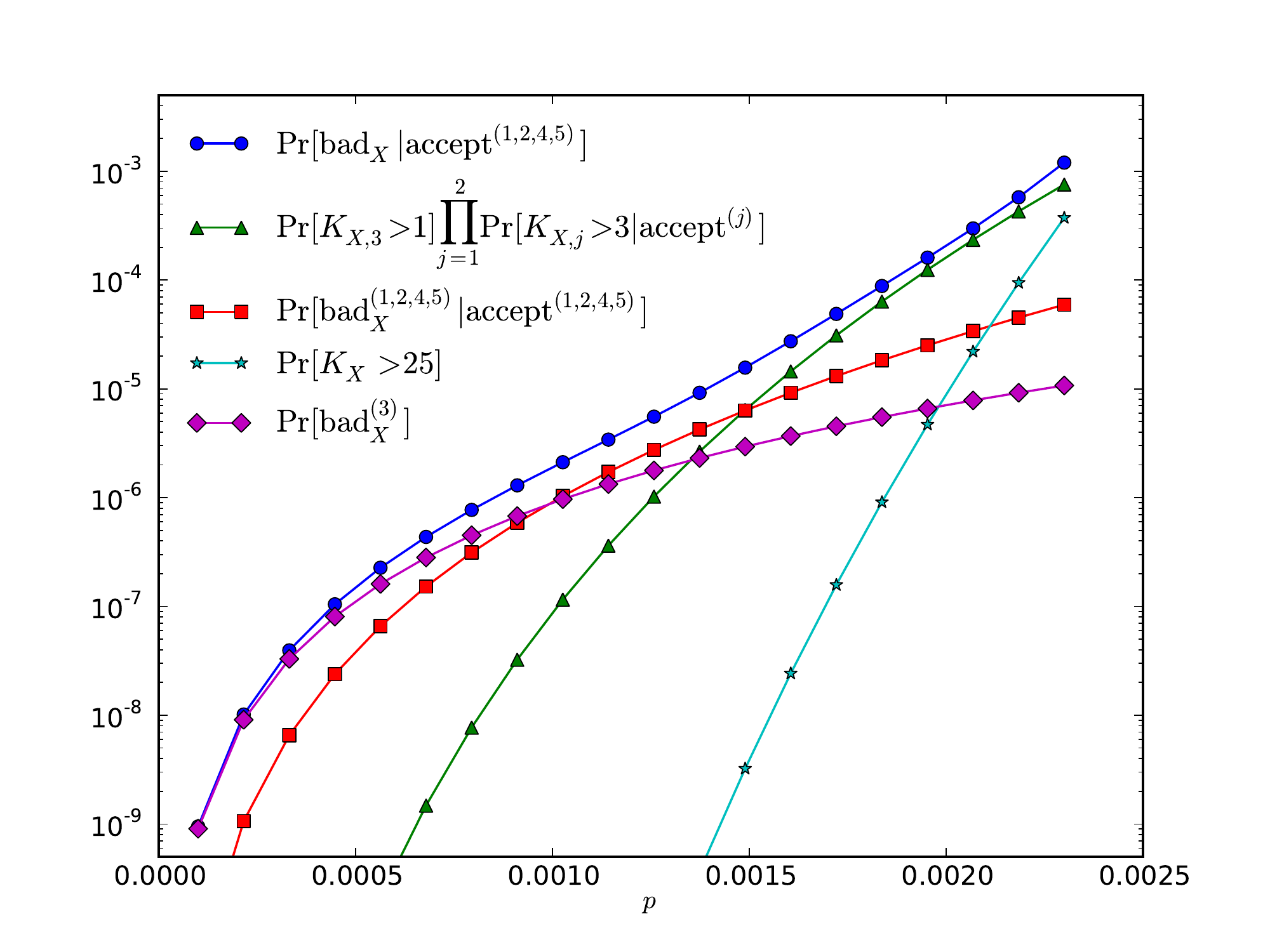}
\caption{\label{fig:prBad-cnot-upper-bounds}}
\end{subfigure}
\caption[Acceptance probabilities and {$\Pr[\bad]$}.]{Plotted in (a) are lower bounds on the Overlap-$4$ acceptance probabilities for the two $X$-error verifications ($\accept^{(1)}$ and $\accept^{(2)}$) and for the $Z$-error verification ($\accept$) conditioned on success of the $X$-error verifications.  The plot in~(b) shows upper bounds on conditions that lead to a $\badX$ event in the CNOT exRec.
}
\end{figure}

\subsection{Threshold analysis}
\label{sec:threshold.golay.threshold}
Our thresholds compare favorably to threshold results for similar circuits.  For a six-ancilla preparation and verification circuit, Aliferis and Cross~\cite{Aliferis2007b} give a threshold estimate based on malignant set sampling of $p \approx 1 \times 10^{-4}$ for adversarial noise.  Our results beat this by an order of magnitude and provide strong evidence that our counting technique is an improvement over malignant set sampling and malignant set counting for the case of depolarizing noise.  
Our results also essentially close the gap with other analytical and Monte Carlo threshold estimates for depolarizing noise.  Using a closed form analysis, Steane~\cite{Steane2003} estimated a threshold on the order of $10^{-3}$ for the Golay code with similar noise parameters.  Dawson, Haselgrove and Nielsen calculated a higher estimate of just under $3\times 10^{-3}$, and Cross et al.~\cite{Cross2009} estimated a pseudo-threshold of $2.25 \times 10^{-3}$ based on Monte Carlo simulations of a twelve-ancilla preparation and verification circuit.

Beyond circuits based on the Golay code, our results are apparently the highest rigorous threshold lower bounds known.  Aliferis and Preskill~\cite{Aliferis2009} prove a lower bound of $p \geq 1.25 \times 10^{-3}$. Their analysis applies to teleportation-based gates due to Knill~\cite{Knill2004} in which Bell pairs encoded into an error correcting code $C_2$ are prepared by first encoding each qubit of the $C_2$ block into an error \emph{detecting} code $C_1$ and performing error detection and postselection after each step of the $C_2$ encoding. Our best threshold is only about $5$ percent better, but applies to circuits that usually require far less overhead (see~\secref{sec:threshold.golay.resources}).  This implies only that in the depolarizing noise model our analysis is more accurate, and not that our schemes tolerate more noise.

The limiting factor on the threshold value is the event $\malig_{ZI}$.  That is, $\malig_{ZI}$ is the event $E$ for which $\Pr[\malig_E^{(2)}] = \Pr[\malig_E^{(1)}]$ takes the smallest value of $p$.  In fact, the corresponding threshold values for nearly all $Z$-error malignant events are lower than threshold values for \emph{any} of the $X$-error events.  This asymmetry is due to the arbitrary order with which we perform error correction---$Z$ first, then $X$.  Some $X$ errors resulting from the leading $Z$-error correction will be corrected by the $X$-error correction that follows.  However, $Z$ errors resulting from the $X$-error correction may propagate through the encoded operation before arriving at the $Z$-error correction on the trailing end.  As a result, it is more likely for $Z$ errors on individual blocks to be combined by the CNOT gate and create an uncorrectable error.  Evidence of this effect can be seen in the level-one malignant event probabilities shown in \figref{fig:malig-events}.  

It should be possible to reduce such lopsided event probabilities by customizing the error correction order for each EC based on the specifics of the ancilla preparation circuits.  However, analyzing such a scheme would require consideration of up to $36$ different full or partial CNOT exRecs (two choices for each EC) instead of four and is likely to yield only a small improvement in the threshold.  Note that other small improvements could be made by, for example, eliminating measurement or rest exRecs at level-two.  For simplicity, these optimizations were not considered.

\subsection{Resource analysis}
\label{sec:threshold.golay.resources}

The threshold provides a target accuracy for quantum computing hardware, but it does not produce a complete picture on its own.  In particular, we would also like to understand how the resource overhead for our scheme scales as the physical error rate drops below threshold.  Ultimately, the resource scaling will determine how small physical error rates must be in order to keep space and time resources to a manageable level.
In this section we calculate upper bounds on the number of physical gates and the number of physical qubits required to implement a single logical gate with a given effective error rate.

Our threshold analysis assumes that an infinite supply of ancilla qubits is available for use in error correction.  In order to bound the resource overhead we instead assume that some finite number of ancillas are available to each $k$-EC. Error correction proceeds normally unless all ancilla verifications fail. If the number of available ancillas is high enough, then the probability that all verifications fail will be small and the impact on the logical errors will be similarly small.

More precisely, our approach is as follows.  The ancilla verification circuit (\figref{fig:ancilla-verification}) is considered as a single unit.  Each level-$k$ $Z$-error correction consists of $m_k$ $\ket 0$ verifications performed in parallel plus a transversal rest, CNOT and $X$-basis measurement.  If all of the $m_k$ verifications fail, then $Z$-error correction is aborted and the data is left idle.  Level-$k$ $X$-error correction is similar.  For simplicity, if any of the error corrections are aborted, then we consider the entire top-level logical gate to have failed. 

Let $p_\text{target}$ be overall target error rate per logical gate, $\Pevent^{(k)} := \max_i \Pevent_i^{(k)}$, and let $K$ be the minimum level of concatenation that achieves $\Pevent^{(k)} < p_\text{target}$ assuming an unbounded number of ancilla.
We may then calculate a bound on the number of ancilla verifications $m_k$ for every $k \leq K$.
Setting $\delta^{(k)} = p_\text{target} - \Pevent^{(k)}$, the total gate overhead $g(k)$ for a CNOT $k$-Rec can be computed recursively by $g(k) \leq (2 m_k A_\text{EC} + 23) \cdot g(k-1)$, where $A_{\text{EC}}$ is the number of locations in the error-correction component. Details are provided in~\cite{Paetznick2011}.

\begin{figure}
\label{fig:gate-overhead}
\centering
\begin{subfigure}[t]{.49\textwidth}
\includegraphics[width=\textwidth]{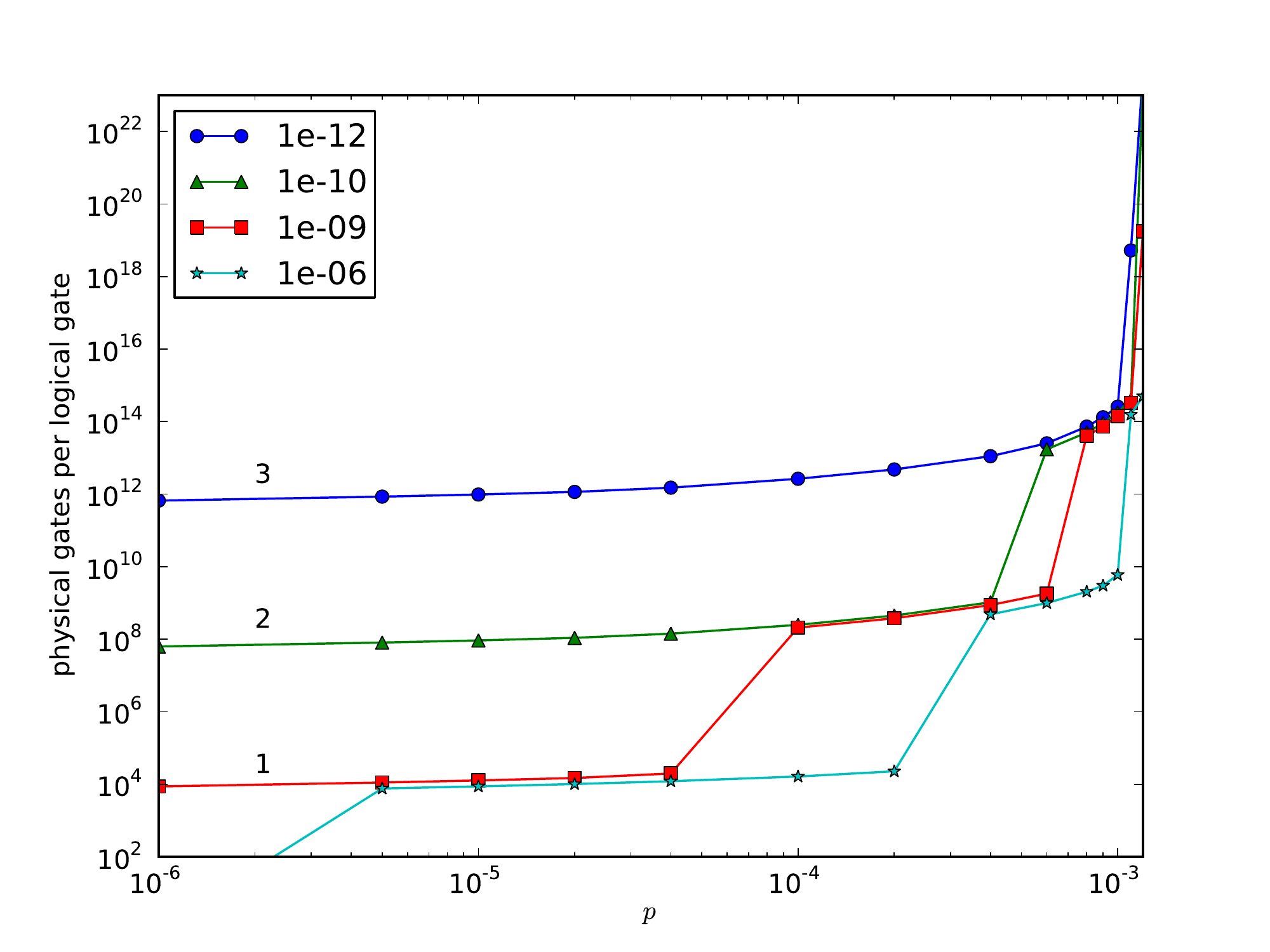}
\caption{\label{fig:gate-overhead-overlap}
Golay scheme with Overlap-$4$ preparation}
\end{subfigure}
\begin{subfigure}[t]{.49\textwidth}
\includegraphics[width=\textwidth]{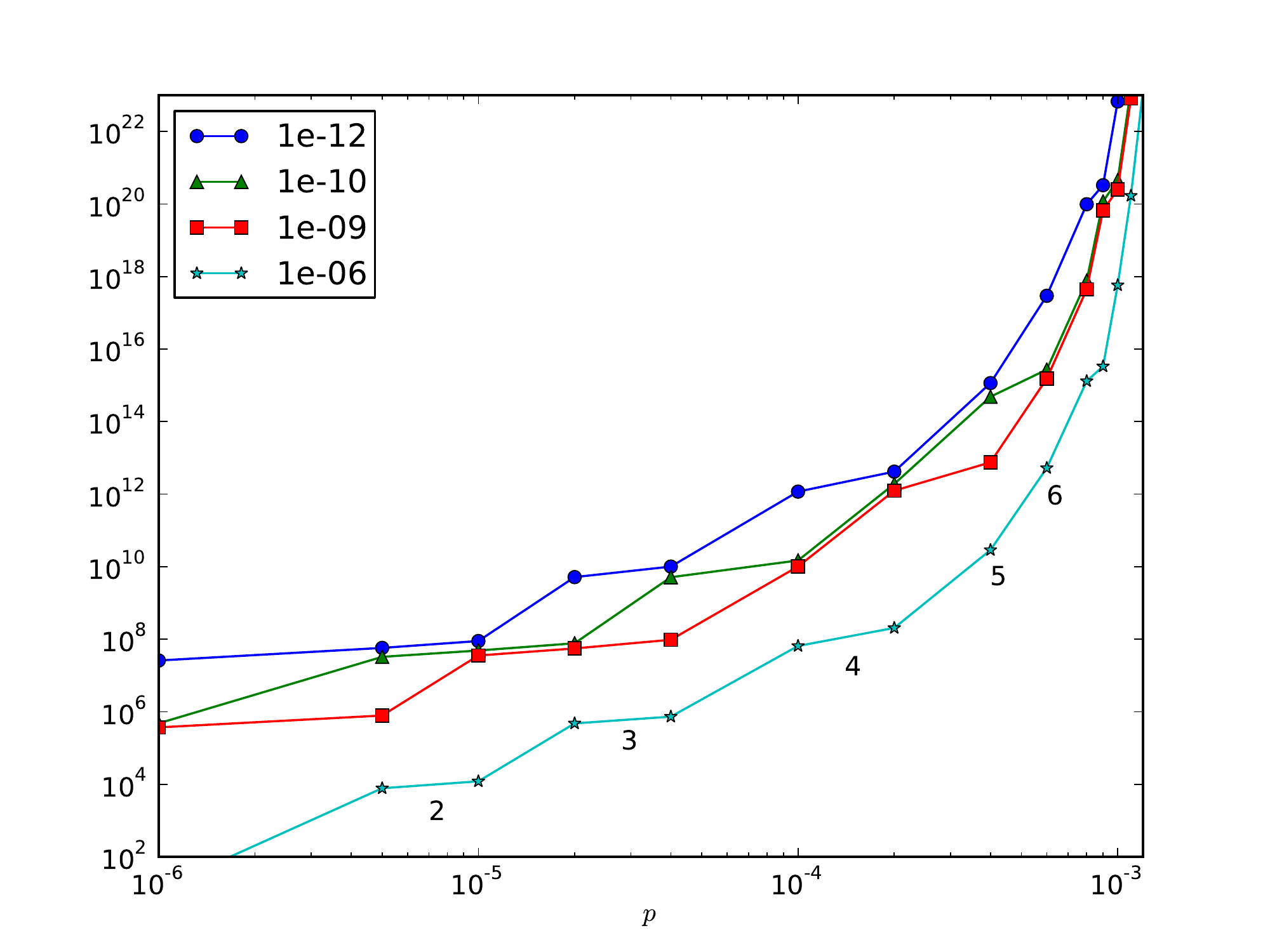}
\caption{\label{fig:gate-overhead-fibonacci}
{$[[4,2,2]]$} Fibonacci scheme
}
\end{subfigure}
\caption[Gate overhead for Golay and Fibonacci schemes.]{\label{fig:gate-overhead-plot}
Gate overhead upper bounds for (a) our Golay scheme with overlap ancilla preparation and (b) the Fibonacci scheme presented in~\cite{Aliferis2009}. Each plot shows the number of physical gates required to implement a logical gate with target error rates $p_\text{target} \in \{ 10^{-12}, 10^{-10}, 10^{-9}, 10^{-6} \}$. Black text labels indicate the required level of concatenation and colored lines are a guide for the eye.}
\end{figure}

\begin{figure}
\centering
\begin{subfigure}[t]{.49\textwidth}
\includegraphics[width=\textwidth]{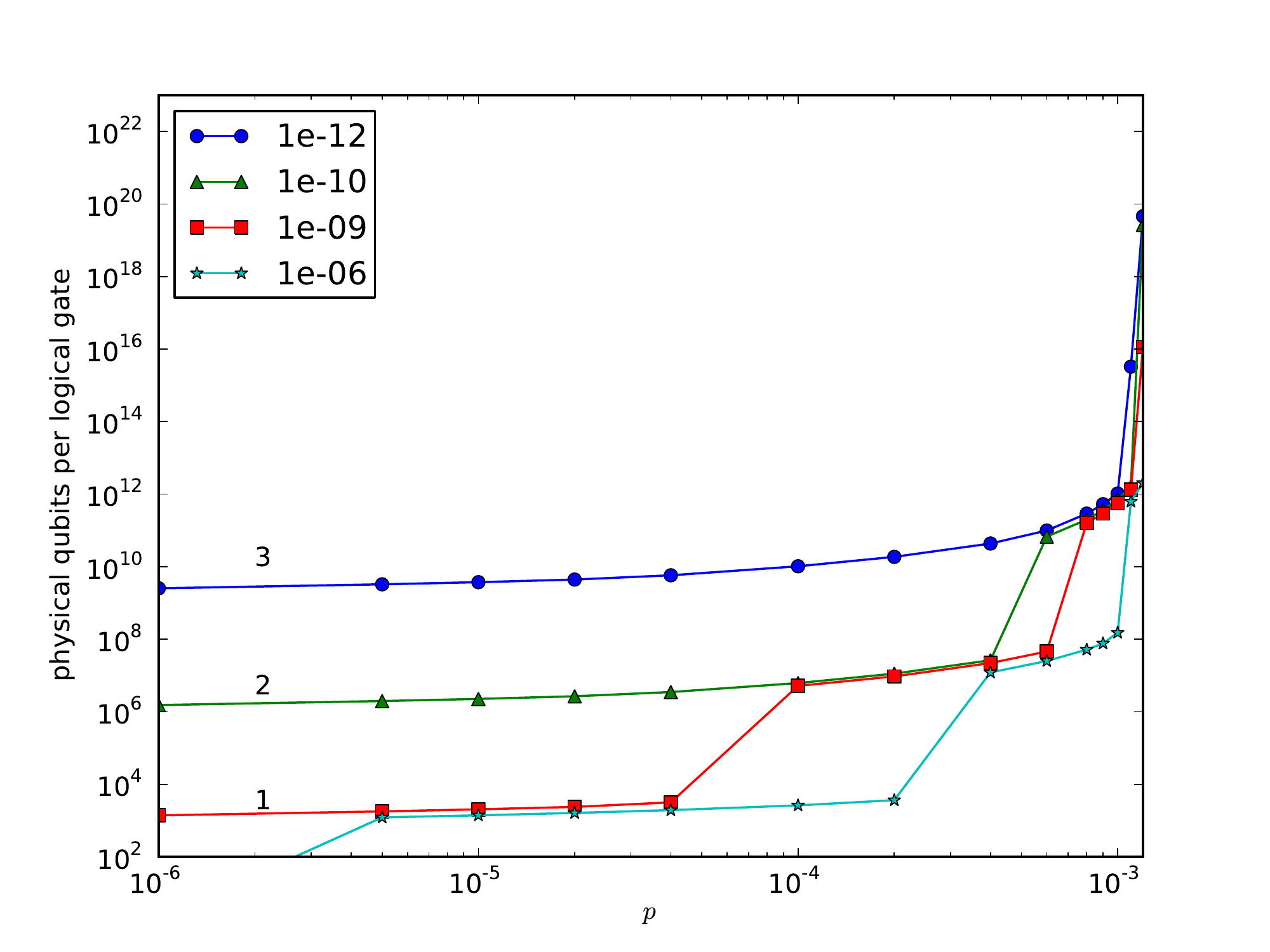}
\caption{
Golay scheme with Overlap-$4$ preparation
}
\end{subfigure}
\begin{subfigure}[t]{.49\textwidth}
\includegraphics[width=\textwidth]{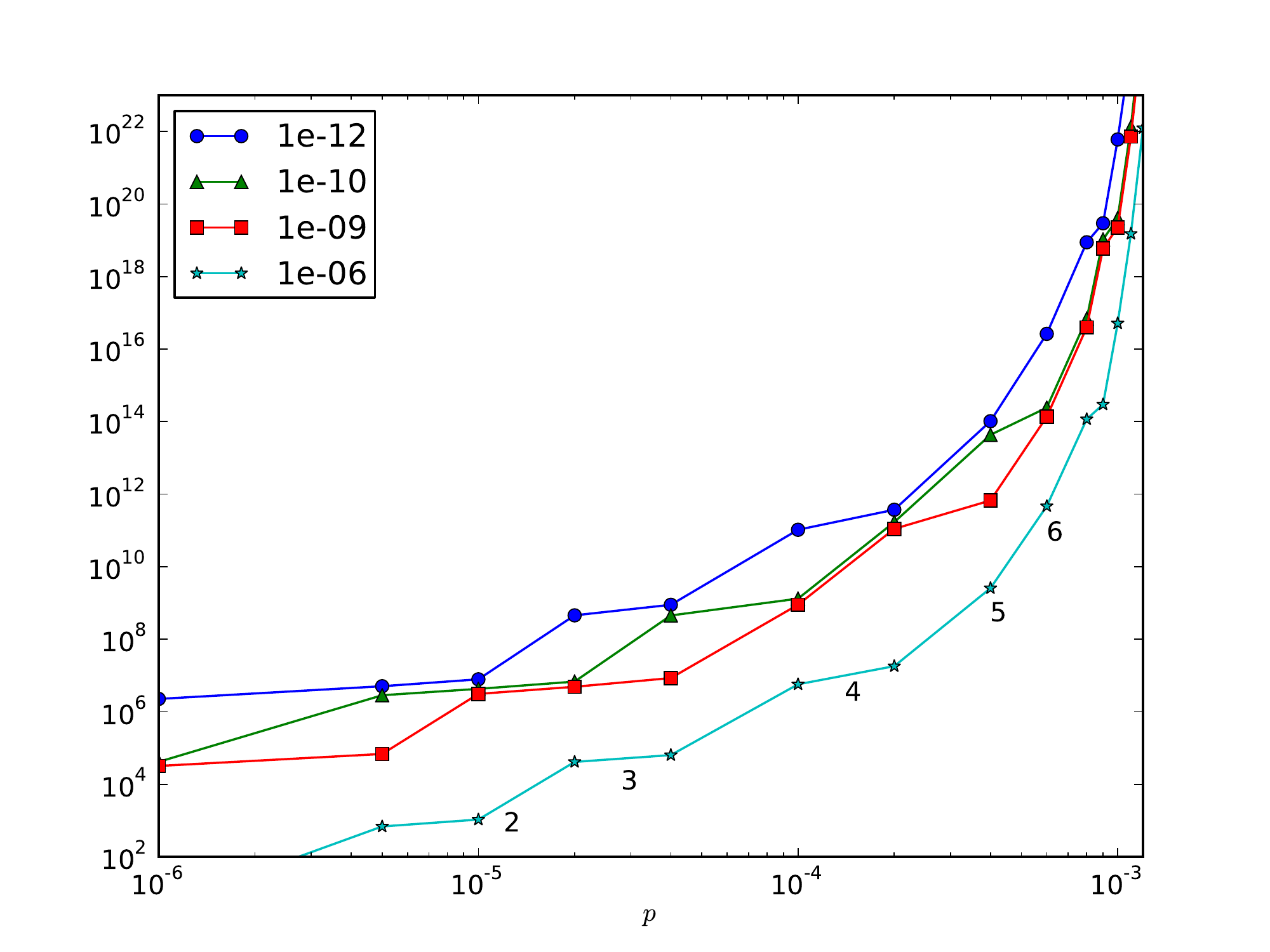}
\caption{
{$[[4,2,2]]$} Fibonacci scheme
}
\end{subfigure}
\caption[Qubit overhead for Golay and Fibonacci schemes.]{\label{fig:qubit-overhead-plot}
Qubit overhead upper bounds.  Plots are formatted identically to~\figref{fig:gate-overhead-plot}.
}
\end{figure}

Gate overhead upper bounds for the overlap-based scheme are shown in~\figref{fig:gate-overhead-overlap}.  The overhead increases dramatically as the target logical error rate decreases.  However, compared to similar upper bounds for the Fibonacci scheme---which has a similar threshold lower bound~\cite{Aliferis2009}---our scheme is better for a wide range of error rates often by several orders of magnitude.  One reason for the improved overhead is that our scheme is based on a code with higher distance than the Fibonacci scheme which uses the $[[4,2,2]]$ error detecting code. The logical error rate for our Golay scheme falls faster and thus requires fewer levels of concatentation.

Bounds on qubit overhead may be obtained from the gate overhead.  Our threshold analysis requires that all ancillas  be ready on-demand without delay---i.e., each $k$-Rec has depth three, independent of $k$.  We, therefore, pessimistically assume that once a qubit is measured it cannot be re-used within the same rectangle.  The qubit overhead then depends only on the gate overhead and the qubit-gate ratio for $\ket 0$ verification.  Using a ratio of $8 \cdot 23 / (A_\text{EC} - 46)$ we obtain $q(k) \leq 23^k + 0.15^k g(k)$ Therefore, the level-$k$ qubit overhead is roughly $k$ orders of magnitude lower than the level-$k$ gate overhead.

The qubit-gate ratio for Bell-state preparation in the Fibonacci scheme is relatively large ($\approx 0.6$ for levels three and above).  Therefore, similar to gate overhead, qubit overhead for the Golay scheme compares favorably to the Fibonacci scheme for a wide range of noise parameters. See~\figref{fig:qubit-overhead-plot}.

The drawback of using a larger code is that the increase in overhead from one level of concatenation to the next is much larger.  This makes it harder to ``tune'' the overhead parameters to some specific error rates.  For example, for $p_\text{target}=10^{-10}$ and $p = 10^{-6}$ our scheme requires two levels of concatenation and about $10^8$ physical gates per logical gate.  For the same error rates, the Fibonacci scheme requires three levels of concatenation, but fewer than $10^6$ gates.

Finally, note that bounds for our scheme when $p_\text{target}=10^{-12}$ are a bit loose due to a constant offset that is added during the transformed noise model construction.  In our computer analysis, these offsets were on the order of $\epsilon \approx 10^{-13}$.  In principle, this offset does not affect the actual error rates; rather it is an artifact of our construction.

\section{Discussion}
\label{sec:threshold.discussion}

Our explicit calculations for the Golay code show the power of the modified malignant set counting technique.  Compared to standard malignant set counting we are able to count much larger sets of faulty locations, and obtain a bound on the threshold which is about an order of magnitude larger than previous attempts. Intuitively, this is because we efficiently ignore subsets of faulty locations which are unlikely to occur. Use of the independent Pauli noise model permits fair comparisons of our bounds with Monte Carlo estimates.  In the case of the Golay code, our rigorous lower bound roughly matches numerical estimates due to~\cite{Steane2003,Dawson2006,Cross2009}.

The technique is quite general, and can be applied to any CSS code.  However, there are still several drawbacks to our approach.  First, we count errors in terms of equivalence classes based on the stabilizers of the code, but the number of unique errors per block is still exponential in the block size.  For the Golay code, this meant keeping track of $2^{12}$ $X$ errors and $2^{12}$ $Z$ errors.  For two blocks the total number of errors was $2\cdot 2^{24}$.  This number of errors is manageable, but numbers for larger codes may become unwieldy.

Another drawback is that we have assumed arbitrary qubit interactions, ignoring any physical geometric locality constraints.  This simplifies the analysis greatly, but artificially inflates the threshold and underestimates resource requirements in the case that geometric constraints are actually required.  Therefore, our results are not directly comparable to thresholds for topological codes including the surface code, for example.  Of course, our technique can be adapted to account for geometric constraints by, for example, inserting swap gates, if necessary. We have not considered such adaptations here.

\chapter{Decomposition of single-qubit unitaries into fault-tolerant gates
\label{chap:repeat}
}

This chapter is based on material that appears in~\cite{Paetznick2013b}.
\vspace{1cm}

The mapping of a quantum algorithm into its equivalent fault-tolerant circuit representation requires a choice of universal basis, most commonly consisting of CNOT and single-qubit gates. (See~\secref{sec:mechanics.universality}.)
Traditional methods for single-qubit unitary decomposition take as input a unitary $U$ and a distance parameter $\epsilon$, and output a sequence of gates $W = G_1\ldots G_k$ such that $\norm{U-W} \leq \epsilon$, for $G_1,\ldots,G_k$ in the chosen gate set, and some choice of norm $\norm{\cdot}$.  The operation $W$ is said to approximate $U$ to within a distance $\epsilon$.  This approach is justified by the fact that when $\norm{U-W}$ is small, the output distribution of a circuit containing $U$ is close to the output distribution obtained by substituting $W$.

The set of single-qubit unitaries that can be implemented fault tolerantly is predominantly dictated by the existence of resource-efficient fault-tolerance protocols. See~\secref{sec:fault.gates}.
A common universal, single-qubit basis is $\{H,S,T\}$, since $H$ and $S$ can often be implemented transversally, and $T$ can be achieved through state distillation.
The cost of a $\{H,S,T\}$ circuit is usually defined to be the number of $T$ gates, since the resource cost of a fault-tolerant $T$ gate is up to an order of magnitude larger than the resource cost of a fault-tolerant $H$ gate~\cite{Raussendorf2007a,Fowler2013}.\footnote{The inclusion of $S$ is a direct consequence of the choice of cost function. The $S$ gate is otherwise redundant since $S = T^2$.}

Typically, the approximation $W$ involves no measurements, and is therefore deterministic (at the logical level).  In this chapter we will show that by allowing a small number of ancilla qubits and measurements, non-deterministic circuits can outperform deterministic circuits which are otherwise optimal.  The circuits that we consider can be used to approximate a single-qubit unitary with roughly one-third to one-fourth the cost of traditional decomposition methods.

As an example, consider the circuit shown in~\figref{fig:v3-toffoli}, which performs the single-qubit unitary $V_3 = (I + 2iZ)/\sqrt{5}$.
This circuit involves two measurements in the $X$-basis.  If both measurement outcomes are zero, then the output is equivalent to $V_3\ket\psi$.  If any other outcome occurs, then the output is $I\ket\psi = \ket\psi$.  Thus, the circuit may be repeated until obtaining the all zeros outcome, and the number of repetitions will vary according to a geometric probability distribution. (In this case the probability of getting both zeros is $5/8$.)  Upon measuring all zeros, the unitary $V_3$ is implemented $\emph{exactly}$, even though the overall circuit is non-deterministic.  Each Toffoli gate can be implemented using four $T$ gates, and so the overall expected cost is $12.8$.  By contrast, an approximation of $V_3$ to within $\epsilon = 10^{-6}$ using the deterministic algorithm of~\cite{Kliuchnikov2012b} requires $67$ $T$ gates.

\begin{figure}
\centering
\begin{subfigure}[b]{.3\textwidth}
\centering
\includegraphics[scale=1]{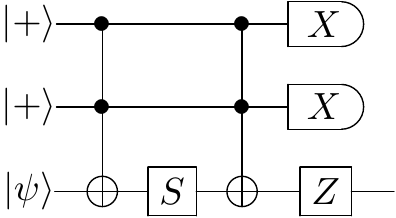}
\caption{\label{fig:v3-toffoli}
$\text{Exp}[$T$]=12.8$}
\end{subfigure}
\hfill
\begin{subfigure}[b]{.3\textwidth}
\centering
\includegraphics[scale=1]{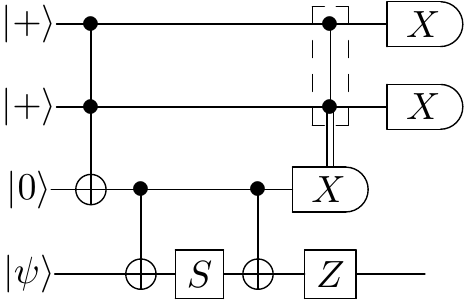}
\caption{\label{fig:v3-jones}
$\text{Exp}[$T$]=6.4$}
\end{subfigure}
\hfill
\begin{subfigure}[b]{.3\textwidth}
\centering
\includegraphics[scale=1]{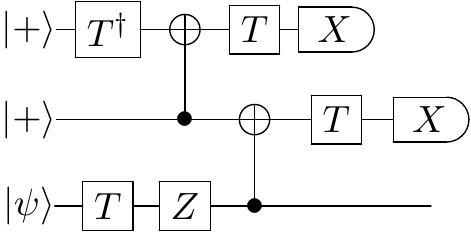}
\caption{\label{fig:v3-5.26}
$\text{Exp}[$T$] < 5.26$}
\end{subfigure}
\caption[Repeat-until-success circuits for $V_3$]{
\label{fig:v3-circuits}
Repeat-until-success circuits for $V_3 = (I + 2iZ)/\sqrt{5}$.
Each of the circuits above implements $V_3$ conditioned on an $X$-basis measurement outcome of zero on each of the top two ancilla qubits. If any other measurement outcome occurs, then each circuit implements the identity. The probability of measuring $00$ is $5/8$ for each circuit.  Repeating the circuit until success yields an expectation value for the number of $T$ gates, as indicated.
(a) A slight modification of the circuit presented in~\cite{Nielsen2000} pp. $198$. Each Toffoli gate can be implemented with four $T$ gates (see~\cite{Jones2012d}). (b) A circuit proposed by Jones that requires just a single Toffoli gate~\cite{Jones2013b}.
(c) An alternative circuit found by our computer search. Measurement of the first qubit can be performed before interaction with the data qubit.  Thus the top-left part of the circuit can be repeated until measuring zero.  The probability of measuring zero on the first qubit is $3/4$.  The probability of measuring zero on the second qubit, conditioned on zero outcome of the first qubit, is $5/6$.  The $T$ gate applied directly to $\ket\psi$ can be freely commuted through the CNOT.  In the case that an even number of attempts are required, the $T$ gates can be combined yield the Clifford gate $T^2 = S$.
}
\end{figure}

We call a circuit of the form of~\figref{fig:v3-toffoli}, which may be repeated until obtaining some desired outcome, a \emph{repeat-until-success} circuit or RUS circuit for short.
Through the use of an optimized direct-search algorithm, we present thousands of RUS circuits which exactly implement select unitary rotations at extremely low $T$-count.
By explicitly computing the circuit sequences, we construct a large database of single-qubit unitaries which is sufficiently large to approximate an arbitrary $Z$-axis rotation with within $\epsilon \geq 10^{-6}$.  
Using this database, the expected number of $T$ gates required to approximate a random $Z$-axis rotation $R_Z(\theta) = \left(\begin{smallmatrix} 1&0\\0&e^{i\theta} \end{smallmatrix}\right)$ scales as
\begin{equation}
\label{eq:axial-tcount-scaling}
\text{Exp}_Z[T] = 1.26\log_2(1/\epsilon) - 3.53
\enspace.
\end{equation}
While existing algorithmic decomposition methods are capable of approximations to smaller distances, our techniques provide approximations with extremely low $T$ counts. Furthermore, approximations to within $10^{-6}$ are sufficient for many quantum algorithms, including Shor's factoring algorithm~\cite{Fowl03b}, and quantum chemistry algorithms~\cite{Jones2012}.

An arbitrary single-qubit unitary $U$ can be approximated by first expressing it as a product of three $Z$-axis rotations
\begin{equation}
U = R_Z(\theta_1)H R_Z(\theta_2)H R_Z(\theta_3)
\enspace .
\end{equation}
Each rotation can then be decomposed individually.  However, RUS circuits can also be used to approximate arbitrary single-qubit unitaries directly, without resorting to $Z$-axis rotations.  Our results indicate a $T$-count scaling of
\begin{equation}
\label{eq:nonaxial-tcount-scaling}
\text{Exp}_U[T] = 2.4\log_2(1/\epsilon) - 3.3
\enspace ,
\end{equation}
roughly another $50$ percent better than using~\eqnref{eq:axial-tcount-scaling} and up to four-fold better than traditional deterministic decomposition of three $Z$-axis rotations.
Constructing a database of RUS circuits for arbitrary unitaries is significantly more challenging than for the $Z$-axis case, however.  We have computed approximations only up to $\epsilon = 8 \times 10^{-3}$.

\section{Deterministic decomposition methods
\label{sec:repeat.deterministic}
}
By the Solovay-Kitaev theorem~\cite{Kitaev1997a,Kita02}, a single-qubit unitary operation can be efficiently approximated to within a desired $\epsilon$ by decomposition into a sequence of gates from a discrete universal basis with length $O(\log^c(1/\epsilon))$, where $c = 1$ is the asymptotic lower bound~\cite{Knill1995}, and the best-known practical implementation achieves $c\sim3.97$~\cite{Dawson2005}. The algorithm works by finding progressively better approximations of a unitary $U$, through application of the \emph{group commutator} $G_1G_2G_1^{\dagger}G_2^{\dagger}$ for pairs of gates $G_1, G_2$ in the gate set.  The key insight of the theorem is that use of the group commutator converges to $U$ exponentially fast.

Approximations with optimal scaling $O(\log(1/\epsilon))$ are possible.
Fowler proposed an exponential-time algorithm that yields an optimal decomposition with a $T$-gate count of roughly $2.95\log_2(1/\epsilon) + 3.75$~\cite{Fowl04c}, on average.  He used an optimized, but exhaustive search over gate sequences of progressively longer length, stopping at the first sequence within the required distance.  The weakness of this approach is that it is practical for approximations only up to about $\epsilon \geq 10^{-3}$.  Bocharov and Svore have proposed a more efficient method which can be used to extend this range somewhat~\cite{Bocharov2012}.

An ancilla-based method known as ``phase kickback'' provides a computationally efficient and cost-competitive alternative for approximating $Z$-axis rotations~\cite{Kita02}. Phase kickback  involves preparing a special ancilla state based on the quantum Fourier transform and then using addition circuits controlled by the single-qubit input to effect the desired rotation. Optimization of the ancilla state preparation yields a cost scaling which is somewhat higher than Fowler's results~\cite{Jones2012,Jones2013},  but can be made more competitive in certain cases~\cite{Jones2013b}.  Phase kickback offers the possibility of very low circuit depth, as low as $O(\log\log 1/\epsilon)$, but requires a relatively large number of ancilla qubits $O(\log 1/\epsilon)$.

Recently, in a series of breakthroughs, efficient algorithms for asymptotically optimal single-qubit decomposition were discovered~\cite{Kliuchnikov2012a,Selinger2012a,Kliuchnikov2012b}.  These algorithms are based on an earlier algorithm for optimally and exactly decomposing a certain class of unitaries into Clifford and $T$ gates~\cite{Kliuchnikov2012}.  The approximation algorithms work by first rounding the unitary $U$ to the closest $\tilde{U}$ that can be exactly decomposed over $\{\Clifford,T\}$ and then using the exact decomposition algorithm on $\tilde{U}$.  Unlike phase kickback, these algorithms do not require ancilla qubits.

Selinger showed that ancilla-free approximation of a single-qubit $R_Z(\theta)$ rotation to within a distance of $\epsilon$ requires  $4\log_2(1/\epsilon) + 11$ $T$ gates in the worst case~\cite{Selinger2012a}.
For many values of $\theta$, however, the number of $T$ gates can be significantly smaller.
Kliuchnikov, Maslov and Mosca (KMM) gave an efficient algorithm which is shown to scale as $3.21\log_2(1/\epsilon) - 6.93$ for the rotation $R_Z(1/10)$~\cite{Kliuchnikov2012b}.

\section{Non-deterministic decomposition methods
\label{sec:repeat.non-deterministic}
}
A few non-deterministic decomposition techniques have also been developed.
So-called ``programmable ancilla rotations'' (PAR) use a cascading set of specially prepared ancilla states along with gate teleportation~\cite{Jones2012}. The action of each gate teleportation depends on a corresponding measurement outcome.  If the outcome is zero, then the protocol stops. Otherwise gate teleportation is repeated with a new, more complicated ancilla state. Like phase kickback, the number of $T$ gates required by PAR is larger than for ancilla-free methods, but the expected number of resources are comparable in some architectures~\cite{Jones2013b}.  Similar use of non-deterministic circuits to produce a ``ladder'' of non-stabilizer states, and in turn approximate an arbitrary unitary, has also been proposed~\cite{Duclos-Cianci2012}.

RUS circuits have already been proposed for decomposition into an alternate logical gate set. Bocharov, Gurevich and Svore (BGS) showed that arbitrary single-qubit unitaries can be approximated using the gate set $\{H,S,V_3\}$ with a typical scaling of $3\log_5(1/\epsilon)$
in the number of $V_3$ gates~\cite{Bocharov2013}.  They suggest a fault-tolerant implementation of the $V_3$ gate using~\figref{fig:v3-toffoli}, which requires eight $T$ gates, four for each Toffoli (see~\cite{Jones2012d}).  Later, Jones improved this circuit, using only a single Toffoli gate~\cite{Jones2013b}.
Through optimized direct search, we found an alternative RUS circuit for $V_3$ that uses only four $T$ gates and has a lower expectation value than the other two circuits, as shown in~\figref{fig:v3-5.26}.  Further discussion of decomposition with $V_3$ is found in~\secref{sec:repeat.applications.v3}.

Our proposed method of single-qubit unitary decomposition based on RUS circuits is also non-deterministic, of course.  In the next section we describe these circuits in detail and in~\secref{sec:repeat.results} we analyze the results of our optimized direct search. Decomposition algorithms are described in Sections~\ref{sec:repeat.applications.v3} and~\ref{sec:repeat.applications.database}.

For convenience, a summary of single-qubit decomposition methods is given in Tables~\ref{tab:non-axial-decomposition-methods} and~\ref{tab:axial-decomposition-methods}

\begin{table}
\centering
\begin{tabular}{>{\raggedright\arraybackslash}m{2cm}|>{\raggedright\arraybackslash}m{4cm}|>{\raggedright\arraybackslash}m{3.5cm}|>{\raggedright\arraybackslash}m{4cm}}
\hline
\textbf{Method}        & \textbf{Description} & \textbf{$T$ count} & \textbf{Comments}\\
\hline
Solovay-Kitaev& Converging $\epsilon$-net \qquad\qquad based on group commutators. & $O(\log^{3.97}1/\epsilon)$& Computationally efficient, but sub-optimal $T$ count.\\
\hline
Ladder states& Hierarchical distillation based $\ket H$ states.& $O(\log^{1.75}1/\epsilon)$& Some of the cost can be shifted ``offline".\\
\hline
Direct search& Optimized exponential-time search.& $2.95\log_2(1/\epsilon)+3.75$&Optimal ancilla-free $T$ count.\\
\hline
BGS & Direct search decomposition with $V_3$. & $T_V(3 \log_5 1/\epsilon)$& $T_V$ is the $T$ count for choice of fault-tolerant implementation of $V_3$.\\
\hline
\textbf{RUS} ~~ (non-axial)&Database lookup.& $2.4\log_2(1/\epsilon) - 3.3$ &Limited approximation accuracy.\\
\hline
\end{tabular}
\caption[Decomposition methods for arbitrary single-qubit unitaries.]{\label{tab:non-axial-decomposition-methods}
Decomposition methods for arbitrary single-qubit unitaries using the gate set $\{H,S,T\}$.
}
\vspace{1cm}
\begin{tabular}{>{\raggedright\arraybackslash}m{2cm}|>{\raggedright\arraybackslash}m{4cm}|>{\raggedright\arraybackslash}m{3.5cm}|>{\raggedright\arraybackslash}m{4cm}}
\hline
\textbf{Method}        & \textbf{Description} & \textbf{$T$ count} & \textbf{Comments}\\
\hline
Phase \quad kickback&Uses Fourier states and phase estimation.&$O(\log 1/\epsilon)$ ~~\qquad (implementation dependent)&$O(\log 1/\epsilon)$ ancillas. Optimizations make it cost competitive with Selinger and KMM.\\
\hline
PAR & Cascading gate teleportation.& $O(\log 1/\epsilon)$&Constant depth (on average), higher $T$ count than phase kickback.\\
\hline
Selinger&Round-off followed by exact decomposition.& $4\log(1/\epsilon)+11$&$T$ count is optimal for worst case rotations.\\ 
\hline
KMM & Round-off followed by exact decomposition.& $3.21\log_2(1/\epsilon)-6.93$& $T$ count based on scaling for $R_Z(1/10)$.\\
\hline
\textbf{RUS} \qquad (axial)& Database lookup.& $1.26\log_2(1/\epsilon)-3.53$&Approximation to within $\epsilon=10^{-6}$.\\
\hline
\end{tabular}
\caption[Decomposition methods for $Z$-axis rotations.]{\label{tab:axial-decomposition-methods}
Decomposition methods for $Z$-axis rotations using the gate set $\{H,S,T\}$.  Approximation of an arbitrary single-qubit unitary is possible by using the relation $U = R_Z(\theta_1)H R_Z(\theta_2) H R_Z(\theta_3)$.
}
\end{table}

\section{Repeat-until-success circuits
\label{sec:repeat.rus}
}

\begin{figure}
\centering
\includegraphics{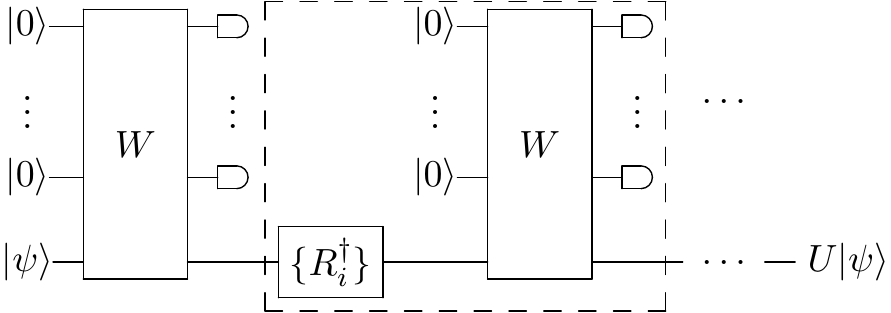}
\caption[General form of an RUS circuit.]{\label{fig:rus-circuit}
A repeat-until-success circuit that implements the unitary $U$.  Ancilla qubits are prepared in $\ket 0$, then the unitary $W$ is performed on both the ancillas and $\ket\psi$.  Upon measuring the ancillas, a unitary operation is effected on $\ket\psi$ which is either $U$ or one of $\{R_i\}$, depending on the measurement outcome.  If the measurement outcome indicates $R_i$, then the recovery operation $R_i^\dagger$ is performed, and the process can be repeated.
}
\end{figure}

The structure of a repeat-until-success (RUS) circuit over a gate set $G$ is as follows.
First, some number $m$ of ancilla qubits are prepared in state $\ket{0^m}$.
Then, given an input state $\ket\psi$ on $n$ qubits, a unitary $W$ is applied to all of the $n + m$ qubits using gates from $G$. Finally, each ancilla qubit is measured in the computational basis.  The output is given by $\Phi_i\ket\psi$, where $\Phi_i$ is a quantum channel---i.e., a unitary plus measurements---on $n$ qubits that depends on the measurement outcome $i \in \{0,1\}^m$.

The measurement outcomes are partitioned into two sets: ``success'' and ``failure''.
Success corresponds to some set of desired operations $\{\Phi_i : i \in \text{success}\}$;
failure corresponds to some set of undesired operations $\{\Phi_i : i \in \text{failure}\}$.
In the case of success, no further action is required.
In the case of failure $i$, a recovery operation $\Phi_i^{-1}$ is applied, and the circuit is repeated.

We restrict to the case in which $\ket\psi$ is a single qubit and the $\{\Phi_i\}$ are unitary.
We also limit to a single ``success'' output $U\ket\psi$, for some unitary $U$, though $U$ may correspond to multiple measurement outcomes.
The operation $W$ is then a $2^{m+1}\times 2^{m+1}$ unitary matrix of the form
\begin{equation}
\label{eq:nondeterministic-unitary-form}
W =
\frac{1}{\sqrt{\sum_i \abs{\alpha_i}^2}}
\begin{pmatrix}
\alpha_0 U & \ldots \\
\alpha_1 R_1 & \ddots \\
\vdots & \\
\alpha_m R_{l} &\\
\end{pmatrix}
\enspace ,
\end{equation}
where $U,R_1,\ldots,R_l$ are $2\times 2$ unitary matrices, and $\alpha_0,\ldots,\alpha_l \in \mathbb{C}$ are scalars.
Since the ancillas are prepared in $\ket{0^m}$, only the first two columns of $W$ are of consequence.
Contents of the remaining columns are essentially unrestricted, except that $W$ must be unitary.
Each of the $l + 1 = 2^{m}$ measurement outcomes corresponds to application of a unitary from $U \cup \{R_i\}$ on the data qubit.  Without loss of generality, we have selected the all zeros outcome to correspond with application of $U$, since outcomes can be freely permuted.
The entire protocol is illustrated in~\figref{fig:rus-circuit}.

For simplicity, we assume that $U \neq R_i ~\forall~ 1\leq i \leq l$.
The case in which $U$ appears multiple times can be easily accommodated.
In order for the circuit to be useful, the remaining matrices $R_1,\ldots,R_l$ should be invertible at a low cost.

In order to be compatible with existing fault-tolerance schemes, we require that $W$ can be synthesized using the gate set $\{\Clifford\} \cup \{T\}$, where $\{\Clifford\}$ denotes the Clifford group generated by $\{H, S, \CNOT\}$.\footnote{Our method is also extensible to other gate sets; however such extensions are not explored here.}
A unitary matrix is exactly implementable by $\{\Clifford,T\}$ if and only if its entries are contained in the ring extension $\KMMring$~\cite{Giles2012}.  Thus, we require that $\alpha_0 U,\alpha_1 R_1, \ldots,\alpha_m R_m$ are matrices over $\KMMring$.
Furthermore, the normalization $1/\sqrt{\sum_i \abs{\alpha_i}^2}$ must also be in the ring.
The unitarity condition on $W$ then requires that
\begin{equation}
\label{eq:HT-unitarity-condition}
\sum_i \abs{\alpha_i}^2 = 2^{k}
\end{equation}
for some integer $k$.

If all of the recovery operations $R_1,\ldots,R_m$ are exactly implementable by $\{\Clifford,T\}$, then we may assume that $\alpha_1,\ldots,\alpha_m\in \KMMring$.
If $\alpha_0$ is an integer, then Lagrange's four-square theorem implies that~\eqnref{eq:HT-unitarity-condition} can be satisfied using at most $n=3$ ancilla qubits.

We pause briefly to note that any element of the ring extension $\KMMring$ can be written as
\begin{equation}
\frac{a + ib + \sqrt{2}(c + id)}{\sqrt{2}^k} \in \KMMring
\enspace,
\end{equation}
for integers $a,b,c,d,k$.  Below we will eliminate the denominator in which case we may write
\begin{equation}
a + ib + \sqrt{2}(c + id) \in \Aring
\enspace .
\end{equation}

\subsection{Characterization of repeat-until-success unitaries
\label{sec:repeat.rus.characterization}
}
Consider a $2\times 2$ unitary matrix $U$ such that
\begin{equation}
\label{eq:rus-unitary}
U = \begin{pmatrix} u_{00} & u_{01} \\ u_{10} & u_{11}\end{pmatrix}
  = \frac{1}{\sqrt{2^k}\alpha}\begin{pmatrix} \beta_{00} & \beta_{01} \\ \beta_{10} & \beta_{11}\end{pmatrix}
  \enspace ,
\end{equation}
for $\alpha \in \mathbb{R}$, $\beta_{00},\ldots,\beta_{11} \in \Aring$ and integer $k \geq 0$.
We are concerned with exactly implementing $U$ only up to a global unit phase $e^{i \phi}$ for some $\phi \in [0,2\pi)$. Accordingly, we may assume without loss of generality that $\alpha$ is real and non-negative since for any $\beta \in \mathbb{C}$, $\frac{\beta \beta^*}{\abs{\beta}} \geq 0$.
The restriction to $\Aring$ rather than $\KMMring$ is also without loss of generality, since $k$ can be chosen to eliminate any denominators.
Then choosing $\alpha_0=\sqrt{2^k}\alpha$ we have
\begin{equation}
\label{eq:rus-scalar-condition}
\alpha_0 = \sqrt{\abs{\beta_{00}}^2 + \abs{\beta_{10}}^2} = \sqrt{x + y\sqrt 2}
\enspace ,
\end{equation}
where $x = a_{00}^2 + c_{00}^2 + a_{10}^2 + c_{10}^2 + 2(b_{00}^2 + d_{00}^2 + b_{10}^2 + d_{10}^2)$, $y = a_{00}b_{00} + c_{00}d_{00} + a_{10}b_{10} + c_{10}d_{10}$ for integers $a_{00}$, $b_{00}$, $c_{00}$, $d_{00}$, $a_{10}$, $b_{10}$, $c_{10}$, $d_{10}$.

Any target unitary $U$ must have this form due to~\eqnref{eq:nondeterministic-unitary-form}.
In other words, the \emph{only} unitaries that can be obtained by $\{\Clifford,T\}$ circuits of the form ~\figref{fig:rus-circuit} are those that can be expressed by entries in $\Aring$ after multiplying by a scalar.
Nonetheless, this restricted class of unitaries can be used to approximate arbitrary unitaries more efficiently than unitaries limited to $\KMMring$, as we show in~\secref{sec:repeat.results} and~\secref{sec:repeat.applications.database}.

In addition to their use in~\cite{Bocharov2013}, repeat-until-success circuits have been considered by Wiebe and Kliuchnikov for small-angle $Z$-axis rotations~\cite{Wiebe2013}.
Whereas Wiebe and Kliuchnikov propose hierarchical RUS circuits over $\{\Clifford,T\}$, we do not a priori restrict to a hierarchical structure or to small $Z$-axis rotations.
RUS circuits have been studied to a limited extent in other contexts, as well.
For example, repeated gate operations have been proposed for use in linear optics to implement a CZ gate~\cite{Lim2004}.
More recently, \cite{Shah2013} adapted deterministic ancilla-driven methods~\cite{Anders2009,Kashefi2009} to allow for non-determinism.

\subsection{Success probability and expected cost
\label{sec:repeat.rus.cost}
}
The success probability, i.e., the probability of obtaining the zero outcome for all ancilla measurements, can be computed from~\eqnref{eq:HT-unitarity-condition} and is given by
\begin{equation}
\Pr[\text{success}] = \frac{\alpha_0^2}{2^k} \leq \frac{\alpha_0^2}{2^{\lceil 2\log_2 \alpha_0 \rceil}}
\enspace ,
\end{equation}
where since $\alpha_0^2 < 2^k$, we may use $k \geq \lceil 2\log_2 \alpha_0 \rceil$.
The circuits in~\figref{fig:v3-circuits}, for example, each yield a value of $\alpha_0=\sqrt{5}$ and therefore a success probability of $5/8$.
On the other hand, if $U$ appears multiple times in~\eqnref{eq:nondeterministic-unitary-form}, then we have
\begin{equation}
\Pr[\text{success}] = \frac{m \alpha_0^2}{2^k} \leq \frac{m \alpha_0^2}{2^{\lceil \log_2 m \alpha_0^2 \rceil}}
\enspace ,
\end{equation}
where $m$ is the number of times that $U$ appears.
This upper bound can be made arbitrarily close to one for large enough $m$.

The expected number of repetitions required in order to achieve success is given by a geometric distribution with expectation value $1/p$, and variance $(1-p)/p^2$, where $p = \Pr[\text{success}]$.
If $C(W)$ is the cost of implementing the unitary $W$, then the expected cost of the RUS circuit is given by $C(W)/p$ with a variance of $C(W)(1-p)/p^2$.  Since the resources required to implement a $\{\Clifford,T\}$ fault-tolerant circuit are often dominated by the cost of implementing the $T$ gate,  we will define $C(W)$ as the number of $T$ gates in the circuit used to implement $W$.


We choose to use $T$-gate count as the cost function because it is simple, and is consistent with other $\{\Clifford, T\}$ decomposition algorithms~\cite{Kliuchnikov2012,Amy2012,Selinger2012a,Kliuchnikov2012b,Wiebe2013,Gosset2013a}.
However, RUS circuits employ techniques that are not present in the circuits produced by previous decomposition methods.
In particular, rapid classical feedback and control is required.  Moreover, variable time scales for logical single-qubit gates imply the need for active synchronization.  Thus, while $T$ count allows for direct comparison of RUS circuits with other methods, a more complete metric may be required for resource calculations on a particular architecture.

\subsection{Amplifying the success probability
\label{sec:repeat.rus.amplify}
}

We may describe the action of the multi-qubit unitary $W$ by
\begin{equation}
\label{eq:rus-unitary-on-psi}
W\ket{0^m}\ket\psi = \sqrt{p}\ket{0^m}U\ket\psi + \sqrt{1-p}\ket{\Phi^\perp}
\enspace ,
\end{equation}
where $\ket{\Phi^\perp}$ is a state that depends on $\ket\psi$ and satisfies $(\ket{0^m}\bra{0^m}\otimes\Id)\ket{\Phi^\perp} = 0$.  That is, $W$ outputs a state which has amplitude $\sqrt{p}$ on the ``success'' subspace, and amplitude $\sqrt{1-p}$ on the ``failure'' subspace.  We show below that in some cases we may apply the amplitude amplification algorithm to boost the success probability and reduce the expected $T$ count of an RUS circuit.

Traditional amplitude amplification~\cite{Brassard2000} proceeds by applying the operator $(RS)^j$ on the initial state $W\ket{0^m}\ket\psi$ for some integer $j > 0$ and reflections
\begin{equation}\begin{split}
S &= \Id - 2\ket{0^m}\ket\psi \bra{0^m}\bra{\psi}, \\
R &= WSW^\dagger = \Id - 2W\ket{0^m}\ket\psi \bra{0^m}\bra{\psi}W^\dagger
\enspace .
\end{split}\end{equation}
In the two-dimensional subspace spanned by $\{\ket{0^m}U\ket\psi, \ket{\Phi^\perp}\}$, $RS$ acts a rotation by $2\theta$ where $\sin(\theta) = \sqrt{p}$.  Therefore $(RS)^j (W\ket{0^m}\ket\psi) = \sin((2j+1)\theta)\ket{0^m}U\ket\psi + \cos((2j+1)\theta)\ket{\Phi^\perp}$.  The goal then, is to choose $j$ appropriately so as to minimize the expected number of $T$ gates.

The problem in this case is that $\ket\psi$ is unknown, and therefore we cannot directly implement $S$.
We can, however, implement $S' = \text{CZ}(m)\otimes I$, the generalized controlled-$Z$ gate on $m$ qubits defined by $\text{CZ}(m) \ket{x_1,x_2,\ldots,x_m} = (-1)^{x_1 x_2\ldots x_m} \ket{x_1,x_2,\ldots,x_m}$.
We could, therefore, apply $(WS'W^\dagger S')^j$ instead of $(RS)^j$.

\begin{proposition}
\label{prop:amp-amp-cz}
Consider a unitary $W$ that satisfies~\eqnref{eq:rus-unitary-on-psi}.
Amplitude amplification on $\ket{0^m}U\ket\psi$ can be performed using the operator $WS'W^\dagger S'$, where $S' = \text{CZ}(m)\otimes I$.
More precisely,
\begin{equation}
(WS'W^\dagger S')^j (W\ket{0^m}\ket\psi) = \sin((2j+1)\theta)\ket{0^m}U\ket\psi + \cos((2j+1)\theta)\ket{\Phi^\perp}
\enspace ,
\end{equation}
where $\sin(\theta) = \sqrt{p}$.
\end{proposition}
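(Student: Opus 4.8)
The plan is to show that the operator $S'$, though it differs from the ``true'' reflection $S$ off the input state, acts exactly like $S$ on the relevant two-dimensional subspace, so that the standard amplitude amplification analysis applies verbatim. First I would observe that the success subspace, onto which we are trying to amplify, is precisely the span of states of the form $\ket{0^m}\ket\phi$ for arbitrary single-qubit $\ket\phi$; call this subspace $\mathcal{H}_0$, and let $\Pi_0 = \ket{0^m}\bra{0^m}\otimes I$ be the projector onto it. The generalized controlled-$Z$ gate $\mathrm{CZ}(m)$ acts as $+1$ on $\ket{0^m}$ (since the product $x_1 x_2 \cdots x_m$ vanishes whenever any $x_i = 0$) and flips the sign of $\ket{1^m}$; more importantly for us, the only computational basis string on which all $m$ control bits are $1$ is $1^m$, so $\mathrm{CZ}(m) = I - 2\ket{1^m}\bra{1^m}$. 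Hence $S' = \mathrm{CZ}(m)\otimes I = I - 2\,(\ket{1^m}\bra{1^m}\otimes I)$, a reflection about the orthogonal complement of the one-qubit-worth of states $\ket{1^m}\ket\phi$.

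Now here is the key point, and the step I expect to carry the argument: although $S'$ is \emph{not} equal to $S = I - 2\ket{0^m}\ket\psi\bra{0^m}\bra\psi$ on all of $\mathcal{H}$, I claim that $S'$ restricted to the invariant subspace $\mathcal{V} := \operatorname{span}\{\ket{0^m}U\ket\psi,\ \ket{\Phi^\perp}\}$ coincides with the reflection that amplitude amplification needs there. By~\eqnref{eq:rus-unitary-on-psi}, $W\ket{0^m}\ket\psi = \sqrt p\,\ket{0^m}U\ket\psi + \sqrt{1-p}\,\ket{\Phi^\perp}$ with $\Pi_0\ket{\Phi^\perp} = 0$. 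The two unit vectors spanning $\mathcal V$ are therefore: one lying entirely in $\mathcal{H}_0$ (namely $\ket{0^m}U\ket\psi$), and one lying entirely in the complement $\mathcal{H}_0^\perp$ (namely $\ket{\Phi^\perp}$). So I would check that $S'$ fixes $\ket{0^m}U\ket\psi$ and negates $\ket{\Phi^\perp}$ — equivalently, that on $\mathcal V$ the operator $S'$ equals $2\Pi_0 - I$. The first is immediate since $\mathrm{CZ}(m)\ket{0^m} = \ket{0^m}$. The second, $S'\ket{\Phi^\perp} = -\ket{\Phi^\perp}$, is the one genuinely needing an argument: it is \emph{not} true in general that $S'$ negates every vector orthogonal to $\mathcal H_0$. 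This is the main obstacle, and I suspect the proposition as stated requires either an additional hypothesis on the structure of $W$ (so that $\ket{\Phi^\perp}$ is supported on the $\ket{1^m}$-block, or more precisely has zero overlap with $\ket{0^m}\ket\phi$ and with every other block on which $\mathrm{CZ}(m)$ acts trivially cancelling out) or a reinterpretation — e.g. that it suffices for $S'$ and $S$ to agree on $\mathcal V$ up to the overall two-dimensional rotation picture. I would resolve this by either (i) invoking the structural form~\eqnref{eq:nondeterministic-unitary-form} to argue $\ket{\Phi^\perp}$ decomposes into the non-$0^m$ ancilla blocks and checking $\mathrm{CZ}(m)$ contributes the needed global sign on the relevant combination, or (ii) restricting attention, as amplitude amplification genuinely only requires, to the action of $WS'W^\dagger S'$ on $\mathcal V$ and showing directly that it is a rotation by $2\theta$.

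Granting that $S'$ acts as $2\Pi_0 - I$ on $\mathcal V$, the remainder is the textbook computation of~\cite{Brassard2000}: set $R' := WS'W^\dagger$, note $R'$ is the reflection $I - 2\,W\Pi_0' W^\dagger$ about $W\mathcal H_0^{\,\prime}$ where $\mathcal H_0' = \ket{0^m}\mathcal H_{\text{qubit}}$; observe that $W\ket{0^m}\ket\psi \in \mathcal V$ makes $\mathcal V$ invariant under both $S'$ and $R'$; and in the orthonormal basis $\{\ket{0^m}U\ket\psi,\ \ket{\Phi^\perp}\}$ write $S'|_{\mathcal V} = \operatorname{diag}(1,-1)$ and $R'|_{\mathcal V}$ as the reflection fixing $W\ket{0^m}\ket\psi = \sin\theta\,\ket{0^m}U\ket\psi + \cos\theta\,\ket{\Phi^\perp}$. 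The product of these two reflections is a rotation by $2\theta$, so $(R'S')^j$ rotates $W\ket{0^m}\ket\psi$ (which already sits at angle $\theta$ from the $\ket{\Phi^\perp}$ axis) to angle $(2j+1)\theta$, giving $\sin((2j+1)\theta)\,\ket{0^m}U\ket\psi + \cos((2j+1)\theta)\,\ket{\Phi^\perp}$, as claimed. I would close by noting that $\theta$ is fixed by $\sin\theta = \sqrt p$ from~\eqnref{eq:rus-unitary-on-psi}, and that the whole analysis used only the geometry of $\mathcal V$, so it is independent of the unknown $\ket\psi$.
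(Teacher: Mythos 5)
Your identification of the gap is exactly right: with $S' = \text{CZ}(m)\otimes I = I - 2\,(\ket{1^m}\bra{1^m}\otimes I)$ as literally defined, the state $\ket{\Phi^\perp}$ --- which from~\eqnref{eq:nondeterministic-unitary-form} has support across all $2^m - 1$ non-$\ket{0^m}$ ancilla blocks of $W$, not just the $\ket{1^m}$ block --- is not an eigenvector of $S'$ at all. So $S'$ does not even preserve $\mathcal V$, let alone act there as a reflection; and for the same reason $R'=WS'W^\dagger$ fails to preserve $\mathcal V$. Your proposed fix (i) would hit this same wall, since~\eqnref{eq:nondeterministic-unitary-form} gives no reason for $\text{CZ}(m)$ to contribute a uniform sign over the non-zero ancilla blocks.

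What actually closes the gap has two parts, and one of them is a key lemma you did not invoke. First, there is a sign-convention correction implicit in the paper's own proof: the reflection one needs is $S' = I - 2\,(\ket{0^m}\bra{0^m}\otimes I)$, the $X^{\otimes m}$-conjugate of $\text{CZ}(m)\otimes I$ on the ancillas (still Clifford for $m\le 2$, so the cost accounting in~\secref{sec:repeat.rus.amplify} is unaffected). This $S'$ negates the $\ket{0^m}$ block and is the identity on its orthogonal complement, so $S'\ket{0^m}U\ket\psi = -\ket{0^m}U\ket\psi$ and $S'\ket{\Phi^\perp}=+\ket{\Phi^\perp}$ fall out immediately, with no structural hypothesis on $\ket{\Phi^\perp}$ needed. (Note the sign you wanted on $\ket{\Phi^\perp}$ was $-$; the correct one is $+$, with the $-$ moved onto $\ket{0^m}U\ket\psi$. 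Either convention gives a reflection $\mathrm{diag}(\pm1,\mp1)$ on $\mathcal V$, which is all that amplitude amplification needs.) Second --- and this is the piece your proposal is missing --- the action of $R'=WS'W^\dagger$ on $\mathcal V$ is controlled by \lemref{lem:2d-subspace}, the Childs--Kothari $2D$ Subspace Lemma stated just before the proposition: the preimage $\ket{\Psi^\perp} := W^\dagger\bigl(\sqrt{1-p}\,\ket{0^m}U\ket\psi - \sqrt{p}\,\ket{\Phi^\perp}\bigr)$ of the vector in $\mathcal V$ orthogonal to $W\ket{0^m}\ket\psi$ also satisfies $(\ket{0^m}\bra{0^m}\otimes I)\ket{\Psi^\perp}=0$. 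Combined with the corrected $S'$, this gives $S'\ket{\Psi^\perp}=\ket{\Psi^\perp}$, hence $R'W\ket{\Psi^\perp}=W\ket{\Psi^\perp}$ while $R'W\ket{0^m}\ket\psi=-W\ket{0^m}\ket\psi$. Both $S'$ and $R'$ are then genuine reflections preserving $\mathcal V$, and the product-of-reflections computation you sketch in your last paragraph goes through verbatim, yielding the claimed formula up to an immaterial $(-1)^j$ global phase. In short, your outline is the right outline; the missing ingredient is \lemref{lem:2d-subspace}, which is what lets you control $WS'W^\dagger$ off the $\ket{0^m}$ block and make $\mathcal V$ jointly invariant.
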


Proof of this claim relies on the $2D$ Subspace Lemma of Childs and Kothari.
\begin{lemma}[\cite{Kothari2013}]
\label{lem:2d-subspace}
Let $W$ be a unitary that satisfies~\eqnref{eq:rus-unitary-on-psi}. Then the state
$$\ket{\Psi^\perp} := W^\dagger (\sqrt{1-p}\ket{0^m}U\ket\psi - \sqrt{p}\ket{\Phi^\perp})$$
 satisfies $(\ket{0^m}\bra{0^m}\otimes\Id)\ket{\Psi^\perp} = 0$.
\end{lemma}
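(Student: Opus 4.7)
The plan is to verify the lemma by showing directly that the amplitude of $\ket{\Psi^\perp}$ on $\ket{0^m}\otimes\ket\phi$ vanishes for every data state~$\ket\phi$; this is equivalent to the projector statement $(\ket{0^m}\bra{0^m}\otimes\Id)\ket{\Psi^\perp}=0$. The key observation that enables the calculation is that the structure of $W$ in~\eqnref{eq:nondeterministic-unitary-form} forces the decomposition~\eqnref{eq:rus-unitary-on-psi} to hold uniformly in the data state: for every $\ket\phi$ we have
\begin{equation}
W\ket{0^m}\ket\phi \;=\; \sqrt{p}\,\ket{0^m}U\ket\phi \;+\; \sqrt{1-p}\,\ket{\Phi^\perp_\phi}
\enspace,
\end{equation}
where $\ket{\Phi^\perp_\phi}$ is supported outside the $\ket{0^m}$-ancilla subspace and $p$ is the same constant as in~\eqnref{eq:rus-unitary-on-psi}.

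Using $(\bra{0^m}\bra\phi)W^\dagger = (W\ket{0^m}\ket\phi)^\dagger$, I would compute
\begin{align*}
(\bra{0^m}\bra\phi)\ket{\Psi^\perp}
&= \sqrt{1-p}\,\bra{W(0^m\phi)}\,(\ket{0^m}U\ket\psi) - \sqrt{p}\,\bra{W(0^m\phi)}\,\ket{\Phi^\perp} \\
&= \sqrt{p(1-p)}\,\braket{\phi|\psi} - \sqrt{p(1-p)}\,\braket{\Phi^\perp_\phi|\Phi^\perp}
\enspace,
\end{align*}
where the first term uses that $\ket{\Phi^\perp_\phi}$ has no overlap with $\ket{0^m}U\ket\psi$ and the second uses that $\ket{0^m}U\ket\phi$ has no overlap with $\ket{\Phi^\perp}$. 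So the claim reduces to proving the identity $\braket{\Phi^\perp_\phi|\Phi^\perp}=\braket{\phi|\psi}$.

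I expect this identity to be the one mildly subtle step, but it follows cleanly from unitarity. The plan is to expand the preserved inner product $\braket{W(0^m\phi)|W(0^m\psi)}=\braket{\phi|\psi}$ using the two decompositions above, noting that the cross terms vanish (again by the support property of the $\ket{\Phi^\perp}$-vectors on the $\ket{0^m}$ subspace), which gives $p\braket{\phi|\psi}+(1-p)\braket{\Phi^\perp_\phi|\Phi^\perp}=\braket{\phi|\psi}$, i.e.\ $\braket{\Phi^\perp_\phi|\Phi^\perp}=\braket{\phi|\psi}$. Substituting back shows the two terms cancel, yielding $(\bra{0^m}\bra\phi)\ket{\Psi^\perp}=0$ for arbitrary $\ket\phi$, which is the desired conclusion. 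The only care required is in keeping track of the fact that the "uniform in $\phi$" form of~\eqnref{eq:rus-unitary-on-psi} is a genuine consequence of the block structure of $W$, rather than an assumption; once that is in hand, the proof is a short computation.
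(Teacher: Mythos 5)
The paper does not prove this lemma; it is cited directly from~\cite{Kothari2013} and used without proof in the argument for \propref{prop:amp-amp-cz}. So there is no in-paper proof to compare against, but your proof is correct and, as far as I can tell, matches the standard argument for the 2D subspace lemma.

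You have correctly identified the one point where the proof could silently fail: the lemma is a statement about the \emph{full} projector $\Pi = \ket{0^m}\bra{0^m}\otimes\Id$, not merely about orthogonality to the single vector $\ket{0^m}\ket\psi$ (the latter is an easy consequence of unitarity and the cross-term cancellation). To get the full projector statement you must test $\bra{0^m}\bra\phi$ for arbitrary $\phi$, and that only works because the block structure of $W$ in~\eqnref{eq:nondeterministic-unitary-form} makes $p$ and $U$ independent of the data state: the first block column of $W$ is a fixed stack of $2\times 2$ matrices $\alpha_0 U, \alpha_1 R_1,\dots$, so $\Pi W\Pi = \sqrt{p}\,\bigl(\ket{0^m}\bra{0^m}\otimes U\bigr)$ as an operator identity. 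Your two-step computation (expand $(\bra{0^m}\bra\phi)W^\dagger$, cancel the cross terms, then feed the unitarity identity $\braket{W(0^m\phi)|W(0^m\psi)}=\braket{\phi|\psi}$ back in to get $\braket{\Phi^\perp_\phi|\Phi^\perp}=\braket{\phi|\psi}$) is exactly the right calculation, and the final cancellation $\sqrt{p(1-p)}\,(\braket{\phi|\psi}-\braket{\Phi^\perp_\phi|\Phi^\perp})=0$ holds for every $\phi$. As a minor stylistic note, once you record the operator identity $\Pi W\Pi = \sqrt{p}\,\ket{0^m}\bra{0^m}\otimes U$, the whole lemma collapses to the one-line verification $\Pi W^\dagger\Pi W\,\ket{0^m}\ket\psi = p\,\ket{0^m}\ket\psi$ (since $\ket{\Psi^\perp}$ is proportional to $W^\dagger(\Pi-pI)W\ket{0^m}\ket\psi$), but this is the same content you have, packaged differently.
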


\begin{proof}[Proof of~\propref{prop:amp-amp-cz}]
First, note that both $R$ and $S$ preserve the two-dimensional subspace spanned by $\ket{0^m}U\ket\psi$ and $\ket{\Phi^\perp}$.  That is, the state that results from applying any sequence of $R$ and $S$ on $W\ket{0^m}\ket\psi$ can be written as a linear combination of $\ket{0^m}\ket\psi$ and $\ket{\Phi^\perp}$.  Next, observe that $S'$ also preserves this subspace and is equivalent to $S$ since $S'\ket{0^m}U\ket\psi = -\ket{0^m}U\ket\psi$ and $S'\ket{\Phi^\perp} = \ket{\Phi^\perp}$.

The claim then is that the reflection $WS'W^\dagger$ about the state $W\ket{0^m}\ket\psi$ also preserves the subspace and is equivalent to $R$.  Clearly, $(WS'W^\dagger)W\ket{0^m}\ket\psi = -W\ket{0^m}\ket\psi$. On the other hand, the action of $WS'W^\dagger$ on the state $\ket{\Psi^\perp}$ that is orthogonal to $W\ket{0^m}\ket\psi$ (in the $2D$ subspace) is less obvious and requires \lemref{lem:2d-subspace}, which implies that $(WS'W^\dagger)W\ket{\Psi^\perp} = W\ket{\Psi^\perp}$ as desired.
We therefore conclude that $(WS'W^\dagger S')^j$ is equivalent to ``real'' amplitude amplification on $W\ket{0^m}\ket\psi$ and, in particular, that
\begin{equation*}
(WS'W^\dagger S')^j W\ket{0^m}\ket\psi = \sin((2j+1)\theta)\ket{0^m}U\ket\psi + \cos((2j+1)\theta)\ket{\Phi^\perp}
\enspace .
\end{equation*}
\end{proof}

If $m \leq 2$, then $S'$ can be implemented with only Clifford gates, i.e., $Z$ or $\text{CZ}$.
Then, for a fixed value of $j$, the total number of $T$ gates in the corresponding amplified circuit is given by $(2j+1)T_0$, where $T_0$ is the number of $T$ gates in the unamplified circuit.
In order for amplitude amplification to yield an improvement in the expected number of $T$ gates, we therefore require that
\begin{equation}
(2j+1)\sin^2(\theta) < \sin^2((2j+1)\theta)
\enspace ,
\end{equation}
a condition that holds if and only if $0 \leq p < 1/3$. Thus a sensible course of action is to apply amplitude amplification for all RUS circuits for which $p < 1/3$, and leave higher probability circuits unchanged.

Consider, for example, an RUS circuit that contains $15$ $T$ gates and has a success probability of $0.1$.  In this case, using amplitude amplification with value of $j=1$ yields a new circuit with success probability $0.676$ and $45$ $T$ gates, an improvement in the expected number of $T$ gates by a factor of $2.25$.
The effects of amplitude amplification on our database of RUS circuits are discussed in~\secref{sec:repeat.results}.

Cost analysis of amplitude amplification for circuits with more than two ancilla qubits is more complicated because the reflection operator $S'=\text{CZ}(m)$ is not a Clifford gate. For three ancilla qubits, for example, $S'$ is the controlled-controlled-$Z$ gate, which can be implemented with $4$ $T$ gates~\cite{Jones2012d}.  Larger versions of $\text{CZ}(m)$ could be synthesized directly~\cite{Kliuchnikov2013a,Welch2013}, or by using a recursive procedure~\cite{Nielsen2000}.  The circuits presented in~\secref{sec:repeat.results} use at most two ancilla qubits, however, so more complicated amplification circuits are not an issue in our analysis.

\section{Direct search methods}
\label{sec:repeat.search}
Equations~\eqnref{eq:nondeterministic-unitary-form} and~\eqnref{eq:rus-scalar-condition} restrict the kinds of unitaries that can be obtained from RUS circuits.
However, these conditions say little about how to implement the unitary $W$.
Given $W$ explicitly, it is possible to synthesize a corresponding $\{\Clifford,T\}$ circuit with a minimum number of $T$ gates~\cite{Gosset2013a}, at least for small $W$.
However, given a unitary $U$ of the form~\eqnref{eq:rus-unitary}, there are potentially many choices of $W$. The minimum number of $T$ gates required is therefore unclear and is a direction for future research.

In order to better understand the scope and power of RUS circuits, we design an optimized direct search algorithm that checks for RUS circuits up to a given $T$-gate count. Our direct search algorithm is as follows:
\begin{enumerate}
  \item Select the number of ancilla qubits and the number of gates.
  \item Construct a $\{\Clifford,T\}$ circuit and compute the resulting unitary matrix $W$.
  \item Partition the first two columns of $W$ into $2\times 2$ matrices.
  \item Identify and remove matrices that are proportional to Clifford gates.
  \item If the remaining matrices are all proportional to the same unitary matrix, then keep the corresponding circuit.
\end{enumerate}

We restrict the recovery operations ${R_i}$ of the circuits found by our search to the set of single-qubit Cliffords.
This choice is motivated by our use of the $T$ count as a cost function; Clifford gates, and therefore the recovery operations are assigned a cost of zero.

In order to identify relevant search parameters, we initially performed a random search over a wide range of circuit widths (number of qubits) and sizes (number of gates).  Our search was most successful with small numbers of ancilla qubits, large numbers of $T$ gates, and just one or two entangling gates.  We therefore focused on circuits of the form shown in~\figref{fig:two-cz-canonical}.
These circuits contain just a single ancilla qubit and two CZ gates, interleaved with single-qubit gates.

Naively, the number of circuits of the form~\figref{fig:two-cz-canonical} is $O(3^n)$, where $n$ is the maximum number of (non-CZ) gates in the circuit, and the base of three is the size of the set $\{H,S,T\}$.
In order to reduce the complexity of our search, we constructed each of the single-qubit gate sequences using the canonical form proposed by~\cite{Bocharov2012}. A canonical form sequence is the product of three $2\times2$ unitary matrices $g_2Cg_1$ where $g_1,g_2$ belong to the Clifford group, and $C$ is the product of some number of ``syllables'' $TH$ and $SHTH$.  The canonical form yields a unique representation of all single-qubit circuits over $\{H,T\}$; there are $2^{t-3}+4$ canonical circuits of $T$-count at most $t$.  This yields more than a quadratic improvement compared to the naive search, since the number of $T$ gates is roughly one-half the total number of gates.

In general, the canonical form requires conjugation by the full single-qubit Clifford group, which contains $24$ elements.  Given a product of syllables $C$, each of the $24^2=576$ circuits $g_2Cg_1$ are unique.  However, when multiple canonical form circuits are placed in a larger circuit, as in~\figref{fig:two-cz-canonical}, some combinations of Clifford gates can be eliminated. For example, in $g_2Cg_1\ket{0}$, $g_1$ need only be an element of $\{I,X,SH,SHX,HSH,HSHX\}$ since diagonal gates act trivially on $\ket{0}$.  Similar simplifications for~\figref{fig:two-cz-canonical} are shown in~\figref{fig:canonical-simplifications}.  In total, these Clifford simplifications reduce the search space by a factor of more than $10^5$.

\begin{figure}
\centering
\includegraphics{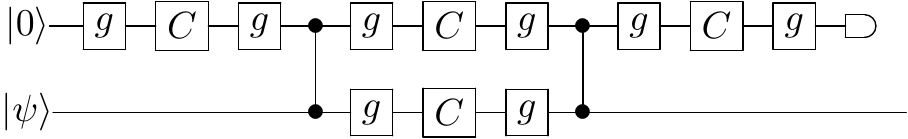}
\caption[General form of circuits in the RUS circuit database.]{\label{fig:two-cz-canonical}
The above circuit illustrates the general form of most of the circuits in our database.  Each of the gates labeled $g$ represents an element of the single-qubit Clifford group.  Each of the gates labeled $C$ represents a single-qubit canonical circuit as defined in~\cite{Bocharov2012}.
}
\vspace{.8cm}
\begin{subfigure}[b]{.49\textwidth}
\centering
\includegraphics[scale=1]{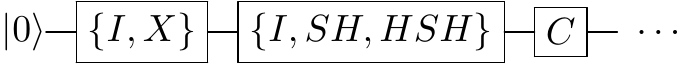}
\caption{}
\end{subfigure}
\begin{subfigure}[b]{.49\textwidth}
\centering
\includegraphics[scale=1]{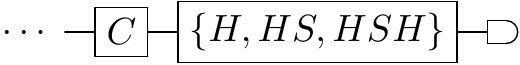}
\caption{}
\end{subfigure}
\begin{subfigure}[b]{\textwidth}
\hfill
\end{subfigure}
\begin{subfigure}[b]{.49\textwidth}
\centering
\includegraphics[scale=1]{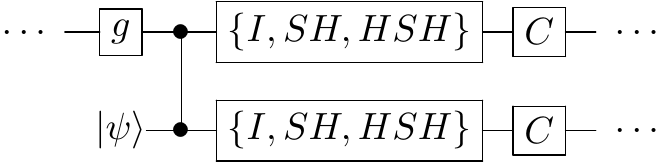}
\caption{}
\end{subfigure}
\begin{subfigure}[b]{.49\textwidth}
\centering
\includegraphics[scale=1]{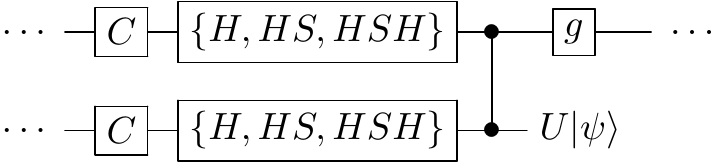}
\caption{}
\end{subfigure}
\caption[Clifford simplifications to~\figref{fig:two-cz-canonical}.]{\label{fig:canonical-simplifications}
Some of the $g$ gates in~\figref{fig:two-cz-canonical} can be restricted to a subset of the single-qubit Clifford group. (a) Circuits that begin with diagonal gates can be eliminated since they add a trivial phase to $\ket 0$. (b) Similarly, diagonal gates have no impact on the $Z$-basis measurement. (c) Pauli gates and $S$ gates can be commuted through the CZ and absorbed into either $\ket\psi$ or the preceding $g$ gate. (d) Analogously, Pauli and $S$ gates occurring before the CZ can be absorbed by the trailing $g$ gate or by the output.
}
\end{figure}

Despite these simplifications, the search time is still exponential in the number of $T$ gates.  To save time, we partitioned the search into thousands of small pieces running in parallel on a large cluster and collected the results in a central database.  We were able to exhaustively search circuits of the form of~\figref{fig:two-cz-canonical} up to a total (raw) $T$ count of $15$. The search took roughly one week running on hundreds of cores.  The results of this search are presented in the next section.

\section{Search results
\label{sec:repeat.results}
}

Our search yielded many circuits that implement the same unitary $U$, but with different $T$-gate counts and success probabilities. To eliminate redundancy we maintained, for a given $U$, a database containing only the circuit with the minimum expected $T$ count. The result is a database containing $2194$ RUS circuits.  Upon success, each circuit exactly implements a unique non-Clifford single-qubit unitary $U$, and otherwise implements a single-qubit Clifford operation.
Database statistics are shown in~\figref{fig:search-results}. For circuits with success probability less than $1/3$, we used amplitude amplification to improve performance (see~\secref{sec:repeat.rus.amplify}). \figref{fig:histogram-expected-tcount} illustrates the impact of amplitude amplification on the expected $T$ count.  Amplification improved the performance of circuits with relatively high expected $T$ count, but did not improve circuits with expected $T$ count of $30$ or less. Note that the database also includes some circuits that were found by preliminary searches not of the form of~\figref{fig:two-cz-canonical}.

\begin{figure}
\centering
\begin{subfigure}[b]{.49\textwidth}
\centering
\includegraphics[height=7.68cm]{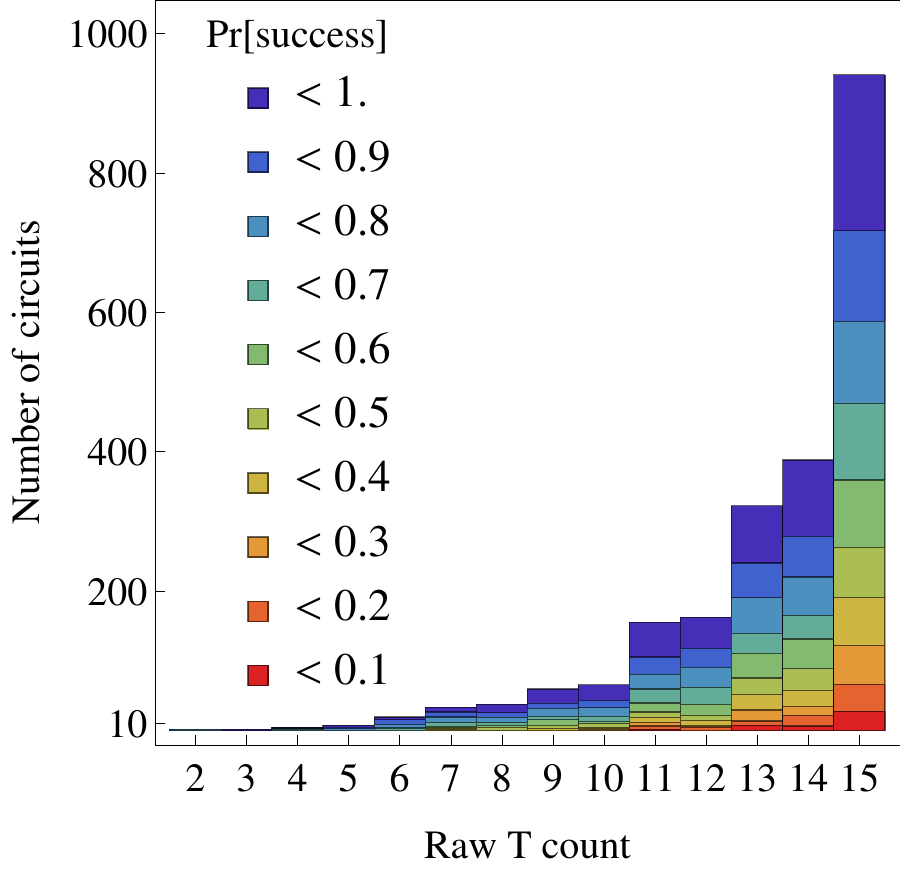}
\caption{\label{fig:histogram-raw-tcount}
}
\end{subfigure}
\hfill
\begin{subfigure}[b]{.49\textwidth}
\centering
\includegraphics[height=7.68cm]{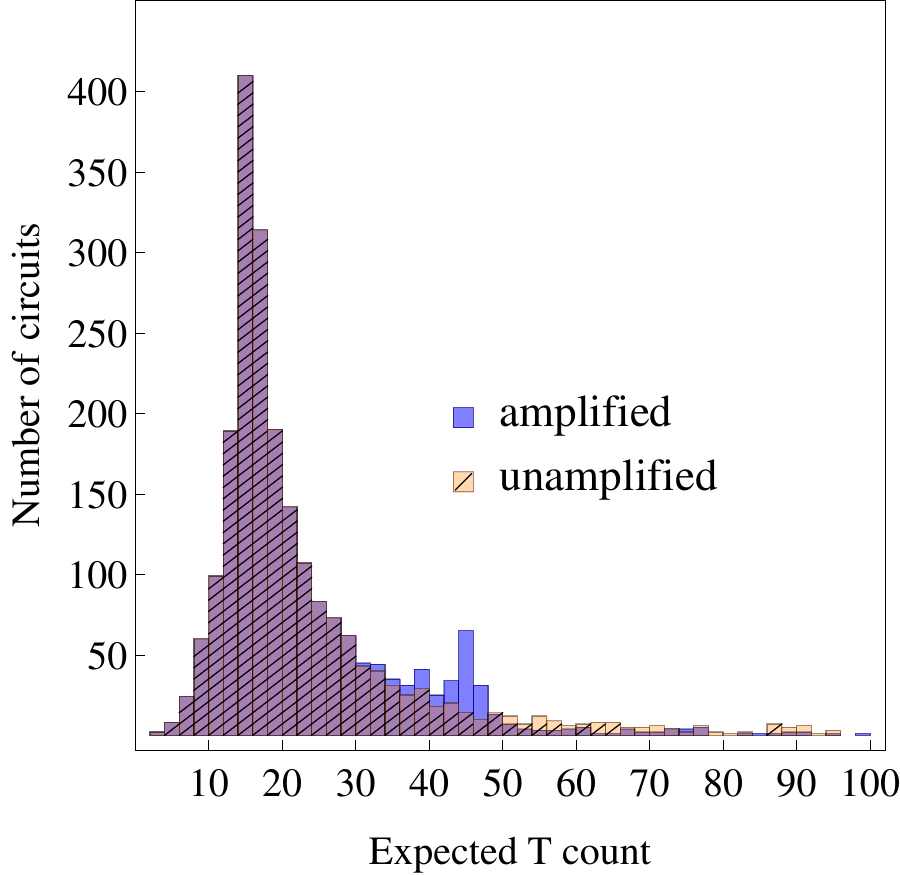}
\caption{\label{fig:histogram-expected-tcount}
}
\end{subfigure}
\caption[Statistics for the database of RUS circuits.]{\label{fig:search-results}
Statistics for the database of repeat-until-success circuits, including all circuits of the form of~\figref{fig:two-cz-canonical} up to a $T$ count of $15$.  (a) The total number of circuits grouped by (raw) $T$ gate count and success probability. (b) The total number of circuits grouped by expected $T$ count, both before amplitude amplification and after amplitude amplification.  The two histograms (before amplification and after amplification) are overlayed, where the darker hatched bars indicate circuits that are unaffected by amplification.  Only circuits with an expected $T$ count of at most $100$ are shown.
}
\end{figure}

The database contains $1659$ axial rotations, i.e., unitaries which, modulo conjugation by Cliffords, are rotations about the $Z$-axis of the Bloch sphere, and $535$ non-axial rotations.  The number of axial rotations is noteworthy since, modulo Clifford conjugation, only one non-trivial single-qubit rotation can be exactly synthesized with $\{\Clifford,T\}$ and without measurement, namely $T$~\cite{Kliuchnikov2012}.  Our results show that \emph{many} axial rotations can be implemented exactly (conditioned on success) when measurement is allowed.

At the same time, the non-axial rotations in our database offer an expected $T$ count that is dramatically better than the $T$ count obtained by approximation algorithms~\cite{Selinger2012a,Kliuchnikov2012b}. For each circuit in the database we computed the number of $T$ gates required to approximate the corresponding unitary to within a distance of $10^{-6}$ using the algorithm of KMM.  \figref{fig:kmm-ratios} shows the ratio of the $T$ count given by KMM vs. the expected $T$ count for the RUS circuit.  Our results show a typical improvement of about a factor of three for axial rotations and a typical improvement of about a factor of about $12$ for non-axial rotations.
The larger improvement for non-axial rotations is expected since the KMM algorithm requires the unitary to be first decomposed into a sequence of three axial rotations.

\begin{figure}
\centering
\includegraphics[scale=1]{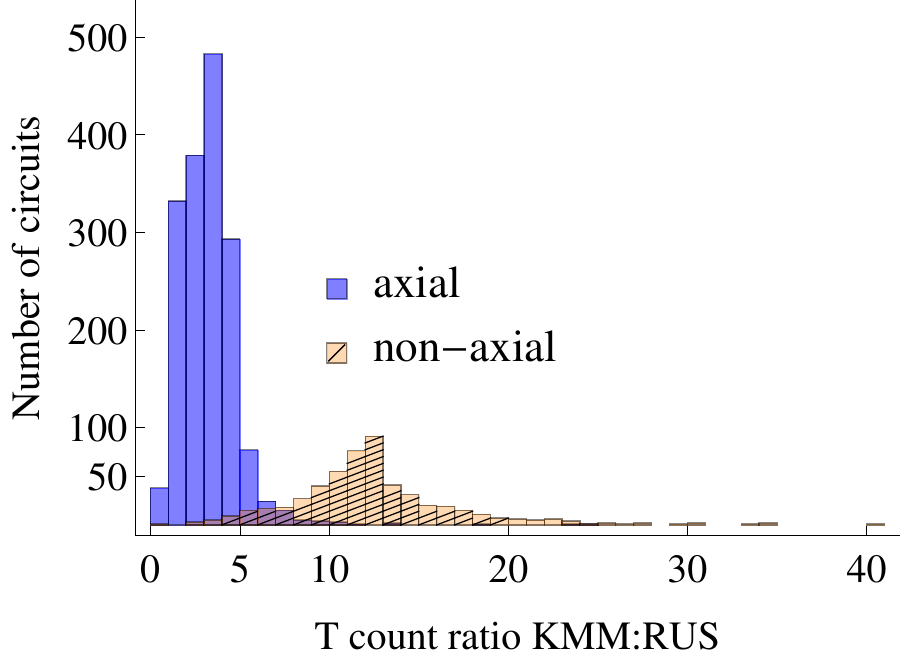}
\caption[Axial and non-axial RUS circuit costs compared to KMM.]{\label{fig:kmm-ratios}
Contents of the RUS database, split into axial and non-axial single-qubit rotations.  For each circuit in the database the number of $T$ gates required to approximate the corresponding ``success'' unitary $U$ to within $10^{-6}$ was calculated using the algorithm of~\cite{Kliuchnikov2012b}.  The $x$-axis represents the ratio of the KMM $T$ count vs. the expected number of $T$ gates for the RUS circuit.
}
\end{figure}

As an example, the RUS circuit shown in~\figref{fig:large-kmm-ratio-circuit} implements the non-axial single-qubit rotation $U = (2X + \sqrt{2}Y + Z)/\sqrt{7}$ with four $T$ gates and a probability of success of $7/8$.  By contrast, approximating $U$ to within $\epsilon = 10^{-6}$ using the KMM algorithm requires a total of $182$ $T$ gates.  Thus~\figref{fig:large-kmm-ratio-circuit} not only implements the intended unitary exactly, but does so at a cost better than $40$ times less than the best approximation methods.

\begin{figure}
\centering
\includegraphics{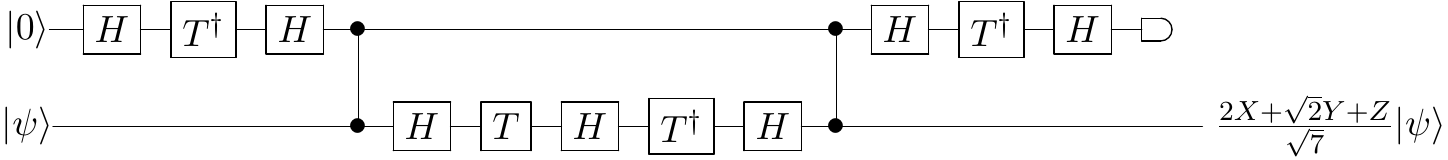}
\caption[Cheap RUS circuit with high KMM $T$ count.]{\label{fig:large-kmm-ratio-circuit}
This RUS circuit implements the unitary $U = (2X + \sqrt{2}Y + Z)/\sqrt{7}$ with probability $7/8$, and otherwise implements $Z$.  Approximation of $U$ without ancillas requires $182$ $T$ gates (roughly $40$ times more) for $\epsilon = 10^{-6}$.
}
\end{figure}

Our database is too large to offer an analysis of each circuit in detail.  Instead, we present some additional noteworthy examples.  The smallest circuit in our database contains two $T$ gates and is shown in~\figref{fig:gosset-unitary}. Upon measuring zero, which occurs with probability $3/4$, the circuit implements $(\Id + \sqrt{2}X)/\sqrt{3}$ and upon measuring one implements $\Id$.
This circuit is notable in that its existence was predicted by Gosset and Nagaj in~\cite{Gosset2013}. They required a $\{\Clifford,T\}$ circuit that exactly implemented $R=(\sqrt{2}\Id-iY)/\sqrt{3}$ with a constant probability of success.  The unitary implemented by~\figref{fig:gosset-unitary} is equivalent to $R$ up to conjugation by Clifford gates.

\begin{figure}
\centering
\includegraphics{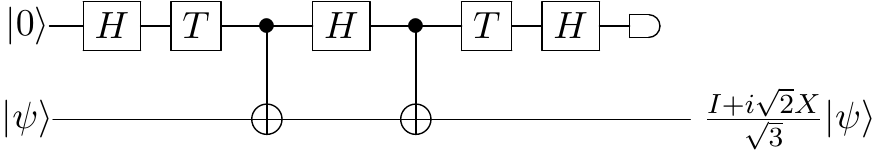}
\caption[The smallest RUS circuit in our database.]{\label{fig:gosset-unitary}
The above circuit is the smallest in our database.  Upon measuring zero, which occurs with probability $3/4$, it implements $(\Id + i\sqrt{2}X)/\sqrt{3}$ on the input state $\ket\psi$.  Upon measuring one, it implements the identity.
}
\end{figure}

As discussed in~\secref{sec:repeat.non-deterministic}, our database contains a circuit that implements $V_3$.  In addition to the circuit shown in~\figref{fig:v3-5.26}, our search also found a circuit that implements $V_3$ with the same number of $T$ gates (four), but just a single ancilla qubit, as shown in~\figref{fig:v3-one-ancilla}.  The expected $T$ count of the single-ancilla circuit is worse than that of~\figref{fig:v3-5.26}, though, since all four of the $T$ gates on the ancilla must be performed ``online''.

\begin{figure}
\centering
\includegraphics{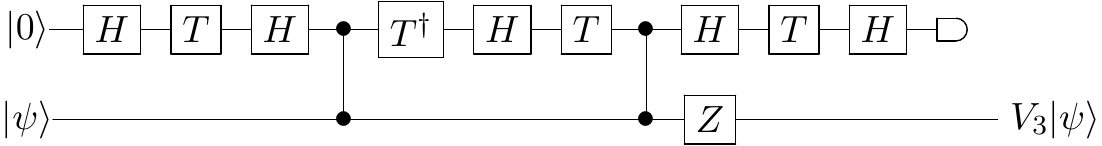}
\caption[A two-qubit RUS circuit for $V_3$.]{Like the circuits in \figref{fig:v3-circuits}, the above circuit implements $V_3$ with probability $5/6$ and identity with probability $1/6$, but with only one ancilla qubit and one measurement.
\label{fig:v3-one-ancilla}
}
\end{figure}

The $V_3$ gate is one of a family of $V$-basis gates for which the normalization factor is $1/\sqrt{5}$.
In addition to single-qubit unitary decomposition based on $V_3$, \cite{Bocharov2013} also offers the possibility of decomposing single-qubit unitaries using $V$-basis gates with normalization factors $1/\sqrt{p}$ where $p$ is a prime.  These ``higher-order'' $V$ gates cover $SU(2)$ more rapidly than $V_3$ and therefore offer potentially more efficient decomposition algorithms.  A number of such $V$-basis gates can be found in our database, including axial versions for $p\in \{13,17,29\}$, as shown in~\figref{fig:high-order-v-circuits}.  The prospect of decomposition algorithms with these circuits is discussed in~\secref{sec:repeat.applications.v3}.

\begin{figure}
\begin{subfigure}[b]{\textwidth}
\includegraphics[width=\textwidth]{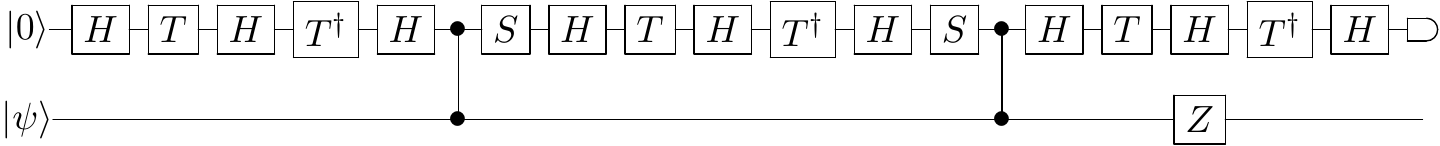}
\caption{$(2Z+3iI)/\sqrt{13}$, $\Pr=13/16$}
\vspace{.2cm}
\end{subfigure}
\begin{subfigure}[b]{\textwidth}
\includegraphics[width=\textwidth]{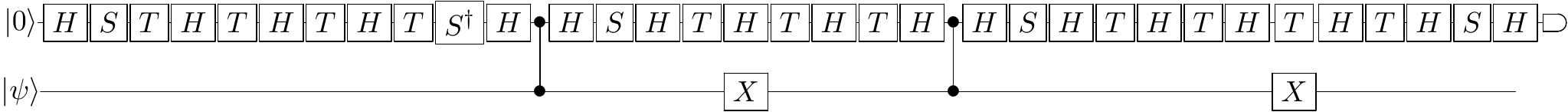}
\caption{$(4I+iZ)/\sqrt{17}$, $\Pr \approx 0.985$}
\vspace{.2cm}
\end{subfigure}
\begin{subfigure}[b]{\textwidth}
\includegraphics[width=\textwidth]{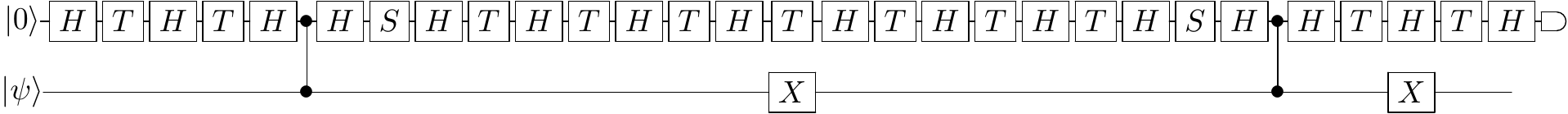}
\caption{$(5I+2iZ)/\sqrt{29}$, $\Pr \approx 0.774$}
\end{subfigure}
\caption[High-order $V$-basis RUS circuits.]{\label{fig:high-order-v-circuits}
RUS circuits for $V$-basis gates with prime normalization factors (a) $p = 13$ (b) $p=17$ and (c) $p = 29$.  The values under each circuit indicate the unitary effected upon success and the success probability, respectively.  Each circuit implements the identity upon failure.
}
\end{figure}

%
%

\section{Decomposition with \texorpdfstring{$V_3$}{V3}
\label{sec:repeat.applications.v3}
}
Our circuit for $V_3$ in~\figref{fig:v3-5.26} can be used directly in the decomposition algorithm of~\cite{Bocharov2013}.  The BGS direct search algorithm can produce an $\epsilon$-approximation of a randomly chosen single-qubit unitary with a number of $V_3$ gates given by $3\log_5(1/\epsilon)$ in most cases.  Multiplying by an expected $T$-cost of $5.26$ using~\figref{fig:v3-5.26} yields an algorithm with an expected $T$ count of
\begin{equation}
\label{eq:bgs-tcount}
15.78\log_5(1/\epsilon)
\enspace .
\end{equation}
This is an improvement over the estimated $T$ count of $3(3.21\log_2(3/\epsilon)-6.93)$ due to \cite{Kliuchnikov2012b} for all $\epsilon < 0.25$.  This scaling is worse than Fowler's optimal exponential-time search by roughly a factor of two.  However, the exponential nature of Fowler's method means that it can provide approximations in reasonable time only up to roughly $\epsilon=10^{-4}$.  The BGS direct search can provide approximations to within $\epsilon=10^{-10}$.  Thus $V_3$ decomposition appears to be the best option when relatively high precision is required.




The database also contains some $V$-basis gates with prime normalization factors larger than $5$. In~\cite{Bocharov2013}, the authors conjecture that the decomposition algorithm for $p=5$ extends to other primes with a $T$-count scaling of
$4\log_p(1/\epsilon)$.
However, whereas $p=5$ requires only the single $V_3$ gate, higher prime values require implementation of multiple $V$ gates. For simplicity, assume that each of the required $V$ gates can be implemented with some number of $T$ gates $T_p$.  Then the decomposition yielded for prime $p$ will be better than that obtained with $V_3$ if
\begin{equation}
\label{eq:high-order-V-ratio}
1 < \frac{5.26}{T_p} \log_5(p)
\enspace .
\end{equation}

Unfortunately, our database contains only a single $V$-basis gate for each of $p = \{13,17,29\}$.  Still we  calculate~\eqnref{eq:high-order-V-ratio} under the optimistic assumption that the remaining $V$ gates can somehow be implemented at the same cost.  Using the circuits in~\figref{fig:high-order-v-circuits} we obtain
\begin{subequations}
\begin{align}
&5.26/7.38 \log_5{13} \approx 1.13,\\
&5.26/11.17 \log_5{17} \approx 0.83,\\
&5.26/14.22 \log_5{17} \approx 0.77
\enspace .
\end{align}
\end{subequations}
Based on these calculations we conclude that, while improved decomposition may be possible using $p=13$, higher values of $p$ are unlikely to yield cost benefits on their own.

On the other hand, given implementations of multiple $V$ gates, there is no reason to limit to a single value of $p$.  One could imagine an algorithm that combined multiple classes of $V$ gates, using largely $V_3$ and using more expensive high-order $V$ gates selectively.  We do not consider such an algorithm directly.  In the next section, however, we study the effect of optimally combining all of the RUS circuits in our database, not just $V$ gates.

\section{Decomposition with the circuit database
\label{sec:repeat.applications.database}
}

It is possible to approximate to any desired accuracy, an arbitrary single-qubit unitary using just Clifford gates and the circuits in our database.  But actually finding the optimal sequence among all possible combinations of circuits is a challenging task.  Ideally, we could construct an efficient decomposition algorithm based on algebraic characterization of the set of available circuits, similar to algorithms for more limited gate sets~\cite{Selinger2012a,Kliuchnikov2012b,Bocharov2013}.  But the current theoretical characterization of RUS circuits is limited and is a direction for future work.
Instead, we elect to expand the database by explicitly constructing all possible sequences of circuits.

Construction of the expanded database is similar in nature to the constructions of~\cite{Fowl04c} and~\cite{Bocharov2012}.  Starting with the set of circuits found by our direct search algorithm, we compute all products of pairs of circuits, keeping those that produce a unitary which is not yet in the database.  Triples of circuits can then be constructed from singles and pairs, and so on.  Composite circuits of arbitrary size can be constructed in this way.  Call a circuit a class-$k$ circuit if it is composed of a $k$-tuple of circuits from the original database.  Then the number $N_k$ of class-$k$ circuits is bounded by
\begin{equation}
N_k \leq N_1 \cdot N_{k-1} \leq N_1^k
\enspace ,
\end{equation}
where $N_1$ is the number of circuits in the original database.

To make database expansion more manageable, we keep only those circuits that yield an expected $T$ count of at most some fixed value $T_0$.  This has the simultaneous effect of discarding poorly performing circuits and reducing the value of $N_k$ so that construction of class-$(k+1)$ circuits is less computationally expensive.  Furthermore, circuits can be partitioned into equivalence classes by Clifford conjugation.  The unitaries of the initial set of circuits are of the form $g_0 U g_1$, where $U$ is the unitary obtained from the RUS circuit, and $g_0, g_1$ are Cliffords.  Thus, the product of $k$ such circuits has the form
\begin{equation}
g_0 U_1 g_1 U_2 g_2\ldots U_k g_{k}
\enspace .
\end{equation}
The set of class-(${k+1}$) circuits can then be constructed by using
\begin{equation}
g_0 U_1 g_1 U_2 g_2\ldots U_k g_{k} (g_{k'} U_{k+1} g_{k+1}) = g_0 U_1 g_1 U_2 g_2\ldots U_k g_{k''} U_{k+1} g_{k+1}
\enspace,
\end{equation}
so that the Clifford $g_k$ is unnecessary.  Furthermore, $g_0$ can always be prepended later, and so we instead express each class-$k$ unitary as
\begin{equation}
\label{eq:clifford-equiv-representative}
U_1 g_1 U_2 g_2\ldots U_k
\enspace .
\end{equation}

To find an equivalence class representative of $U$, we first adjust the global phase by multiplying by $u^*/\sqrt{|u|^2}$, where $u$ is the first non-zero entry in the first row of $U$. Next, we conjugate $U$ by all possible pairs of single-qubit Cliffords. The first element of a lexicographical sort then yields the representative $g_1 U g_2$ for some Cliffords $g_1, g_2$.

Once the database has been constructed, the decomposition algorithm is straightforward.  Given a single-qubit unitary $U$ and $\epsilon \in [0,1]$, select all database entries $V$ such that $D(U,V) \leq \epsilon$, where
\begin{equation}
\label{eq:fowler-distance}
D(U,V) = \sqrt{\frac{2-\abs{\Tr(U^\dagger V)}}{2}}
\end{equation}
is the distance metric defined by~\cite{Fowl04c} and also used by~\cite{Selinger2012a, Kliuchnikov2012b,Bocharov2013,Wiebe2013}.
Then, among the selected entries, find and output the circuit with the lowest expected $T$ count.

\subsection{Decomposition with axial rotations
\label{sec:repeat.applications.database.axial}
}

An arbitrary single-qubit unitary can be decomposed into a sequence of three $Z$-axis rotations and two Hadamard gates~\cite{Nielsen2000}.  Therefore, approximate decomposition of $Z$-axis rotations suffices to approximate any single-qubit unitary.
If we limit to $Z$-axis, i.e, diagonal rotations only, then a few additional simplifications are possible.  In particular, each unitary can be represented by a single real number corresponding to the rotation angle in radians.  The result of a sequence of such rotations is then given by the sum of the angles.  Furthermore, up to conjugation by $\{X,S\}$, all $Z$-axis rotations can be represented by an angle in the range $[0, \pi/4]$.  This allows for construction of a database of $Z$-axis rotations which is much larger than a database of arbitrary (non-axial) unitaries.

Using the database expansion procedure described above, we were able to construct a database containing all combinations of RUS circuits with expected $T$ count at most $30$. The maximum distance (according to~\eqnref{eq:fowler-distance}) between any two neighboring rotations is less than $2.8\times 10^{-6}$, and can be improved to $2\times 10^{-6}$ by selectively filling the largest gaps. So the resulting database permits approximation of any $Z$-axis rotation to within $\epsilon = 10^{-6}$.

To approximate a $Z$-axis rotation by an angle $\theta$, we simply select all of the entries that are within the prescribed distance $\epsilon$, and then choose the one with the smallest expected $T$ count.  This procedure is efficient since the database can be sorted according to rotation angle.  Then the subset of entries that are within $\epsilon$ can be identified by binary search.

In order to assess the performance of this method, we approximate, for various values of $\epsilon$, a sample of $10^5$ randomly generated angles in the range $[0,\pi/4]$.
Results are shown in~\figref{fig:rus-axial-decomposition} and~\tabref{tab:rus-axial-decomposition}.  A fit of the mean expected $T$ count for each $\epsilon$ yields a scaling given by~\eqnref{eq:axial-tcount-scaling}, with a slope roughly $2.4$ times smaller than that reported by~\cite{Kliuchnikov2012b} for the rotation $R_Z(1/10)$.

\begin{table}
    \begin{minipage}[b]{.57\textwidth}
        \includegraphics[width=.99\textwidth]{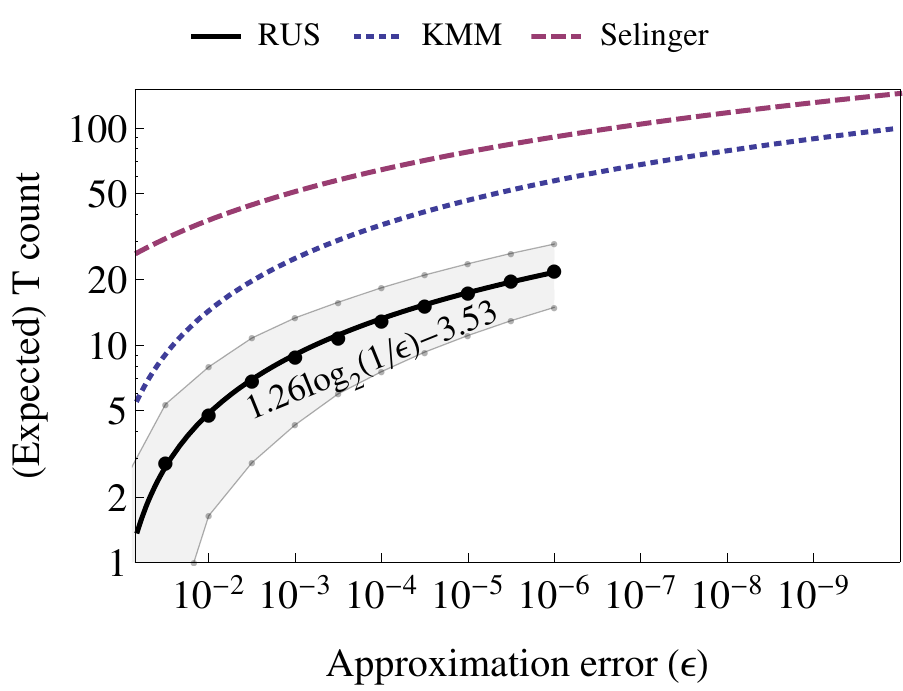}
         \captionof{figure}[$T$ count scaling for approximation of $Z$-axis rotations.]{The above plot shows the expected number of $T$ gates required to approximate a single-qubit $Z$-axis rotation to within a distance $\epsilon$.  The plot was constructed by selecting $10^5$ real numbers in the range $[0,\pi/4]$ uniformly at random.  For each value $\theta$, the RUS circuit with the smallest expected $T$ count within $\epsilon$ of the unitary $R_Z(\theta)$ was selected.  The mean for each value of $\epsilon$ is plotted, yielding a fit-curve of $1.26 \log_2(1/\epsilon) - 3.53$.  The gray region is an estimate of the interval containing the actual number of $T$ gates with probability $95\%$.
Scaling of the Selinger and KMM algorithms are included for reference.
}
         \label{fig:rus-axial-decomposition}
    \end{minipage}
    \hfill
    \begin{minipage}[b]{.4\textwidth}
        \centering
        \begin{tabular}{@{}l@{~~}l@{~~}l@{}}
			$\log_{10}(1/\epsilon)$ & Exp $T$ ($\sigma^2$)& $\pm 95\%$ ($\sigma^2$)\\
			\hline
			$1$  &$1.1$ ($1.1$) &$1.2$ ($3.6$)\\
			$1.5$&$2.9$ ($2.2$) &$2.5$ ($2.9$)\\
			$2$  &$4.8$ ($3.4$) &$3.1$ ($2.9$)\\
			$2.5$&$6.8$ ($3.9$) &$4.0$ ($3.8$)\\
			$3$  &$8.8$ ($4.3$) &$4.5$ ($4.7$)\\
			$3.5$&$10.9$ ($4.6$)&$4.9$ ($5.2$)\\
			$4$  &$12.9$ ($4.8$)&$5.4$ ($5.5$)\\
			$4.5$&$15.1$ ($5.3$)&$5.9$ ($5.7$)\\
			$5$  &$17.4$ ($5.7$)&$6.3$ ($5.8$)\\
			$5.5$&$19.6$ ($6.0$)&$6.7$ ($6.1$)\\
			$6$  &$22.0$ ($6.4$)&$7.1$ ($6.5$)
		\end{tabular}
		\vspace{.5cm}
        \caption[Expected $T$ counts for approximation of random $Z$-axis rotations with RUS circuits.]{Expected $T$ counts for approximation of random $Z$-axis rotations with RUS circuits.  The middle column indicates the expected $T$ count based on a sample of $10^5$ random angles.  The right-hand column indicates the expected $95$ percent confidence interval of the $T$ count for the best RUS circuit, given a random angle $\theta$.  The variance of each expected value is indicated in parenthesis.}
        \label{tab:rus-axial-decomposition}
        \vspace{.5cm}
    \end{minipage}
\end{table}

By way of comparison, Wiebe and Kliuchnikov report a scaling of $1.14\log_2(1/\theta)$ for small angles $\theta$.  However, their RUS circuits are specially designed for small angles.  For arbitrary angles they report an expected $T$ count of about
\begin{equation}
\label{eq:wiebe-kliuchnikov-tcount}
1.14\log_2(10^\gamma) + 8\log_2(10^{-\gamma}/\epsilon)
\enspace,
\end{equation}
where $\theta = a\times 10^{-\gamma}$ for some $a \in (0,1)$ and integer $\gamma > 0$.
Using~\eqnref{eq:wiebe-kliuchnikov-tcount} to calculate costs for the same $10^5$ random angles as above, we obtain a fit function of
\begin{equation}
\label{eq:wiebe-kliuchnikov-coarse-fit}
6\log_2(1/\epsilon) - 2.2
\enspace .
\end{equation}
Formula \eqnref{eq:wiebe-kliuchnikov-coarse-fit} indicates that the efficiency of the circuits in~\cite{Wiebe2013} does not extend to coarse angles.

Equation~\eqnref{eq:axial-tcount-scaling} also implies that RUS $Z$-axis rotations can be used to approximate \emph{arbitrary} single-qubit unitaries with a scaling approaching that of optimal ancilla-free decomposition.  Since an arbitrary unitary can be expressed as a product of three axial rotations, the expected $T$ count for approximating an arbitrary single-qubit unitary is given by
$3.9\log_2(3/\epsilon) - 8.37$.
On the other hand, Fowler calculates an optimal $T$-count of $2.95\log_2(1/\epsilon) + 3.75$ (on average) without using ancillas~\cite{Fowl04c}.

Since our circuits are non-deterministic, we are also concerned with the probability distribution of the number of $T$ gates.  For each composite circuit in the database, we calculate the variance $\sigma^2$ of the $T$ count based on the variance of each individual circuit.  We may then obtain a confidence interval using Chevyshev's inequality
\begin{equation}
\Pr(\abs{\text{Actual}[T] - \text{Exp}[T]} \geq k\sigma) \leq \frac{1}{k^2}
\enspace .
\end{equation}
~\tabref{tab:rus-axial-decomposition} shows the mean of the expected $T$ count for each $\epsilon$.  By also calculating the mean of the variance $\sigma^2$, we obtain an estimate of the corresponding $95\%$ confidence interval, shown by the gray region in~\figref{fig:rus-axial-decomposition}.  That is, for a randomly chosen angle $\theta$, the total number of $T$ gates required to implement $R_Z(\theta)$ is within the given interval around $1.26\log_2(1/\epsilon)-3.53$, with probability $0.95$.

The approximation accuracy permitted by our database is limited by computation time and memory.  To maximize efficiency, we used floating-point rather than symbolic arithmetic. Construction of all RUS circuit combinations up to expected $T$ count of $30$ took roughly $20$ hours and $41$ GB of memory using Mathematica.  \tabref{tab:z-rotation-density} shows the number of circuit combinations and corresponding rotation angle densities for increasing values of the expected $T$ count.  The size and density of the database increases by about an order of magnitude for every five $T$ gates.  We expect that with a more efficient implementation---in C/C++ for example---the worst-case approximation accuracy could be improved.

\begin{table}
\centering
\begin{tabular}{llll}
\hline\hline
Max. exp. &      &               &\\
$T$ count & Size & Mean $D$ & Max $D$ \\
\hline
$5$  & $7$ & $0.04$ & $0.08$\\
$10$ & $134$ & $0.0021$ & $0.0066$\\
$15$ & $2079$ & $0.00013$ & $0.00014$\\
$20$ & $27420$ & $0.00001$ & $0.00017$\\
$25$ & $320736$ & $0.0000009$ & $0.000016$\\
$30$ & $3446708$ & $0.00000008$ & $0.0000028$\\
\hline\hline
\end{tabular}
\caption[Size and density of the $Z$-axis rotation database according to the maximum expected number of $T$ gates.]{\label{tab:z-rotation-density}
Size and density of the $Z$-axis rotation database according to the maximum expected number of $T$ gates.  The mean and the maximum distances between nearest neighbors is given in columns three and four, respectively.
}
\end{table}

\subsection{Decomposition with non-axial rotations
\label{sec:repeat.applications.database.non-axial}
}

Using either the above database, or the methods of KMM or Selinger, decomposition of an arbitrary unitary incurs an additional factor of three in cost because each of the three $Z$-axis rotations are approximated separately.  The increased cost is illustrated in~\figref{fig:kmm-ratios} by the larger ratios for non-axial unitaries.  Indeed~\figref{fig:kmm-ratios} suggests that incorporating both axial and non-axial RUS circuits could yield better approximations than using $Z$-axis rotations alone.

Fowler's method does not incur the additional factor of three for arbitrary unitaries, maintaining a scaling of $2.95\log_2(1/\epsilon) + 3.75$.  But as noted before, RUS circuits offer a larger domain of exactly implementable unitaries than circuits without ancillas.  Just as RUS circuits outperform ancilla-free $Z$-axis decomposition, they could outperform ancilla-free non-axial decomposition.

On the other hand, construction of the database in the non-axial case is significantly more challenging than in the axial case.  Unitaries must be represented by three rotation angles instead of one. Multiplication of circuit combinations is less efficient than for $Z$-axis rotations which only require addition.
Organizing the database for efficient lookup is also more complicated. $Z$-axis rotations can be sorted by rotation angle, but arbitrary unitaries require a more complicated data structure such as a $k$-d tree~\cite{Dawson2005,Amy2013a}.

Despite these limitations, there are some savings to be had.  We may still express each unitary by its Clifford equivalence class representative~\eqnref{eq:clifford-equiv-representative}.  Conjugation by all $576$ pairs of Cliffords is not required however.  First, note that any single-qubit Clifford can be written as a product $g_1g_2$ where $g_1\in G_1$, $g_2 \in G_2$ and
\begin{equation}\begin{split}
G_1 &= \{I,Z,S,S^\dagger\}\\
G_2 &= \{I,H,X,XH,HS,XHS,HSH,XHSH\}
\enspace .
\end{split}\end{equation}
Now, instead of conjugating by the entire Clifford group, we conjugate only by $G_2$.
Then, each resulting unitary can be decomposed into three rotations
\begin{equation}
g_2 U g'_2 = R_Z(\theta_1)R_X(\theta_2)R_Z(\theta_3)
\enspace ,
\end{equation}
where $g_2 \in G_2$ and $g'_2 \in \{g^\dagger ~|~ g\in G_2\}$.
The Cliffords in $G_1$ are diagonal, and only modify $\theta_1$ and $\theta_3$.  Up to conjugation by these remaining Cliffords, we then have
\begin{equation}
R_Z(\theta_1)R_X(\theta_2)R_Z(\theta_3) \equiv R_Z(\theta_1\!\!\!\mod \pi/2)R_X(\theta_2)R_Z(\theta_3\!\!\!\mod \pi/2)
\enspace .
\end{equation}
Choosing $0 \leq \theta_1, \theta_2 < \pi/2$, we can find an equivalence class representative without actually conjugating by $G_1$, saving a factor of $576/64 = 9$.

Even with this optimization, though, our Mathematica implementation is quite slow.  We were able to construct a database of size $45526$ consisting of all RUS circuits with expected $T$ count at most $18$.  We then calculated the best circuit for each of $100$ random single-qubit unitaries for a variety of $\epsilon \geq 8\times 10^{-3}$.  A fit-curve for the data yields a scaling given by~\eqnref{eq:nonaxial-tcount-scaling}.  Based on the slope, the savings is only about $18$ percent over Fowler, but in absolute terms the savings is roughly a factor of two, at least for modest approximation accuracy. See~\figref{fig:rus-nonaxial-decomposition}.

Given the relatively large ratios for non-axial unitaries in~\figref{fig:kmm-ratios}, the scaling given by~\eqnref{eq:nonaxial-tcount-scaling} is perhaps disappointing.  We note, however, that our database contains only a limited subset of possible RUS circuits.  Incorporating a larger set of circuits could improve performance.

\begin{figure}
\centering
\includegraphics{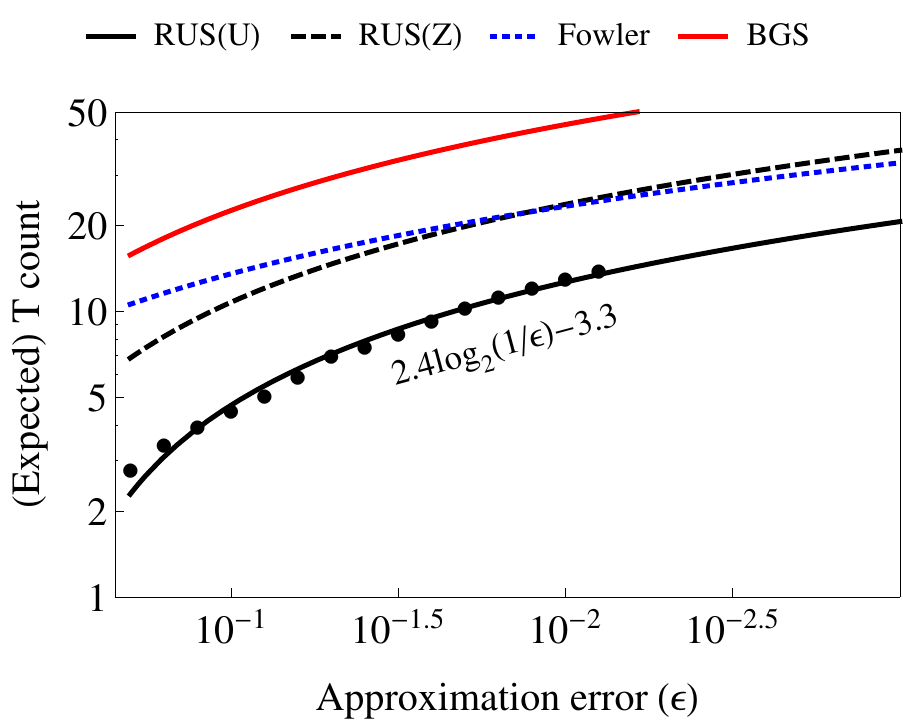}
\caption[$T$ count scaling for approximation of non-axial single-qubit unitaries.]{\label{fig:rus-nonaxial-decomposition}
The above plot shows the expected number of $T$ gates required to approximate an arbitrary single-qubit unitary to within a distance $\epsilon$.  Each point indicates the mean of $100$ random unitaries approximated to the corresponding accuracy with our full database of RUS circuits.  With $95$ percent confidence, the solid black line has slope in the range $[2.29,2.51]$. The dashed black line indicates the estimated cost of first expressing the unitary as a product of axial rotations, and then decomposing each rotation using the $Z$-axis RUS database from~\secref{sec:repeat.applications.database.axial}.  The solid red line indicates the scaling obtained by using the circuit in~\figref{fig:v3-5.26} for $V_3$ decomposition~\cite{Bocharov2013}.  This scaling is worse than the others, but is valid for $\epsilon > 10^{-10}$. The estimated scaling due to Fowler~\cite{Fowl04c} is shown for reference.
}
\end{figure}

\section{Quantum algorithms using coarse angles
\label{sec:repeat.applications.phase-estimation}
}
The accuracy to which the database decomposition methods can reach is limited by the size of the database.  Our $Z$-axis rotation database is capable of approximations to within $10^{-6}$.  If the required accuracy is higher than that, then either the database must be expanded, or an algorithmic decomposition such as Selinger, KMM, or that of~\secref{sec:repeat.applications.v3} must be used.  However, a variety of important quantum algorithms require only relatively coarse accuracy.  Fowler, for example, used numerical analysis to argue that Shor's algorithm requires rotation angles no smaller than $\theta = \pi/64\approx 0.05$ with an with an approximation error of $\epsilon = \pi/256 \approx 0.012$~\cite{Fowl03b}.

Another application of coarse angles is in quantum chemistry.  Consider a Hamiltonian for a molecule expressed in second quantized form, where the objective is to determine the ground state energy of the molecule.\footnote{The second quantized form expresses the quantum system in terms of the number of particles in each possible state.  The specifics are not important for the current discussion, however.}  Wecker et al. \cite{Wecker2013} have developed a technique to scale the coefficients of the non-commuting terms in the Hamiltonian to the maximum coefficient, while maintaining arbitrary accuracy on the estimate of the energy.  This scaling allows one to use large angles within the phase estimation algorithm, where the angles require at most $10^{-6}$ accuracy in practice.
Similarly, Jones et al. show how to optimize quantum chemistry simulations by ignoring terms with small norm~\cite{Jones2012}. They use $Z$-axis rotations with approximation accuracies in the range $\epsilon = 10^{-5}$.

\section{Possible generalizations and limitations}
Traditional methods decompose single-qubit unitaries into deterministic sequences of gates. Wiebe and Kliuchnikov showed that by adding measurements and allowing non-deterministic circuits, decompositions with fewer $T$ gates are possible (in expectation) for very small $Z$-axis rotations~\cite{Wiebe2013}.  Our results extend that conclusion to arbitrary single-qubit unitaries.  By constructing a database of repeat-until-success circuits and then progressively composing those circuits, we can approximate arbitrary single-qubit unitaries to within a distance of $10^{-6}$, which is sufficient for many quantum algorithms.  For a random $Z$-axis rotation, our database yields an approximation which requires as little as one-third as many $T$ gates as~~\cite{Selinger2012a}, \cite{Kliuchnikov2012b} and~\cite{Fowl04c}.  Using all of the circuits in our database (not just the $Z$-axis rotations), the improvement for arbitrary unitaries can be larger, though achieving high approximation accuracy is challenging.

Our results suggest a number of possible areas for improvement and further research.  First, the circuits proposed by~\cite{Wiebe2013} use traditional decomposition algorithms (i.e., Selinger or KMM) to generate the unitaries required for the mantissa $a$ of the angle $a \times 10^{-\gamma}$. Instead, our RUS circuits could be used in order to improve performance.  Indeed, one could consider a hybrid approach that combined all available decomposition methods in order to find the most efficient circuit.
Second, circuits of the form shown in~\figref{fig:two-cz-canonical} make up only a subset of possible RUS circuits.  Expanding the search to include additional types of circuits could improve database density.
Third, the formal theory of RUS circuits is not yet understood.  A better understanding could lead to efficient decomposition algorithms based on RUS circuits and allow for approximation to much smaller values of $\epsilon$.  A tight characterization of RUS circuits would seem to first require a better understanding of $\{\Clifford, T\}$ complexity for multi-qubit unitaries.

One could also consider some relaxations to the RUS circuit framework.  We consider only single-qubit unitaries.  However, multi-qubit unitaries or non-unitary channels may also be of interest.  We also restrict to recovery operations that are Clifford operators.  That restriction could be modified to allow for larger or alternative classes of operations.  On the other hand, fault-tolerance schemes based on stabilizer codes often permit no-cost application of Pauli operators~\cite{Knill2004}.  Thus, it might be sensible to limit recovery operations to only tensor products of Paulis.

Finally, the non-deterministic nature of RUS circuits imposes some additional constraints on the overall architecture of the quantum computer. Many fault-tolerance schemes already use non-deterministic methods such as state distillation to implement certain gates.  But most of the non-determinism occurs ``offline'', without impacting the computational data qubits.  Since RUS circuits are ``online'', the time required to implement a given unitary cannot be determined in advance.  Such asynchronicity could complicate placement and routing techniques (see~\chapref{chap:braidpack}) and classical control logic, thereby increasing resource overhead requirements.  Thorough architecture-specific analysis will be required in order to concretely assess the improvements obtained by using RUS circuits.

\chapter{Global optimization of fault-tolerant quantum circuits
\label{chap:braidpack}
}
This chapter is based on material that appears in~\cite{Paetznick2013}.
\vspace{1cm}

One issue that is generally ignored in fault-tolerant constructions, particularly for concatenated codes and including the one in~\chapref{chap:threshold}, is that realistic proposals for quantum computers impose geometric constraints.  Many proposed architectures involve a two-dimensional lattice of qubits for which interactions are limited to a small set of neighboring locations (see~\secref{sec:fault.architecture}). 
Ultimately, any practical fault-tolerance scheme must account for the particular geometry offered by the quantum computer.

In this chapter we propose two algorithms for efficient placement of fault-tolerant quantum circuits onto a two-dimensional rectangular lattice of qubits.  Our algorithms operate within the context of the surface code and therefore automatically respect nearest-neighbor interaction constraints.  Encoded computation in the surface code is represented by a three-dimensional object in space-time called a \emph{braid}.  Our algorithms are based on the fact that the encoded quantum circuit is invariant under topological transformations of the braid. We may, therefore, smoothly deform the braid according to the dimensions of the quantum computer.

Informally, braid compaction is the problem of topologically deforming a braid so that it fits into a prescribed spacetime volume.
This problem bears a striking resemblance to VLSI placement.  In VLSI placement the goal is to arrange a set of logic elements---represented by rectangles---and wires into the smallest possible area subject to connectivity and distance constraints.  In braid compaction, the task is to pack a set of gates, some of which are represented by boxes, into the smallest possible volume subject to distance and topology constraints. The VLSI placement problem is known to be NP-complete~\cite{Schlag1983}. We conjecture that braid compaction is NP-complete, as well; though attempts at a formal reduction have been unsuccessful.

Correspondingly, our algorithms are constructed from carefully designed heuristics.  The first algorithm is loosely based on physical principles of gravity and tension.  The braid is treated as a physical object that is allowed to slide into a space-time box under its own weight.  Gravity forces direct the braid toward the bottom of the box in order to minimize time, and tension forces keep the braid compact.

Our second algorithm uses the optimization technique of simulated annealing, and is based on a similar algorithm for VLSI placement~\cite{Hsieh}. Each part of the braid is modeled as a cuboid (i.e., a box).  Some cuboids have fixed dimension and some are allowed to expand and contract.  Size, distance and topology constraints are given by sets of linear inequalities on the coordinates of each cuboid. Depending on the shape of the braid, some constraints must be actively enforced, and others need not be enforced. The annealing step consists of swapping constraints in and out of the active set to change the shape and size of the braid.

%


\section{Parallelism: optimizing for time
\label{sec:braidpack.parallelism}
}
The main goal of the optimization techniques in previous chapters has been to reduce the space requirements of fault-tolerant quantum computation.  In many cases, these optimizations also lead to smaller time overhead, as well.  To this point, however, time optimization has been a secondary goal. Furthermore, these techniques focus on small but repeated parts of the circuit. 
They do not address, for example, global parallelism concerns.

In our current context, we are instead given a fixed two-dimensional lattice of qubits, and are asked to minimize the time overhead.  If we can minimize the space requirements without increasing time requirements, then we should.  But space that is available but otherwise unoccupied is wasted.

An important goal, therefore, is to parallelize quantum algorithms.  However, many quantum algorithms are serial in nature, leaving large numbers of qubits idle much of the time.  
Low-gate-count arithmetic quantum circuits, for example, form a staircase structure of linear depth~\cite{Cuccaro2004}.
Parallelization of certain procedures, such as the quantum Fourier transform, is possible when extra qubits are available but is typically done on a case by case basis~\cite{Cleve2000a}.



Some general techniques for pararallelization exist. Typical methods involve local circuit rewriting rules for trading between sequences of gates and additional qubits~\cite{Moore1998,Maslov2008,Saeedi2010}.
Small-depth circuits can be achieved for certain sub-classes of quantum circuits.  Clifford group circuits, for example, can be parallelized to quantum circuits of constant depth followed by log-depth classical post-processing~\cite{Raussendorf2001}.

Others have proposed global circuit optimization procedures that involve a multi-staged transformation to and from the measurement-based quantum computing model~\cite{Broadbent2007,DaSilva2013}. Indeed, there are strong similarities between the measurement-based model and the surface code~\cite{Raussendorf2007}.
However, the template-based and measurement-based optimizations are not fault-tolerant and, except for~\cite{Saeedi2010}, do not explicitly consider geometric constraints imposed by the quantum computer.  It is not clear that the resulting circuits remain compact under such restrictions.

By contrast, since our algorithms operate within the surface code, the output is automatically fault-tolerant and can be easily mapped to a wide variety of two-dimensional nearest-neighbor architectures~\cite{Divincenzo2009,Ghosh2012}.
Furthermore, the rules for topologically transforming surface code braids are conceptually  simple.  There is no need to break up the transformation into multiple stages.
Thus, compared to other proposals, we feel that our approach is easier to understand, implement, and extend.




\section{The surface code
\label{sec:braidpack.surface-code}
}
The optimization algorithms in~\secref{sec:braidpack.force-directed} and~\secref{sec:braidpack.annealing} are based on fault-tolerant quantum circuits for the surface code. The surface code uses a fundamentally different approach to encoding logical quantum gates than we have previously seen for concatenated codes, and this encoding is key to our optimization approach.
In this section, we give a brief pedagogical introduction to the surface code, with a focus on the mapping from a quantum circuit to a surface code braid.  Other details of the surface code are not essential for understanding our compaction algorithms.
For a comprehensive introduction to the surface code we refer the reader to~\cite{Fowler2012d}.

The surface code has a number of desirable properties.
First, it operates on a two-dimensional rectangular lattice of qubits.  All operations can be performed using only one-qubit gates, and two-qubit gates involving only nearest neighbor qubits.  As a result, the required number of qubits scales much more slowly for the surface code than for concatenated codes on $2$-D nearest-neighbor architectures.  At the same time, the surface code tolerates noisier physical gates than many other quantum error correcting codes.  Reliable computation is possible so long as the noise rate is below roughly $0.6$ percent per gate~\cite{Fowler2012d}.

\subsection{Encoded qubits}
The surface code is a CSS code that can be defined on a $2D$ rectangular lattice graph of degree four.  A qubit is placed on each edge of the graph.  The $X$ stabilizer generators correspond to weight-four operators around each vertex---i.e., each operator has support only on the qubits adjacent to the corresponding vertex. The $Z$ stabilizer generators correspond to weight-four operators around each face of the lattice---i.e., each operator has support only on qubits of the edges that define the face. 
Encoded qubits are created by disabling some of the generators, thereby adding new degrees of freedom to the code. 
We choose to define a qubit as a pair of \emph{defects}.  Defects are contiguous regions of the lattice for which the stabilizers are \emph{not} measured. 
There are two types of defects, primal and dual.  Primal defects correspond to operators around vertices of the lattice ($Z$ stabilizer generators), and dual defects correspond to operators around the faces of the lattice ($X$ stabilizer generators).

Error protection is achieved by creating defects of sufficient size, and by keeping defects well separated in space.  For a code distance of $d$, we require that all defects have circumference $d$ and that defects of the same type are separated in $L_\infty$ distance by $d$. For defects of opposite type, the minimum distance depends on the shape of each defect. In all cases a distance of $d/4$ is sufficient (for code distance $d$), though in some cases primal and dual defects may be as close as $d/8$.

\begin{figure}
\centering
\begin{subfigure}[b]{.45\linewidth}
	\centering
	\includegraphics[width=\linewidth]{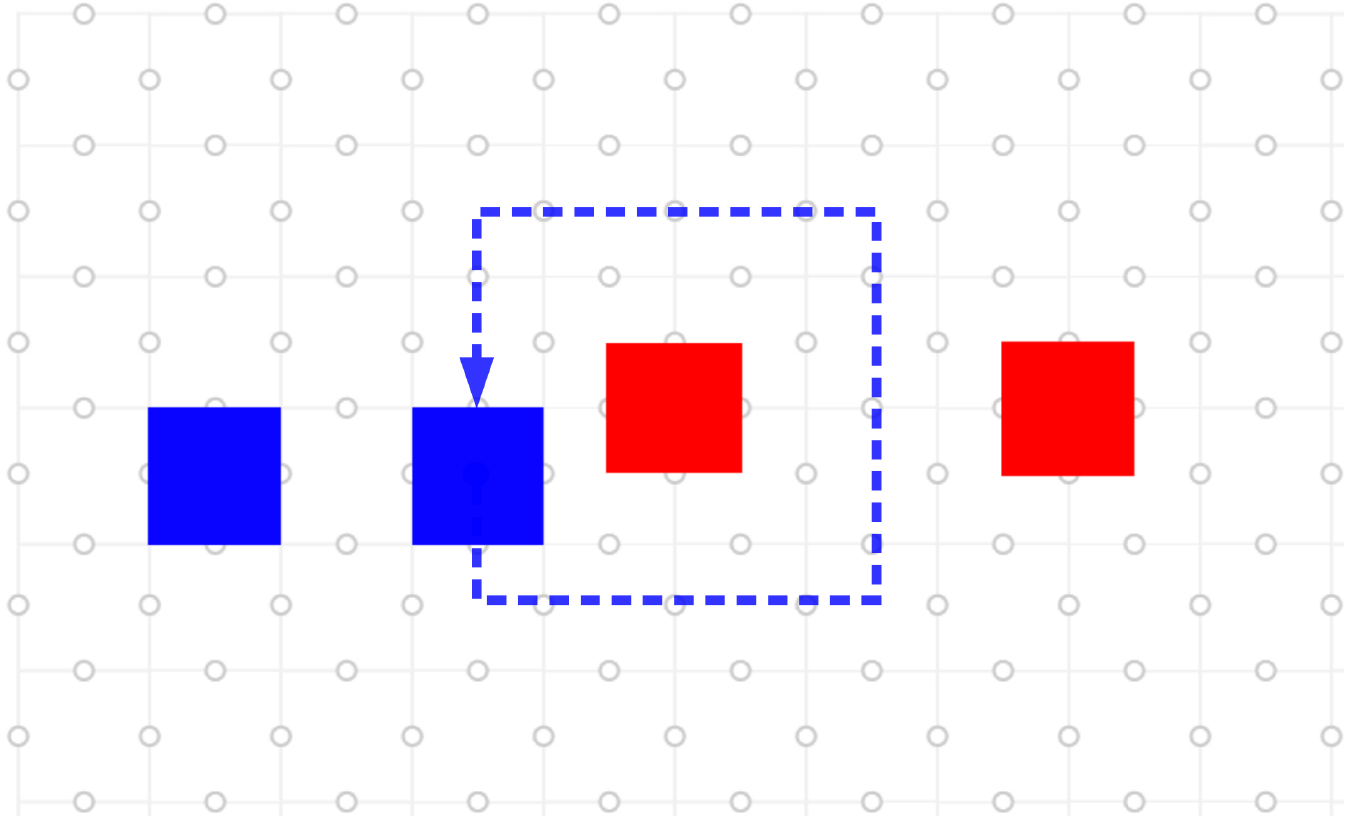}
	\caption{top view}
\end{subfigure}
\begin{subfigure}[b]{.45\linewidth} 
	\centering
	\includegraphics[width=.7\linewidth]{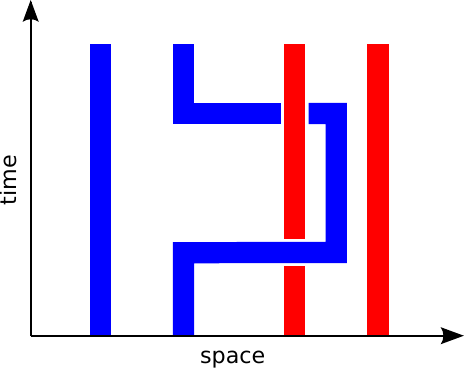}
	\caption{side view}
\end{subfigure}
\caption[Surface code primal-dual CNOT.]{\label{fig:primal-dual-cnot}
(a) Encoded surface code qubits are defined by pairs of defects, either primal (red) or dual (blue). Each defect is composed of multiple physical qubits on the two-dimensional lattice. Operations are performed by moving defects around.  Here, an encoded two-qubit operation is performed by moving one defect from the dual encoded qubit around one of the defects of the primal encoded qubit. (b) The same operation can be written as a space-time diagram in which one of the space axes has been flattened.
}
\end{figure}

\subsection{Encoded operations}

Most encoded operations in the surface code proceed by moving defects around each other.  
Defect movement is achieved by turning off new regions of stabilizer measurements and then turning on other stabilizer measurements.
The movement can be divided into time-slices.
By stacking time-slices on top of each other, the encoded operations are represented by a three-dimensional object in space and time called a \emph{braid}. See~\figref{fig:primal-dual-cnot}.
Transformation of a quantum circuit to a braid can be done systematically by constructing canonical braid elements for each quantum gate.  Preparation of encoded $\ket 0$  is represented by a ``U''-shaped primal defect.  Encoded $Z$-basis measurement is essentially the reverse.  A CNOT operation is performed by a loop of dual defects that wraps around the two associated encoded qubits. See~\tabref{tab:canonical-gate-set}.

\begin{table}
\centering
\def\imagetop#1{\vtop{\null\hbox{#1}}}
\begin{tabular}{c@{\qquad}c@{\qquad}c@{\qquad}c@{\qquad}c}
\includegraphics[width=.05\linewidth]{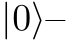}&
\includegraphics[width=.05\linewidth]{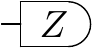}&
\includegraphics[width=.05\linewidth]{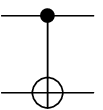}&
\includegraphics[width=.05\linewidth]{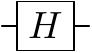}&
\includegraphics[width=.05\linewidth]{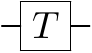}
\\
\imagetop{\includegraphics[width=.04\linewidth]{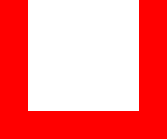}}&
\imagetop{\includegraphics[width=.04\linewidth]{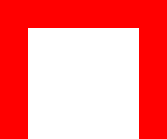}}&
\imagetop{\includegraphics[width=.08\linewidth]{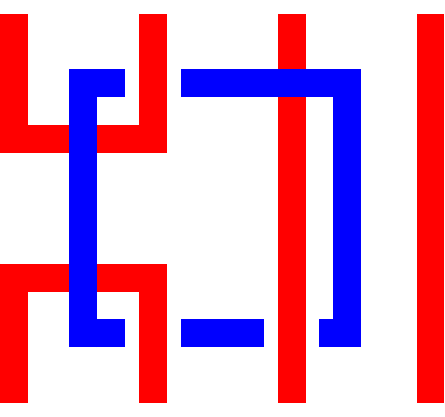}}&
\imagetop{\includegraphics[width=.04\linewidth]{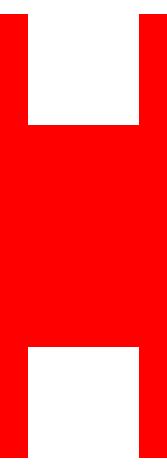}}&
\imagetop{\includegraphics[width=.2\linewidth]{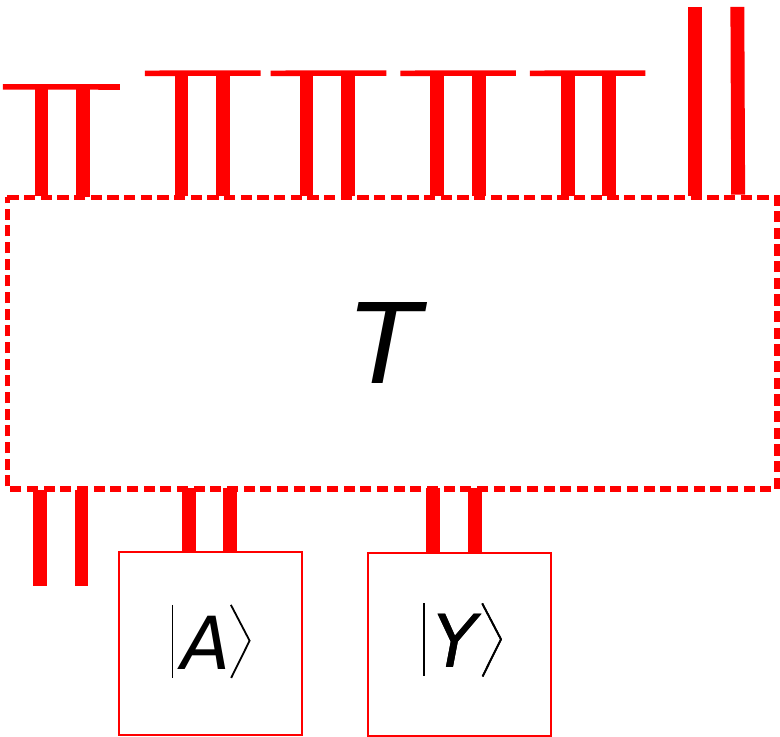}}
\end{tabular}
\caption[Surface code gate set.]{\label{tab:canonical-gate-set}
The surface code gate set (top) and corresponding canonical braids (bottom).  Each braid is a three-dimensional collection of defects.  For visual clarity, the braids have been flattened here into two dimensions.  
}
\end{table}

Braids consisting of these operations are invariant under topological deformation.  That is, a quantum circuit can be represented by a canonical braid, and also by any braid that is topologically equivalent to that canonical braid.  Strings of defects may be smoothly pulled or pushed around in space and time without altering the encoded quantum computation. See~\figref{fig:topological-deformation}.  Note that space and time are symmetric here.  Space can be traded for time and vice versa.

\begin{figure}
\centering
\begin{subfigure}[b]{.15\linewidth}
	\includegraphics[width=.8\linewidth]{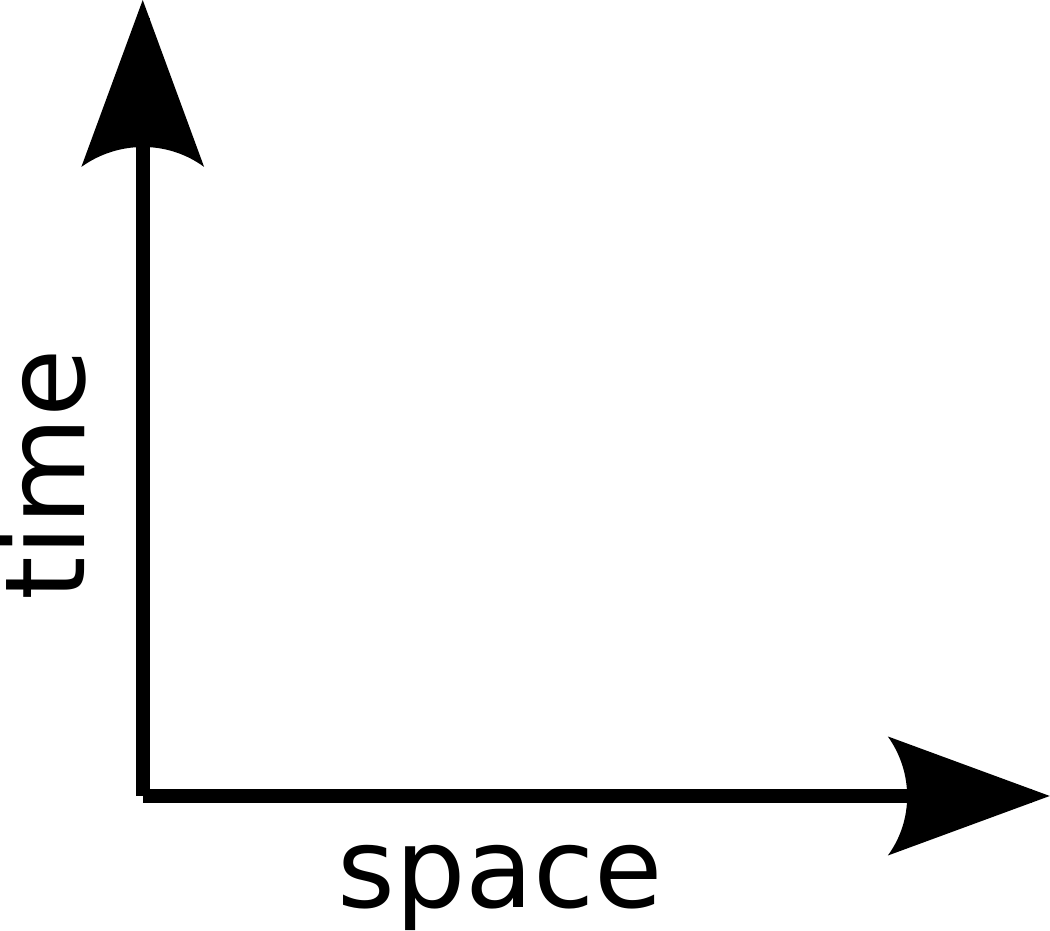}
\end{subfigure}
\begin{subfigure}[b]{.25\linewidth}
	\centering
	\includegraphics[height=2.5cm]{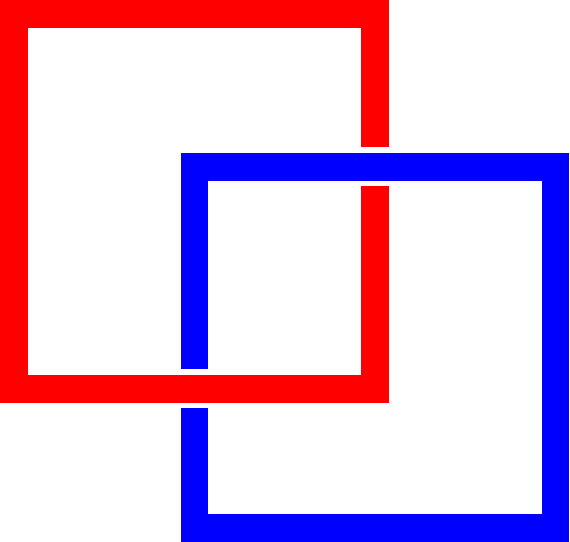}
	\caption{}
\end{subfigure}
\hspace{.1\linewidth}
\begin{subfigure}[b]{.25\linewidth}
	\centering
	\includegraphics[height=2.5cm]{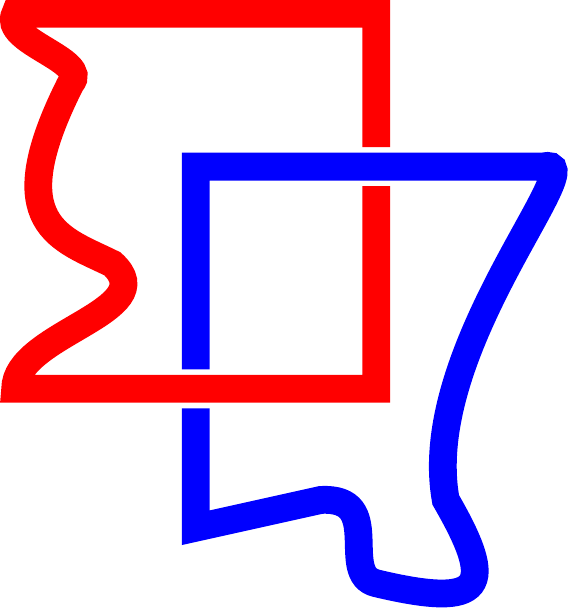}
	\caption{}
\end{subfigure}
\caption[An example of topological deformation.]{\label{fig:topological-deformation} Surface code braids are invariant under topological deformation.  The space-time diagram on the left (a) is topologically equivalent to the diagram on the right (b). Defect strings and loops may be smoothly stretched and contracted without altering the encoded operation.
}
\end{figure}

Not all encoded operations in the surface code can be performed topologically, however.  The encoded Hadamard operation, for example, requires the encoded qubit---i.e., the two corresponding defects---to be placed on a separate lattice, isolated from all other encoded qubits.  This is achieved by first ``cutting out'' part of the lattice around the encoded qubit and then later re-attaching it to the rest of the lattice~\cite{Fowler2012b}. The resulting space-time volume is a cuboid (i.e., a box) of dimension roughly $3d/2 \times 3d/2 \times 5d/2$.  
However, the cuboid contains a variety of boundary types near the surface, thus imposing some restrictions on the configurations of other surrounding defects.
The cuboid can be translated in any direction, or rotated about the time-axis by increments of $\pi/2$, but is otherwise treated as a rigid object.\footnote{In principle, a sideways Hadamard gate is possible and would allow for rotations about the $x$ and $y$ axes.  However, the chosen implementation requires the cuboid to be vertically oriented.}
For concreteness, we adopt the convention that time corresponds to the $z$-axis.

We will also require one other non-topological operation, the encoded $T$-gate. 
This gate cannot be implemented directly in the surface code and is instead constructed by the state distillation protocol described in~\secref{sec:fault.gates.distillation}.  Distillation does not explicitly require the encoded qubit to be cut out of the lattice, as the Hadamard does.  However, both the distillation and gate teleportation involve measurements which are probabilistic. The required circuit changes depending on the measurement outcomes.

Likewise, the corresponding braid cannot be entirely determined ahead of time.
It is possible, however, to shift all of the non-determinism either offline or into logical measurements, which can be performed very efficiently~\cite{Fowler2012g}.  \figref{fig:tgate-time-optimal} shows an alternative circuit that also implements $T$.  In this circuit, an $S$ gate, implemented with the help of a resource state $\ket Y = \frac{1}{\sqrt 2}(\ket 0 + i\ket 1)$, is selectively teleported into the circuit conditioned on the outcome of an $Z$-basis measurement.  Given states $\ket A$ and $\ket Y$, the entire circuit is determined ahead of time except for the measurement bases for selective teleportation.

The circuit in~\figref{fig:tgate-injection-circuit} is smaller than that of~\figref{fig:tgate-time-optimal}.  The latter circuit, however, has the advantage that it can be composed in parallel with any number of additional $T$ gate circuits.  The braid corresponding to the single-qubit unitary $THT$, for example, can be parallelized as shown in~\figref{fig:tgate-time-ordering}. The logical measurements in this braid are implemented differently than previously discussed.  The cap on the defects has been flattened into a wider, but thinner set of defects that looks like a tabletop.  This allows for maximum parallelization of sequences of $T$ gates.

The measurement regions of~\figref{fig:tgate-time-ordering} must obey a relative time ordering.
In particular, the $Z$-basis measurement of the input qubit $\ket\psi$ must be completed before the selective teleportation measurements can be performed.  In addition, the selective teleportation of the \emph{previous} $T$ (if applicable) must be completed before selective teleportation measurements of current $T$ gate can be performed.
In this way, the measurement regions for sequences of $T$ gates form a tree.  Each measurement region must be located strictly later in time than each of its children.

There are a variety of options for preparing the $\ket A$ and $\ket Y$ states required by~\figref{fig:tgate-time-optimal}.  The $\ket A$ state, for example, can be prepared using the $[[15,1,3]]$ state distillation procedure due to Bravyi and Kitaev~\cite{Bravyi2004}, or any of the other proposals presented in~\secref{sec:fault.gates.distillation}.
Efficient surface code braids are known for several of these protocols~\cite{Fowler2012f,Fowler2013}, though we will not discuss the details here.  Rather, for simplicity we abstract the $\ket A$ and $\ket Y$ preparation as rigid cuboids, similar to the Hadamard gate.  This gives us the freedom to define braid compaction algorithms without being coupled to a particular distillation procedure.

\newsavebox{\Smatrix}
\savebox{\Smatrix}{$\left(\begin{smallmatrix}1&0\\0&i\end{smallmatrix}\right)$}

\begin{figure}
\centering
\begin{subfigure}[b]{.4\linewidth}
  \centering
  \includegraphics[width=\linewidth]{images/tgate-injection}
  \caption{\label{fig:tgate-injection-circuit}}
\end{subfigure}
\hfill
\begin{subfigure}[b]{.5\linewidth}
  \centering
  \includegraphics[width=\linewidth]{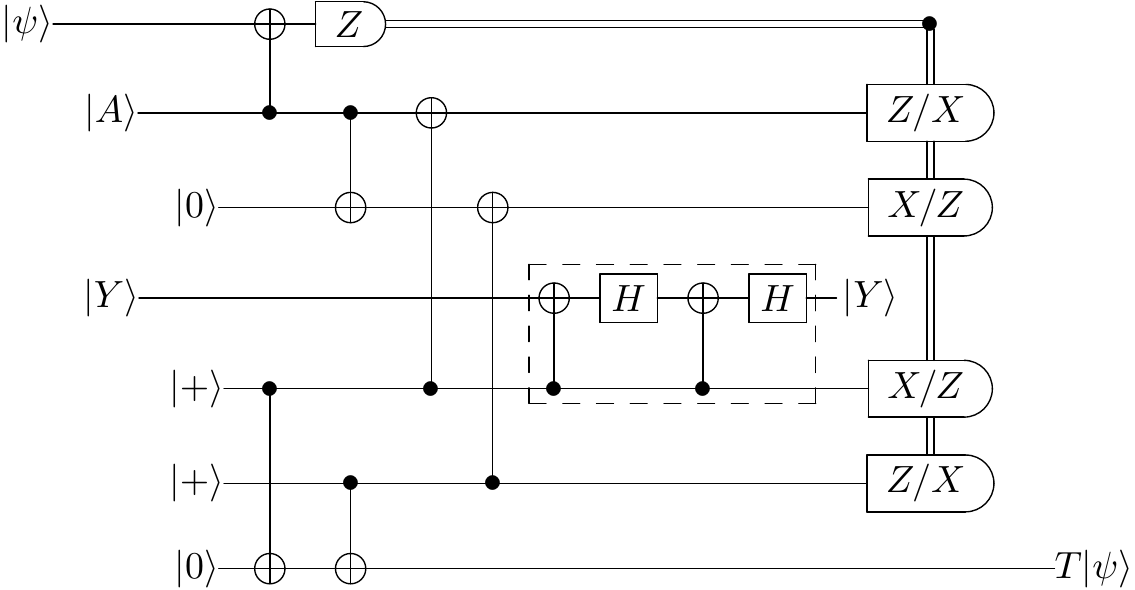}
  \caption{\label{fig:tgate-time-optimal}}
\end{subfigure}
\caption[Surface code $T$ gate.]{
Two circuits that implement the $T$ gate on input state $\ket\psi$. (a) The resource state $\ket A = \ket 0 + e^{i \pi/4}\ket 1$ is constructed by injection and distillation.  Conditioned on the measurement outcome, a corrective 
$S$ rotation may be required, which requires a non-destructive use of an ancilla $\ket Y = \ket 0 + i\ket 1$ state, initially prepared by injection and distillation (not shown).
(b) Instead of performing the conditional $S$ gate directly, selective destination teleportation can be used~\cite{Fowler2012g}.  On one path of the teleportation, the $S$ gate is applied, and on the other path it is not.  The $Z$-basis measurement on $\ket\psi$ determines the bases in which the other four qubits are measured.  The output is $T\ket\psi$, up to Pauli corrections from teleportation.}
\end{figure}

\begin{figure}
\centering
\begin{subfigure}[b]{.25\linewidth}
	\centering
	\includegraphics[width=.9\linewidth]{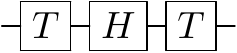}
	\caption{}
\end{subfigure}
\begin{subfigure}[b]{.99\linewidth}
	\centering
	\includegraphics[width=.75\linewidth]{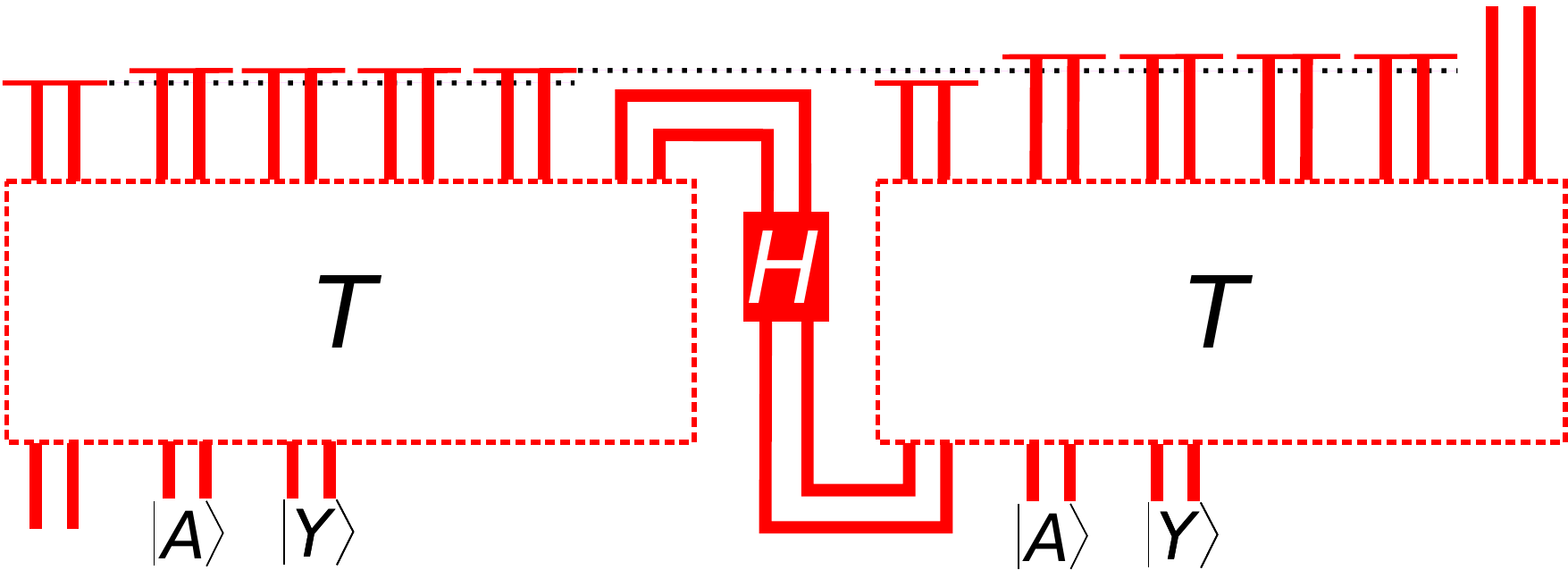}
	\caption{}
\end{subfigure}
\caption[Time ordering of $T$ gates.]{\label{fig:tgate-time-ordering}
(a) A quantum circuit for the single-qubit unitary $THT$ in which time runs left to right. (b) A schematic representation of the corresponding surface code braid in which time runs bottom to top.  For simplicity, the braids corresponding to~\figref{fig:tgate-time-optimal} are shown as boxes, except for the measurements which are shown as thin tabletop structures.  The $T$ and $H$ boxes may be placed in parallel, and the $\ket A$ and $\ket Y$ states may be prepared ahead of time. The first measurement of the $T$ gate must complete before the remaining four selective teleportation measurements can be performed. Selective teleportation measurments between $T$ gates also obey a relative time-ordering as indicated by the black dotted lines.  Any sequence of single-qubit gates from $\{T,H,S\}$ may be parallelized in this way.
}
\end{figure}

The gates listed in~\tabref{tab:canonical-gate-set} are universal for quantum computing.  Thus any quantum circuit can be mapped to a surface code braid by first decomposing it into this gate set, and then sequentially constructing each of the canonical braid elements.

\section{The braid compaction problem
\label{sec:braidpack.compaction}
}

The canonical braid is a fault-tolerant representation of the original circuit, but there is no guarantee that it will fit onto the two-dimensional lattice of qubits that is available.  Indeed, the structure of the canonical braid closely resembles that of the original circuit.  It is essentially a long line of defects that extends out in time.  Even if the braid fits, its two-dimensional shape means that most of the qubits in the quantum computer will be left unused.

Of course, one could try to compile the braid in a different way, so as to use more of the available space.  However, the efficiency of the compilation will depend heavily on the structure of the original circuit.  Qubits that were originally local when arranged linearly might be placed far apart when arranged in two dimensions, thereby increasing the volume required for a CNOT between the two.

We instead choose to optimize the canonical braid by smoothly deforming it.   
So long as the deformations are topological, the optimized braid will be logically equivalent to the original.
Braid compaction, then, is the problem of taking a braid $B$ and converting it into a topologically equivalent braid $B'$ that fits into a smaller bounding volume. 
Alternatively, the problem can be described as follows.
\begin{description}
\item[Braid compaction] Given a braid $B$, code distance $d$, and a rectangular lattice of dimension $A = (x,y)$, find a braid $B'$ that is topologically equivalent to $B$ and such that $B'$ that achieves a minimum code distance of $d$ and is contained in a volume $V = (x,y,z)$ of minimum size.
\end{description}

The $x$ and $y$ dimensions of the bounding volume are are fixed by the size and geometry of the quantum computer. The goal is to efficiently use the provided space in order to minimize computation time.

Abstractly, we can view braid compaction as a process of placing cuboids (Hadamard and $T$ gates) in a large box, subject to certain distance, connectivity and topology constraints.  When viewed in this way, the problem looks strikingly similar to that of VLSI placement~\cite{Schlag1983}.  In the VLSI placement problem, the task is to pack a set of circuit elements---represented by rectangles---on a two-dimensional circuit board of minimum area.  Some of the circuit elements must be connected by wires, and some must be separated from other circuit elements by a minimum distance.

VLSI placement is NP-complete~\cite{Schlag1983}.  Given the close similarities with VLSI placement and with other packing problems, we conjecture that braid compaction is also NP-complete. However, despite their similarities, there are several key differences between VLSI placement and braid compaction.  In particular, the rigid objects in VLSI placement have arbitrary dimension whereas the Hadamard cuboids in the braid are of fixed size.  Thus a naive reduction from VLSI placement to braid compaction is not possible.  Attempts at a more complicated reduction or reduction from related problems such as $3$-Partition and bin packing have so far failed.

\section{A force-directed compaction algorithm
\label{sec:braidpack.force-directed}
}

We now describe our force-directed algorithm, the first of two proposed algorithms for braid compaction. The algorithm employs two complementary ``forces''.  A gravity force acts to pull the braid down toward the bottom of the space-time grid, thereby reducing computation time.  Meanwhile, a tension force prevents the braid from becoming too large and impeding the progress of gravity.

\subsection{Braid representation}
\label{sec:forcing-braid-model}
For our force-directed algorithm, the braid is modeled as a set of plumbing pieces (i.e., pipes) placed on a three-dimensional grid.  For circuits containing preparation, measurement, single-qubit Paulis and CNOT gates, only four types of pipes are required: straight and bent (elbow shaped) pipes, both primal and dual.  See~\figref{fig:pipes}. The braid is then constructed by connecting pipes into interlocking loops.  Junctions can also be supported by merging two or more pipes.

The three-dimensional ($l\times w \times h$) grid is partitioned into $4 \times 4 \times 4$ cells, each of which contain at most one primal pipe and one dual pipe.  Each pipe connects to at least two of the faces of the cell. For each face there is a designated unit cube to which a pipe can connect.  For example, a primal pipe that connects to the $-y$ face must always connect at position $(1,0,2)$ within the cell. Including the empty pipe, there are $2^6=64$ possible primal pipes and $64$ possible dual pipes, for a total of $4096$ possible cell configurations. See~\figref{fig:cells}.

The structure of the cell enforces a minimum distance of a single unit cube between defects of opposite type and a  distance of three unit cubes between distinct defects of the same type.  Thus, if the length of a unit cube is $\delta$, the resulting surface code distance is $d=3\delta$.  A unit cube contains $2\delta$ physical qubits per side (including qubits for stabilizer measurement), so that a single time-slice of a cell contains $64 \delta^2$ qubits. 

\begin{figure}
\centering
\begin{subfigure}[b]{.24\linewidth}
	\centering
	\includegraphics[width=.6\linewidth]{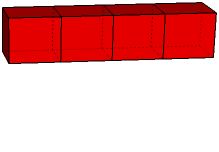}
	\caption{straight primal}
\end{subfigure}
\hfill
\begin{subfigure}[b]{.24\linewidth}
	\centering
	\includegraphics[width=.6\linewidth]{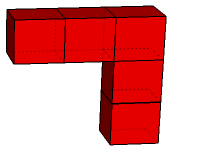}
	\caption{bent primal}
\end{subfigure}
\hfill
\begin{subfigure}[b]{.24\linewidth}
	\centering
	\includegraphics[width=.6\linewidth]{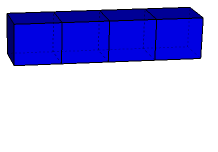}
	\caption{straight dual}
\end{subfigure}
\hfill
\begin{subfigure}[b]{.24\linewidth}
	\centering
	\includegraphics[width=.6\linewidth]{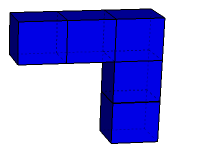}
	\caption{bent dual}
\end{subfigure}
\caption[Primative plumbing pieces.]{\label{fig:pipes}
In the force-directed algorithm, braids are constructed by rotating and connecting the four primative ``plumbing'' pieces shown above.}
\end{figure}

\begin{figure}
\centering
\begin{subfigure}[b]{.1\linewidth}
  \includegraphics[scale=.4]{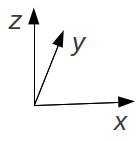}
\end{subfigure}
\begin{subfigure}[b]{.3\linewidth}
  \includegraphics[scale=.2]{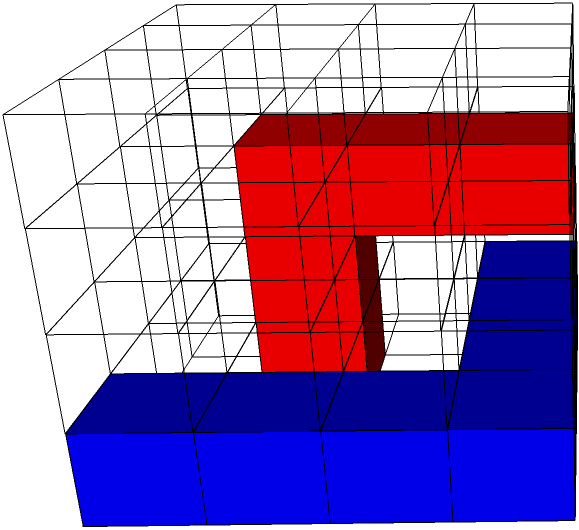}
  \caption{}
\end{subfigure}
\begin{subfigure}[b]{.3\linewidth}
  \includegraphics[scale=.2]{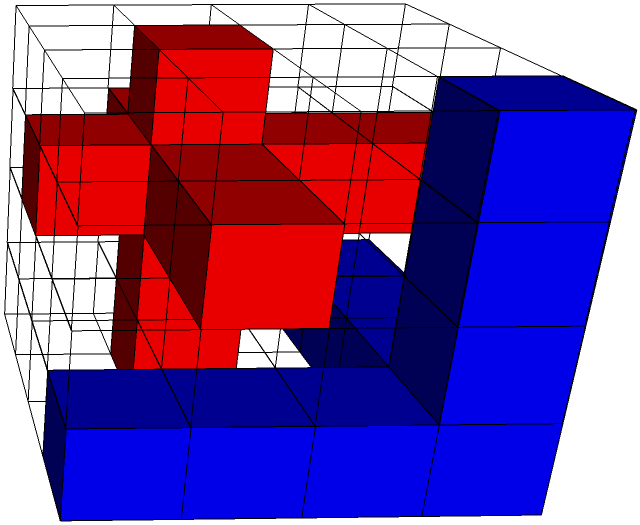}
  \caption{}
\end{subfigure}
\caption[Surface code cells.]{\label{fig:cells}
(a) An example of a $4$ x $4$ x $4$ cell containing both a primal pipe and a dual pipe. The primal pipe connects to the southern face ($-z$) and the eastern face ($+x$).  The dual pipe connects to the the western face ($-x$) and to the far face ($+y$). (b) All possible pipes superimposed on a single cell.  Primal and dual defects are always separated by at least one unit cube.  Neighboring unconnected defects of the same type are always separated by at least three unit cubes.
}
\end{figure}
  
Regions such as Hadamards, and state distillation and tabletop measurement for $T$ gates cannot be represented as a collection of conventional plumbing pieces.
Instead, they are represented by a volume of special purpose pipes which collectively are treated as a contiguous region.  These pipes are much like regular pipes, except that they consume an arbitrary region of the $4\times 4\times 4$ cell.

\subsection{Braid synthesis}
As defined, the braid compaction problem takes an arbitrary braid as input.  Thus our algorithm need not address the synthesis of a quantum circuit into a braid.  Indeed, the force-directed braid model requires only that rigid collections of pipes (i.e., cuboids) be specified along with rotation and time-ordering constraints.

For concreteness, however, we will assume that the initial braid is constructed from a quantum circuit in the canonical way as described in~\secref{sec:braidpack.surface-code}. That is, qubits are represented by pairs of primal defects.  Single-qubit preparation corresponds to two bent pipes connected to form a ``U'' shape and single-qubit measurement is the same, except that the U shape is upside-down.  Hadamards, and $T$ gates are abstracted as cuboids of fixed dimension. CNOT gates are constructed by wrapping a dual loop around corresponding primal loops.

The Hadamard cuboid is three cells wide, four cells deep and four cells high. This cuboid is larger than is strictly necessary to enclose the Hadamard operation. Part of the Hadamard operation involves cutting a boundary around the corresponding logical qubit.  The volume given above provides enough room for the Hadamard operation to take place inside boundary, while enforcing that defects outside of the boundary are a safe distance away. Affixed to opposite faces of the cuboid are pairs of straight pipes representing the input and output logical qubit.

The specifics of the $T$-gate braid depend on the distillation protocol and on the desired gate accuracy, but otherwise follow~\figref{fig:tgate-time-optimal}.  Our compaction algorithm is flexible enough to allow any type of distillation scheme.  For simplicity, we will assume the existence of two cuboid regions for each $T$ gate, one for $\ket A$ and one for $\ket Y$.  Straight pipes representing the output are affixed to the top of each cuboid.

\subsection{Gravity}
The primary ``force'' in the algorithm is a vector field that loosely resembles physical gravity acting on the braid.  With each cell in the grid, we associate two vectors of the form $(a,m)$, specified by an axis $a \in \{x, y, z\}$ and a magnitude $m \in \mathbb Z$.  The first vector represents a force on the primal pipe contained in the cell, and the second vector represents a force on the dual pipe.

There are a number of reasonable ways to initialize and update the gravity field as defects are moved around.  The simplest strategy is to assign a fixed, negative magnitude to each spacetime point and align the vector along the $z$-axis so that the force always points downward.
In order that defects may slide past each other, though, we allow vectors to point sideways along the $x$ and $y$ axes, as well. See~\figref{fig:gravity-example}.  Roughly, gravity vectors are assigned to point to the closest cell from which the defect may then move downward.  For example, a primal pipe occupying cell $(x,y,z)$ may be blocked by a dual pipe in cell $(x,y,z-1)$.  If, however, cells $(x+1,y,z)$ and $(x+1,y,z-1)$ are empty, then the primal gravity vector for cell $(x,y,z)$ is assigned to point along the positive $x$-axis.


\begin{figure}
\centering
\includegraphics[scale=.2]{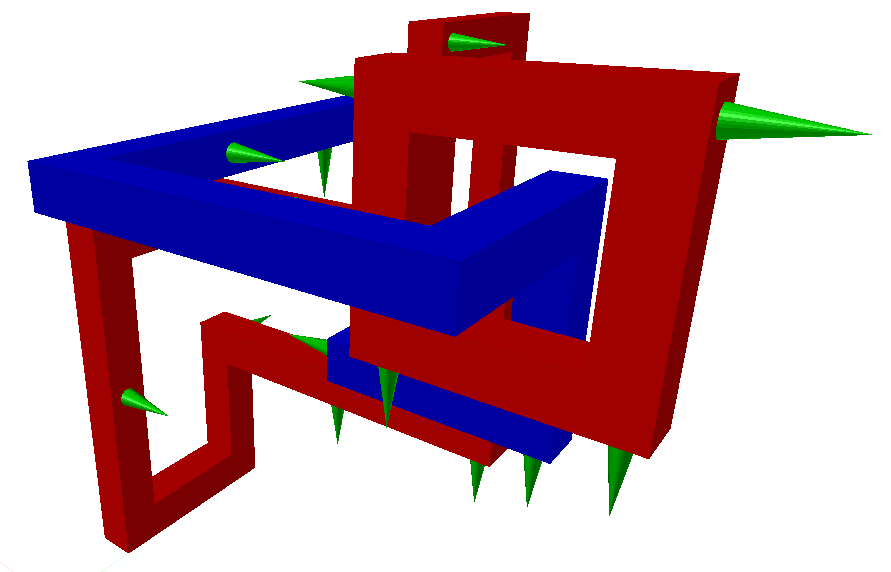}
\caption[An example of gravity forces.]{\label{fig:gravity-example}
Gravity vectors (shown as green cones) generally point downward, but may point in any direction.}
\end{figure}

\subsection{Tension}
The gravity force, while effective at directing pipes toward the bottom of the grid, has the effect of stretching strings and loops, thus increasing the length of the braid.  This happens, for example, when a loop is pulled by gravity in one direction but a small segment of the loop is prevented from moving because other defects are in the way.
When a string or loop becomes very long, it may take up space that could otherwise be occupied by other parts of the braid.  To prevent this behavior we implement a tension force which acts to reduce the length of a string.

Tension is applied to each string of defects independently.  For each pipe in the string, there is a force pulling in the direction of the input face and a force pulling in the direction of the output face.  For example, a pipe connected to the $-x$ and $+z$ faces will experience a force in the $-x$ and $+z$ directions.  The magnitude of the force is proportional to the length of the string, just as for a physical spring.

This choice of tension forces means that the inward and outward forces cancel for straight pipes.  Bent pipes, however, feel an inward force toward the rest of the string.  This inward force tends to decrease the curvature of the string, thereby reducing its length. See~\figref{fig:tension}.

\begin{figure}
\centering
\begin{subfigure}[b]{.45\linewidth}
  \centering
  \includegraphics[scale=.4]{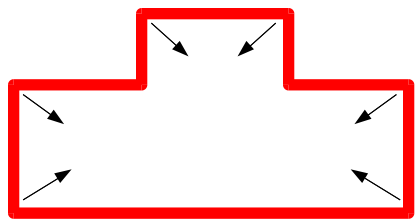}
  \caption{before}
\end{subfigure}
\begin{subfigure}[b]{.2\linewidth}
  \centering
  \includegraphics[scale=.4]{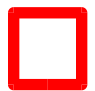}
  \caption{after}
\end{subfigure}
\caption[An example of the tension force.]{\label{fig:tension}
The tension force pulls inward on each of the corners of the loop. The result is a smaller rectangular loop.
}
\end{figure}

Tension forces also act on cuboids. Each of the pipes connected to a cuboid exerts a force that pulls in the direction of the pipe.  Again, the force is proportional to the length of the string to which each connecting pipe belongs.

\subsection{Compaction}
The braid is initially placed above the three-dimensional grid.  Since the braid may be wider than the grid dimensions, a funnel is placed on top of the grid.  This allows the braid to slowly deform according to the geometry of the lattice.
Compaction then proceeds by iterating through each of the cuboids and strings.  Cuboids are translated or rotated as a single rigid object. Other regions of pipes form strings which either connect to cuboids or form loops. Strings are treated as flexible objects in which each pipe can be translated independently.

Associated with each pipe is a velocity vector.  Each pipe in a string is moved by first taking the initial velocity vector and updating it according to the gravity and tension forces at that location.  The pipe is then translated according to the direction and magnitude of the new velocity vector.  During the move, additional pipes may be added or removed in order to maintain connectivity of the string.
  
To translate a cuboid, the total velocity is calculated by summing each of the individual velocity vectors.  Similarly, the gravity and tension forces are calculated by summing the force vectors associated with each pipe.  The cuboid velocity is then updated by dividing the total force by the number of pipes (each pipe is assumed to have the same mass) and then adding to the existing velocity.  Finally, the cuboid is translated according to the direction and magnitude of the velocity vector.

In the case of tabletop measurement translations along the $z$-axis, we must preserve the partial ordering.  When translating a tabletop $m$ along the $-z$-axis we must check the height of the other measurements on which $m$ depends.  Likewise, when translating $m$ along the $+z$-axis, we must check the height of the measurement that depends on $m$.

Cuboid rotations are performed similarly by calculating a rotational velocity according to the moments of each pipe and the torque due to gravity and tension forces.  Rotation about a given axis is performed only if rotation is allowed and the magnitude of the angular velocity is large enough to induce a rotation of $\pi/2$.

Of course, all of the moves performed during compaction must maintain the braid topology.  In particular, we do not allow pipes to intersect nor do we allow a pipe of one type to pass through a pipe of opposite type in order to arrive at its destination. Though we do allow defects of the same type to pass through each other since this does not change the computation. It is possible for the translational or rotational path of a group of pipes to be blocked by other pipes.  When this happens, we say that a collision has occurred.

Collisions are resolved by first calculating the velocity and mass of each of the two objects involved.  In the case that a cuboid collides into multiple pipes, the impeding pipes are treated as a collective object.  The velocities of the two objects are then recalculated according to the equations of motion for a partially inelastic collision.
In this way, distinct parts of the braid are able to communicate with each other.  For example, large objects may shift smaller objects out of the way and linked loops may tug on each other. However, the rules for moving each pipe are still entirely local and relatively simple.

Note that a collision can also occur between time-dependent measurements even when the two cuboids are not located nearby each other.  Such a collision happens if the vertical motion of one of the measurements would cause a violation of relative time-ordering constraints.  The collision is non-local, but can be efficiently identified and resolved by maintaining a dependency tree with the location of measurement.

Since the topology of the braid is preserved at each step, compaction can be terminated at any time. Indeed, there are a number of reasonable termination conditions.  Compaction can be stopped after a fixed number of iterations, or a fixed amount of time.  It can also be stopped when all of the pipes are located below a particular height, or as soon as all of the pipes fit within the dimensions of the lattice.  The termination condition could also be more complicated.  For example, compaction could be halted if the maximum height remains unchanged for a certain fixed number of iterations.

\subsection{Performance and scalability}
\label{sec:performance-scalability}
The complexity of a single compaction iteration scales as the size of the braid. The size of a canonical braid is $O(nm)$ where $n$ is the number of qubits and $m$ is the number of gates in the input circuit.  The number of iterations required to obtain good compaction results depends on the ratio of the lattice area---i.e., the $x$-$y$ plane---to the braid size.  In the case that the lattice area is large compared to the braid size, it seems reasonable to expect the braid to flatten in time proportional to the height of the canonical braid.  If the canonical braid has area large compared to the lattice, then $O(nm)$ iterations may be required in order to funnel then entire braid into the proper bounding box.

For small circuit sizes, a runtime of $O(n^2m^2)$ is reasonable.  But for large circuits consisting of thousands of qubits and possibly millions or billions of gates, we require a better strategy. Indeed, we cannot hope to globally optimize braids for large-scale problem sizes.  Instead, the circuit is partitioned into subcircuits of manageable size and the braid is synthesized and compacted hierarchically.  Just as we treat single-qubit Hadamards as atomic cuboids of fixed size, we may consider sub-braids as fixed size cuboids.

Each sub-braid is represented as a tangle of defects in which some defects are anchored to grid boundaries.  Subject to the anchoring constraints, the sub-braid is compacted as normal.  Once its compacted size is determined, the sub-braid is then treated as a black-box in the larger braid.
If two sub-braids contain measurements that are time-ordered, then the sub-braids must also be time ordered.  But again, this is no different than time ordering restrictions on tabletop measurements in the original model.

We anticipate that the best partitioning strategy will be one that reflects the structure of the input circuit.  Reasonable representations of large input circuits will be hierarchical and it should be possible to mimic this hierarchy for large-scale compaction.
This technique will be particularly useful for highly repetitive circuits.  Repeated sub-circuits can be synthesized and compacted once, and then duplicated in the larger braid. 

\subsection{Implementation and results}
We have implemented the force-directed compaction algorithm in C++ as a tool called Braidpack.  Braidpack takes, as input, a representation of a circuit along with physical space restrictions.  It produces, as output, a compact logically equivalent surface code braid.  

The current implementation is not fully functional, but is capable of synthesizing and compacting arbitrary circuits of CNOT gates, including qubit preparation and measurement. \figref{fig:cnot-compaction} shows the result of compaction on a single CNOT gate.  The tension force first contracts the primal loop on the right-hand-side.  Then gravity flattens the braid.  Compaction in this example was done without implementing collisions between pipes.  As a result, tension is unable to fully contract the dual loop.  With a more complete implementation of the algorithm, we expect the braid to fully flatten and contract.  

\figref{fig:y-distillation-compaction} shows the same prototype implementation of Braidpack for a circuit composed of eleven CNOT gates.  For simplicity of implementation, the qubit preparation and measurements in the canonical braid are arranged in a staircase fashion.  Ignoring the staircases, the canonical braid has a bounding box of size ($3 \times 16 \times 34$), whereas the the compacted braid fits in a bounding box of size ($10 \times 13 \times 6$), a factor of four improvement along the time axis.  Again, we expect improved results with a more complete implementation of Braidpack.

\begin{figure}
\centering
\hfill
\begin{subfigure}[b]{.49\linewidth}
  \centering
  \includegraphics[scale=.25]{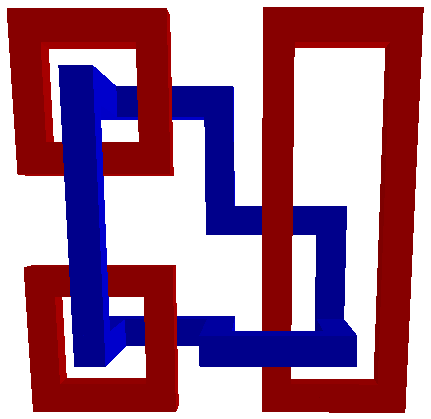}
  \caption{}
\end{subfigure}
\hfill
\begin{subfigure}[b]{.49\linewidth}
  \centering
  \includegraphics[scale=.25]{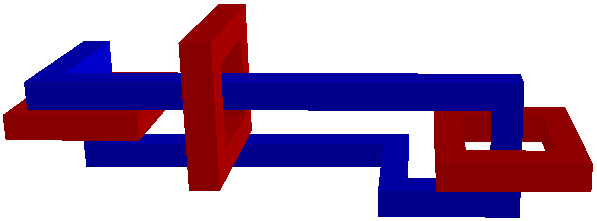}
  \caption{}
\end{subfigure}
\caption[Compaction of a CNOT gate.]{\label{fig:cnot-compaction}
Compaction of a single CNOT using a prototype of the force-directed algorithm. (a) A canonical CNOT braid is initially arranged vertically. (b) After compaction, the braid has been almost completely flattened.}
\end{figure}  

\begin{figure}
\centering
\hfill
\begin{subfigure}[b]{.45\linewidth}
  \centering
  \includegraphics[width=.6\linewidth]{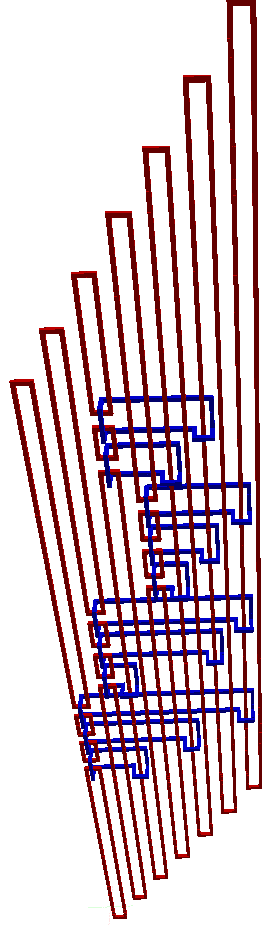}
  \caption{}
\end{subfigure}
\hfill
\begin{subfigure}[b]{.45\linewidth}
  \centering
  \includegraphics[width=.6\linewidth]{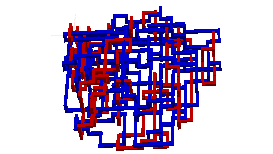}
  \caption{}
\end{subfigure}
\caption[Compaction of eleven CNOT gates.]{\label{fig:y-distillation-compaction}
Compaction of eleven CNOT gates with a prototype implementation of the force-directed algorithm.
The canonical braid (a) is compressed into a smaller but topologically equivalent braid (b).}
\end{figure}  

In order to facilitate debugging, we have developed a braid visualization tool called Braidview.  This tool creates a single file from a braid or sequence of braids.  The file can be viewed in Blender~\cite{Blender}, a third-party open-source $3$-D modeling application.  Braidview is capable of separately rendering primal and dual defects (as in~\figref{fig:y-distillation-compaction}), as well as gravity vectors (see~\figref{fig:gravity-example}). The backbone of Braidview is a set of rendering functions that use the Blender Python-API.  These, and some additional functions, are used by similar visualization tools Nestcheck and Autotune~\cite{Milburn2012,Fowler2012a}.

\section{Compaction by simulated annealing
\label{sec:braidpack.annealing}
}

In this section we describe our second compaction algorithm, which is based on simulated annealing. Simulated annealing is a general optimization technique that has been applied to a wide variety of problems.  The main idea is to explore the solution space by hopping randomly from the current solution to a nearby solution. Hops that result in an improved solution are kept.  In order to avoid local minima, hops that result in a less desirable solution are also kept with some non-zero probability, thus permitting broader exploration of the set of possible solutions.

Our simulated annealing algorithm is based largely on a procedure used for VLSI placement \cite{Hsieh}.  In the VLSI algorithm, circuit elements and wires are represented by rectangles.  Size, distance and connectivity constraints are given by linear inequalities on the coordinates of each rectangle.  Rectangles can be shifted around by swapping linear constraints.  The idea for braids is similar.  Defects are represented by cuboids. Size, distance and topology constraints are given by linear inequalities which can be swapped to perform topological deformation.

\subsection{Definition of the braid}
In the force-directed algorithm, the braid was modeled as a connected configuration of plumbing pieces.  Some collections of pipes formed rigid cuboids.  Other collections of pipes formed flexible strings and loops.
For simulated annealing, we take a different approach.  Each cuboid is represented by a pair of points $(p, p')$ in the three-dimensional lattice.  Point $p$ specifies the point closest to the origin (lower-left corner) and $p'$ specifies the point furthest from the origin (upper-right corner).
Defect strings and loops are also represented by cuboids.  A string of defects is given by a set of overlapping cuboids of arbitrary dimension.  By connecting cuboids it is possible to construct any desired loop or string.

Thus the entire braid is specified by a set of cuboids.
A \emph{layout} of $n$ cuboids is defined by $2n$ three-dimensional integer coordinates.  The $x$, $y$, $z$ dimensions of the layout are defined by the maximum $x$, $y$, and $z$ coordinates respectively.
The layout must satisfy a set of constraints which we group into the following types:
\begin{enumerate}
  \item size constraints,
  \item time-ordering constraints,
  \item minimum distance constraints,
  \item jog constraints,
  \item connectivity constraints and
  \item topological constraints.
\end{enumerate}
Except for the topological constraints, all of the constraints can be directly expressed as sets of linear inequalities.

\subsubsection{Size constraints}
\label{sec:size-contraints}
Minimum dimension constraints of a cuboid are specified by a triple $(\delta_x, \delta_y, \delta_z)$ of non-negative real values and three linear inequalities:
\begin{equation}\begin{split}
\label{eq:cuboid-dimension-constraints}
  x + \delta_x &\leq x' \\
  y + \delta_y &\leq y' \\
  z + \delta_z &\leq z'
  \enspace.
\end{split}
\end{equation}
For string cuboids (those that are not $H$ gates or table-like measurements), $\delta_x = \delta_y = \delta_z = d/4$, where $d$ is the code distance. Hadamard and $T$ gates may be rotated $90$ degrees about the $z$-axis. 
Each gate can take on one of four different rotations $\{0, \pi/2, \pi, -\pi/2\}$.
Rotations $0$ and $\pi$ correspond to the set of constraints given by~\eqnref{eq:cuboid-dimension-constraints}.  
Rotations $\pm \pi/2$ correspond to the same set of constraints in which $\delta_x$ and $\delta_y$ have been exchanged.

We therefore assign one of two sets of constraints to each $H$ and $T$ gate, either the constraints of~\eqnref{eq:cuboid-dimension-constraints} or the permuted version. The corresponding cuboids must satisfy all constraints from at least one of sets.

\subsubsection{Time-ordering constraints}
The non-deterministic implementation of $T$ gates in the surface code induces a partial time-ordering of tabletop measurement regions. As discussed in~\secref{sec:braidpack.surface-code}, this partial ordering requires that, for certain pairs, one tabletop measurement must be located above another tabletop measurement.
The time-ordering constraint for two dependent measurements, $a$, $b$ is given by,
\begin{equation}
  z'_a + 1 \leq z_b
  \enspace .
\label{eq:time-ordering-constraint}
\end{equation}

\subsubsection{Minimum distance constraints}
\label{sec:min-distance-constraints}
Like the size constraints, minimum distances are proportional to $d$, the distance of the code. With a few exceptions (see \secref{sec:jog-nodes} and \secref{sec:connectivity-constraints}), primal defect cuboids must be at least a distance $d$ away from other primal defects. Similarly, dual defect cuboids must be $d$ away from other dual cuboids.  Cuboids of opposite type must be at least $d/4$ apart.

If two cuboids $r_i, r_j$ must be separated by $\delta$, then at least one of the following constraints must be satisfied:
\begin{equation}
\begin{tabular}{ll}
  $x'_i +\delta \leq x_j$ & $x'_j + \delta \leq x_i$ \\
  $y'_i +\delta \leq y_j$ & $y'_j + \delta \leq y_i$ \\
  $z'_i +\delta \leq z_j$ & $z'_j + \delta \leq z_i$ \\
\end{tabular}
\label{eq:min-distance-constraints}
\end{equation}
Each constraint corresponds to a different relative arrangement of the two cuboids.  The $x'_i +\delta \leq x_j$ constraint, for example, enforces that $r_i$ is placed to the left of $r_j$. Whereas $z'_i +\delta \leq z_j$ requires that $r_i$ be placed below $r_j$.

\subsubsection{Jog nodes}
\label{sec:jog-nodes}
A fixed string of defects may be represented by a set of overlapping cuboids each of which has a fixed orientation along one of the three axes.  However, in order to accommodate topological deformation we require a representation that allows for flexible strings of cuboids. This is analogous to a VLSI instance in which an arbitrary number of jogs are allowed in each wire.  To fulfill this requirement, we introduce an object called a \emph{jog node}.

A jog node is a set of six cuboids, each of which has a particular orientation axis.  The first cuboid is oriented along the $+x$ axis, the second along the $+y$ axis, and the third along the $+z$ axis.  The fourth, fifth and sixth cuboids are oriented along the $-x$, $-y$ and $-z$ axes, respectively.  Each cuboid in the jog node is allowed to expand along its corresponding axis.  Adjacent cuboids are required to overlap so that the entire jog node forms a continuous path.  The constraints for a jog node are given by:
\begin{equation}
\begin{tabular}{cccccc}
  $x_1 \leq x_2$, &$y_1 = y_2$, &$z_1 = z_2$, &$x_1' = x_2'$, &$y_1' \leq y_2'$, &$z_1' = z_2'$,\\
  $x_2 = x_3$, &$y_2 \leq y_3$, &$z_2 = z_3$, &$x_2' = x_3'$, &$y_2' = y_3'$, &$z_2' \leq z_3'$,\\
  $x_3 \geq x_4$, &$y_3 = y_4$, &$z_3 \geq z_4$, &$x_3' = x_4'$, &$y_3' = y_4'$, &$z_3' = z_4'$,\\
  $x_4 = x_5$, &$y_4 \geq y_5$, &$z_4 = z_5$, &$x_4' \geq x_5'$, &$y_4' = y_5'$, &$z_4' = z_5'$,\\
  $x_5 = x_6$, &$y_5 = y_6$, &$z_5 \geq z_6$, &$x_5'=x_6'$, &$y_5' \geq y_6'$, &$z_5' = z_6'$.
\end{tabular}
\label{eq:jog-node-constraints}
\end{equation}

It possible to connect two jog nodes at their endpoints.  Given the sixth cuboid $a6$ of jog node $a$ and the first cuboid $b1$ of jog node $b$ the endpoints are connected by requiring
\begin{equation}
  x_{a6} = x_{b1}, y_{a6} = y_{b1}, z_{a6} = z_{b1}
  \enspace .
\end{equation}
In this way, jog nodes can be connected to form an arbitrary defect path of any length.  It is possible to form both loops and open ended strings.

\def\imagetop#1{\vtop{\null\hbox{#1}}}
\begin{figure}
 \centering
\begin{tabular}{cccc}
  &
  \includegraphics[scale=.25]{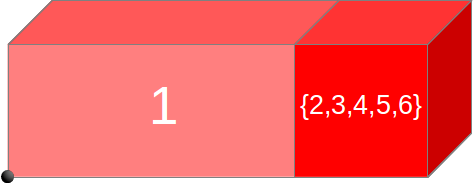} &
  \includegraphics[scale=.25]{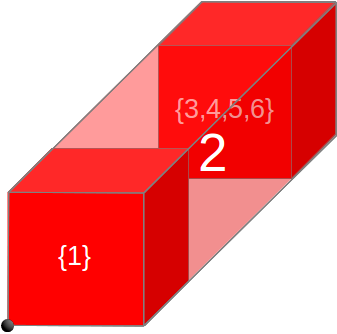} &
  \includegraphics[scale=.25]{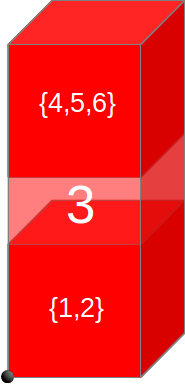}\\
  \raisebox{-2.5cm}{\includegraphics[scale=.5]{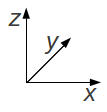}} &
  \imagetop{\includegraphics[scale=.25]{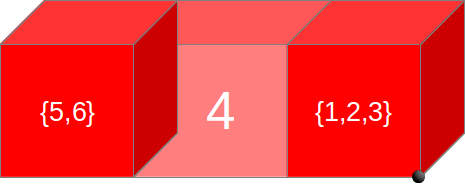}} &
  \imagetop{\includegraphics[scale=.25]{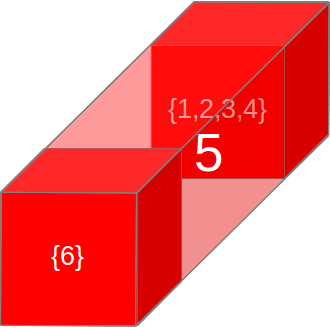}} &
  \imagetop{\includegraphics[scale=.25]{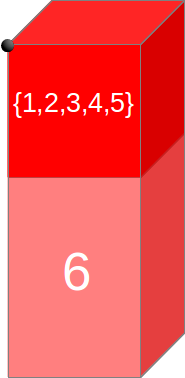}}
\end{tabular}
 \caption[A jog node.]{\label{fig:jog-node-loop}
   A jog node consists of six overlapping cuboids.  Each cuboid is allowed to extend in only one direction, and only one cuboid in the node may be extended.  The six possible jog node configurations are shown above.  The node origin is indicated by a black dot, where visible.
}
\end{figure}

The jog node constraints, as stated, conflict with the minimum distance constraints in~\secref{sec:min-distance-constraints}.  For example, cuboids $a1$ and $a2$ are required by~\eqnref{eq:jog-node-constraints} to be connected, but are required by~\eqnref{eq:min-distance-constraints} to be separated by $\delta$. As a workaround, we first require that each jog node be oriented along at most one axis.
This is accomplished by changing the appropriate inequality constraints to equality constraints.  For example, to force an orientation along the $+x$ axis only, leave the $x_1 \leq x_2$ constraint alone and change all of the other inequalities to equalities.
Then the cuboid corresponding to the $+x$ axis can be of arbitrary size (subject to minimum dimension constraints) and all other cuboids of the node must fit inside of it.
See~\figref{fig:jog-node-loop}.

Next, remove the minimum distance constraints for all jog node cuboids except those that correspond to the orientation axis.  Finally, remove minimum distance constraints between cuboids in adjacent jog nodes. Now, overlapping cuboids within the same jog node or between connected jog nodes are consistent with all other constraints.

A jog node may also be configured to take no orientation.  In this case, all cuboids in the node are constrained to be of minimum size, i.e., $x + \delta_x = x'$, $y + \delta_y = y'$, $z + \delta_z = z'$.  Furthermore, all minimum distance constraints involving the node are removed.  This type of node will either be unconnected to any other node (in which case it can be removed), or it will be contained entirely within another jog node.  In either case, its distance from other objects in the braid is unimportant.

\subsubsection{Connectivity constraints}
\label{sec:connectivity-constraints}
Jog nodes allow for arbitrary defect paths and loops.  We must also define how jog nodes are used to connect to cuboids such as Hadamard gates and state distillation.
Each gate cuboid contains some number of ports to which string defect cuboids are allowed to attach. The locations of the ports are fixed relative to the gate.  However, since gates can be rotated, the constraints that describe the connection must correspond to the permutation of the dimensional constraints from~\secref{sec:size-contraints}.

A port is a rectangle defined by two coordinates on the surface of the gate.  A jog node is connected to a port by requiring that certain coordinates of the jog node cuboid match the coordinates of the port. For example, if the input port $(x, y, z)$, $(x', y', z)$ is located on the top of the gate, then the jog node connection constraint is given by
\begin{equation}
  x_3 = x, y_3 = y, x'_3 = x', y'_3 = y', z_3 = z
  \enspace .
\end{equation}
See~\figref{fig:gate-with-ports}.

\begin{figure}
\centering
\includegraphics[height=4.5cm]{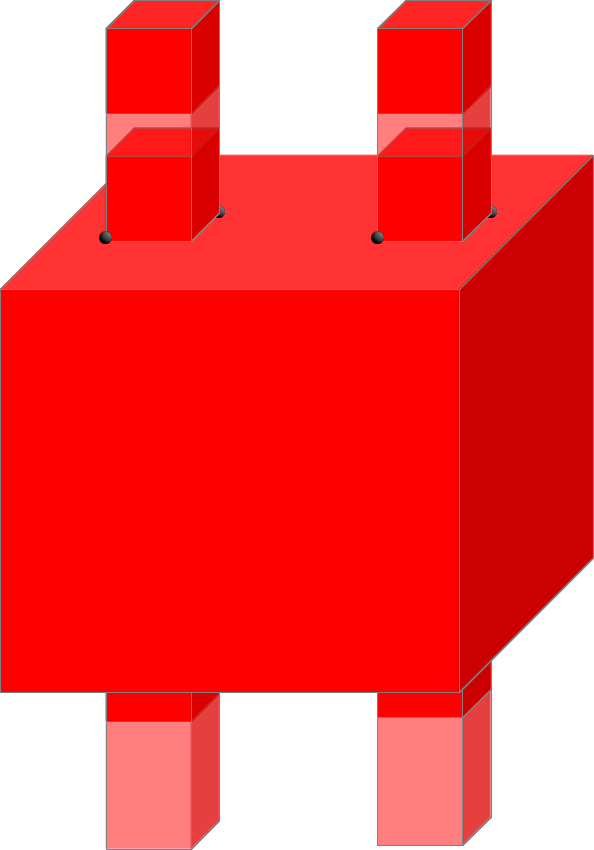}
\caption[A cuboid with ports.]{\label{fig:gate-with-ports}
The above cuboid has four ports defined on its surface, two on top and two on the bottom.  Jog nodes are affixed to the points that define each port.
}
\end{figure}

To maintain consistency, the minimum distance constraints between the gate and the connecting jog node must be eliminated.  Note that it is still possible for two connected gates to achieve a separation of exactly $d$.  In this case, the node connected to the output port of the first gate is also connected to the input port of the second gate, and vice versa.  But since each node is of minimum size, the minimum distance constraints between the node and the gates do not apply (see~\secref{sec:jog-nodes}).

\subsubsection{Topological constraints}
Finally we address the topological constraints.  Informally, these constraints enforce the linking between loops.  Links between loops of the same type are trivial and need not be constrained.  However, certain linking properties between loops of different types must be maintained. In particular, it is sufficient to consider the linking number for each primal-dual loop pair. For each primal-dual pair $(l_p, l_d)$ we have the following constraint
\begin{equation}
 l_{pd} = L_{pd} \mod 2
\end{equation}
where $l_{pd}$ is the linking number of loops $l_p$ and $l_d$ and $L_{pd} \in \{0,1\}$ is an input parameter.

There is a simple linear-time algorithm to compute the linking number between two loops (see, e.g., \cite{Kauffman2001}).  However, in order to efficiently compute the cost function of a layout, we will require that all constraints be linear. See~\secref{sec:annealing-algorithm}.

We impose linear topology constraints separately for loop pairs with odd linking number (i.e., loops that are linked) and loop pairs with even linking number (loops that are not linked).  First consider two loops with odd linking number.  One of the loops consists of primal defects and the other loop consists of dual defects.  To the primal loop, attach a new primal cuboid which we will call a \emph{linking node}.  The linking node has dimension $(5d/4\times d/4\times 5d/4)$. It is attached to the primal loop by connecting one of the jog nodes to the top and connecting an adjacent jog node to the bottom.

The linking node is also attached to jog nodes of the dual loop.  Instead of connecting on the top, the dual jog nodes are connected on either side of the linking node.  The dimensions of the linking node are about twice as large as would otherwise be necessary for maintaining minimum distance constraints between the primal and dual cuboids.  The extra space is used as a placeholder.

As the simulated annealing algorithm proceeds, the linking number between the two loops may change.  The jog nodes that were originally connected to the linking node must remain connected.  But other cuboids from the loops are unrestricted and may cross each other.  At the end of the algorithm the linking node is removed leaving some empty space.  

The primal and dual loops must now be reconnected.  However, we have a choice.  We may either connect the dual loop so that it is inside of the primal loop.  Or we may connect the dual loop so that it is outside of the primal loop.  In effect, the choice of reconnection determines whether the linking number is even or odd.  We may simply choose the configuration that yields an odd linking number.
See~\figref{fig:linking-node}.

\begin{figure}
\centering
\begin{subfigure}[b]{.25\linewidth}
  \includegraphics[width=.7\linewidth]{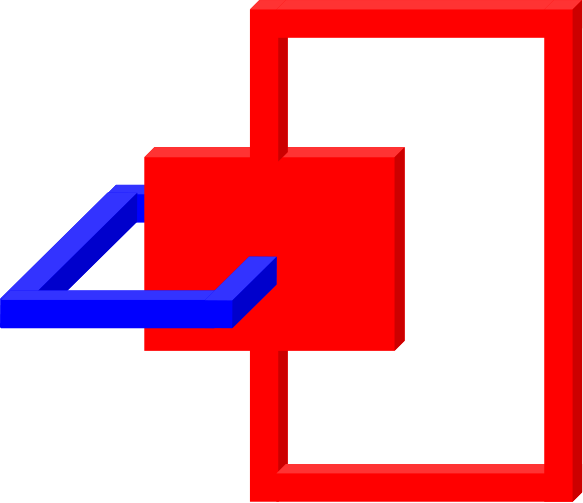}
  \caption{}
\end{subfigure}
\begin{subfigure}[b]{.25\linewidth}
  \includegraphics[width=.7\linewidth]{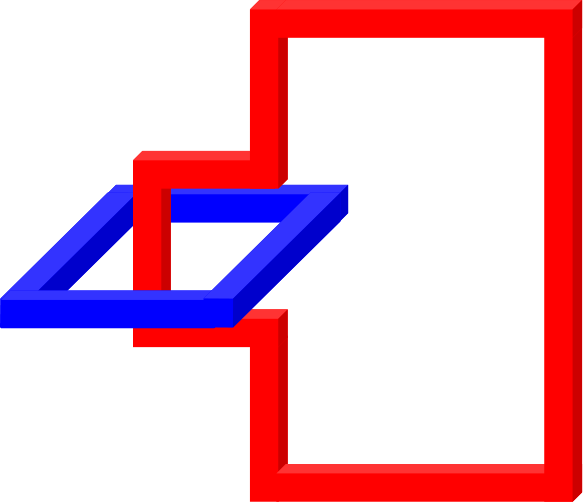}
  \caption{}
\end{subfigure}
\begin{subfigure}[b]{.25\linewidth}
  \includegraphics[width=.7\linewidth]{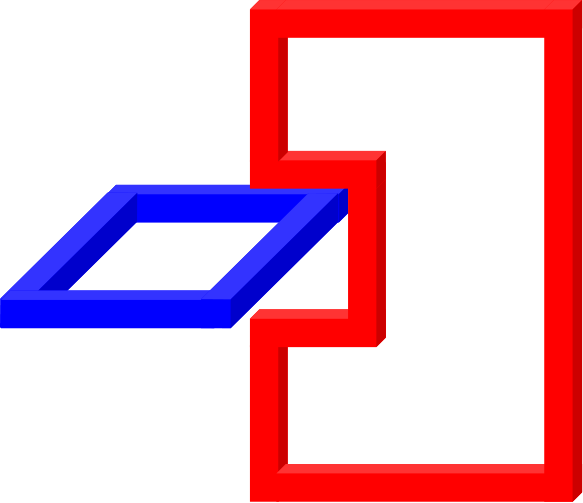}
  \caption{}
\end{subfigure}
\caption[A linking node.]{\label{fig:linking-node}
When a primal and a dual loop are linked in the canonical braid a linking node (a) is inserted and attached to both loops.  Once compaction has completed, the linking node is removed.  The linking number can be left unchanged (b), or toggled (c) if necessary.}
\end{figure}

Now consider a primal loop and a dual loop with even linking number.  Since these loops are unlinked, they may be far apart in spacetime.  Thus the linking node strategy is not practical.  However, we must still ensure that these loops remain unlinked in the output of the algorithm.  We will do this by requiring that the dual loop remain sufficiently far from the primal loop at all times.

Consider the primal loop.  It is composed of a set of connected primal cuboids, some of which are jog nodes and some of which are Hadamard or state distillation cuboids.  Let $x$ be the minimal $x$-coordinate of any cuboid in this set and let $x'$ be the maximal $x$-coordinate of any cuboid in the set.  Similarly define $y$, $y'$, $z$ and $z'$ as the minimal and maximal $y$- and $z$-coordinates.  Then the entire primal loop is contained in a bounding box of dimension $(x'-x, y'-y, z'-z)$.

If all of the cuboids in the dual loop stay outside of the bounding box that encloses the primal loop, then the linking number is guaranteed to be zero. We therefore introduce a new cuboid that encloses the primal loop. For all dual loops which have even linking number with the corresponding primal loop, we add primal-dual minimum distance constraints between the dual cuboids and the enclosing cuboid. See~\figref{fig:enclosing-cuboid}.

\begin{figure}
\centering
\includegraphics[scale=.3]{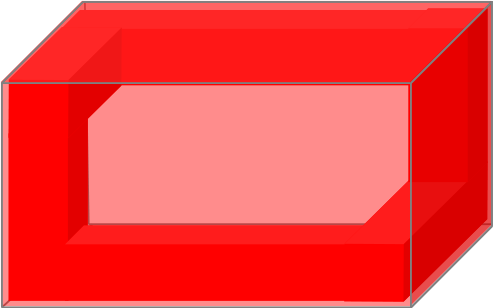}
\caption[An enclosing cuboid around a primal loop.]{\label{fig:enclosing-cuboid}
In order to avoid unwanted links, a cuboid is placed around each primal loop.  Dual loops which do not link with the primal loop are prohibited from entering the enclosing cuboid.
}
\end{figure}

In order to ensure that the new cuboid actually encloses the primal loop, additional variables and constraints are required.  Let $(x,y,z)$ and $(x',y',z')$ be variables describing the enclosing cuboid.  Then for each cuboid $(x_i, y_i, z_i)$, $(x'_i, y'_i, z'_i)$ in the primal loop we require that
\begin{equation}
\begin{tabular}{ll}
  $x \leq x_i$ & $x'_i \leq x'$ \\
  $y \leq y_i$ & $y'_i \leq y'$ \\
  $z \leq z_i$ & $z'_i \leq z'$.
\end{tabular}
\end{equation}

\subsection{The annealing algorithm}
\label{sec:annealing-algorithm}
The algorithm takes a canonical braid as input.  Initialization consists of constructing all of the cuboids and constraint groups. The instance includes a set of coordinates $P$, which can be divided into sets of integers $X$, $Y$ and $Z$ corresponding to the $x$-, $y-$, and $z$-coordinates, respectively.  The constraints can be represented as a set $C$ of integer triples. Some of the constraints, such at time-ordering constraints, must be satisfied for all possible layouts.  Other constraints may be partitioned into subsets for which the layout must satisfy at least one of the constraints in the subset.
Let $C'$ be the set of all constraints that must always be satisfied, let $C''$ be the remaining constraints and let $B$ be the corresponding partition into constraint subsets.  Let $A \subset C$ be the set of ``active'' constraints such that $C' \subset A$ and $A$ contains exactly one constraint from each element of $B$. 

A key element of the algorithm is to calculate the ``cost'' of layout.
There are a number of choices of cost function.  The goal is to construct a braid of small height that fits into an $x$-$y$ area of fixed size.  The first step is to ensure that the braid fits into that area.  We initially set the cost function as the $x$-coordinate of the bounding box.  Once this $x$-coordinate is small enough, we impose a global constraint that the $x$-coordinates of all cuboids must be no greater than that of the bounding box. We then set the cost function as the $y$-coordinate of the bounding box and repeat the procedure.
Finally, once the entire braid fits into the $x$-$y$ area, we minimize over the height.

Start by choosing a set of active constraints such that all constraints in $A$ are satisfied by the canonical braid.
The algorithm then proceeds by repeating the following sequence.
\begin{enumerate}
  \item Randomly select an element $\beta \in B$.
  \item Randomly select a constraint $b \in \beta$ such that $b \not\in A$.
  \item Locate the single constraint in $b' \in A \cap \beta$.  Remove $b'$ from $A$ and replace it with the new constraint $b$.
  \item Compute the new minimum bounding box size and corresponding cost function.
  \item If the new set of active constraints is infeasible, then reject the swap by removing $b$ from $A$ and replacing with $b'$.
  \item If the cost is smaller than before, keep the new constraint.
  \item If the cost is larger than before, then keep the new constraint with probability given by the annealing schedule (see below).
\end{enumerate}

In order for the algorithm to be efficient, we require an efficient way to compute the size of the minimum bounding box.  This can be done using the constraint graph method proposed in~\cite{Liao1983} and used by~\cite{Hsieh}.  First, partition the active constraints into three sets: those that involve only $x$ coordinates, those that involve only $y$ coordinates and those that involve only $z$ coordinates.  Note that there are no constraints that involve coordinates for two different axes.  Consider just the set of $x$-coordinates $X$. We construct a weighted directed graph $G_X = (V_X, E_X)$.  Assign $V_X = X \cup \{x_\emptyset, x_\infty\}$ where $x_\emptyset$ and $x_\infty$ are a boundary coordinates. For each constraint $x_i \leq x_j + d_{ij}$ there is a directed edge from vertex $x_i$ to vertex $x_j$ with weight $d_{ij}$.  The value of each coordinate $x \in X$ is assigned by computing the longest path from $x_\emptyset$ to $x$.  Assuming that the set of constraints can be satisfied, $G_X$ is a acyclic.  Thus the longest path can be computed in linear time by negating the weights and using Dijkstra's algorithm.  Constraint graphs for $y$ and $z$ coordinates are similarly constructed.

The cost of constructing the initial constraint graphs is $O(n^2)$, where $n$ is the number of cuboids.  Once the graphs are constructed, updates can be computed by an online algorithm.  When a constraint swap is performed, only those paths affected by the corresponding vertices need to be recalculated. This algorithm can also detect cycles induced by the new constraint.  If a cycle is detected, then the set of constraints is infeasible and the swap is rejected.

For VLSI placement Hsieh, Leong and Liu use a fixed-ratio temperature schedule in which the temperature is reduced by a constant factor after each time step~\cite{Hsieh}.  This schedule is simple and efficient and can also be used for our algorithm.  Other kinds of schedules could also be used.

\section{Discussion and future work}
The surface code provides a unique opportunity for fault-tolerant quantum circuit optimization by topological deformation.  We have defined the problem of braid compaction subject to geometric constraints, and given two heuristic algorithms.  Our tool Braidpack implements the first of these---the force-directed algorithm--- and small examples indicate that compaction algorithms can lead to significant improvement in spacetime overhead when compared to the canonical braid.

Currently, Braidpack is a proof-of-principle rather than production-ready software tool.  Small-scale results are largely encouraging, but not all of the intended features have been implemented, and larger-scale examples are required to demonstrate the extent of its usefulness.
Implementation of the simulated annealing algorithm is desired in order to compare the performance of the two algorithms.  Indeed, we could also construct a hybrid algorithm which incorporates both techniques.

Our simulated annealing algorithm is inspired from a similar algorithm for VLSI placement.  VLSI also offers a number of other techniques including, genetic algorithms, numerical and partitioning algorithms, and force-directed algorithms that are distinct from our own~(see, e.g., \cite{Shahookar1991}).  
Perhaps some of these additional techniques could be adapted to braid compaction.

Due to similarity with VLSI compaction and other packing problems, we conjecture that braid compaction is NP-complete. A formal reduction has proven elusive, however.  Thus an obvious open problem is to confirm or refute that conjecture. 

Finally, we have focused on topological deformation.  However, other non-topological braid identities exist~\cite{Fowler2012f,Raussendorf2007a}.  Optimization involving these identities has been previously done by hand, but it may be possible to incorporate non-topological techniques into an automated tool such as ours.

\chapter{Concluding thoughts
\label{chap:conclusion}
}
The promise of a reliable large-scale quantum computer is in the exponential speedups that it offers for real-world applications in physics, cryptography and number theory.  Quantum computers do not yet exist in the real world, however.  It is the main objective of the fault-tolerant quantum circuit designer to reduce resource requirements to match the capabilities of current or near-term technology.
In this thesis we have tried to further this objective by optimizing a variety of aspects of fault-tolerance including: encoded gates, error correction, threshold calculations, unitary decomposition and global parallelization.

We can extract a number of themes from these optimizations.  One theme is the circumvention of optimality or no-go theorems by making novel use of the available machinery or by removing unnecessary constraints. \thmref{thm:transversal-hadamard} shows that the Eastin-Knill theorem against transversal universality can be side-stepped at essentially no cost.  Overlap-based stabilizer state preparation break the convention of treating stabilizer generators independently in exchange for reduced circuit size.  A tighter threshold can be obtained by eliminating the need for an adversarial noise model.  Repeat-until-success circuits achieve better-than-optimal scaling by incorporating quantum measurements.

The use of gate teleportation and ancillary qubits has been a theme in quantum fault-tolerance from the earliest protocols due to Shor~\cite{Shor1997}, and we have continued the trend here.  The utility of ancillas is particularly evident in the circuits presented in~\chapref{chap:repeat}. By using ancillas and measurement, suddenly a much wider class of unitary operations can be implemented without expanding the gate set beyond $\{\Clifford,T\}$.  Ancillas and teleportation are used heavily in state distillation and we saw two new distillation protocols, one in~\chapref{chap:transversal} and one in~\chapref{chap:ancilla}.

Another strong theme is the development and use of software tools to aid in circuit design and discovery.  Indeed, except for~\chapref{chap:transversal}, all of the new results presented in this thesis made use of custom computer software in some form or another.  Undoubtedly, software tools will continue to be an important part of fault-tolerance optimizations going forward.  One can imagine a kind of software ``toolchain'' for compiling and optimizing quantum algorithms, taking a high level description of an algorithm and progressively decomposing it into machine-level instructions.

The new results and ideas in this thesis introduce many new questions, and leave room for improvement in several areas.  Given their universal and transversal power, triorthogonal codes appear to have a special place in the theory of fault-tolerant quantum computation.  However, beyond numerical study of the $[[15,1,3]]$ code~\cite{Cross2009}, and codes developed by Bravyi and Haah~\cite{Bravyi2012a}, very little is known about these codes.  A worthy research pursuit is to search for new and better triorthogonal codes.

Similarly, despite the large database compiled in~\chapref{chap:repeat}, little is known about the power of repeat-until-success circuits, and non-deterministic circuits in general.  In particular, what are the cost lower bounds for unitary decomposition when ancilla qubits and non-determinism is allowed?  We considered only a small fraction of possible circuits and it is possible that other kinds of circuits could yield even better performance.

Perhaps the biggest opportunity for improvement and further research is global optimization algorithms such as those presented in~\chapref{chap:braidpack}. The Braidpack tool presented in this thesis represents only a proof-of-concept.  Substantial and quantitative results will require a larger-scale effort in the development of these kinds of tools.  This study of this area has only just begun, and there is much that can be learned from existing classical techniques such as VLSI.

Beyond the ideas considered in this thesis, the field of fault-tolerant quantum computation has much room for exploration.  One particularly appealing option is the use of codes with very high encoding rates.  Very recently, Gottesman has shown that fault-tolerant quantum computation with \emph{constant} overhead may be possible by using certain low-density parity-check (LDPC) quantum codes~\cite{Gottesman2013a}.  However, realization of his claims presume efficient classical decoding algorithms for these codes, algorithms which are are not currently known.

Another exciting, but speculative pursuit is the use of non-abelian anyons for topological quantum computation.  Because of their inherently robust properties, some have likened anyons to ``quantum transistors'' (thereby implying a comparison between quantum circuits and vacuum tubes).  The experimental viability of this method remains to be seen.

At the current time, the surface code seems to be the leader among realistic schemes for fault-tolerant quantum computation.  Its high threshold and $2$D nearest-neighbor properties make it a very appealing option for a variety of proposed quantum computing architectures.  Indeed it has been the subject of intense study in recent years.  We addressed global topological optimization for the surface code in this thesis, but others have also considered optimizations, particularly for state distillation~\cite{Fowler2012f,Fowler2013,Jones2013b}.

The motivation for resource optimization is a strong one, and more improvements are necessary before requirements become low-enough for implementation of quantum algorithms.  To quote Gottesman~\cite{Gottesman2013a}, ``the main thing is not to give up''.  We can be pleased with the optimizations that we discover, but we should not be satisfied until fault-tolerant quantum computing is a reality.

\chapter*{Appendices}
\addcontentsline{toc}{chapter}{Appendices}
\appendix

\chapter{Proof of \texorpdfstring{\claimref{clm:P2-multiplicative}}{Claim~\ref{clm:P2-multiplicative}}
\label{app:counting.scaling-proof}
}

We now prove \claimref{clm:P2-multiplicative}, that the level-two malignant event upper bounds decrease with $\gamma$ according to the distance of the code.  The claim is restated here for convenience.

\begin{reclaim}
  For $0 \leq \epsilon \leq 1$, $\mathcal{P}^{(2)}_E (\epsilon \Gamma^{(1)}(\gamma)) \leq \epsilon^{t+1} \mathcal{P}^{(2)}_E (\Gamma^{(1)}(\gamma))$, where $t=\lfloor (d-1)/2 \rfloor$ and $d$ is the minimum distance of the (unconcatenated) code.
\end{reclaim}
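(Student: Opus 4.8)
The plan is to reduce the claim to an elementary monotonicity statement about polynomials with non-negative coefficients whose minimum degree is bounded below, and then to check that the polynomials $\mathcal{P}^{(2)}_E$ produced by the modified counting procedure have exactly that shape. First I would recall from \secref{sec:threshold.computer} and the construction in Appendix~D of~\cite{Paetznick2011} that each level-two malignant-event bound is assembled from two kinds of contributions: ``good-event'' terms, each a weighted count of order-$k$ faults multiplied by a factor proportional to $(\Gamma^{(1)})^{k}$ with non-negative integer weights $\alpha_E$; and ``bad-event'' terms bounded as in \eqnref{eq:prBad-bound}, which after upper bounding the $(1-\Gamma^{(1)})$ prefactors by one become sums $\sum_{k>\kGood}\binom{n}{k}(\Gamma^{(1)})^{k}$. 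Crucially, the error weights $\alpha_E$ that the counting procedure consumes are invariant under the rescaling $\Gamma^{(1)}\mapsto\epsilon\Gamma^{(1)}$, so $\mathcal{P}^{(2)}_E$ may be regarded as a fixed polynomial in a single non-negative variable $x:=\Gamma^{(1)}(\gamma)$ with non-negative coefficients.

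Next I would establish the degree bound. Because every gadget in the level-one exRec is strictly fault tolerant (Definitions~\ref{def:strict-fault-tolerance-gate} and~\ref{def:strict-fault-tolerance-ec}) and the code corrects up to $t=\lfloor(d-1)/2\rfloor$ errors, a level-two exRec — whose ``locations'' are the level-one Recs — cannot produce a malignant event unless at least $t+1$ of those level-one locations fail. Hence every good-event term contributing to $\mathcal{P}^{(2)}_E$ carries a factor $(\Gamma^{(1)})^{k}$ with $k\ge t+1$, and every bad-event term has $k>\kGood\ge t$, so again $k\ge t+1$. Writing $\mathcal{P}^{(2)}_E(x)=\sum_{k\ge t+1}c_k x^{k}$ with $c_k\ge 0$, the claim then follows at once: for $0\le\epsilon\le 1$ and $x\ge 0$,
\begin{equation*}
\mathcal{P}^{(2)}_E(\epsilon x)=\sum_{k\ge t+1}c_k\,\epsilon^{k}x^{k}\le\epsilon^{t+1}\sum_{k\ge t+1}c_k x^{k}=\epsilon^{t+1}\mathcal{P}^{(2)}_E(x),
\end{equation*}
since $\epsilon^{k}\le\epsilon^{t+1}$ whenever $k\ge t+1$.

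The main obstacle is not the counting step but the bookkeeping of prefactors: one must verify that the auxiliary factors appearing in the raw component expressions — the $A_{\vec n}=\prod_j(1-W_j\gamma)^{n_j}$ terms and the $(1-W\gamma)^{-k}$ denominators from \eqnref{eq:Component-expression-X}, together with the normalizations by $\Pr[\accept]$ coming from Bayes's rule — do not spoil the monotone behaviour under $\Gamma^{(1)}\mapsto\epsilon\Gamma^{(1)}$. The point to check is that in the final bounding polynomial each of these is handled conservatively: factors of the form $(1-W\Gamma^{(1)})^{-1}$ only decrease when $\Gamma^{(1)}$ is scaled down, the $A_{\vec n}\le 1$ prefactors are bounded by $1$, and the acceptance-probability denominators are lower-bounded by monotone-decreasing polynomials, so that after these reductions $\mathcal{P}^{(2)}_E$ genuinely is a pure polynomial of the stated form with non-negative coefficients and minimum degree $t+1$. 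Once that reduction is in place — which is precisely what the construction in Appendix~D of~\cite{Paetznick2011} arranges — the argument above applies verbatim, and the analogous statement for $\mathcal{P}^{(k)}_E$ with $k\ge 3$ follows because, as noted in the proof of \thmref{thm:threshold}, the counting procedure depends only on the unchanged weights $\alpha_E$, whence $\mathcal{P}^{(k)}_E=\mathcal{P}^{(2)}_E$ for all $k\ge 2$.
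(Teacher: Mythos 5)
Your argument captures the same two essential ingredients as the paper's proof: (i) strict fault tolerance forces every term in the numerator of $\mathcal{P}^{(2)}_E$ to carry at least $(\Gamma^{(1)})^{t+1}$, and (ii) the auxiliary factors (prefactors and acceptance denominators) behave monotonically under the rescaling $\Gamma^{(1)}\mapsto\epsilon\Gamma^{(1)}$. The elementary $\epsilon^{k}\le\epsilon^{t+1}$ step is then identical. So in spirit this is the paper's proof.

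Where you deviate slightly is in claiming that after the bookkeeping, $\mathcal{P}^{(2)}_E$ ``genuinely is a pure polynomial'' of the stated form. The paper does not make that reduction: its proof leaves $\mathcal{P}^{(2)}_E$ as a sum of rational expressions
\[
\frac{\sum_{k\ge t+1} c(k)\,\Gamma^{k}}{\,1-\sum_{k} c'(k)\,\Gamma^{k}\,},
\]
where the denominator is the acceptance probability (one minus a polynomial with non-negative coefficients), and argues directly that for $0\le\epsilon\le 1$ the numerator shrinks by at least $\epsilon^{t+1}$ while the denominator only grows, so each ratio shrinks by at least $\epsilon^{t+1}$. Your observation that $(1-W\Gamma^{(1)})^{-1}$ and the $\Pr[\accept]^{-1}$ factors ``only decrease when $\Gamma^{(1)}$ is scaled down'' is exactly the monotonicity the paper needs, but framing it as a reduction to a pure polynomial overstates what has been shown: if you replace the acceptance denominator by a constant lower bound you obtain a different upper-bounding function, not $\mathcal{P}^{(2)}_E$ itself, so the two-sided comparison $\mathcal{P}^{(2)}_E(\epsilon\Gamma)\le\epsilon^{t+1}\mathcal{P}^{(2)}_E(\Gamma)$ is cleaner to get by keeping the ratio and exploiting the denominator's monotonicity directly, as the paper does. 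This is a small presentational imprecision rather than a gap; once you say ``numerator gains a factor $\epsilon^{t+1}$, denominator can only increase,'' the argument is complete and coincides with the paper's.
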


\proof{
From~\eqnref{eq:malig-event-bound} we see that $\mathcal{P}^{(2)}_E$ can be bounded as
  \begin{equation}
    \frac{\Pr[\malig_E,\good]}{\Pr[\accept]} + \Pr[\bad \vert \accept]
    \enspace .
  \end{equation}
  The $\Pr[\malig_E,\good]$ term is expressed as a sum of the form
  \begin{equation}
    \sum_{k=0}^{\kMax} c(k) \Gamma^k
    \label{eq:pr-malig-form}
  \end{equation}
  where all of the coefficients $c(k)$ are non-negative (because there are no non-deterministic components at level-two) and it is understood that $\Gamma$ is a function of $\gamma$.  The $\Pr[\accept]$ term in the denominator is a product of terms of the form
  \begin{equation}
    1 - \sum_{k=0}^{\kMax} c(k) \Gamma^k
    \label{eq:pr-accept-form}
  \end{equation}
  where, again, all $c(k)$ are non-negative.  $\Pr[\bad \vert \accept]$ is a sum of terms similar to \eqnref{eq:pr-malig-form}, some of which contain \eqnref{eq:pr-accept-form} terms in the denominator. 
  
  Strict fault-tolerance of the exRec implies that the coefficients $c(k)$ of \eqnref{eq:pr-malig-form} and the numerator coefficients of $\Pr[\bad \vert \accept]$ are zero for $k \leq t$. Therefore, for $0\leq \epsilon \leq 1$, $\mathcal{P}^{(2)}_E(\epsilon \Gamma)$ is a sum of non-negative terms of the form
  \begin{align}
    \frac{\sum_{k=0}^{\kMax} c(k) (\epsilon \Gamma)^k}{1 - \sum_{k=0}^{\kMax} c(k) (\epsilon \Gamma)^k}
    \leq \frac{\epsilon^t \sum_{k=4}^{\kMax} c(k) \Gamma^k}{1 - \sum_{k=0}^{\kMax} c(k) \Gamma^k}
  \end{align}
  which completes the proof.
}
\bibliographystyle{alpha-eprint}
\cleardoublepage 
\phantomsection  
\renewcommand*{\bibname}{References}

\addcontentsline{toc}{chapter}{\textbf{References}}

\bibliography{library}


\end{document}